\theoremstyle{definition}
\newtheorem{claim}{Claim}
\newtheorem*{claim*}{Claim}
\let\OriginalQedSymbol\qedsymbol
\renewcommand{\qedsymbol}{\OriginalQedSymbol\setcounter{claim}{0}}
\let\NormalQedSymbol\qedsymbol
\newenvironment{clmproof}[1]{\renewcommand{\qedsymbol}{$\dashv$}\begin{proof}[Proof of claim.]\space#1}{\end{proof}\renewcommand{\qedsymbol}{\NormalQedSymbol}}
\title[Ontology-Mediated Queries With Closed Predicates]{The Data Complexity of Ontology-Mediated Queries with Closed Predicates}
\author[Lutz]{Carsten Lutz}
\address{Fachbereich Informatik, University of Bremen, Germany}
\email{clu@uni-bremen.de}
\author[Seylan]{\.Inan\c{c} Seylan}
\address{ZF Friedrichshafen AG, Germany}
\author[Wolter]{Frank Wolter}
\address{Department of Computer Science, University of Liverpool, United Kingdom}
\email{wolter@liverpool.ac.uk}
\newcommand{\sigmaabox}{\Sigma_\Asf}
\newcommand{\sigmaclosed}{\Sigma_\Csf}
\begin{document}

\begin{abstract}
  % \todo[inline]{What follows is abstract for IJCAI13}  
%   When answering queries in the presence of ontologies, adopting the
%   closed world assumption for some predicates easily results in
%   intractability.
%   % When using ontologies to access instance data, it can be useful to
%   % make a closed world assumption for some predicates and an open world
%   % assumption for others. We consider the data complexity of
%   % conjunctive query (CQ) answering on the 
%    with (open
%   and) closed predicates is tractable, it coincides with answering CQs
%   with all predicates assumed open.
%  % in every tractable case, CQ answering with closed
%  %  and open predicates coincides with CQ answering where all predicates
%  %  are open.
%   In this sense, CQ answering with closed predicates is inherently
%   intractable. Our analysis also yields a dichotomy between \aczero
%   and $\conp$ for CQ answering w.r.t.\ ontologies formulated in
%   \dllite and a dichotomy between \ptime and $\conp$ for
%   $\el$. Interestingly, the situation is less dramatic 
% %  for ontologies  formulated 
% in the more expressive description logic \ELI, where we
%   find ontologies for which CQ answering is in \PTime, but does not
%   coincide with CQ answering where all predicates are open.
  We study the data complexity of ontology-mediated queries in which
  selected predicates can be closed (OMQCs), carrying out a
  non-uniform analysis of OMQCs in which the ontology is formulated in
  one of the lightweight description logics DL-Lite and $\mathcal{EL}$
  or in the expressive description logic $\mathcal{ALCHI}$. We focus
  on separating tractable from non-tractable OMQCs. On the level of
  ontologies, we prove a dichotomy between FO-rewritable and
  \conp-complete for DL-Lite and between \PTime and \conp-complete for
  $\mathcal{EL}$. We also show that in both cases, the meta problem to
  decide tractability is in \PTime. On the level of OMQCs, we show
  that there is no dichotomy (unless {\sc NP} equals \PTime) if both
  concept and role names can be closed. For the case where only concept
  names can be closed, we tightly link the complexity of OMQC
  evaluation to the complexity of generalized surjective CSPs.  We also identify a
  useful syntactic class of OMQCs based on \dlliter that are
  guaranteed to be FO-rewritable.
\end{abstract}

\maketitle

\section{Introduction}

The aim of ontology-mediated querying (OMQ) is to facilitate querying
incomplete and heterogeneous data by adding an ontology that provides
domain knowledge~\cite{DBLP:journals/jods/PoggiLCGLR08,DBLP:conf/rweb/BienvenuO15,DBLP:conf/ijcai/XiaoCKLPRZ18}. 
To account for the incompleteness, OMQ typically
adopts the open world assumption (OWA). In some applications, though,
there are parts of the data for which the closed world assumption
(CWA) is more appropriate.  For example, in a data integration
application some data may have been extracted from the web and thus be
significantly incomplete, suggesting the OWA, while other data may
come from curated relational database systems that are known to be
complete, thus suggesting the CWA. As an extreme case, one may even
use an ontology on top of complete data and thus treat all predicates
in the data under the CWA whereas additional predicates that are
provided by the ontology for more convenient querying are treated
under the OWA \cite{seylan09effective}. It is argued in
\cite{DBLP:conf/lics/BenediktBCP16} that a similar situation emerges
when only a subset of the predicates from a complete database is
published for privacy reasons, with an ontology linking the `visible'
and `invisible' predicates. When admitting both types of predicates in
queries (e.g.\ to analyze which parts of the private data can be
recovered), the CWA is appropriate for the visible predicates while
OWA is required for the invisible ones. A concrete example of mixed
OWA and CWA is given in \cite{Lutz:2013:ODA:2540128.2540276}, namely
querying geo-databases such as OpenStreetMap in which the geo data is
complete, thus suggesting the CWA, while annotations are incomplete
and suggest the OWA.

In this article, we are interested in ontologies formulated in a
description logic (DL).  In the area of DLs, quite a number of
proposals have been brought forward on how to implement a partial CWA,
some of them fairly complex
\cite{calvanese07eql,DBLP:journals/tocl/DoniniNR02,DBLP:conf/owled/GrimmM05,DBLP:journals/jacm/MotikR10,DBLP:conf/semweb/SenguptaKH11}.
In OMQ, a particularly straightforward and natural aproach is to simply
distinguish between OWA predicates and CWA predicates, as suggested
also by the motivating examples given above. The interpretation of CWA
predicates is then fixed to what is explicitly stated in the data
while OWA predicates can be interpreted as any extension thereof
% \cite{Lutz:2013:ODA:2540128.2540276,Magdalena}.
\cite{Lutz:2013:ODA:2540128.2540276}. 

Making the CWA for some predicates, from now on referred to as
\emph{closing} the predicates, has a strong effect on the complexity of
query evaluation. We generally concentrate on data complexity
where only the data is considered an input while the actual query and
ontology are assumed to be fixed; see 
\cite{DBLP:conf/kr/NgoOS16} for an analysis of combined complexity in
the presence of closed predicates.
% In fact,
The (data) complexity of evaluating (rather restricted forms of)
conjunctive queries (CQs) becomes \conp-hard already when ontologies
are formulated in inexpressive DLs such as DL-Lite$_{\mn{core}}$
and~\EL~\cite{franconi11query} whereas CQ evaluation without closed
predicates is FO-rewritable and thus in AC$^0$ for the former and in
\PTime for the latter
\cite{CDLLR07,DBLP:journals/jair/ArtaleCKZ09,hustadt2005data}.  Here,
FO-rewritability is meant in the usual sense of ontology-mediated
querying \cite{CDLLR07,DBLP:conf/rweb/KontchakovZ14,DBLP:conf/ijcai/BienvenuLW13,DBLP:conf/ijcai/Bienvenu0LW16}, that is, we can find a first-order
(FO) query that is equivalent to the original OMQ evaluated w.r.t.\
the ontology. Since intractability comes so quickly, it is not very
informative to analyze complexity on the level of logics, as in the
complexity statements just made; instead, one would like to know
whether closing a \emph{concrete set of predicates} results in
intractability for the \emph{concrete ontology used in an application}
or for the \emph{concrete combination of ontology and query that is
  used}. If it does not, then one should indeed close the predicates
since this may result in additional (that is, more complete) answers
to queries and additionally enables the use of more expressive query
languages for the closed part of the vocabulary. Otherwise, one can
resort to full OWA as an approximation semantics for querying or live
with the fact that evaluating the concrete query at hand is costly.

Such a \emph{non-uniform analysis} has been carried out in two different ways in~\cite{DBLP:journals/lmcs/LutzW17,DBLP:conf/pods/HernichLPW17} 
and in~\cite{DBLP:journals/tods/BienvenuCLW14} for classical OMQ (that is,
without closed predicates) and expressive DLs such as \ALC which give
rise to {\sc coNP} data complexity even when all predicates are
open. The former references aim to classify the complexity of
ontologies, \emph{quantifying over the actual query}: evaluating
queries formulated in a query language \Qmc is in \PTime for an
ontology \Omc if every query from \Qmc can be evaluated in \PTime
w.r.t.\ \Omc and it is \conp-hard if there is at least one Boolean
query from \Qmc that is \conp-hard to evaluate w.r.t.~\Omc. In the
latter reference, an even more fine-grained approach is taken where
the query is not quantified away and thus the aim is to classify the
complexity of \emph{ontology-mediated queries (OMQs)}, that is,
triples $(\Omc,\Sigma_\Asf,q)$ where \Omc is an ontology,
$\Sigma_\Asf$ a data vocabulary (where $\cdot_\Asf$ stands for
`ABox'), and $q$ an actual query. In both cases, a close connection to
the complexity of constraint satisfaction problems (CSPs) with fixed
template is identified. Given a relational structure $\Imc$, called a \emph{template},
the problem to decide for another relational structure $\Jmc$ whether there is a 
homomorphism from $\Jmc$ to $\Imc$ is called the \emph{constraint satisfaction problem
defined by $\Imc$}, and denoted CSP$(\Imc)$. Investigating the computational
complexity of CSP($\Imc$) is an active field of research that brings together algebra, 
graph theory, and logic~\cite{DBLP:conf/stoc/FederV93,DBLP:journals/siamcomp/BulatovJK05,DBLP:conf/csl/Krokhin10,DBLP:journals/corr/Bulatov17a,DBLP:journals/corr/Zhuk17}.
The connection between the complexity of OMQs and CSPs has proved to be very fruitful as it enables the
transfer of deep results available for CSPs to OMQ. In fact, it has
been used to obtain complexity dichotomies and results on the rewritability of OMQs into more conventional
database languages~\cite{DBLP:journals/tods/BienvenuCLW14,DBLP:journals/lmcs/LutzW17,DBLP:conf/pods/HernichLPW17,DBLP:conf/icdt/FeierKL17}.

The aim of this acticle is to carry out both types of analyses, the
\emph{quantified query case} and the \emph{fixed query case}, for OMQs
with closed predicates and for DLs ranging from the simple Horn DLs
DL-Lite and \EL to the expressive DL \alchi.  As the actual queries,
we use CQs, unions thereof (UCQs), and several relevant restrictions
of CQs and UCQs such as unary tree-shaped CQs, both in the directed
and in the undirected sense. Recall that DL-Lite and \EL are
underpinning the profiles OWL~2 QL and OWL~2 EL of the prominent OWL~2
ontology language while \alchi is related to OWL~2 DL
\cite{CDLLR07,DBLP:journals/jair/ArtaleCKZ09,BaBrLu-IJCAI-05}.  As a
starting point and general backdrop of our investigations, we prove
that query evaluation is in {\sc coNP} when the ontology is formulated
in $\mathcal{ALCHI}$, the actual query is a UCQ, and predicates can be
closed. Note that this bound is not a consequence of results on
ontology-mediated querying in description logics with nominals
\cite{ortiz2008data} because nominals are part of the ontology and
thus their number is bounded by a constant while closing a predicate
corresponds to considering a disjunction of nominals whose number is
only bounded by the size of the data (that is, the input size).
% \footnote{It is interesting to contrast this with the recent
%   result from \cite{Benedikt} that when ontologies are guarded
%   existential rules, then closing predicates can result in
%   \PSpace-hardness~\cite{Benedikt}}.

In the quantified query case, we aim to classify all \emph{TBoxes with
  closed predicates}, that is, all pairs $(\Tmc,\Sigma_\Csf)$ where
$\Tmc$ is a TBox formulated in the DL under consideration,
representing the ontology, and $\Sigma_\Csf$ is the set of predicates
(concept and role names) that are closed; all other predicates are
interpreted under the OWA. For the \dlliter dialect of DL-Lite and for
\EL, we obtain characterizations that separate the tractable cases
from the intractable ones and map out the frontier of tractability in
a transparent way (and also cover the fragment \dllitecore of
\dlliter). They essentially state that evaluating tree-shaped CQs is
{\sc coNP}-hard w.r.t.\ $(\Tmc,\Sigma_\Csf)$ if \Tmc entails certain
concept inclusions that mix open and closed predicates in a
problematic way while otherwise UCQ evaluation w.r.t.\
$(\Tmc,\Sigma_\Csf)$ is tractable, that is, FO-rewritable and in
\ptime, respectively.  Notably, this yields a dichotomy between
\aczero and $\conp$ for \dlliter TBoxes with closed predicates and
between \ptime and $\conp$ for $\el$ TBoxes with closed predicates.
It is remarkable that such a dichotomy can be obtained by a rather
direct analysis, especially when contrasted with the case of
expressive DLs such as \ALC without closed predicates for which a
dichotomy between \ptime and $\conp$ is equivalent to the dichotomy
between \ptime and \np for CSPs, a long-standing open problem that was
known as the Feder-Vardi conjecture and has been settled only very
recently
\cite{DBLP:journals/corr/Bulatov17a,DBLP:journals/corr/Zhuk17}.  The
proofs are a bit simpler in the case of \dlliter while they involve
the careful use of a certain version of the Craig interpolation
property in the \EL case.  The characterizations also allow us to
prove that it can be decided in \PTime whether a given TBox with
closed predicates is tractable or {\sc coNP}-complete (assuming
$\ptime \neq \np$), which we from now on call the \emph{meta
  problem}. It turns out that the tractable cases are precisely those
in which closing the predicates in $\Sigma_\Csf$ does not have an
effect on the answers to any query (unless the data is inconsistent
with the TBox).  This can be interpreted as showing that, in the
quantified query case, OMQ with closed predicates is inherently
intractable.\footnote{It is observed in
  \cite{Lutz:2013:ODA:2540128.2540276} that this is not the case for
  the extension $\mathcal{ELI}$ of \EL with inverse roles.}

Fortunately, this is not true in the fixed query case where we aim to
classify all \emph{ontology-mediated queries with closed predicates
  (OMQCs)} which take the form $(\Tmc,\Sigma_\Asf,\Sigma_\Csf,q)$
where $\Tmc$, $\Sigma_\Asf$, and $q$ are as in classical OMQs and
$\Sigma_\Csf \subseteq \Sigma_\Asf$ is a set of closed
predicates. Interestingly, switching to fixed queries results in CSPs
reentering the picture. While classifying the complexity of classical
OMQs based on expressive DLs corresponds to classifying standard CSPs,
we show that classifying
OMQCs %(both with inexpressive ontologies and with expressive ones)
is tightly linked to the classification of \emph{generalized
  surjective CSPs}. % ,
% both for inexpressive DLs such as DL-Lite and \EL and for expressive
% DLs such as \alchi
Surjective CSPs are defined exactly like standard CSPs except that
homomorphisms into the template are required to be surjective. What
might sound like a minor change actually makes complexity analyses
dramatically more difficult. In fact, there are concrete surjective
CSPs defined by a template with only six elements whose complexity is
not understood~\cite{bodirsky2012complexity} while there are no such
open cases for standard CSPs. The complexity of surjective CSPs is
subject to significant research activities
\cite{bodirsky2012complexity,chen2014algebraic} and it appears to be a
widely open question whether a dichotomy between {\sc PTime} and {\sc
  NP} holds for the complexity of surjective CSPs.  A
\emph{generalized} surjective CSP is defined by a finite set $\Gamma$
of templates rather than by a single template and the problem is to
decide whether there is a surjective homomorphism from the input
structure to some interpretation in~$\Gamma$. In the non-surjective
case, every generalized CSP can be translated into an equivalent
non-generalized CSP~\cite{DBLP:journals/ejc/FoniokNT08}. In the
surjective case, such a translation is not known. In this part, we
consider OMQCs where the ontology is formulated in any DL between
\dllitecore and \alchi or between \EL and \alchi, where only concept
names (unary predicates) can be closed, and where the actual queries
are Boolean UCQs in which all CQs are tree-shaped (BtUCQs). Our
result then is that there is a dichotomy between \PTime and \conp
for such OMQs if and only if there is a dichotomy between \PTime
and {\sc NP} for generalized surjective CSPs, a question that is wide
open.
%
% For these cases, we thus tightly link the complexity classification
% of OMQCs to an active research area and show that a full such
% classification is very challenging.
We find it remarkable that, consequently, there is no difference
between classifying OMQCs based on extremely simple DLs such as
\dllitecore and rather expressive ones such as
\alchi. % (see \cite{Magdalena}
% for similar results for combined complexity)
%
For the case where also role names (binary predicates) can be closed,
we show that for every \NP Turing machine $M$, there is an OMQC that
is polynomially equivalent to the complement of $M$'s word problem and
where the ontology can be formulated in \dllite or in \EL (and
queries are BtUCQs). By Ladner's theorem, this precludes the existence
of a dichotomy between \PTime and \conp (unless $\PTime=\NP$) and a
full complexity classification does thus not appear feasible with
today's knowledge in complexity theory. We also show that the meta
problem is undecidable.

% We start in Sections~\ref{sect:prelims} and~\ref{sect:basicres} with
% formally introducing our framework and establishing some preliminary
% results.  In Section~\ref{sect:fullq}, we identify a large and
% practically useful class of OMQCs that are tractable and even
% FO-rewritable; ontologies in these OMQCs are formulated in
% \dlliter, both concept and role names can be closed, and queries
% are quantifier-free
% UCQs. % , and a mild restriction is adopted on how the
% % ontology and closed predicates can interact Note that
% % quantifier-free queries play an important role in practice, for
% % example in the context of SPARQL \cite{sparql}.
% In Section~\ref{sect:closingconcepts}, we establish the connection to
% surjective CSPs for the case where only concept names can be closed
% (and where quantifiers in the query are allowed) and in
% Section~\ref{TMequi} we establish the connection to Turing machines
% when also role names can be closed.

Our results show that there are many natural tractable OMQs 
without closed predicates that become intractable when predicates are
closed. As a final contribution, we identify a family of OMQC where
tractability, and in fact FO-rewritability, is always guaranteed.  We
obtain this class by using \dlliter as the ontology language, unions
of quantifier-free CQs as the query language, and imposing the
additional restriction that the ontology contains no role inclusion
which states that an open role is contained in a closed one. We
believe that this class of OMQCs is relevant for practical
applications. We also prove that the restriction on RIs is needed for
tractability by showing that dropping it gives rise to OMQCs
that are \conp-hard.

\smallskip

This article is structured as follows. In Section~\ref{sect:prelim}, we
introduce description logics, relevant query languages, and
ontology-mediated querying with and without closed predicates.
We also observe that one can assume w.l.o.g. that all
predicates that occur in the data are closed and that UCQs using
open predicates can be combined with FO queries using closed predicates
without an impact on the complexity of query evaluation. In
Section~\ref{sect:basicres}, we prove that UCQ evaluation mediated by
$\mathcal{ALCHI}$ TBoxes with closed predicates is always in \conp.
% It has recently been shown that this is (most likely) not the case in
% a related framework where the ontology is formulated using guarded
% tuple generating dependencies (tgds) as there, OMQC evaluation is
% \PSpace-hard~\cite{Benedikt}.
%
In Section~\ref{sec:dichotomytbox}, we establish the characterizations
for the quantified query case and prove the announced complexity
dichotomies. In Section~\ref{sec:tboxdec}, we show that it is decidable in \ptime
whether a given TBox with closed predicates is tractable. 
We then switch to the case of fixed queries. In
Section~\ref{sect:closingconcepts}, we establish the link between
OMQCs with closed concept names to surjective CSPs and in
Section~\ref{TMequi} we link the general case where also role names
can be closed to the complexity of \np Turing machines and prove that
the meta problem is undecidable. In Section~\ref{sect:fullq}, we show
that evaluating UCQs without quantified variables is FO-rewritable for
\dlliter TBoxes in which no open role is included in a closed
role.

\section{Related Work}

The present article combines and extends the conference publications
\cite{DBLP:conf/ijcai/LutzSW15b,Lutz:2013:ODA:2540128.2540276}.
Classifications of the complexity of OMQs without closed predicates
based on expressive DLs have been studied in
\cite{DBLP:journals/lmcs/LutzW17,DBLP:conf/pods/HernichLPW17} in the
quantified query case and in~\cite{DBLP:journals/tods/BienvenuCLW14}
in the fixed query case. The combined complexity of ontology-mediated
querying with closed predicates has been investigated
in~\cite{DBLP:conf/kr/NgoOS16}. Among other things, it is shown there
that the combined complexity of evaluating OMQCs is 2\ExpTime-complete
when ontologies are formulated in \dlliter or in \EL and the actual
queries are UCQs. The rewritability of OMQCs into disjunctive datalog
with negation as failure is considered
in~\cite{DBLP:conf/ijcai/AhmetajOS16} and it is shown that a
polynomial rewriting is always possible when the ontology is
formulated in $\mathcal{ALCHIO}$ and the actual query is of the form
$A(x)$, $A$ a concept name.
 
The subject of \cite{DBLP:conf/lics/BenediktBCP16} is database
querying when only a subset of the relations in the schema is visible
and the data is subject to constraints, which in its `instance-level
version' is essentially identical to evaluating OMQCs. Among other
results, it is proved (stated in our terminology) that when the
ontology is formulated in the guarded negation fragment of first-order
logic (GNFO) and the actual query is a UCQ, then the combined
complexity of evaluating OMQCs is 2\ExpTime-complete. The lower bound
already applies when the ontology is a set of inclusion dependencies
or a set of linear existential rules (which subsume inclusion
dependencies). Moreover, there are OMQCs based on inclusion
dependencies and UCQs that are \ExpTime-hard in data complexity. These
results are completemented by the observation from \cite{Benedikt}
that there are \PSpace-hard OMQCs where the ontology is a set of linear existential rules and 
the actual query Boolean and atomic. It is interesting to contrast the latter two 
results with our \conp upper bound for $\mathcal{ALCHI}$ and UCQs.

Another related area is the study of combinations of the
open and closed world assumption in data exchange~\cite{DBLP:journals/jcss/LibkinS11}.
In data exchange one usually assumes an open-world semantics according to
which it is possible to extend instances of target schemas in an arbitrary way~\cite{DBLP:books/cu/ArenasBLM2014}. 
In an alternative closed-world semantics approach one only allows to add as much data as needed to the 
target to satisfy constraints of the schema mapping~\cite{DBLP:journals/tods/HernichLS11}. 
In~\cite{DBLP:journals/jcss/LibkinS11}, a mixed
approach is proposed: one can designate different attributes of target schemas as open or closed.
Although similar in spirit to ontology-based data access with closed predicates, 
the techniques required to analyze the mixed approach to data exchange appear
to be very different from those developed in this paper.

More vaguely related to our setup are so-called `nominal schemas' and
`closed variables' in ontologies that are sets of existential rules,
see \cite{DBLP:conf/www/KrotzschMKH11,DBLP:conf/kr/KrotzschR14} and
\cite{DBLP:conf/ijcai/AmendolaLMV18}, respectively. In both cases, the
idea is that certain object identifiers (nominals or variables) can
only be bound to individuals from the ABox, but not to elements of a
model that are introduced by existential quantifiers. When disjunction
is not present in the ontology language under consideration, which is
the main focus of the present article, then the expressive power of
these formalisms is orthogonal to ours. In the presence of
disjunction, nominal schemes and closed variables can simulate closed
predicates.

\section{Preliminaries}\label{sect:prelim}

% \todoin[color=yellow]{GENERAL RULES:

% \begin{itemize}
%   \item Use macros for complexity classes.
%   \item Use macros for logic names. If we do not use macros (defined in math mode) for these, then they get italized in theorem environments.
% This shouldn't happen.
%   \item Use macros for query language names, e.g., CQ.
% \end{itemize}
% }

% \todo[inline]{
% * improve label names from the symbols list on the left panel of Geany

% * standardize dash between logic name and TBox, e.g., ALCHI-TBox. We do not want a dash.

% * get rid of the $\Sigma$-concept notation and replace it with sig.

% * use the defined claim environment for claims. If you do not want numbering in claim, use the defined claim* environment. For proofs of claims, use the defined clmproof environment.
% }

We introduce description logics, relevant query languages, and
ontology-mediated querying with and without closed predicates. We also
observe that one can combine UCQs on open and closed predicates with
full first-order queries on closed predicates without adverse effects on
the decidability or complexity of query evaluation.

\subsection{Description Logics}

For a fully detailed introduction to DLs, we refer the reader to
\cite{Baader-et-al-03b,DBLP:books/daglib/0041477}.  Let $\NC$, $\NR$,
and $\NI$ be countably infinite sets of \emph{concept names},
\emph{role names}, and \emph{individual names}. An \emph{inverse role}
has the form $r^-$ with $r$ a role name. A \emph{role} is a role name
or an inverse role. We set $(r^-)^-=r$, for any role name $r$.  We use
three \emph{concept languages} in this article. \emph{$\mathcal{ALCI}$
  concepts} are defined by the rule
$$
C,D:= A \mid \top \mid \neg C \mid C \sqcap D \mid\exists r.C \mid \exists r^{-}.C 
$$
where $A\in \NC$ and $r\in \NR$. The constructor $\exists r.C$ is called a \emph{qualified existential restriction}. 
We use standard abbreviations and write, for example, $C \sqcup D$ for 
$\neg (\neg C \sqcap \neg D)$ and $\forall r.C$ for $\neg \exists r.\neg C$. \emph{\dllitecore} (or \emph{basic}) \emph{concepts} are 
defined by the rule
$$
B := A \mid \existsr{r}{\top} \mid \existsr{r^-}{\top}
$$
where $A\in \NC$ and $r\in \NR$. We often use $\exists r$ as shorthand
for the concept $\exists r.\top$. 
\emph{$\mathcal{EL}$ concepts} $C$ are defined by the rule
$$
C:= A \mid \top \mid \exists r.C 
$$
where $A\in \NC$ and $r\in \NR$. Thus, \dllitecore and \EL are both fragments of \alci. Note that \dllitecore admits inverse roles
but no qualified existential restrictions and \EL admits qualified existential restructions but no inverse roles. 
% \begin{table}
% \centering
% \begin{tabular}{| c | c | c | c | c | c | c | c | c |}
%   \hline
%                    & $A$       & $\top$    & $\neg C$  & $C\sqcap D$ & $\existsr{r}{C}$ & $\existsr{r^-}{C}$ & $\existsr{r}{}$ & $\existsr{r^-}{}$ \\ \hline 
%    \dllite concept & $\bullet$ &           &           &             &                  &                    & $\bullet$       & $\bullet$ \\ \hline
%    \el concept     & $\bullet$ & $\bullet$ &           & $\bullet$   & $\bullet$        &                    &                 &           \\ \hline
%    \alci concept   & $\bullet$ & $\bullet$ & $\bullet$ & $\bullet$   & $\bullet$        & $\bullet$          &                 &           \\ \hline
% \end{tabular}
% \caption{Various concept languages.}\label{tbl:concept_languages}
% \end{table}

In description logic, ontologies are constructed using concept
inclusions and potentially also role inclusions. An \emph{\ALCI
  concept inclusion (CI)} takes the form $C \sqsubseteq D$ with $C,D$
\ALCI concepts and \EL CIs are defined accordingly.  A
\emph{\dllitecore CI} takes the form $B_{1}\sqsubseteq B_{2}$ or
$B_{1}\sqsubseteq \neg B_{2}$ with $B_{1},B_{2}$ basic concepts. For
any of these three concept languages $\Lmc$, an \emph{$\Lmc$ TBox} is
a finite set of $\Lmc$ CIs. A \emph{role inclusion (RI)} takes the form
$r\sqsubseteq s$, where $r,s$ are roles. A \emph{\dlliter TBox} is a
finite set of \dllitecore CIs and RIs and an \emph{\alchi TBox} is a
finite set of \alci CIs and RIs.

% the `\alchi' column in this table means that an \alchi TBox consists of 
% inclusions of the form $r\sqsubseteq s$, where $r$, $s$ are roles and 
% $C\sqsubseteq D$, where $C$, $D$ are \alci concepts. All TBoxes 
% considered in this paper are finite.

% \begin{table}
% \centering
% \begin{tabular}{| c | c || c | c | c | c |}
%   \hline
%   Left-Hand Side  & Right-Hand Side                       & \dllitecore & \dlliter  & \el                    & \alchi    \\ \hline\hline
%   Role            & Role                                  &             & $\bullet$ &                        & $\bullet$ \\ \hline
%   \dllite concept & \dllite concept                       & $\bullet$   & $\bullet$ &                        &           \\ \hline
%   \dllite concept & $\neg C$, where $C$ a \dllite concept & $\bullet$   & $\bullet$ &                        &           \\ \hline
%   \el concept     & \el concept                           &             &           & $\bullet$              &           \\ \hline
%   \alci concept   & \alci concept                         &             &           &                        & $\bullet$ \\ \hline
% \end{tabular}
% \caption{Various TBox languages.}\label{tbl:tbox_languages}
% \end{table}

In description logic, data are stored in \emph{ABoxes} $\Amc$ which are finite sets of \emph{concept assertions} $A(a)$ and 
\emph{role assertions} $r(a,b)$ with $A \in \NC$, $r \in \NR$, and $a,b 
\in \NI$. For a role name $r$, we sometimes write $r^{-}(a,b)\in \Amc$ for $r(b,a)\in \Amc$.
We use $\mn{Ind}(\Amc)$ to denote the set of individual names used in the ABox \Amc. 

DLs are interpreted in standard first-order interpretations $\Imc$ presented as a pair $(\domain,\interf)$, where 
$\domain$ is a non-empty set called the \emph{domain} of $\inter$ and 
$\interf$ is a function that maps each concept name $A$ to a subset 
$\ext{A}$ of $\domain$ and each role name $r$ to a binary relation 
$\ext{r}$ on $\domain$. The extension of $\interf$ to roles and $\alci$ 
concepts is defined in Table~\ref{tbl:semantics}.
\begin{table}
  \[\begin{array}{rcl}
    \ext{(r^-)}            & = & \{(e,d)\mid (d,e)\in\ext{r}\} \\%\text{, for each role name }r,\\
    \ext{\top}             & = & \domain\\
    \ext{(\neg C)}         & = & \domain\setminus \ext{C}\\
    \ext{(C \sqcap D)}     & = & \ext{C}\cap \ext{D}\\
    \ext{(\existsr{r}{C})} & = & \{d\in\domain\mid \text{there exists }e\in\domain \text{ such that } (d,e)\in \ext{r} \text{ and } e \in \ext{C}\}\\    
  \end{array}\]
  \caption{Semantics of roles and $\alci$ concepts}\label{tbl:semantics}
\end{table}
An interpretation \Imc \emph{satisfies} a CI $C \sqsubseteq D$ if
$C^\Imc \subseteq D^\Imc$, a RI $r \sqsubseteq s$ if
$r^\Imc \subseteq s^\Imc$, a concept assertion $A(a)$ if
$a \in A^\Imc$ and a role assertion $r(a,b)$ if $(a,b) \in
r^\Imc$.
Note that this interpretation of ABox assertions adopts the standard
name assumption (SNA) which implies the unique name assumption. 
% We
%sometimes use $a^{\Imc}$ to denote the individual name $a$ in
%$\Delta^{\Imc}$.\footnote{\color{blue}This is a hack that we should
%  get rid of.}  % In order to avoid enforcing infinite models, we do not
% assume that interpretations contain all individual names in their
% domain.
%~ and use $\mn{Ind}(\Imc)$ to
%~ denote the individual names interpreted by \Imc.
%
An interpretation is a \emph{model of a TBox} \Tmc if it satisfies all 
inclusions in \Tmc and a \emph{model of an ABox} \Amc if it satisfies 
all assertions in~\Amc. A concept $C$ is \emph{satisfiable w.r.t.~a 
TBox $\Tmc$} if there exists a model $\Imc$ of $\Tmc$ with 
$C^{\Imc}\not=\emptyset$. As usual, we write $\tbox\models C 
\sqsubseteq D$ ($\Tmc \models r \sqsubseteq s$) if every model of \Tmc 
satisfies the CI $C\sqsubseteq D$ (resp.\ RI $r \sqsubseteq s$). 

A \emph{predicate} is a concept or role name. A \emph{signature}
$\Sigma$ is a finite set of predicates.  We use ${\sf sig}(C)$ to
denote the set of predicates that occur in the concept $C$ and
likewise for other syntactic objects such as TBoxes and ABoxes. An
ABox is a \emph{$\Sigma$-ABox} if it uses only predicates from
$\Sigma$.  We denote by ${\sf sub}(C)$ the set of subconcepts of the
concept $C$ and by ${\sf sub}(\Tmc)$ the set of subconcepts of
concepts that occur in the TBox $\Tmc$. The \emph{size} of any
syntactic object $O$, denoted $|O|$, is the number of
symbols needed to write it with concept, role, and individual names
viewed as a single symbol.

It will sometimes be convenient to regard interpretations as ABoxes
and vice versa. For an ABox $\abox$, the \emph{interpretation
  $\Imc_{\Amc}$ corresponding to $\abox$} is defined as follows:
\[
\begin{array}{rcl}
  \Delta^{\inter_\abox} & = & \adom{\abox}\\
  A^{\inter_\abox}      & = & \{a\mid A(a)\in\abox\}\text{, for all }A\in\conceptnames\\
  r^{\inter_\abox}      & = & \{(a,b)\mid r(a,b)\in\abox\}\text{, for all }r\in\rolenames.
\end{array}
\]
Conversely, every %at most countable 
interpretation $\Imc$ defines the (possibly infinite) ABox $\Amc_{\Imc}$ 
in which we regard the elements of the domain $\Delta^{\Imc}$ of $\Imc$ as individual names and let
$A(d)\in \Amc_{\Imc}$ if $d\in A^{\Imc}$ and $r(d,d')\in \Amc_{\Imc}$ if $(d,d')\in r^{\Imc}$.

%We use the following operations on interpretations.
A \emph{homomorphism} $h$ from an interpretation $\Imc_{1}$ to an
interpretation $\Imc_{2}$ is a mapping $h$ from $\Delta^{\Imc_{1}}$ to
$\Delta^{\Imc_{2}}$ such that $d\in A^{\Imc_{1}}$ implies
$h(d)\in A^{\Imc_{2}}$ for all $A\in \NC$ and
$d\in \Delta^{\Imc_{1}}$, and $(d,d')\in r^{\Imc_{1}}$ implies
$(h(d),h(d'))\in r^{\Imc_{2}}$ for all $r\in \NR$ and
$d,d'\in \Delta^{\Imc_{1}}$. We say that $h$ \emph{preserves} a set
$N \subseteq \NI$ of individual names if $h(a)=a$ for all $a\in N$.
The \emph{restriction} $\Imc|_{D}$ of an interpretation $\Imc$ to a
non-empty subset $D$ of $\Delta^{\Imc}$ is defined by setting
$\Delta^{\Imc|_{D}}=D$, $A^{\Imc|_{D}}=A^{\Imc}\cap D$, for all
$A\in \NC$, and $r^{\Imc|_{D}}=r^{\Imc}\cap (D\times D)$ for all
$r\in \NR$. The \emph{$\Sigma$-reduct} $\Jmc$ of an interpretation
$\Imc$ is obtained from $\Imc$ by setting $P^{\Jmc}=P^{\Imc}$ for all
predicates $P\in \Sigma$ and $P^{\Jmc}=\emptyset$ for all predicates
$P\not\in\Sigma$.

% Let $\Sigma$ be a signature.  Introduce copies $X^0$ and $X^1$ of
% every non-$\Sigma$-predicate $X$. We denote by $C^0$ and $C^1$ the
% resulting concept if each non-$\Sigma$ predicate $X$ in $C$ is
% replaced by $X^0$ and, respectively, $X^1$. Similarly, we denote by
% $\tbox^0$ and $\tbox^1$ the TBoxes obtained from $\tbox$ by
% replacing all concepts $C$ in $\tbox$ by $C^0$ and $C^1$,
% respectively.

\subsection{Query Languages}

The query languages used in this article are fragments of first-order
logic using predicates of arity one and two only.  Fix a countably
infinite set $\NV$ of \emph{variables}. A \emph{first-order query
  (FOQ)} $q(\vec{x})$ is %given by
a first-order formula
% $\varphi$ and a tuple $\vec{x}=x_{1},\ldots,x_{n}$ of individual
% variables such that the free variables of $\varphi$
whose free variables are contained in $\vec{x}$ and that is
constructed from atoms $A(x)$ and $r(x,y)$ using conjunction,
negation, disjunction, and existential quantification, where
$A\in \NC$ and $r\in \NR$. The variables in $\vec{x}$ are the
\emph{answer variables} of $q(\vec{x})$. The \emph{arity} of
$q(\vec{x})$ is defined as the length of $\vec{x}$ and a FOQ of arity
$0$ is called \emph{Boolean}. If the answer variables $\vec{x}$ of a
query $q(\vec{x})$ are not relevant, we simply write $q$ for
$q(\vec{x})$.  An \emph{assignment $\pi$ in an interpretation $\Imc$}
is a mapping from $\NV$ into $\Delta^{\Imc}$.  A tuple
$\vec{a}=a_{1},\ldots,a_{n}$ of individual names in $\Delta^{\Imc}$ is
an \emph{answer to $q(\vec{x})$ in $\Imc$} if there exists an
assignment $\pi$ in $\Imc$ such that $\Imc\models_{\pi}q$ (in
the standard first-order sense) and $\pi(x_{i})=a_{i}$ for
$1\leq i \leq n$.  In this case, we write $\Imc\models q(\vec{a})$.

A \emph{conjunctive query (CQ)} is a FOQ in prenex normal form that
uses no operators except conjunction and existential quantification. A
\emph{union of CQs (UCQ)} is a disjunction of CQs with the
  same answer variables. 
%The \emph{width} of a UCQ $q$ is the number
%of its variables.\footnote{\color{blue} changed from length to width
%  and removed notation $|q|$ as this denotes size; do we \emph{really}
%  need width? can't we just use size?} 
Every CQ $q$ can be viewed as an ABox
$\Amc_{q}$ by regarding the variables of $q$ as individual names.

A CQ $q(x)$ with one answer variable $x$ is a
\emph{directed tree CQ (dtCQ)} if it satisfies the following
conditions:
\begin{enumerate}

\item the directed graph $G_q=(V_q,E_q)$ is a tree with root $x$,
  where $V_q$ is the set of variables used in $q$ and $E_q$ contains an
  edge $(x_1,x_2)$ whenever there is an atom $r(x_1,x_2)$ in $q$;

\item if $r(x,y),s(x,y)$ are conjuncts of $q(x)$ then $r=s$.

\end{enumerate}
We sometimes regard a dtCQ $q$ as a $\mathcal{EL}$ concept $C_{q}$ in
the natural way such that for every interpretation $\Imc$ and
$a\in \Delta^{\Imc}$, $\Imc\models q(a)$ iff $a\in C_{q}^{\Imc}$.
Conversely, we denote by $q_{C}$ the natural dtCQ corresponding to the
$\mathcal{EL}$ concept $C$ such that $\Imc\models q_{C}(a)$ iff
$a\in C^{\Imc}$ holds for all interpretations $\Imc$ and
$a\in \Delta^{\Imc}$.  It will be convenient to not always strictly
distinguish between $C$ and $q_{C}$ and denote the query $q_{C}$ by
$C$.

A CQ $q(x)$ with one answer variable $x$ is a \emph{tree CQ (tCQ)} if it
satisfies the following conditions:
\begin{enumerate}

\item $G_q$ is a tree when viewed as an \emph{undirected} graph;

\item if $r(x,y),s(x,y)$ are conjuncts of $q(x)$ then $r=s$;

\item there are no conjuncts $r(x,y),s(y,x)$ in $q(x)$.

\end{enumerate}
Similarly to dtCQs, tCQs can be regarded as concepts in the extension
$\mathcal{ELI}$ of $\mathcal{EL}$ with inverse roles, see
\cite{DBLP:books/daglib/0041477}. We use the same notation as for
dtCQs.

\subsection{TBoxes and Ontology-Mediated Queries with Closed Predicates}

As explained in the introduction, our central objects of study are
TBoxes with closed predicates in the quantified query case and 
ontology-mediated queries with closed predicates in the fixed query
case. 

A \emph{TBox with closed predicates} is a pair $(\tbox,\Sigma_\Csf)$
with \Tmc a TBox and $\Sigma_\Csf$ a set of \emph{closed
  predicates}. An \emph{ontology-mediated query with closed predicates
  (OMQC)} takes the form $Q=(\Tmc,\Sigma_\Asf,\Sigma_\Csf,q)$ where
\Tmc is a TBox, $\Sigma_\Asf$ an \emph{ABox signature} which gives
the set of predicates that can be used in ABoxes,
$\Sigma_\Csf \subseteq \Sigma_\Asf$ a set of \emph{closed predicates},
and $q$ a query (such as a UCQ). The \emph{arity} of $Q$ is defined as
the arity of $q$.  If $\Sigma_{\Asf}= \NC\cup \NR$, then we omit
$\Sigma_{\Asf}$ and write $(\Tmc,\Sigma_\Csf,q)$ for
$(\Tmc,\Sigma_\Asf,\Sigma_\Csf,q)$.  Note that when
$Q=(\Tmc,\Sigma_\Asf,\Sigma_\Csf,q)$ is an OMQC, then
$(\Tmc,\Sigma_\Csf)$ is a TBox with closed predicates. When studying
TBoxes with closed predicates (in the quantified query case), we 
generally do not restrict the ABox signature.

The semantics of OMQCs is
as follows. We say that a model \Imc of an ABox \Amc \emph{respects
  closed predicates $\Sigma_\Csf$} if the extension of these
predicates agrees with what is explicitly stated in the ABox, that is,
$$
\begin{array}{rcl@{\quad}l}
  A^{\Imc} &=& \{a \mid A(a)\in \Amc\} & \text{ for all } A\in
                                         \Sigma_\Csf \cap \NC \text{ and}\\ [0.5mm]
  r^{\Imc} &=& \{(a,b)\mid r(a,b)\in \Amc\} & \text{ for all } r\in \Sigma_\Csf \cap \NR.
\end{array}
$$
Let $Q=(\Tmc,\Sigma_\Asf,\Sigma_\Csf,q)$ be an OMQC and \Amc a
$\Sigma_\Asf$-ABox. A tuple $\vec{a}$ of elements from $\adom{\abox}$,
denoted by $\vec{a}\in \mn{Ind}(\Amc)$ for convenience, is a
\emph{certain answer to $Q$ on} \Amc, written $\Amc \models
Q(\vec{a})$, if $\Imc\models q(\vec{a})$ for all models~$\mathcal{I}$
of $\Tmc$ and $\Amc$ that respect $\Sigma_\Csf$. The \emph{evaluation
  problem} for $Q$ is the problem to decide, given a
$\Sigma_\Asf$-ABox \Amc and a tuple $\vec{a} \in \mn{Ind}(\Amc)$,
whether $\Amc \models Q(\vec{a})$. Note that this problem parallels
the evaluation problem for CQs and other standard query language,
but with CQs replaced by OMQCs. 

An OMQC $Q=(\Tmc,\Sigma_\Asf,\Sigma_\Csf,q)$ with answer variables
$\vec{x}$ is \emph{FO-rewritable} if there is a first-order formula
$p(\vec{x})$, called an \emph{FO-rewriting of} $Q$, such that for all
$\Sigma_{\Asf}$-ABoxes~$\mathcal{A}$ and all
$\vec{a}\in\mn{Ind}(\Amc)$, we have $\Imc_{\Amc} \models p(\vec{a})$
iff $\Amc \models Q(\vec{a})$.  We remind the reader that the query
evaluation problem for $Q$ is in AC$^{0}$ when $Q$ is FO-rewritable.
\begin{exa}
Consider $\Tmc=\{A \sqsubseteq \exists r.B\}$ and $q(x)= \exists y\, r(y,x)$. Let
$Q_{0}=(\Tmc,\emptyset,q(x))$ be an OMQC without closed predicates and let $Q_{1}=(\Tmc,\Sigma_{\Csf},q(x))$
be the corresponding OMQC with closed predicates $\Sigma_{\Csf}=\{B\}$. Let $\Amc=\{A(a),B(b)\}$. Then $\Amc\not\models Q_{0}(b)$
since one can define a model $\Imc$ of $\Tmc$ and $\Amc$ in which
$(a,d)\in r^{\Imc}$ and $d\in B^{\Imc}$ for a fresh
element $d$.
However, $\Amc\models Q_{1}(b)$ since $B\in \Sigma_{\Csf}$. Note that $q(x)$ is an FO-rewriting of $Q_{0}$. The
FO-rewriting of $Q_{1}$ is more complicated and given by
$$
q(x) \vee (\exists y\,A(y) \wedge B(x) \wedge \forall y\,(B(y)\rightarrow y=x)) \vee (\exists y\, A(y)\wedge \neg \exists y\, B(y))
$$
The second disjunct captures answers for ABoxes in which one has to make $x$ an $r$-successor of some $y$
because only $x$ satisfies $B$ and the third disjunct captures answers for ABoxes in which there
is no common model of $\Tmc$ and the ABox that respects $\Sigma_{\Csf}$.
\end{exa}
%

%In our One of our aims is to classify the complexity of all OMQC that emerge
%from choosing a TBox language and a query language. 
% Sometimes, we want to restrict the predicates that are admitted in the
% set of closed predicates $\Sigma_\Csf$. To formalise this, we
% introduce the notion of
An \emph{OMQC language} is a triple
$(\Lmc,\Sigma,\Qmc)$ with \Lmc a TBox language (such as \dlliter,
\EL, or \alchi), $\Sigma$ a set of predicates (such as $\NC \cup \NR$,
$\NC$, or the empty set) from which the closed predicated in OMQCs
must be taken, and \Qmc a query language
(such as UCQ or CQ). Then $(\Lmc,\Sigma,\Qmc)$ comprises all OMQCs
$(\Tmc,\Sigma_\Asf,\Sigma_\Csf,q)$ such that $\Tmc \in \Lmc$,
$\Sigma_\Csf \subseteq \Sigma$, and $q \in \Qmc$.  Note that for
$\Sigma=\emptyset$ we obtain the standard languages of
ontology-mediated queries without closed
predicates~\cite{DBLP:journals/tods/BienvenuCLW14}. 

In the quantified query case, we aim to classify the complexity of all
TBoxes with closed predicates $(\tbox,\Sigma_{\Csf})$ where \tbox is
formulated in a DL of interest.  More precisely, for a query language
\Qmc we say that
  \begin{itemize}
  \item 
    \emph{\Qmc evaluation w.r.t.~$(\Tmc,\Sigma_\Csf)$ is in \ptime} if 
    for every $q\in\Qmc$, the evaluation problem for 
    $(\Tmc,\Sigma_\Csf,q)$ is in $\ptime$;
  \item 
    \emph{\Qmc evaluation w.r.t.~$(\tbox,\Sigma_{\Csf})$ is $\conp$-hard} 
    if there exists $q\in\Qmc$ such that the evaluation problem 
    for $(\Tmc,\Sigma_\Csf,q)$ is 
    $\conp$-hard;
  \item \emph{\Qmc evaluation w.r.t.~$(\tbox,\Sigma_\Csf)$ is FO-rewritable} 
    if for every $q\in \Qmc$, the OMQC $(\Tmc,\Sigma_\Csf,q)$ is FO-rewritable. 
 \end{itemize}
In the fixed query case, we aim to classify the complexity of all
OMQCs from some OMQC language, in the standard sense. We remind the
reader that without closed predicates the complexity of query
evaluation is well understood. In fact,
\begin{itemize}
\item every OMQC in $(\dlliter,\emptyset,\text{UCQ})$ is FO-rewritable \cite{CDLLR07}; 
\item the evaluation problem for every OMQC in $(\el,\emptyset,\text{UCQ})$ is in \PTime (and there are \PTime-hard OMQCs in
$(\el,\emptyset,\text{dtCQ})$) \cite{DBLP:conf/kr/CalvaneseGLLR06,DBLP:conf/lpar/KrisnadhiL07}; and
\item the evaluation problem for every OMQC in $(\alchi,\emptyset,\text{UCQ})$ is in \conp (and there are \conp-hard OMQCs in
$(\mathcal{ALCI},\emptyset,\text{dtCQ})$) \cite{hustadt2005data,ortiz2008data,Schaerf-93,DBLP:conf/kr/CalvaneseGLLR06}.
\end{itemize}
We will often have to deal with ABoxes that contradict the TBox given
that certain predicates are closed. We say that an ABox \Amc is
\emph{consistent w.r.t.~$(\Tmc,\Sigma_\Csf)$} if there is a model of
\Tmc and \Amc that respects $\Sigma_\Csf$.  We further say that
\emph{ABox consistency is FO-rewritable} for
$(\Tmc,\Sigma_{\Asf},\Sigma_\Csf)$ if there is a Boolean FOQ $q$ such
that for all $\Sigma_{\Asf}$-ABoxes~$\mathcal{A}$,
$\Imc_{\Amc} \models q$ iff \Amc is consistent w.r.t.\
$(\Tmc,\Sigma_\Csf)$. Note that if an ABox is consistent
w.r.t.~$(\Tmc,\Sigma_{\Csf})$, then it is consistent
w.r.t.~$(\Tmc,\emptyset)$. The converse does not hold. For example, if
$\Tmc=\{A \sqsubseteq B\}$ and $\Sigma_{\Csf}=\{B\}$, then
$\Amc=\{A(a)\}$ is not consistent w.r.t.~$(\Tmc,\Sigma_{\Csf})$ but
$\Amc$ is consistent w.r.t.~$(\Tmc,\emptyset)$.

Note that a CI $C \sqsubseteq D$ that uses \emph{only} closed
predicates acts as an integrity constraint in the standard database
sense \cite{abiteboul95foundations}. As an example, consider $\Tmc =
\{ A \sqsubseteq B\}$ and $\Sigma_\Csf = \{ A,B \}$. Then
$(\Tmc,\Sigma_\Csf)$ imposes the integrity constraint that if $A(a)$
is contained in an ABox, then so must be $B(a)$. In particular, an
ABox \Amc is consistent w.r.t.\ $(\Tmc,\Sigma_\Csf)$ iff \Amc
satisfies this integrity constraint. For ABoxes \Amc that are
consistent w.r.t.\ $(\Tmc,\Sigma_\Csf)$, $(\Tmc,\Sigma_\Csf)$ has no
further effect on query answers. In a DL context, integrity
constraints are discussed in
\cite{calvanese07eql,DBLP:journals/tocl/DoniniNR02,DBLP:conf/esws/MehdiRG11,DBLP:journals/ws/MotikHS09,DBLP:journals/jacm/MotikR10}.

\subsection{Basic Observations on OMQCs}

We first show that for DLs that support role inclusions, any OMQC is equivalent
to an OMQC in which the ABox signature and the set of closed predicates coincide. 
This setup was called \emph{DBoxes} in \cite{seylan09effective,franconi11query}. 
Assume OMQCs $Q_{1}$ and $Q_{2}$ have the same arity and ABox signature $\Sigma_\Asf$.
Then $Q_{1}$ and $Q_{2}$ are \emph{equivalent} if for all $\Sigma_\Asf$-ABoxes $\Amc$ and all tuples $\vec{a}$ in
${\sf Ind}(\Amc)$, $\Amc\models Q_{1}(\vec{a})$ iff $\Amc\models Q_{2}(\vec{a})$.
A class $\mathcal{Q}$ of queries is called \emph{canonical} if it is
closed under replacing a concept or role atom in a query with an atom
of the same kind.  All classes of queries considered in this article are
canonical. 
\begin{thm}
\label{thm:basic1}
  Let $\Lmc\in \{ \dlliter, \alchi\}$ and
  $\mathcal{Q}$ be a canonical class of UCQs.
%  % \in \{\text{UCQ}$, $\text{tUCQ}$, $\text{qfUCQ}$, $\text{CQ}$,
%  % $\text{tCQ}, \text{AQ}\}$,
%  or let $\Lmc\in \{ \dllitecore, \EL \}$ and
%  $\mathcal{Q}$ be a canonical class of UCQs closed under forming
%  disjunctions of queries
%  % a\in \{\text{UCQ}$, $\text{BtUCQ}$, $\text{qfUCQ}\}$.
  Then for every OMQC $Q=(\Tmc,\Sigma_\Asf,\Sigma_\Csf,q)$ from
  $(\Lmc,\NC \cup \NR,\Qmc)$, one can construct in polynomial time an
  equivalent OMQC $Q'=(\Tmc',\Sigma_\Asf,\Sigma_\Asf,q')$ with
  $\Tmc'\in \Lmc$ and $q'\in \mathcal{Q}$.
\end{thm}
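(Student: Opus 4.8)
The plan is to reduce the given OMQC $Q=(\Tmc,\Sigma_\Asf,\Sigma_\Csf,q)$, whose ABox signature is all of $\NC\cup\NR$ but whose closed predicates $\Sigma_\Csf$ form a proper subset, to an equivalent one in which every ABox predicate is closed. The core idea: for each predicate $P\in(\NC\cup\NR)\setminus\Sigma_\Csf$ that we want to "re-open", introduce a fresh copy $P'$ of the same kind (concept name for a concept name, role name for a role name), add the inclusion $P'\sqsubseteq P$ to the TBox, and rewrite every occurrence of $P$ in the ABox as $P'$ and every occurrence of $P$ in the query as $P$ (unchanged). Here we crucially use that $\Lmc$ supports role inclusions, which is exactly why $\Lmc$ is restricted to $\dlliter$ and $\alchi$ — both admit RIs — and canonicity of $\mathcal Q$ is what lets us replace $P$ by $P'$ in queries when needed and stay inside $\mathcal Q$.

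**First I would** make the construction precise. Let $\Gamma=(\NC\cup\NR)\setminus\Sigma_\Csf$ be the predicates that are currently open; pick a fresh predicate $P'$ of the matching arity for each $P\in\Gamma$. Set $\Tmc'=\Tmc\cup\{P'\sqsubseteq P\mid P\in\Gamma\}$ — a $\dlliter$ (resp.\ $\alchi$) TBox since $P'\sqsubseteq P$ is an RI when $P$ is a role name and a CI of the allowed shape when $P$ is a concept name. Let $\Sigma_\Asf'=\Sigma_\Csf\cup\{P'\mid P\in\Gamma\}$, and take this as both the ABox signature and the closed-predicate set of $Q'$; note $|\Sigma_\Asf'|$ is linear in the relevant input, so the construction is polynomial. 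The query $q'$ is obtained from $q$ just by leaving it over the old predicates (the closed copies $P$ for $P\in\Gamma$ remain OWA-interpreted via $\Tmc'$), so in fact $q'=q\in\mathcal Q$; canonicity is only invoked to see that any syntactic massaging stays within $\mathcal Q$. To translate instances: a $\Sigma_\Asf'$-ABox $\Amc'$ is precisely a $(\NC\cup\NR)$-ABox $\Amc$ in disguise, where assertions over $\Sigma_\Csf$ are copied verbatim and assertions $P(\vec a)$ for $P\in\Gamma$ become $P'(\vec a)$; this is a bijection on ABoxes preserving $\mn{Ind}$.

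**The heart of the argument** is the claim that $\Amc\models Q(\vec a)$ iff $\Amc'\models Q'(\vec a)$ under this correspondence. For the key direction, given a model $\Jmc$ of $\Tmc'$ and $\Amc'$ that respects $\Sigma_\Asf'$, one reads off a model $\Imc$ of $\Tmc$ and $\Amc$ that respects $\Sigma_\Csf$: keep the domain and the interpretation of everything except, for $P\in\Gamma$, reinterpret $P^\Imc:=P^\Jmc$ — which by $P'\sqsubseteq P$ contains $(P')^\Jmc$, i.e.\ exactly the $\Amc$-assertions over $P$, so $\Amc$ is satisfied — and $\Imc$ is a model of $\Tmc$ since $\Tmc\subseteq\Tmc'$ and the extra axioms only constrain the fresh $P'$. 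Since $q$ doesn't mention any $P'$, $\Jmc\models q(\vec a)$ iff $\Imc\models q(\vec a)$. Conversely, from a model $\Imc$ of $\Tmc,\Amc$ respecting $\Sigma_\Csf$ one builds $\Jmc$ by setting $(P')^\Jmc:=\{\vec a\mid P(\vec a)\in\Amc\}$ for $P\in\Gamma$ and keeping everything else; then $\Jmc$ respects $\Sigma_\Asf'$, satisfies $P'\sqsubseteq P$ because $\Imc$ satisfies the ABox, satisfies $\Tmc$, and agrees with $\Imc$ on all non-fresh predicates. This matching of models in both directions gives the equivalence.

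**The main obstacle**, such as it is, is bookkeeping rather than depth: one must be careful that $\Tmc'$ genuinely stays in $\Lmc$ (for $\dlliter$ this is why we need RIs in the dialect, and why $P'\sqsubseteq P$ with $P$ a concept name is a legal $\dlliter$ CI), that adding fresh predicates does not perturb the existing closed predicates in $\Sigma_\Csf$ (it does not, since the $P'$ are disjoint from everything), and that when $q$ is further normalized to lie syntactically in a specific subclass of $\mathcal Q$ one uses canonicity — closure under swapping a concept/role atom for another of the same kind — to justify that the rewritten query is still in $\mathcal Q$. I would also remark that the construction is uniform in $q$, so it applies verbatim to the quantified-query (TBox-with-closed-predicates) reading as well, which is how the theorem gets used later.
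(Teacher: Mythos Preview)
Your idea is essentially the paper's, but you carried out the \emph{dual} construction, and that dual does not meet the theorem's literal requirements. The theorem asks for $Q'=(\Tmc',\Sigma_\Asf,\Sigma_\Asf,q')$ with the \emph{same} ABox signature $\Sigma_\Asf$ as $Q$, and ``equivalent'' is defined only for OMQCs sharing an ABox signature (same answers on the same $\Sigma_\Asf$-ABoxes). Your construction instead replaces each open $P\in\Sigma_\Asf\setminus\Sigma_\Csf$ by a fresh $P'$ \emph{in the ABox}, produces a new ABox signature $\Sigma_\Asf'=\Sigma_\Csf\cup\{P'\mid P\in\Gamma\}\neq\Sigma_\Asf$, and then argues equivalence only up to the ABox-translation bijection $\Amc\mapsto\Amc'$. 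That is a correct polynomial reduction, but it is not the statement being proved. (Relatedly, your $\Gamma=(\NC\cup\NR)\setminus\Sigma_\Csf$ is infinite; you presumably meant $\Sigma_\Asf\setminus\Sigma_\Csf$.)

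The paper does the mirror image: it leaves the ABox over the original $\Sigma_\Asf$, renames each open $P\in\Sigma_\Asf\setminus\Sigma_\Csf$ to a fresh $P'$ \emph{inside $\Tmc$ and $q$}, and adds $P\sqsubseteq P'$ (note the direction is reversed from yours). Now $Q'=(\Tmc',\Sigma_\Asf,\Sigma_\Asf,q')$ literally has ABox signature $\Sigma_\Asf$, and the model correspondence goes exactly as you described, with the roles of $P$ and $P'$ swapped. This is also where the canonicity hypothesis on $\mathcal Q$ is actually needed: since the query is rewritten by replacing atoms $P(\cdot)$ with atoms $P'(\cdot)$, closure of $\mathcal Q$ under such replacements is what guarantees $q'\in\mathcal Q$. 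In your version $q'=q$, so canonicity plays no role---another sign that you have not matched the intended statement. The fix is a one-line swap: rename in $\Tmc$ and $q$, not in the ABox.
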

\begin{proof}
Let $\Lmc\in \{ \dlliter, \alchi\}$
and let $Q=(\Tmc,\Sigma_\Asf,\sigmaclosed,q)$ be an OMQC with $\Tmc\in \Lmc$ and $q \in \Qmc$. % Clearly, we can assume that $\Sigma_\Csf
  % \subseteq \Sigma_\Asf$ (otherwise, restrict $\Sigma_\Csf$ to
  % $\Sigma_\Asf$ and extend the TBox with $A \sqsubseteq \bot$ for
  % all
  % concept names $A \in \Sigma_\Csf \setminus \Sigma_\Asf$ and
  % $\exists
  % r . \top \sqsubseteq \bot$ for all role names $r \in \Sigma_\Csf
  % \setminus \Sigma_\Asf$).
%
  For every predicate $P\in \sigmaabox\setminus \sigmaclosed$, we take a fresh
  predicate $P'$ of the same arity (if $P$ is a concept name, then $P'$
  is a concept name, and if $P$ is a role name, then $P'$ is a role
  name). Let $\Tmc'$ be the resulting TBox when all
  $P\in\sigmaabox\setminus \sigmaclosed$ are replaced by $P'$ and the
  inclusion $P \sqsubseteq P'$ is added, for each
  $P\in \sigmaabox\setminus \sigmaclosed$.  Denote by $q'$ the
  resulting query when every $P\in \Sigma_\Asf\setminus \sigmaclosed$
  in $q$ is replaced by $P'$. We show that
  $Q'=(\Tmc',\sigmaabox,\sigmaabox,q')$ is equivalent
  to $Q$.

  \smallskip

  First let \Amc be a $\sigmaabox$-ABox with $\Amc\not\models Q(\vec{a})$. Then there is
  a model $\Imc$ of \Tmc and $\Amc$ that respects closed predicates
  $\sigmaclosed$ such that $\Imc\not\models q(\vec{a})$.  Define
  an interpretation $\mathcal{I}'$ by setting
$$
\begin{array}{r@{\,}c@{\,}l}
  \Delta^{\Imc'} &=& \Delta^{\Imc} \\[1mm]
  A^{\mathcal{I}'} &=& \{a \mid A(a)\in \Amc\}, \text { for all } A\in \sigmaabox \setminus \sigmaclosed \\[1mm]
  r^{\mathcal{I}'} &=& \{(a,b) \mid r(a,b)\in \Amc\}, \text{ for all
  } r\in \sigmaabox \setminus \sigmaclosed \\[1mm]
  {A'}^{\mathcal{I}'} &=& A^{\mathcal{I}}, \text{ for all } A\in \sigmaabox\setminus \sigmaclosed\\[1mm]
  {r'}^{\mathcal{I}'} &=& r^{\mathcal{I}}, \text{ for all } r\in \sigmaabox\setminus \sigmaclosed
\end{array}
$$
and leaving the interpretation of the remaining predicates unchanged.  It
can be verified that $\Imc'$ is a model of $\Tmc'$ and \Amc that
respects closed predicates $\Sigma_\Asf$ such that
$\Imc'\not\models q'(\vec{a})$. Thus, $\Amc\not\models Q'(\vec{a})$.

\smallskip
\noindent
Conversely, let \Amc be a $\sigmaabox$-ABox such that
$\Amc\not\models Q'(\vec{a})$.  Let $\Imc'$ be
a model of $\Tmc'$ and $\Amc$ that respects closed predicates
$\Sigma_\Asf$ and such that $\Imc'\not\models q'(\vec{a})$.  Define an
interpretation $\mathcal{I}$ by setting
$$
\begin{array}{r@{\;}c@{\;}l}
  \Delta^{\Imc} &=& \Delta^{\Imc'} \\[1mm]
  A^{\mathcal{I}} &=& A'^{\mathcal{I}'}, \text{ for all }
  A\in \sigmaabox\setminus \sigmaclosed \\[1mm]
  r^{\mathcal{I}} &=& r'^{\mathcal{I}'}, \text{ for all } r\in \sigmaabox\setminus \sigmaclosed
\end{array}
$$
and leaving the interpretation of the remaining predicates unchanged. It 
is readily checked that $\Imc$ is a model of $\Tmc$ and $\Amc$ that 
respects closed predicates $\Sigma_\Csf$ and such that $\Imc\not\models 
q(\vec{a})$. Thus, $\Amc\not\models Q(\vec{a})$.

\end{proof}
As observed in \cite{reiter92what,calvanese07eql}, a partial CWA
enables the use of more expressive query languages without increasing
the complexity of query evaluation. This is particularly useful when
many predicates are closed---recall that it can even be useful to
close all predicates that can occur in the data. We next make this more
precise for our particular framework by introducing a concrete class of
OMQCs that combine FOQs for closed predicates with UCQs for open
predicates. As in the relational database setting, we admit only FOQs
that are \emph{domain-independent} and thus correspond to expressions
of relational algebra (and SQL queries), see
\cite{abiteboul95foundations} for a formal
definition. % Formally, a FOQ $q$
% is \emph{domain-independent} if for all interpretations $\inter$
% and $\interj$
% such that $\ext{P}=\extj{P}$
% for all $P\in
% {\sf sig}(q)$, we have that a tuple $\vec{d}$ is in
% $\Delta^{\Jmc}$
% and $\Jmc\models
% q(\vec{d})$ if $\vec{d}$ is in $\Delta^{\Imc}$ and $\Imc \models
% q(\vec{d})$, and vice versa. Intuitively, the truth value of a
% domain-independent FOQ depends only on the interpretation of the
% predicates, but not on the actual domain of the interpretation. For
% example, $\neg
% A(x)$, is not domain-independent whereas $B(x)\wedge \neg
% A(x)$ is domain-independent.

% Let $\Tmc$ be a TBox, $\Sigma_{\Asf}$ an ABox signature, and $\Sigma_{\Csf}$ a set of closed predicates. 
% We call a $\Sigma_{\Asf}$-ABox \Amc \emph{consistent w.r.t.~$(\Tmc,\Sigma_\Csf)$} if 
% there is a model of \Tmc and \Amc that respects $\Sigma_\Csf$. We say that \emph{ABox consistency is in \ptime} for $(\Tmc,\Sigma_{\Asf},\Sigma_\Csf)$
% if it is decidable in \ptime whether a $\Sigma_{\Asf}$-ABox is consistent w.r.t.~$(\Tmc,\Sigma_\Csf)$.
% We say that \emph{ABox consistency is FO-rewritable} for $(\Tmc,\Sigma_{\Asf},\Sigma_\Csf)$ if there is a 
% first-order sentence $\varphi$ such that for all $\Sigma_{\Asf}$-ABoxes~$\mathcal{A}$, we have
% $\Imc_{\Amc} \models \vp$ iff \Amc is consistent w.r.t.\ $(\Tmc,\Sigma_{\Asf},\Sigma_\Csf)$.
% or the appendix for details.
%
%   In the query language CQ$^{\text{FO}(\Sigma)}$, which we
% define next, (domain-independent) FOQs over closed predicates are
% plugged into CQs in place of atoms.

\begin{thm}
\label{thm:FOQs}
Let $Q=(\Tmc,\Sigma_{\Asf},\Sigma_{\Csf},q(\vec{x}))$ be an OMQC from
$(\alchi,\NC \cup \NR,\text{CQ})$ and $q'(\vec{x})$ a
domain-independent FOQ with $\mn{sig}(q') \subseteq \Sigma_\Csf$.
If $Q$ is FO-rewritable (evaluating $Q$ is in \ptime) and
ABox-consistency is FO-rewritable (in \ptime, respectively) for
$(\Tmc,\Sigma_{\Asf},\Sigma_{\Csf})$, then 
the OMQC $Q'=(\Tmc,\Sigma_{\Asf},\Sigma_{\Csf},q \wedge q')$ is FO-rewritable
(evaluating $Q'$ is in \ptime, respectively).
\end{thm}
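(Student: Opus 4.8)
The plan is to exploit the fact that $q'$ mentions only closed predicates, so that on any model of $\Tmc$ and $\Amc$ that respects $\Sigma_\Csf$ the satisfaction of $q'$ is already completely determined by $\Amc$. Concretely, the first step is to prove: for every $\Sigma_\Asf$-ABox $\Amc$, every model $\Imc$ of $\Tmc$ and $\Amc$ that respects $\Sigma_\Csf$, and every $\vec a$, one has $\Imc \models q'(\vec a)$ iff $\Imc_\Amc \models q'(\vec a)$. Since $\mn{sig}(q') \subseteq \Sigma_\Csf$, the answers to $q'$ in an interpretation $\Jmc$ depend only on the $\Sigma_\Csf$-reduct of $\Jmc$. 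Because $\Imc$ respects $\Sigma_\Csf$, the $\Sigma_\Csf$-reducts of $\Imc$ and of $\Imc_\Amc$ interpret every predicate identically --- namely as prescribed by $\Amc$ --- and they differ only in that $\Imc$ may have a larger domain (it still contains $\mn{Ind}(\Amc)$, which contains the active domain of these relations). Here domain-independence of $q'$ does the work: it guarantees that the set of answer tuples is insensitive to enlarging the domain beyond the active domain, so the two reducts return the same answers to $q'$. Getting this active-domain bookkeeping right --- and hence the appeal to domain-independence --- is the one point of the proof that needs care; everything else is routine.

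With this observation in hand, I would establish the characterization, for all $\vec a \in \mn{Ind}(\Amc)$,
\[
\Amc \models Q'(\vec a) \quad\text{iff}\quad \text{$\Amc$ is inconsistent w.r.t.\ $(\Tmc,\Sigma_\Csf)$, or } \big(\Amc \models Q(\vec a) \text{ and } \Imc_\Amc \models q'(\vec a)\big).
\]
For the ``if'' direction: if $\Amc$ is inconsistent there is no model and the answer holds vacuously; otherwise, every model $\Imc$ of $\Tmc$ and $\Amc$ respecting $\Sigma_\Csf$ satisfies $q(\vec a)$ since $\Amc\models Q(\vec a)$, and satisfies $q'(\vec a)$ by the observation together with $\Imc_\Amc\models q'(\vec a)$, hence $\Imc\models (q\wedge q')(\vec a)$. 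For ``only if'', assume $\Amc$ is consistent and fix a witnessing model $\Imc_0$; from $\Amc\models Q'(\vec a)$ we get $\Imc\models q(\vec a)$ in every model $\Imc$, so $\Amc\models Q(\vec a)$, and $\Imc_0\models q'(\vec a)$, so $\Imc_\Amc\models q'(\vec a)$ again by the observation.

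It then remains to read off both conclusions. Let $p_Q(\vec x)$ be an FO-rewriting of $Q$ and $p_{\mn{cons}}$ a Boolean FO-rewriting of ABox-consistency for $(\Tmc,\Sigma_\Asf,\Sigma_\Csf)$, both of which exist by assumption. Then
\[
p_{Q'}(\vec x) \;:=\; p_Q(\vec x) \wedge \big(\neg p_{\mn{cons}} \vee q'(\vec x)\big)
\]
is an FO-rewriting of $Q'$: when $\Amc$ is inconsistent, $p_{\mn{cons}}$ is false and $p_Q$ returns exactly the tuples over $\mn{Ind}(\Amc)$ (because $\Amc\models Q(\vec a)$ holds vacuously for all of them), matching the characterization; when $\Amc$ is consistent, $\neg p_{\mn{cons}}$ is false and the formula reduces to $p_Q(\vec x)\wedge q'(\vec x)$, which again matches. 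For the tractability variant, the same characterization yields the obvious algorithm: given $\Amc$ and $\vec a$, first decide consistency of $\Amc$ w.r.t.\ $(\Tmc,\Sigma_\Csf)$ in \ptime; if inconsistent, accept; otherwise decide $\Amc\models Q(\vec a)$ in \ptime and evaluate the FOQ $q'$ on $\Imc_\Amc$ (in \ptime, in fact in \aczero), accepting iff both succeed. The only genuine obstacle is the domain-independence argument of the first step; the rest is bookkeeping.
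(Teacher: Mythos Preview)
Your proposal is correct and follows essentially the same approach as the paper. The paper's proof is extremely terse---it simply asserts the characterization $\Amc\models Q'(\vec a)$ iff $\Amc$ is inconsistent or ($\Amc\models Q(\vec a)$ and $\Imc_{\Amc}\models q'(\vec a)$) and reads off the rewriting $\neg p' \vee (p \wedge q')$---whereas you actually spell out the domain-independence argument that justifies why $\Imc\models q'(\vec a)$ collapses to $\Imc_\Amc\models q'(\vec a)$, which the paper leaves entirely implicit. Your rewriting $p_Q(\vec x)\wedge(\neg p_{\mn{cons}}\vee q'(\vec x))$ is syntactically different from the paper's $\neg p_{\mn{cons}}\vee(p_Q(\vec x)\wedge q'(\vec x))$, but you correctly observe that these coincide on every ABox because $p_Q$ is vacuously satisfied by all tuples whenever $\Amc$ is inconsistent.
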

\begin{proof}
  Assume that $p$ is an FO-rewriting of $Q$ and that $p'$ is a Boolean
  FOQ such that for all $\Sigma_{\Asf}$-ABoxes~$\mathcal{A}$,
  $\Imc_{\Amc} \models p'$ iff \Amc is consistent w.r.t.\
  $(\Tmc,\Sigma_\Csf)$. Then $\neg p' \vee (p \wedge q')$ is an
  FO-rewriting of $Q'$.  Next assume that evaluating $Q$ is in \PTime
  and that ABox consistency w.r.t~$(\Tmc,\Sigma_{\Asf},\Sigma_{\Csf})$
  is in \ptime.  To show that evaluating $Q'$ is in \ptime, let $\Amc$
  be a $\Sigma_{\Asf}$-ABox and $\vec{a}$ a tuple in $\Amc$. Then
  $\Amc\models Q'(\vec{a})$ iff $\Amc$ is not consistent
  w.r.t.~$(\Tmc,\Sigma_{\Csf})$ or $\Amc\models Q(\vec{a})$ and
  $\Imc_{\Amc}\models q'(\vec{a})$. As both can be checked in
  polynomial time, one can decide $\Amc\models Q'(\vec{a})$ in \ptime.
\end{proof}
%  
% Query languages between UCQs and FOQs have been studied before. In the
% standard setup where all predicates are open, it was shown in
% \cite{rosati07limits} that extending CQs with union and atomic
% negation results in \conp-hardness both in DL-Lite and in \EL,
% and that extending CQs with union and inequality results in
% undecidability in \EL. 
%In our query language CQ$^{\text{FO}(\Sigma)}$,
%we avoid these problems by allowing only CQs for the open predicates
%while restricting the expressive power of FOQs (which admits
% disjunction, full negation, and (in)equality) to closed predicates.
%
% While the proof of Theorem~\ref{thm:FOQs} is not intricate, we believe that the queries considered
% can be very useful for applications. 
%Note that the query language EQL-Lite(CQ) from
%\cite{calvanese07eql} can be viewed as a fragment of in which \emph{only} closed predicates are admitted.

\section{A \conp-Upper Bound for Query Evaluation}
\label{sect:basicres}

We show that for our most expressive DL, $\alchi$, UCQ evaluation for OMQCs is in \conp.
Recall from the introduction that this bound is not a consequence of results on ontology-mediated 
querying in description logics with nominals because nominals are part of the TBox and thus their number is a
constant. The proof uses a decomposition of countermodels (models that
demonstrate query non-entailment) into mosaics and then relies on a
guess-and-check algorithm for finding such decompositions.

\begin{thm}\label{thm:alchi_closed_datacomplexity}
  The evaluation problem for OMQCs in $(\alchi,\NC \cup 
  \NR,\text{UCQ})$ is in \conp.
\end{thm}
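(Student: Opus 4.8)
The plan is to establish a coNP upper bound for query non-entailment: given a $\Sigma_\Asf$-ABox $\Amc$, an OMQC $Q=(\Tmc,\Sigma_\Asf,\Sigma_\Csf,q)$ with $\Tmc$ an $\alchi$ TBox and $q$ a UCQ, and a candidate answer $\vec a$, I want to guess a polynomial-size certificate witnessing that there is a model $\Imc$ of $\Tmc$ and $\Amc$ respecting $\Sigma_\Csf$ with $\Imc\not\models q(\vec a)$, and check it in polynomial time. Since $\Tmc$ and $q$ are fixed, a ``polynomial-size'' certificate may depend on $|\Amc|$ only. The first step is to argue that countermodels can be taken in a normalised shape: the ABox individuals $\mn{Ind}(\Amc)$ form a core part, and outside of it the model is a disjoint union of ``trees'' of bounded outdegree hanging off the ABox individuals, where the branching is bounded by the number of existential restrictions in $\mn{sub}(\Tmc)$. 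Because closed predicates are pinned down exactly by $\Amc$ on $\mn{Ind}(\Amc)$ (and, for closed role names, no closed-role edge can leave $\mn{Ind}(\Amc)$), the tree part only involves open predicates, so it behaves exactly as in the classical $\alchi$-without-closed-predicates setting; the only new feature is the ABox part, whose size is the input size.

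Next I would set up the mosaic machinery. A mosaic is a small piece of an interpretation: concretely, a ``type'' is a subset of $\mn{sub}(\Tmc)$ (plus flags for the concept names that occur in $q$) that is propositionally consistent with $\Tmc$; a mosaic records, for an element, its type and the types of its $r$-successors needed to satisfy the existential restrictions in its type, together with enough information about small query fragments to monitor that no disjunct of $q$ is matched. For the tree part one shows, by the standard unravelling/type-elimination argument, that a consistent set of mosaics that is ``closed'' (every existential demand is met by some mosaic, every $\forall$-constraint is propagated, the closed predicates are all false there) can be assembled into a genuine tree model that admits no homomorphism from any CQ in $q$ — here one uses that $q$ is fixed so the relevant query fragments (connected subqueries up to the size of $q$) form a constant-size set. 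The guess for the ABox part is then: for each $a\in\mn{Ind}(\Amc)$ a type $t_a$ consistent with the assertions about $a$ in $\Amc$ and with the closed predicates (so $A\in t_a$ iff $A(a)\in\Amc$ for $A\in\Sigma_\Csf\cap\NC$, and similarly $t_a$ must not demand a closed-role successor that $\Amc$ does not provide), plus, for each existential restriction in $t_a$ that is not already witnessed by an ABox edge, a pointer to a mosaic in the guessed tree-mosaic set; and the whole assignment of types to $\mn{Ind}(\Amc)$ must be locally consistent along the role assertions of $\Amc$. The verification that $q$ is not entailed has two parts: the tree part contributes no match by construction, and on the ABox-plus-immediate-tree-neighbourhood part one checks directly, in polynomial time (again $q$ is fixed, so one only enumerates constantly many images up to the diameter of $q$), that no disjunct of $q$ maps in. Putting a genuine countermodel together from a successful guess is routine: glue the tree models onto the ABox individuals as dictated by the pointers.

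The main obstacle I expect is the bookkeeping needed to guarantee that a successful guess really yields a countermodel that \emph{globally} avoids all matches of $q$ — in particular matches that straddle the ABox part and the tree part, or that wind through several ABox individuals. The safe way is to use the fact that $q$ is a fixed UCQ: a CQ of size $k$ can only match within distance $k$ of any of its images, so it suffices to (i) record in each mosaic the ``$q$-profile'' of a radius-$k$ neighbourhood and propagate these profiles consistently, and (ii) for the ABox part, since it has input size $n=|\Amc|$, simply check all potential matches of $q$ into the substructure induced by $\mn{Ind}(\Amc)$ together with one layer of guessed tree-successor types — there are only polynomially many such candidate images because $q$ has constantly many variables. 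One must also be a little careful that $\alchi$ has inverse roles, so when an ABox individual $a$ is made an $r$-successor of a tree element (via an $\exists r^-$ witness), the $\forall$-constraints of that tree element must be reflected back into $t_a$; this is handled by requiring the pointer-to-mosaic matching to be compatible in both directions, exactly as in the classical $\alchi$ data-complexity argument of \cite{hustadt2005data,DBLP:conf/kr/CalvaneseGLLR06}. Once these consistency conditions are in place, both the guessing and the checking are clearly polynomial, giving the $\conp$ bound.
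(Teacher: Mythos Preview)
Your overall strategy---guess a polynomial-size certificate describing a forest-shaped countermodel (types on ABox individuals, mosaics for the tree part) and verify it locally---is the right one, and it is also the paper's strategy. But your description has a genuine gap in how the tree part interacts with the ABox part.

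You write that ``the tree part only involves open predicates, so it behaves exactly as in the classical $\alchi$-without-closed-predicates setting.'' That is not correct. A tree element whose type contains $\exists r.C$ with $r$ open but $C$ entailing some closed concept name can only be satisfied by an edge \emph{back to an ABox individual}; thus tree elements carry edges to specific elements of $\mn{Ind}(\Amc)$, not just to abstract types. This matters for query matching: a CQ such as $\exists xyz\,(r(x,y)\wedge r(x,z)\wedge t(y,z))$ can match with $x$ a deep tree element and $y,z$ two ABox individuals linked by a $t$-assertion in $\Amc$. Whether this match exists depends on the \emph{identities} of the backlink targets, not merely their types, so your ``$q$-profiles'' cannot be computed from type-only tree mosaics. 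Relatedly, ``one layer of guessed tree-successor types'' is not enough: matches can descend to depth $|q|$ in a tree and then return to $\mn{Ind}(\Amc)$, possibly several times.

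The paper fixes exactly this by making each mosaic a \emph{concrete} interpretation that contains all of $\mn{Ind}(\Amc)$ (with a guessed interpretation agreeing across mosaics) together with a single tree of depth at most $|q|$ whose non-root nodes carry at most $|\Tmc|$ explicit backlinks into $\mn{Ind}(\Amc)$. A coherent set of such mosaics has polynomial size in $|\Amc|$ (the tree part is constant-size, and the number of distinct mosaics is bounded by $(2|\Amc|)^{|\Tmc|^{O(|q|)}}$), and query non-entailment is checked directly on the disjoint union of the mosaics rather than via abstract $q$-profiles. If you want to salvage your more abstract approach, the minimal repair is to let tree mosaics record, for each tree node, the specific ABox individuals that serve as its backlink targets; then your certificate becomes essentially the paper's.
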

The proof is given by a sequence of lemmas. We first show that it
suffices to consider 
interpretations that are (essentially) forest-shaped 
when evaluating UCQs and then introduce mosaics as small forest-shaped interpretations. 
% Observe first that it suffices to
% consider OMQCs with Boolean UCQs. To give a polynomial reduction of evaluating OMQCs with arbitrary UCQs to 
% evaluating OMQCs with Boolean UCQs, assume an OMQC $Q=(\Tmc,\NC\cup \NR,\Sigma_{\Csf},q(\vec{x}))$ with $\vec{x}=x_{1},\ldots,x_{n}$
% is given. Assume w.l.o.g. that the $x_{i}$, $1\leq i \leq n$, are mutually distinct. Introduce fresh concept names 
% $A_{1},\ldots,A_{n}$, let $\Sigma_{\Csf}'= \Sigma_{\Csf}\cup \{A_{1},\ldots,A_{n}\}$,
% and obtain a Boolean UCQ $q'$ by replacing every CQ $q_{j}$ in $q$ by $\exists \vec{x}q_{j}'$, where $q_{j}'$ is obtained from
% $q_{j}$ by adding $A_{i}(x_{i})$ as a conjunct to $q_{j}$, for $1\leq i \leq n$. Then $\Amc\models Q(\vec{a})$ iff $\Amc'\models Q'$, where
% $\Amc'=\Amc \cup \{A_{i}(a_{i}) \mid 1\leq i \leq n\}$ and $Q'= (\Tmc,\NC\cup \NR,\Sigma_{\Csf}',q')$. In the remainder of this section we
% only consider OMQCs with Boolean UCQs.
A \emph{forest over an alphabet} $S$ is a prefix-closed set of words 
over $S^*\setminus\{\varepsilon\}$, where $\varepsilon$ denotes the 
empty word. Let $F$ be a forest over $S$. A \emph{root 
of} $F$ is a word in $F$ of length one.  A \emph{successor of} $w$ in 
$F$ is a $v\in F$ of the form $v=w\cdot x$, where $x\in S$. For a 
$k\in\natno$, $F$ is called \emph{$k$-ary}, if for all $w\in F$, we 
have that the number of successors of $w$ is at most $k$. The \emph{depth} of $w\in F$
is $|w|-1$, where $|w|$ is the length of $w$. The \emph{depth} of
a finite forest $F$ is the maximum of the depths of all $w\in F$. 
A \emph{tree} is a forest that 
has exactly one root. We do not mention the alphabet of a forest if it 
is not important.
\begin{defi}
  An interpretation $\inter=(\domain,\interf)$ is \emph{forest-shaped} 
  if $\domain$ is a forest and for all $(d,e)\in\domain\times\domain$ 
  and $r\in\rolenames$, if $(d,e)\in\ext{r}$, then
  \begin{itemize}
  \item $d$ or $e$ is a root of $\domain$, or
  \item $e$ is a successor of $d$ or $d$ is a successor of $e$.
  \end{itemize}
  $\Imc$ is of arity $k$ if the forest $\Delta^{\Imc}$ is of arity
  $k$.
  \hfill$\triangle$
\end{defi}

\smallskip
\noindent
Note that a forest-shaped interpretation is forest-shaped only in a
loose sense since it admits edges from any node to the root.  We
remind the reader of the following easily proved fact.
\begin{lem}
\label{homo}
Let $h$ be a homomorphism from $\Imc$ to $\Jmc$ preserving $\NI$ and let $q(\vec{x})$ be a UCQ 
and $\vec{a}$ a tuple of individual names. Then $\Jmc\models q(\vec{a})$ if $\Imc\models q(\vec{a})$. 
\end{lem}
As announced, the next lemma shows that it suffices to consider forest-shaped interpretations
when evaluating UCQs. We use 
${\sf cl}(\Tmc)$ to denote the closure of ${\sf sub}(\Tmc)$ under 
single negation.  
\begin{lem}\label{lem:alchi_forest_model}
  Let $\abox$ be a $\Sigma_{\Asf}$-ABox, $\vec{a}$ a tuple in ${\sf Ind}(\Amc)$, and
  $Q=(\Tmc,\Sigma_\Asf,\Sigma_\Csf,q)$ a OMQC from $(\alchi,\NC \cup
  \NR,\text{UCQ})$. Then the following are equivalent:
  \begin{enumerate}
  \item $\abox\models Q(\vec{a})$;
  \item $\inter\models q(\vec{a})$ for all forest-shaped models $\inter$ of $\tbox$ and $\abox$ that respect $\Sigma_\Csf$ and such that
     \begin{itemize}
            \item the arity of $\Delta^{\Imc}$ is $|\Tmc|$, 
            \item ${\sf Ind}(\Amc)$ is the set of roots of $\Delta^{\Imc}$,
            \item for every $d\in \Delta^{\Imc}\setminus {\sf Ind}(\Amc)$ and $\exists r.C\in {\sf cl}(\Tmc)$ with $d\in (\exists r.C)^{\Imc}$, there exists $a\in {\sf Ind}(\Amc)$ with $(d,a)\in r^{\Imc}$ and $a\in C^{\Imc}$ or 
there exists a successor $d'$ of $d$ in $\Delta^{\Imc}$
                  such that $(d,d')\in r^{\Imc}$ and $d'\in C^{\Imc}$.
     \end{itemize} 
  \end{enumerate}
\end{lem}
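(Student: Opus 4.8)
The implication $(1)\Rightarrow(2)$ is immediate, since the interpretations quantified over in~(2) form a subclass of all models of $\Tmc$ and $\Amc$ that respect $\Sigma_\Csf$. For $(2)\Rightarrow(1)$ I would argue by contraposition. Assume $\Amc\not\models Q(\vec{a})$ and fix a model $\Imc$ of $\Tmc$ and $\Amc$ respecting $\Sigma_\Csf$ with $\Imc\not\models q(\vec{a})$; we may assume $\mn{Ind}(\Amc)\neq\emptyset$, the empty case being trivial. The plan is to convert $\Imc$ into a forest-shaped model $\Jmc$ of the shape required in~(2) that still refutes $q(\vec{a})$.

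I build $\Jmc$ as an \emph{economical unraveling} of $\Imc$ relative to $\mn{Ind}(\Amc)$, maintaining a map $\mn{tail}\colon\Delta^\Jmc\to\Delta^\Imc$. The roots of $\Jmc$ are the elements of $\mn{Ind}(\Amc)$, with $\mn{tail}(a)=a$, and among its roots $\Jmc$ copies all role edges of $\Imc$ verbatim. Processing elements $p$ one at a time, for each concept $\exists r.C\in{\sf cl}(\Tmc)$ with $\mn{tail}(p)\in(\exists r.C)^\Imc$ I install exactly one witness: if there is $a\in\mn{Ind}(\Amc)$ with $(\mn{tail}(p),a)\in r^\Imc$ and $a\in C^\Imc$, I link $p$ to the root $a$; otherwise I add a fresh successor $p'$ of $p$, set $\mn{tail}(p')$ to some $\Imc$-witness $e\in\Delta^\Imc\setminus\mn{Ind}(\Amc)$, and recurse at $p'$ (the case of an inverse role $r$ is symmetric). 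Every edge of $\Jmc$ is labelled by precisely those role names $t$ for which the $\mn{tail}$-images of its two endpoints form a pair in $t^\Imc$, so that $\mn{tail}$ is a homomorphism from $\Jmc$ to $\Imc$ and, since $\Imc\models\Tmc$, every RI of $\Tmc$ holds in $\Jmc$. Concept names are interpreted by $A^\Jmc:=\{p\mid\mn{tail}(p)\in A^\Imc\}$. Since every concept of the form $\exists r.C$ in ${\sf cl}(\Tmc)$ already lies in ${\sf sub}(\Tmc)$, the number of successors added at each element, hence the arity of the forest $\Delta^\Jmc$, is bounded by $|\Tmc|$; that $\Jmc$ is forest-shaped with root set $\mn{Ind}(\Amc)$ is clear from the construction.

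The heart of the proof is the \emph{type-transfer invariant}: for all $p\in\Delta^\Jmc$ and all $C\in{\sf cl}(\Tmc)$, $p\in C^\Jmc$ iff $\mn{tail}(p)\in C^\Imc$. This is proved by induction on $C$; the Boolean cases are routine, while for $C=\exists r.D$ the direction ``$\Leftarrow$'' is exactly where economy pays off (a witness was installed, its $\mn{tail}$ lies in $D^\Imc$, and the inductive hypothesis applies to $D\in{\sf cl}(\Tmc)$), and ``$\Rightarrow$'' combines the inductive hypothesis with $\mn{tail}$ being a homomorphism. From the invariant, $\Jmc\models\Tmc$ because both sides of every CI of $\Tmc$ belong to ${\sf cl}(\Tmc)$; $\Jmc\models\Amc$ because the root part of $\Jmc$ coincides with $\Imc|_{\mn{Ind}(\Amc)}$; $\Jmc$ respects $\Sigma_\Csf$ because the extensions in $\Imc$ of closed predicates are contained in $\mn{Ind}(\Amc)$, resp.\ $\mn{Ind}(\Amc)\times\mn{Ind}(\Amc)$, while every non-root element of $\Jmc$ has a $\mn{tail}$-value outside $\mn{Ind}(\Amc)$ and every $\Jmc$-edge projects under $\mn{tail}$ to an $\Imc$-edge; and the economical side condition of~(2) holds since at a non-root $p$ with $p\in(\exists r.C)^\Jmc$ the invariant gives $\mn{tail}(p)\in(\exists r.C)^\Imc$, whence the construction installed a witness for $\exists r.C$ at $p$ that is either a root or a genuine successor of $p$. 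Finally, $\mn{tail}$ is an $\NI$-preserving homomorphism from $\Jmc$ to $\Imc$, so Lemma~\ref{homo} together with $\Imc\not\models q(\vec{a})$ yields $\Jmc\not\models q(\vec{a})$, so that $\Jmc$ witnesses the failure of~(2). I expect the main obstacle to be keeping the economical unraveling a model of $\Tmc$ in the presence of the negation available in $\alci$: a bare homomorphism does not preserve $\alci$-concepts, so one must carry through the type-transfer invariant instead, which is what forces the construction to install witnesses for \emph{every} existential restriction in ${\sf cl}(\Tmc)$ and to label the newly created edges so that simultaneously $\mn{tail}$ remains a homomorphism and the role inclusions of $\Tmc$ survive.
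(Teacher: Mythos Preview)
Your proposal is correct and follows essentially the same unraveling strategy as the paper: start from a countermodel $\Imc$, build a forest-shaped interpretation by paths with a $\mn{tail}$ map into $\Imc$, prove the type-transfer invariant $p\in C^{\Jmc}\Leftrightarrow \mn{tail}(p)\in C^{\Imc}$ for all $C\in{\sf cl}(\Tmc)$ by structural induction, and conclude $\Jmc\not\models q(\vec a)$ via the $\NI$-preserving homomorphism $\mn{tail}$ and Lemma~\ref{homo}. The one minor divergence is that you always install a fresh witness (root link or new successor) for every $\exists r.C\in{\sf cl}(\Tmc)$, while the paper's inductive step skips when some neighbour---possibly the predecessor---already serves; your variant still keeps the arity within $|\Tmc|$ and has the advantage of making the third bullet of~(2) immediate.
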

The proof is given in the appendix. (1) $\Rightarrow$ (2) is trivial and the proof of (2) $\Rightarrow$ (1) 
is by unravelling a model $\Imc$ of $\Tmc$ and $\Amc$ with $\Imc\not\models q(\vec{a})$ into a forest-shaped model 
of $\Tmc$ and $\Amc$ from which there is a homomorphism preserving $\NI$ to the original model $\Imc$ and
then applying Lemma~\ref{homo}.

Let $\tbox$ be an \alchi TBox. For an interpretation
$\inter$ and $d\in\domain$, let the \emph{$\tbox$-type}
of $d$ in \Imc be
 $$
 \mn{tp}_\inter(d)=\{C\in\cclos{\tbox}{}\mid d\in\ext{C}\}.
$$
In general, a \emph{$\tbox$-type} is a set
$t\subseteq \cclos{\tbox}{}$ such that for some model $\inter$ of
$\tbox$ and some $d\in\domain$, we have $t=\mn{tp}_\inter(d)$. We use
$\mathsf{TP}(\tbox)$ to denote the set of all $\Tmc$-types.  For
$\tbox$-types $t,t'$ and a role $r$, we write $t\rightsquigarrow_r t'$
if there is some model $\inter$ of $\tbox$ and $d,e\in\domain$ such
that $(d,e)\in\ext{r}$, $t=\mn{tp}_\inter(d)$, and
$t'=\mn{tp}_\inter(e)$.

We now define the notion of a \emph{mosaic} for an ABox $\Amc$ and an OMQC $Q=(\Tmc,\Sigma_\Asf,\Sigma_\Csf,q)$. 
Mosaics are abstract representations of interpretations which add to the ABox $\Amc$
a tree-shaped interpretation of outdegree bounded by $|\Tmc|$ and depth at most $|q|$. The tree-shaped
part is linked to the ABox via roles, where the number of ABox individuals linked to an element of
the tree-shaped interpretation is bounded by $|\Tmc|$. We ensure that a mosaic can be extended to a proper 
model of $\Tmc$ and $\Amc$ by hooking fresh interpretations to its ABox individuals and the leaves of its
tree-shaped interpretation. \emph{Coherent} sets of mosaics will correspond to forest-shaped
models of $\Tmc$ and $\Amc$. We ensure that is can be checked in polynomial time in $|\Amc|$ whether a set of 
mosaics is coherent and whether $q$ is satisfied in the interpretation to which is corresponds.
A standard guess and check algorithm (which guesses a set of mosaics and checks its coherence and satisfaction of $q$)
then shows that it is \np to decide $\Amc\not\models Q$.
%that these notions are also used in the investigation of the
%relationship between OMQCs and surjective CSPs in the main paper.

\begin{defi}
  Let $\abox$ be a $\Sigma_{\Asf}$-ABox and
  $Q=(\Tmc,\Sigma_\Asf,\Sigma_\Csf,q)$ from $(\alchi,\NC \cup
  \NR,\text{UCQ})$. A \emph{mosaic for} $Q$ and $\abox$ is a pair
  $(\inter,\tau)$, where $\inter$ is a forest-shaped interpretation
  and $\tau:\domain\rightarrow \mathsf{TP}(\tbox)$, satisfying the
  following properties:
  \begin{enumerate}
  \item $\domain\cap\indvnames=\adom{\abox}$;
  \item $\domain\setminus\adom{\abox}$ is a $\length{\tbox}$-ary tree
    of depth at most $\length{q}$;
  \item for all $d\in\domain\setminus\adom{\abox}$, the cardinality of
    $\{a\in\adom{\abox}\mid (d,a)\in\ext{r}\text{ for some role }r\}$
    is at most $\length{\tbox}$;
  \item for all $d\in\domain$ and
    $A\in\conceptnames\cap\cclos{\tbox}{}$, $d\in\ext{A}$ iff
    $A\in\tau(d)$;
  \item for all $(d,e)\in\domain\times\domain$ and roles $r$, if
    $(d,e)\in\ext{r}$ then $\tau(d)\rightsquigarrow_r \tau(e)$;
  \item for all $d\in\domain\setminus\adom{\abox}$ of depth at most
    $|q|-1$, if $\existsr{r}{C}\in\tau(d)$, then there is some
    $e\in\domain$ such that $(d,e)\in\ext{r}$ and $C\in\tau(e)$;
  \item $\inter\models\abox$
  \item for all $r\sqsubseteq s\in\tbox$, $\ext{r}\subseteq\ext{s}$;
  \item for all $A\in\Sigma_\Csf$ and all $A$ that do not occur in
    $\Tmc$, $\ext{A}=\{a\mid A(a)\in\abox\}$ and for all
    $r\in\Sigma_\Csf$ and all $r$ that do not occur in $\Tmc$,
    $\ext{r}=\{(a,b)\mid r(a,b)\in\abox\}$. \hfill$\triangle$
  \end{enumerate}
\end{defi}

\smallskip
\noindent
Let $(\Imc,\tau)$ and $(\Imc',\tau')$ be mosaics. A bijective function
$f:\Delta^{\Imc}\rightarrow \Delta^{\Imc'}$ is an \emph{isomorphism}
between $(\Imc,\tau)$ and $(\Imc',\tau')$ if both $f$ and its inverse
$f^{-1}$ are homomorphisms preserving $\NI$ and $\tau(d)=\tau'(f(d))$,
for all $d\in \Delta^{\Imc}$. We call $(\Imc,\tau)$ and
$(\Imc',\tau')$ \emph{isomorphic} if there is an isomorphism
between $(\Imc,\tau)$ and $(\Imc',\tau')$.
% $\Delta\subseteq\domain$, denoted by $\inter|_\Delta$, is defined as
% the following interpretation:
% \begin{itemize}
% \item $\Delta^{\inter|_\Delta}=\Delta$;
% \item $A^{\inter|_\Delta}=\ext{A}\cap\Delta$, for all
%   $A\in\conceptnames$;
% \item $r^{\inter|_\Delta}=\ext{r}\cap (\Delta\times\Delta)$, for all
%   $r\in\rolenames$.
% \end{itemize}
% We say that $\interj$ is a \emph{subinterpretation} of $\inter$ if
% $\domainj\subseteq\domain$ and $\interj=\inter|_{\domainj}$.
%
% These standard notions can be extended to mosaics in a
% straightforward fashion.
% For a mosaic
% $(\inter,\tau)$ and $\Delta\subseteq\domain$, we denote by
% $(\inter,\tau)|_\Delta$ the \emph{restriction} of $(\inter,\tau)$ to
% $\Delta$, i.e.,
% $(\inter,\tau)|_\Delta=(\inter|_\Delta,\tau|_\Delta)$. A mosaic
% $(\inter',\tau')$ is a \emph{submosaic} of $(\inter,\tau)$ if
% $\domainp{'}\subseteq\domain$ and
% $(\inter',\tau')=(\inter,\tau)|_{\domainp{'}}$.

For a forest $F$, $w\in F$, and $n\geq 0$, we denote by $F_{w,n}$
the set of all words $w'\in F$ such that $w'$ begins with $w$ and
$\length{w'}\leq\length{w}+n$.

\begin{defi}
  A set $M$ of mosaics for $(\Tmc,\Sigma_\Asf,\Sigma_\Csf,q)$ and
  $\abox$ is \emph{coherent} if the following conditions are
  satisfied:
  \begin{itemize}
  \item for all $(\inter,\tau),(\inter',\tau')\in M$,
$(\inter,\tau)|_{\adom{\abox}}=(\inter',\tau')|_{\adom{\abox}}$.
\item for all $(\inter,\tau)\in M$, $a\in\adom{\abox}$, and
  $\existsr{r}{C}\in\cclos{\tbox}{}$, if $\existsr{r}{C}\in\tau(a)$,
  then there exists $(\inter',\tau')\in M$ and $d\in\domainp{'}$ such
  that $(a,d)\in\extp{r}{'}$ and $C\in\tau'(d)$, where $d$ is either
  the root of $\Delta^{\Imc'}\setminus \mn{Ind}(\Amc)$ or $d\in
  \mn{Ind}(\Amc)$;
\item for all $(\inter,\tau)\in M$ and all successors $d\in\domain$ of the root of $\domain\setminus\adom{\abox}$, 
  there exist $(\inter',\tau')\in M$ and an isomorphism $f$ from $(\inter,\tau)|_{\domain_{d,\length{q}-1}\cup\adom{\abox}}$ 
  to $(\inter',\tau')|_{\domainp{'}_{e,\length{q}-1}\cup\adom{\abox}}$ such that $f(d)=e$, where
  $e\in\domainp{'}$ is the root of $\domainp{'}\setminus\adom{\abox}$.
\end{itemize}
We write $M\vdash q(\vec{a})$ if $\biguplus_{(\inter,\tau)\in M}\inter\models
q(\vec{a})$, where here and in what follows $\biguplus$ denotes a disjoint
union that only makes the elements that are not in $\adom{\abox}$
disjoint.
  \hfill$\triangle$
\end{defi}

\begin{lem}\label{lem:alchi_mosaic}
  Let $\abox$ be a $\Sigma_{\Asf}$-ABox, $\vec{a}$ a tuple in ${\sf Ind}(\Amc)$, and
  $Q=(\Tmc,\Sigma_\Asf,\Sigma_\Csf,q)$ a OMQC from $(\alchi,\NC \cup
  \NR,\text{UCQ})$. Then the following are equivalent:
  \begin{enumerate}
  \item $\abox\models Q(\vec{a})$;
  \item $M \vdash q(\vec{a})$, for all coherent sets $M$ of mosaics for $Q$ and
    $\abox$.
  \end{enumerate}
\end{lem}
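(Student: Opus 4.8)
The plan is to prove the two directions separately, using Lemma~\ref{lem:alchi_forest_model} as the bridge to ordinary forest-shaped models.

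\textbf{From (1) to (2).} Suppose $M$ is a coherent set of mosaics for $Q$ and $\abox$; I want to show $M \vdash q(\vec a)$. The idea is to unfold $M$ into a single forest-shaped model $\Imc$ of the kind described in Lemma~\ref{lem:alchi_forest_model}(2), in such a way that $\biguplus_{(\inter,\tau)\in M}\inter$ homomorphically maps onto (an initial piece of) $\Imc$. Concretely, I would build $\Imc$ by starting from the common ABox part $(\inter,\tau)|_{\adom{\abox}}$ (well-defined by the first coherence condition) and then repeatedly attaching, below each node $d$ that still has an unsatisfied existential $\exists r.C\in\tau(d)$ at depth below its mosaic's horizon, a fresh copy of the appropriate successor mosaic. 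The second coherence condition handles the case where $d\in\adom{\abox}$ (it supplies a mosaic whose non-ABox root, or some ABox individual, is an $r$-successor witnessing $C$), and the third coherence condition handles the case where $d$ lies inside the tree part of some mosaic at the ``bottom'' of its depth bound: it guarantees that the subtree hanging off $d$ is isomorphic (over $\adom{\abox}$) to the top of another mosaic, so we can glue that mosaic in and continue. Fairness in this attaching process, together with condition~(6) of the mosaic definition and the coherence conditions, ensures that in the limit every existential in every type is realized. Conditions~(4), (5), (8), (9) of the mosaic definition, being local, are preserved; so by the usual type-guided argument (essentially the Claim in the proof of Lemma~\ref{lem:alchi_forest_model}, saying $\mn{tp}_\Imc(d)=\tau(d)$) $\Imc$ is a model of $\Tmc$ and $\abox$ respecting $\Sigma_\Csf$, of arity $|\Tmc|$, with $\adom{\abox}$ as its roots. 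By Lemma~\ref{lem:alchi_forest_model}, $\abox\models Q(\vec a)$ gives $\Imc\models q(\vec a)$. Finally, the natural map sending each copy back to the mosaic element it was copied from is a homomorphism from $\Imc$ to $\biguplus_{(\inter,\tau)\in M}\inter$ preserving $\NI$, so Lemma~\ref{homo} yields $\biguplus_{(\inter,\tau)\in M}\inter \models q(\vec a)$, i.e.\ $M\vdash q(\vec a)$.

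\textbf{From (2) to (1).} Contrapositively, assume $\abox\not\models Q(\vec a)$. By Lemma~\ref{lem:alchi_forest_model}, there is a forest-shaped model $\Imc$ of $\Tmc$ and $\abox$ respecting $\Sigma_\Csf$, of arity $|\Tmc|$, with $\adom{\abox}$ its set of roots, satisfying the existential-witnessing condition, and with $\Imc\not\models q(\vec a)$. I want to carve $M$ out of $\Imc$. For each root $a\in\adom{\abox}$ and each tree $T$ hanging off $a$ (i.e.\ each maximal subtree of the non-root part of $\Delta^\Imc$ rooted at a successor of $a$), form a mosaic by taking the restriction of $\Imc$ to $\adom{\abox}$ together with the top $|q|$ levels of $T$, set $\tau(d)=\mn{tp}_\Imc(d)$, and, where a node at the very bottom level still needs a missing existential, note that in $\Imc$ that existential is witnessed either at an ABox individual (fine — it is already present) or by a successor one level deeper, which is the root of another such truncated subtree: so we include that truncated subtree as its own mosaic as well. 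Collecting all these mosaics gives a set $M$. The first coherence condition holds because all mosaics share the restriction $\Imc|_{\adom{\abox}}$; the second holds because each existential at an ABox individual is witnessed in $\Imc$; the third holds because the bottom-level subtrees were themselves promoted to mosaics. So $M$ is coherent. Since all these pieces live inside the single model $\Imc$ and overlap exactly on $\adom{\abox}$, the disjoint union $\biguplus_{(\inter,\tau)\in M}\inter$ homomorphically maps into $\Imc$ (send each copied element to its original), hence by Lemma~\ref{homo} and $\Imc\not\models q(\vec a)$ we get $\biguplus_{(\inter,\tau)\in M}\inter\not\models q(\vec a)$, that is, $M\not\vdash q(\vec a)$. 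This refutes (2).

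\textbf{Main obstacle.} The delicate point is the bookkeeping in the (1)$\Rightarrow$(2) unfolding: one must make sure that when a fresh copy of a successor mosaic is glued onto a node $d$, (a) the type $\tau(d)$ already seen is exactly the type of the root of the glued mosaic (guaranteed by the isomorphism in coherence condition three, or by the $\rightsquigarrow_r$ compatibility plus condition~(4) for the ABox case), so that no type clash arises and the inductive Claim stays true, and (b) that the depth bound $|q|$ on each mosaic's tree part is enough to keep the query-matching argument valid, i.e.\ that a UCQ of size $\le|q|$ matched in the limit model $\Imc$ must already be matched within a bounded-depth region that is covered by finitely many glued mosaics and therefore by $\biguplus_{(\inter,\tau)\in M}\inter$. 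Handling the edges into roots (allowed by forest-shapedness) and the extra $r$-edges from tree nodes into ABox individuals (mosaic condition~(3)) without breaking disjointness of the union — recall $\biguplus$ only makes non-ABox elements disjoint — also requires care, but is routine once the gluing discipline is fixed.
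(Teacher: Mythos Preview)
Your overall architecture matches the paper's: in one direction carve mosaics out of a forest-shaped countermodel, in the other unfold a coherent set into a model. The direction from (2) to (1) is essentially the paper's argument and is fine (though the paper simply takes one mosaic $(\Imc_d,\tau_d)$ for \emph{every} non-ABox element $d$ of the forest model, which directly gives coherence without the iterative closure you describe; also note that the third coherence condition is about depth-$1$ successors of the mosaic root, not about bottom-level elements).

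There is, however, a genuine gap in your (1)$\Rightarrow$(2) argument: the ``natural map sending each copy back to the mosaic element it was copied from'' is \emph{not} a homomorphism from the unfolded model $\Imc$ into $\biguplus_{(\inter,\tau)\in M}\inter$. The unfolded $\Imc$ contains arbitrarily long tree paths, whereas in $\biguplus$ the non-ABox parts of distinct mosaics are disjoint and each has depth at most $|q|$. Along a tree path $d_0,d_1,\dots,d_n$ in $\Imc$ with $n>|q|$, consecutive elements originate from overlapping mosaic copies, and whichever single ``source'' you assign to each $d_i$, some edge $(d_i,d_{i+1})$ will straddle two different mosaics in $\biguplus$ and hence fail to be preserved. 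Your own obstacle paragraph almost sees this when you say the match lies in ``a bounded-depth region covered by finitely many glued mosaics'' --- but finitely many is not good enough; you need a \emph{single} mosaic per piece.

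The paper closes this gap not via a global homomorphism but via a local argument. Assuming $\Imc\models q(\vec a)$, fix a witnessing assignment $\pi$ for some disjunct of $q$ and partition the non-ABox image $\{\pi(x)\mid \pi(x)\notin\adom{\abox}\}$ into its maximal connected subtrees $T_1,\dots,T_m$. Each $T_i$ has at most $|q|$ nodes, hence depth at most $|q|-1$ from its root $d$, so $T_i\subseteq\Delta^{\Imc_d}$. Since, by the inductive construction of $\Imc$, the pair $(\Imc_d,\tau_d)$ is isomorphic to some mosaic in $M$, one transports $\pi$ restricted to $T_i$ through that isomorphism; variables mapped to $\adom{\abox}$ stay fixed. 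Stitching these partial maps yields an assignment witnessing $\biguplus_{(\inter,\tau)\in M}\inter\models q(\vec a)$. This local transport is what you should put in place of the global-homomorphism claim.
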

\begin{proof}
  (2) $\Rightarrow$ (1). Suppose
  $\abox\not\models Q(\vec{a})$. Let $\Imc$ be a forest-shaped model with $\Imc\not\models q(\vec{a})$
  and satisfying the conditions of Lemma~\ref{lem:alchi_forest_model} (2).
  For each $d\in\domain\setminus\adom{\abox}$, let
  $\inter_d=\inter|_{\domain_{d,\length{q}}\cup\adom{\abox}}$ and
  $\tau_d=\bigcup_{e\in\domainp{_d}}\{e\mapsto
  \mn{tp}_{\inter}(e)\}$. Now set $M=\{(\inter_d,\tau_d)\mid
  d\in\domain\setminus\adom{\abox}\}$ if $\domain\neq\adom{\abox}$;
  and set $M=\{(\inter,\tau)\}$ with
  $\tau=\bigcup_{a\in\adom{\abox}}a\mapsto \mn{tp}_\inter(a)$ if
  $\domain=\adom{\abox}$. It is not hard to see that $M$ is a coherent
  set of mosaics for $Q$ and $\abox$
  (to satisfy Condition~9 for mosaics for concept names $A$ and role names $r$ that do not occur in $\Tmc$,
  we can clearly assume that $\ext{A}=\{a\mid A(a)\in\abox\}$ for all $A$
  that do not occur in $\Tmc$, and $\ext{r}=\{(a,b)\mid
  r(a,b)\in\abox\}$ for all $r$ that do not occur in $\Tmc$).  It
  remains to show that $M\not\vdash q(\vec{a})$. 
  But this follows from Lemma~\ref{homo} and the fact that the function
  $h$ from $\inter'=\biguplus_{(\interj,\tau)\in M}\interj$ to $\inter$ mapping every $a\in \NI$ to itself
  and every copy $d'\in \Delta^{\inter'}$ of some $d\in \Delta^{\Imc}$ to $d$ is a homomorphism from $\Imc'$ to $\Imc$
  preserving $\NI$.

  (1) $\Rightarrow$ (2). Suppose there is a coherent set $M$ of mosaics
  for $Q$ and $\abox$ with $M\not\vdash q$. We construct, by
  induction, a sequence of pairs
  $(\inter_0,\tau_{0}),(\inter_1,,\tau_{1}),\ldots$, where every
  $\Imc_{i}$ is a forest-shaped interpretation and
  $\tau_{i}:\Delta^{\Imc_{i}}\rightarrow \mathsf{TP}(\Tmc)$ such that
  every $d\in\domainp{_i}\setminus\adom{\abox}$ of depth
  $\leq i$ is associated with a mosaic $(\Imc_{d},\tau_{d})=
  (\Imc_{i},\tau_{i})|_{\Delta^{\Imc_{i}}_{d,|q|}\cup \mn{Ind}(\Amc)}$
  that is isomorphic to a mosaic in $M$.
  % , and where $(\Imc_{d},\tau_{d})=
  % (\Imc_{i},\tau_{i})|_{\Delta^{\Imc_{i}}_{d,|q|}\cup
  % \mn{Ind}(\Amc)}$d$ is the root of
  % $\domainp{_d}\setminus\adom{\abox}$,
  % $\inter_d=\inter_i|_{\domainp{_d}}$, and
  % $\tau_{d}=\tau_{i}|_{\domainp{_d}}$.

  For $i=0$, let $M_0$ be the set of all $(\interj,\tau)\in M$ such
  that there are $a\in\adom{\abox}$, $d\in\domainj$, and
  $\existsr{r}{C}\in\cclos{\tbox}{}$ with $\existsr{r}{C}\in\tau(a)$,
  $C\in\tau(d)$, $(a,d)\in\extj{r}$, and $d$ is either the root of
  $\domainj\setminus\adom{\abox}$ or $d\in \mn{Ind}(\Amc)$. Define
  $$
  \inter_0=\biguplus_{(\interj,\tau)\in M_0}\interj, \quad
  \tau_{0}=\biguplus_{(\interj,\tau)\in M_{0}}\tau
  $$
  It is easy to see that $(\inter_0,\tau_{0})$ satisfies the
  conditions above.

  For $i> 0$, let $d'\in\domainp{_i}\setminus\adom{\abox}$ be of depth
  $i$ and let $d$ be the unique element of 
  $\domainp{_i}\setminus\adom{\abox}$ of depth $i-1$ such that $d'$ is
  the successor of $d$. By the induction hypothesis and coherency of
  $M$, there is some $(\interj,\tau)\in M$ with $e\in\domainj$ the
  root of $\domainj\setminus\adom{\abox}$ such that
  $(\inter_d,\tau_d)|_{\domainp{_d}_{d',\length{q}-1}\cup\adom{\abox}}$
  is isomorphic to
  $(\interj,\tau)|_{\domainj_{e,\length{q}-1}\cup\adom{\abox}}$. W.l.o.g.\
  we assume that
  $\domainp{_d}_{d',\length{q}-1}=\domainj_{e,\length{q}-1}$; if this
  is not the case, we can always rename the elements in the latter
  without destroying the isomorphism. Set
  $(\inter_{d'},\tau_{d'})=(\interj,\tau)$ and assume that the points
  in $\Delta^{\Imc_{d'}}\setminus \domainp{_d}_{d',\length{q}-1}$ are
  fresh.  Set
  $$
  (\inter_{i+1},\tau_{i+1})=(\inter_{i}, \tau_{i})\cup
  \bigcup_{d'\in\domainp{_i}\setminus\adom{\abox}\text{ of depth
    }i}(\inter_{d'},\tau_{d'})
$$
Now define the interpretation $\inter$ as the limit of the 
sequence $\Imc_{0},\Imc_{1},\ldots$ (cf. proof of Lemma~\ref{lem:alchi_forest_model}).
It is shown in the appendix that $\inter$ is a model of $\tbox$ and $\abox$ that respects closed
predicates $\Sigma_\Csf$ such that $\inter\not\models q(\vec{a})$.
\end{proof}

\begin{lem}\label{lem:mosaic_set_bound}
  Let $\abox$ be a $\Sigma_{\Asf}$-ABox and
  $Q=(\Tmc,\Sigma_\Asf,\Sigma_\Csf,q)$ in $(\alchi,\NC \cup
  \NR,\text{UCQ})$. Then, up to isomorphisms, the size of any
  coherent set $M$ of mosaics for $Q$ and $\Amc$ is bounded by $(2|\Amc|)^{|\Tmc|^{f(|q|)}}$, for
  a linear polynomial $f$.
\end{lem}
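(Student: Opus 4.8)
The plan is to show that, up to isomorphism, a mosaic for $Q$ and $\Amc$ carries only a bounded amount of data, and then to count the possibilities. First I would use coherence to freeze the ABox part: by the first item in the definition of a coherent set, all mosaics $(\inter,\tau)\in M$ share the \emph{same} restriction $(\inter,\tau)|_{\adom{\Amc}}$, and since any isomorphism between mosaics preserves $\NI$ and hence is the identity on $\adom{\Amc}$ (which by Condition (1) of the mosaic definition is exactly the set of elements that are individual names), two mosaics in $M$ are isomorphic iff there is an isomorphism between them fixing $\adom{\Amc}$ pointwise. Counting isomorphism classes of mosaics in $M$ therefore reduces to counting the non-isomorphic ways of attaching a ``tree part'' to this fixed ABox part. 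Put $m=|\adom{\Amc}|\le|\Amc|$, and assume $|\Tmc|\ge 2$ (for smaller $\Tmc$ there are no inclusions at all and the bound is trivial). By Condition (2), $\domain\setminus\adom{\Amc}$ is a $|\Tmc|$-ary tree of depth at most $|q|$, so it has at most $N:=\sum_{i=0}^{|q|}|\Tmc|^{i}\le|\Tmc|^{|q|+1}$ nodes, and up to isomorphism we may take it to be a subtree of the full $|\Tmc|$-ary tree of depth $|q|$ (a prefix-closed set of words).

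Next I would enumerate the data pinning such a mosaic down, given the fixed ABox part. There is (i) the shape of the tree part, i.e.\ which of the at most $N$ word positions occur --- at most $2^{N}$ choices; (ii) for each tree node $d$ its $\Tmc$-type $\tau(d)\in\mathsf{TP}(\Tmc)$; since ${\sf cl}(\Tmc)$ has at most $2|\Tmc|$ elements we have $|\mathsf{TP}(\Tmc)|\le 2^{2|\Tmc|}$, hence at most $\bigl(2^{2|\Tmc|}\bigr)^{N}$ choices in total, and by Conditions (4) and (9) the concept-name memberships of the tree nodes are then completely determined by the $\tau(d)$; and (iii) the role atoms incident to tree nodes. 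For (iii), Condition (9) forces that such atoms are labelled only by roles occurring in $\Tmc$, of which there are at most $2|\Tmc|$ (role names and their inverses); by forest-shapedness the only role atoms among tree nodes connect a node with its parent, contributing at most $\bigl(2^{4|\Tmc|}\bigr)^{N}$ choices; and by Condition (3) each tree node is connected by role atoms to at most $|\Tmc|$ individuals of $\adom{\Amc}$, so choosing these individuals together with the roles labelling the connections costs at most $(2m)^{|\Tmc|+1}\cdot 2^{4|\Tmc|^{2}}$ per node, hence at most $\bigl((2m)^{|\Tmc|+1}\cdot 2^{4|\Tmc|^{2}}\bigr)^{N}$ overall. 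The role and concept atoms among the elements of $\adom{\Amc}$, together with the $\tau$-values there, coincide across all mosaics in $M$ by coherence and contribute nothing further, and by Condition (9) closed predicates and predicates not occurring in $\Tmc$ have no atoms involving tree nodes at all, so nothing else needs to be chosen.

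Multiplying these counts and absorbing the pure powers of $2$ into powers of $2m$ (using $m\ge 1$), the number of isomorphism classes of mosaics in $M$ is at most $(2m)^{c\,|\Tmc|^{2}N}$ for an absolute constant $c$. Since $N\le|\Tmc|^{|q|+1}$ and $|\Tmc|\ge 2$, we have $c\,|\Tmc|^{2}N\le c\,|\Tmc|^{|q|+3}\le|\Tmc|^{|q|+3+\lceil\log_{2}c\rceil}$, so this is at most $(2|\Amc|)^{|\Tmc|^{f(|q|)}}$ with $f(|q|)=|q|+3+\lceil\log_{2}c\rceil$, a linear polynomial, as claimed. The mathematical content is elementary; the one point demanding care is the bookkeeping of exponents: every contribution is either a polynomial in $|\Tmc|$ (the types and the role labels) or of the form $(2m)^{\mathrm{poly}(|\Tmc|)}$ (the connections to the ABox), taken \emph{per tree node}, so that after raising to the power $N$ and simplifying, the overall exponent remains of the required shape $|\Tmc|^{f(|q|)}$ with $f$ linear in $|q|$.
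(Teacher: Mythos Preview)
Your argument is correct and follows precisely the approach the paper sketches: the paper's proof merely points to Conditions~1, 2, 3, and~9 on mosaics together with the first coherence condition, and you have carried out the implied counting in detail. The paper does not spell out the arithmetic, so your explicit bookkeeping of tree size, type choices, parent--child role labels, and the at most $|\Tmc|$ ABox neighbours per tree node (with the final absorption of constants into the exponent) is a faithful elaboration rather than a different route.
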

\begin{proof}
 % \todo[inline]{This proof needs to have more details on why we have this bound.}
  The bound follows from Conditions~ 1, 2, 3, and 9 on mosaics and the
  first condition on coherent sets of mosaics. Note, in particular,
  that by the first condition on coherent sets $M$ of mosaics the
  restriction to $\mn{Ind}(\Amc)$ coincides for all mosaics in $M$ and
  that by Condition~3 on mosaics for any $d\in \Delta^{\Imc}\setminus
  \mn{Ind}(\Amc)$ the number of distinct $a\in \mn{Ind}(\Amc)$ with
  $(d,a)\in r^{\Imc}$ for some role $r$ is bounded by $|\Tmc|$ for any
  mosaic $(\Imc,\tau)$.
\end{proof}
We are now in the position to prove
Theorem~\ref{thm:alchi_closed_datacomplexity}.  Fix an OMQC
$Q=(\tbox,\Sigma_\Asf,\Sigma_\Csf,q)$ in
$(\alchi,\NC \cup \NR,\text{UCQ})$.  We show that given a
$\Sigma_{\Asf}$-ABox $\abox$ and tuple $\vec{a}$ in ${\sf Ind}(\Amc)$,
deciding $\abox\not\models Q(\vec{a})$ is in \np.  Assume $\Amc$ and
$\vec{a}$ are given. By Lemmas~\ref{lem:alchi_mosaic}
and~\ref{lem:mosaic_set_bound}, $\abox\not\models Q(\vec{a})$ iff
there exists a coherent set $M$ of mosaics for $Q$ and $\Amc$ such
that $|M| \leq (2|\Amc|)^{|\Tmc|^{f(|q|)}}$ ($f$ a linear polynomial)
and $M\not\vdash q(\vec{a})$. Thus, it is sufficient to show that it
can be decided in time polynomial in the size $|\Amc|$ of $\Amc$
whether $M$ is a coherent set of mosaics for $Q$ and $\Amc$ and
whether $M\not\vdash q(\vec{a})$. The first condition is clear. For
the second condition, observe that
$\Jmc = \biguplus_{(\Imc,\tau)\in M}\Imc$ can be constructed in
time polynomial in $|\Amc|$ and that checking if
$\Jmc\models q(\vec{a})$ is again possible in time polynomial in $|\Amc|$.

\section{Quantified Query Case: Dichotomies for \dlliter and \el}
\label{sec:dichotomytbox}

We consider the quantified query case and show two dichotomy results:
for every \dlliter TBox with closed predicates $(\Tmc,\Sigma_\Csf)$,
UCQ evaluation is FO-rewritable or \conp-complete. In the latter case,
there is even a tCQ $q$ such that evaluating the OMQ
$(\Tmc,\Sigma_\Csf,q)$ is \conp-hard. It thus follows that a TBox with
closed predicates is FO-rewritable for tCQs iff it is FO-rewritable
for CQs iff it is FO-rewritable for UCQs, and likewise for
\conp-completeness. It also follows that FO-rewritability coincides
with tractability, that is, query evaluation in \ptime.  We obtain the
same results for \EL TBoxes with closed predicates except that tCQs
are replaced with dtCQs and FO-rewritability is replaced with \ptime.
In both the \dlliter case and the \EL case, tractability also implies
that query evaluation with closed predicates coincides with query
evaluation without closed predicates, unless the data is inconsistent with the TBox. 
The proof strategy is similar in both cases, but the details are more involved for \EL. We first
consider the notion of convexity which formalizes the absence of
implicit disjunctions in answering tree-shaped queries and show that
for $\mathcal{ALCHI}$ TBoxes with closed predicates, non-convexity implies
\conp-hardness. We then introduce a syntactic condition for \dlliter
TBoxes (and later also for \EL TBoxes) with closed predicates called
safeness and show that non-safeness implies non-convexity while safeness
implies tractability.
%
% start with introducing the notion of convexity of a TBox
% with closed predicates that 
%
%
% To prove
% these dichotomies, we establish two characterisations of tractable query
% evaluation that are of independent interest: first, tractability is
% captured by rather natural syntactic constraints on the interaction
% between open and closed predicates in TBoxes. We call TBoxes with
% closed predicates satisfying these constraints \emph{safe}. Second,
% tractability is captured by the non-existence of proper disjunctive
% consequences of a TBox with closed predicates. TBoxes satisfying this
% condition are called \emph{convex}.
%

\subsection{Non-Convexity Implies \conp-hardness}

It is well-known that the notion of convexity is closely related to
the complexity of query evaluation, see for example
\cite{DBLP:conf/lpar/KrisnadhiL07,DBLP:journals/lmcs/LutzW17}.
Recall that we omit $\Sigma_{\Asf}$ from the OMQC $(\tbox,\Sigma_{\Asf},\Sigma_{\Csf},q)$ 
and write $(\tbox,\Sigma_{\Csf},q)$ if $\Sigma_{\Asf}=\NC\cup \NR$.
%For CQs $q_{1}(x),q_{2}(x)\in =\varphi_{1}(x)$ and $q_{2}(x)=\varphi_{2}(x)$ with one answer variable $x$,
%we denote by $q_{1}\vee q_{2}(x)$ the FOQ with answer variable $x$ defined by the disjunction $\varphi_{1}(x)\vee% \varphi_{2}(x)$.
% 
\begin{defi}\label{def:convex_tbox}
  Let $\mathcal{Q}\in \{\text{tCQ},\text{dtCQ}\}$. A TBox with closed predicates $(\Tmc,\Sigma_\Csf)$ is 
  \emph{convex} for $\Qmc$ if for all ABoxes $\abox$, $a\in\adom{\abox}$, and $q_1(x), q_2(x)\in \Qmc$ 
  the following holds: 
  if $\Amc\models (\tbox,\Sigma_\Csf,q_1\vee q_2)(a)$, then $\Amc\models (\tbox,\Sigma_\Csf,q_{i})(a)$ 
  for some $i \in \{1,2\}$.
  \hfill$\triangle$
\end{defi}

\smallskip
\noindent
Without closed predicates, every \dlliter and $\mathcal{EL}$ TBox is
convex for tCQs. In fact, it is shown in
\cite{DBLP:journals/lmcs/LutzW17,DBLP:conf/pods/HernichLPW17} that for
TBoxes in $\mathcal{ALCHI}$ (and even more expressive languages)
without closed predicates, convexity for tCQs is a necessary condition
for UCQ evaluation to be in \ptime (unless \ptime = \conp).  The
following is an example of a \dlliter TBox with closed predicates that
is not convex for tCQs.
\begin{exa}\label{ex1}
Let
$\Tmc = \{ A \sqsubseteq \exists r.\top, \exists
    r^{-}.\top\sqsubseteq B\}$ and $\Sigma_\Csf = \{ B \}$.
We show that $(\Tmc,\Sigma_\Csf)$ is not convex for tCQs. To this end, let
$$
\begin{array}{rcl}
    \Amc &=&
    \{A(a),B(b_{1}),A_{1}(b_{1}), B(b_{2}), A_{2}(b_{2})\} \\[1mm]
    q_i &=& \exists
    y \, r(x,y) \wedge A_i(y) %\wedge B(y) 
    \text{ for } i \in \{1,2\}.
\end{array}
$$
Then $\Amc\models(\Tmc,\Sigma_\Csf,q_1 \vee q_2)(a)$, whereas
$\Amc \not\models (\Tmc,\Sigma_\Csf, q_{i})(a)$ for any $i \in \{1,2\}$.
\end{exa}
We next show that non-convexity implies that query evaluation is
\conp-hard. The result is formulated for $\mathcal{ALCHI}$ as our
maximal description logics and comes in a directed version (used lated
for \EL) and a non-directed one (used for \dlliter).
%this condition is necessary for \conp-hardness as well.
%
\begin{lem}\label{lem:nonconvex_imp_conp}
  Let $(\Tmc,\Sigma_\Csf)$ be an $\mathcal{ALCHI}$ TBox with closed predicates that is not 
  convex for tCQs (resp.\ dtCQs). Then there exists a tCQ $q$ (resp.\ dtCQ $q$) such that the  
  evaluation problem for $(\tbox,\Sigma_\Csf,q)$ is \conp-hard.
\end{lem}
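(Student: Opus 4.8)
The plan is to reduce a standard \conp-complete combinatorial problem to the evaluation problem for $(\Tmc,\Sigma_\Csf,q)$, building the required tree CQ (resp.\ directed tree CQ) $q$ from the data witnessing non-convexity. Unfolding Definition~\ref{def:convex_tbox}, fix an ABox $\Amc_0$, an individual $a_0\in\mn{Ind}(\Amc_0)$, and tCQs (resp.\ dtCQs) $q_1(x),q_2(x)$ such that $\Amc_0\models(\Tmc,\Sigma_\Csf,q_1\vee q_2)(a_0)$ but $\Amc_0\not\models(\Tmc,\Sigma_\Csf,q_i)(a_0)$ for $i=1,2$; hence there are models $\Imc_1,\Imc_2$ of $\Tmc$ and $\Amc_0$ respecting $\Sigma_\Csf$ with $\Imc_i\not\models q_i(a_0)$, and therefore $\Imc_1\models q_2(a_0)$ and $\Imc_2\models q_1(a_0)$. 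I would regard a fresh disjoint copy of $\Amc_0$ as a \emph{switch}: in every model respecting $\Sigma_\Csf$ it selects $q_1$ or $q_2$ at the copy of $a_0$, and both selections are realizable (by a copy of $\Imc_1$, resp.\ of $\Imc_2$). After standard normalizations of the gadget (e.g.\ we may assume $\Amc_0$ is connected and uses only predicates occurring in $\Tmc,\Sigma_\Csf,q_1,q_2$), the later bookkeeping stays manageable.

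Then I would reduce from the \conp-complete problem of deciding that a $3$-uniform hypergraph is \emph{not} $2$-colorable (the complement of positive NAE-$3$SAT). Given a hypergraph $H=(V,E)$, build a $\Sigma_\Asf$-ABox $\Amc_H$ consisting of: one fresh disjoint copy of $\Amc_0$ for each $v\in V$ (whose switch at $a_0^v$ encodes the color of $v$); one ``constraint individual'' $c_e$ for each hyperedge $e\in E$; and, using role names that occur neither in $\Tmc$ nor in $\Sigma_\Csf$, edges linking each $c_e$ to the switches $a_0^v$ with $v\in e$. No closed predicate is asserted of the $c_e$ or along the fresh edges, so the closed-predicate extensions of $\Amc_H$ are exactly the union of those of the vertex copies. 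The query $q$ is a tree CQ (a directed tree CQ in the dtCQ variant) whose root can only be mapped to some $c_e$ and which, below the root, embeds copies of $q_1$ so that $q$ has a match at $c_e$ precisely when $e$ is ``monochromatic'' in the sense selected by $q_1$. Realizing \emph{both} monochromatic colorings by a \emph{single} (directed) tree query is the genuinely delicate point: I would additionally attach to each hyperedge a complementary group of switches, forced by a small auxiliary gadget to take the opposite selection, so that the ``all $q_2$'' case collapses to ``all $q_1$'' on the complementary group; keeping the whole query a single and, in the dtCQ case, \emph{directed} tree is exactly where the assumption that $q_1,q_2$ are dtCQs and that only downward-oriented roles occur comes in.

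Finally I would prove correctness: $\Amc_H\models(\Tmc,\Sigma_\Csf,q)(a^\ast)$ iff $H$ is not $2$-colorable, where $a^\ast$ is a designated individual. If $H$ admits a proper $2$-coloring $\chi$, assemble a countermodel as the disjoint union of one copy of $\Imc_1$ or $\Imc_2$ per vertex (chosen according to $\chi$), the corresponding copies for the complementary switch groups, and the fresh constraint structure; since the copies are pairwise disjoint and introduce no new closed-predicate facts, this union is a model of $\Tmc$ and $\Amc_H$ respecting $\Sigma_\Csf$, and since $\chi$ is proper no hyperedge is monochromatic, so $q$ has no match at any $c_e$ — one still has to rule out accidental matches of the tree query, which follows from the shape of $q$ (its root forces a $c_e$) together with the fact that the copies of $\Imc_1,\Imc_2$ do not by themselves realize the forbidden patterns. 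Conversely, a countermodel $\Jmc$ of $\Amc_H$ w.r.t.\ $(\Tmc,\Sigma_\Csf)$ avoiding $q(a^\ast)$ respects $\Sigma_\Csf$, so by the switch property each $a_0^v$ satisfies $q_1$ or $q_2$ in $\Jmc$; reading this off gives a $2$-coloring of $V$, and the absence of a match of $q$ at the $c_e$ forces no hyperedge to be monochromatic, i.e.\ the coloring is proper. The main obstacle is the first direction combined with the query design: one must make the disjoint-union model respect the \emph{global} constraints imposed by the closed predicates while simultaneously avoiding \emph{every} homomorphic image of $q$, and, upstream of that, fit both monochromaticity tests into a single (directed) tree CQ.
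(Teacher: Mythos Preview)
Your high-level plan coincides with the paper's: disjoint copies of the non-convexity witness $\Amc_0$ act as two-valued switches, fresh role names outside $\mn{sig}(\Tmc)\cup\Sigma_\Csf$ wire them into a combinatorial instance, a single tree CQ detects a violated constraint, and the countermodel for a satisfiable instance is the disjoint union of per-switch models. That part is sound.

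The gap is the choice of source problem together with the unfinished ``complementary gadget.'' With hypergraph $2$-coloring, a bad hyperedge is monochromatic in \emph{either} color, so the condition you need to express is ``all three switches satisfy $q_1$'' \emph{or} ``all three satisfy $q_2$''---inherently disjunctive, hence not a single tCQ. Your proposed fix, complementary switches ``forced by a small auxiliary gadget to take the opposite selection,'' is not constructed, and with the tools at hand it cannot be: the only forcing device available is $\Amc_0$ itself, which imposes $q_1\vee q_2$ at a point but gives no way to tie two switches to \emph{opposite} values; enforcing complementarity through the query (a sub-pattern matching a same-colored pair) just reinstates the disjunction you were trying to eliminate. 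The paper sidesteps the issue by reducing from $2{+}2$-SAT (Schaerf), where every clause has exactly two positive and two negative literals: a clause is falsified iff both positives are false \emph{and} both negatives are true, a single conjunctive pattern. Encoding ``variable true'' by $q_1$ and ``variable false'' by $q_2$ and linking clauses to their literals' switches via fresh roles, the falsified-clause query is one tCQ that naturally uses \emph{both} $q_1$ and $q_2$ (and is a dtCQ when $q_1,q_2$ are); your remaining argument then goes through unchanged. In short, swap NAE-$3$SAT for a clause form whose falsification is conjunctively characterizable; the complementary gadget is neither needed nor realizable here.
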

\begin{proof}
The proof is by a reduction of 2+2-SAT inspired by~\cite{Schaerf-93}. 2+2-SAT is a variant of 
propositional satisfiability where each clause contains precisely two 
positive literals and two negative literals. The 
queries $q_1$ and $q_2$ that witness non-convexity from 
Definition~\ref{def:convex_tbox} are used as subqueries of the query 
constructed in the reduction, where they serve the purpose of 
distinguishing truth values of propositional variables. We give a sketch of the reduction only 
as it is very similar to a corresponding reduction for TBoxes without closed predicates~\cite{DBLP:journals/lmcs/LutzW17}. 
Let $(\tbox,\Sigma_\Csf)$ be not convex for tCQs. Then there are an ABox 
$\abox$ with $a\in\adom{\abox}$ and tCQs $q_1(x), q_2(x)$ such that 
$\abox\models(\tbox,\Sigma_\Csf,q_1\vee q_2)(a)$ and 
$\abox\not\models(\tbox,\Sigma_\Csf,q_i)(a)$ for all $i\in\{1,2\}$.
  
We define 2+2-SAT. A \emph{2+2 clause} is of the form $(p_1
\vee p_2 \vee \neg n_1 \vee \neg n_2)$, where each of
$p_1,p_2,n_1,n_2$ is a propositional letter or a truth constant $0$, $1$. 
A \emph{2+2 formula} is a finite conjunction of 2+2 clauses. Now,
2+2-SAT is the problem of deciding whether a given 2+2 formula is
satisfiable. It is shown in \cite{Schaerf-93} that 2+2-SAT is
\NP-complete.

Let $\vp=c_0 \wedge \cdots \wedge c_{n}$ be a 2+2 formula in 
propositional letters $w_0,\dots,w_{m}$, and let $c_i=p_{i,1} \vee 
p_{i,2} \vee \neg n_{i,1} \vee \neg n_{i,2}$ for all $i \leq n$. Our 
aim is to define an ABox $\Amc_\vp$ and a tCQ $q_0$ such that $\vp$ 
is unsatisfiable iff $\Amc_\vp \models 
(\tbox,\Sigma_\Csf, q_0)(f)$, for an individual name $f$ we define shortly.
To start, we represent the formula $\vp$ in the ABox $\Amc_\vp$ as follows:
    
  \begin{itemize}
    \item the individual name $f$ represents the formula $\vp$;
    \item the individual names $c_0,\dots,c_n$ represent the clauses of
      $\vp$;
    \item the assertions $c(f,c_0), \dots, c(f,c_{n})$, associate $f$ with
      its clauses, where $c$ is a role name that does not occur in \Tmc;
    \item the individual names $w_0,\dots,w_m$ represent propositional 
      letters, and the individual names $0,1$ represent truth constants;
    \item the assertions
        $$
        \bigcup_{i \leq n} \{ p_1(c_i,p_{i,1}), p_2(c_i,p_{i,2}),
        n_1(c_i,n_{i,1}), n_2(c_i,n_{i,2}) \}
        $$
        associate each clause with the four variables/truth constants that
        occur in it, where $p_1$, $p_2$, $n_1$, and $n_2$ are role names that do not
        occur in \Tmc.
  \end{itemize}
  We further extend $\Amc_\vp$ to enforce a truth value for each of
  the variables $w_i$ and the truth-constants $0,1$. To this end, add
  to $\Amc_\vp$ copies $\Amc_0,\dots,\Amc_{m}$ of the ABox $\Amc$ 
  obtained by renaming individual names such that $\mn{Ind}(\Amc_i) 
  \cap \mn{Ind}(\Amc_j) = \emptyset$ whenever $i \neq j$. Moreover, 
  assume that $a_{i}$ coincides with the $i$th copy
  of $a$.  Intuitively, the copy $\Amc_i$ of \Amc is used to generate
  a truth value for the variable $w_i$, where we want to interpret
  $w_i$ as true in an interpretation \Imc if $\Imc \models q_1(a_i)$ 
  and as false if $\Imc \models q_2(a_i)$. To actually relate each 
  individual name $w_i$ to the associated ABox $\abox_i$, we use the 
  role name $r$ that does not occur in $\tbox$. More 
  specifically, we extend $\abox_\varphi$ as follows:
  \begin{enumerate}
    \item link variable $w_i$ to the ABox $\abox_i$ by adding the 
      assertion $r(w_i,a_i)$, for all $i\leq m$; thus, the truth of 
      $w_i$ means that ${\sf tt}(x):=\exists y\,(r(x,y) \wedge q_{1}(y))$ is satisfied and falsity 
      means that ${\sf ff}(x):=\exists y\,(r(x,y) \wedge q_2(y))$ is satisfied;
    
    \item to ensure that $0$ and $1$ have the expected truth values, 
      add a copy of $q_1$ viewed as an ABox $\Amc_{q_{1}}$ with root $1'$ and a copy 
      of $q_2$ viewed as an ABox $\Amc_{q_{2}}$ with root $0'$; add $r(0,0')$ and 
      $r(1,1')$.
  \end{enumerate}
  Let $\Bmc$ be the resulting ABox. Consider the tCQ
   
   \begin{eqnarray*}
    q_0(x) &  = & \exists y, y_{1},y_{2},y_{3},y_{4}\,\big(c(x,y) \wedge p_{1}(y,y_{1}) \wedge \mn{ff}(y_{1})\wedge
                      p_{1}(y,y_{2}) \wedge \mn{ff}(y_{2})\wedge\\
           &&         \;\;\;\;\;   n_{1}(y,y_{3}) \wedge \mn{tt}(y_{3})\wedge
                         n_{2}(y,y_{4}) \wedge \mn{tt}(y_{4})\big)
   \end{eqnarray*}
  which describes the existence of a clause with only false literals 
  and thus captures falsity of $\vp$. It is straightforward to show that $\vp$ is 
  unsatisfiable iff $\Bmc \models (\Tmc,\Sigma_\Csf,q_0)(f)$. Finally observe that $q_{0}$ is a dtCQ
  if $q_{1}$ and $q_{2}$ are dtCQs.
\end{proof}

\subsection{Dichotomy for $\dlliter$}
The next definition gives a syntactic safety condition for \dlliter
TBoxes with closed predicates that turns out to characterize tractability.
\begin{defi}[Safe \dlliter TBox]
\label{def:safe_dllite_tbox}
  Let $(\Tmc,\Sigma_\Csf)$ be a $\dlliter$ TBox with closed predicates.
  Then $(\Tmc,\Sigma_\Csf)$ 
  is \emph{safe} if there are no basic concepts $B_{1},B_{2}$ and
  role $r$ such that the following conditions are satisfied:
  \begin{enumerate}
  \item $B_1$ is satisfiable w.r.t. $\tbox$;
  \item $\tbox\models B_1\sqsubseteq \exists r$ and
    $\tbox\models\exists r^{-}\sqsubseteq B_2$;
  \item $B_1\neq \exists r'$, for every role $r'$ with
    $\Tmc\models r'\sqsubseteq r$;
  \item ${\sf sig}(B_{2})\subseteq \Sigma_\Csf$ and $\mn{sig}(r') \cap
    \Sigma_\Csf = \emptyset$ for every role $r'$ with $\Tmc \models B_1
    \sqsubseteq \exists r'$ and $\Tmc\models r'\sqsubseteq r$.
  \end{enumerate}
  \hfill$\triangle$
\end{defi}

  % Point~1 should be clear since in any model $\Imc$ of $(\Tmc,\Sigma)$
  % and $\Amc$ one has to link $a$ with $r$ to $b_{1}$ or to $b_{2}$ to
  % satisfy $\Tmc$.  For Point~2 and $i \in \{1,2\}$, consider the model
  % $\Imc_i$ that corresponds to $\Amc$ expanded with $r(a,b_{i})$. Then
  % $\Imc_i$ is a model of $(\Tmc,\Sigma)$ and $\Amc$ (note that
  % $r\not\in \Sigma$) but $a\not\in (\exists r.(A_{\overline{i}}\sqcap
  % B))^{\Imc_i}$ where $\overline{1}=2$ and $\overline{2}=1$. Thus
  % $\Tmc,\Amc\not\models_{c(\Sigma)} \exists r.(A_{i}\sqcap
  % B)(a)$. When we add any of $A,A_1,A_2$ to $\Sigma$, all statements
  % are still true.
  % To show $\Tmc,\Amc\not\models_{c(\Sigma)} \exists r.(A_{1}\sqcap
  % B)(a)$ one can take the expansion of $\Imc_{\Amc}$ by
  % $\{r(a,b_{2})\}$. Note that it is not relevant whether
  % $A\in\Sigma$.
%
\smallskip
\noindent
The following example illustrates safeness.
\begin{exa}\label{ex:rolehierarchies}
It is easy to see that the TBox with closed predicates from Example~\ref{ex1} is not safe.
As an additional example, consider
$$
\Tmc = \{ A \sqsubseteq \exists r. \top, r \sqsubseteq s\} \quad \text{and} \quad \Sigma_\Csf = \{ s \}
$$
Then $(\Tmc,\Sigma_\Csf)$ is not safe, which is
  witnessed by the concepts $B_1 = A$, $B_2 =\exists s^- . \top$, and
  the role~$r$. Indeed, $(\Tmc,\Sigma_\Csf)$ is not convex for tCQs. This can be proved using the ABox  
$$
\{ A(a), s(a,b_1), A_1(b_1), s(a,b_2), A_2(b_2)\}
$$
and the tCQs   $q_i = \exists y\, (r(x,y) \wedge A_i(y))$, for $i \in \{1,2\}$.
\end{exa}
We now establish the dichotomy result for \dlliter TBoxes with closed predicates.
% Recall that we call an ABox \Amc \emph{consistent w.r.t.\ $(\Tmc,\Sigma_\Csf)$} if there is a model of \Tmc and \Amc that
% respects $\Sigma_\Csf$.
% Clearly, if an ABox is consistent w.r.t.~$(\Tmc,\Sigma_{\Csf})$, then it is consistent
% w.r.t.~$(\Tmc,\emptyset)$. The converse does not hold. For example, if $\Tmc=\{A \sqsubseteq B\}$ and $\Sigma_{\Csf}=\{B\}$,
% then $\Amc=\{A(a)\}$ is not consistent w.r.t.~$(\Tmc,\Sigma_{\Csf})$ but $\Amc$ is consistent w.r.t.~$(\Tmc,\emptyset)$.
% Recall that we say that \emph{ABox consistency is FO-rewritable} for $(\Tmc,\Sigma_\Csf)$ if there is a 
% first-order sentences $\varphi$ such that for all ABoxes~$\mathcal{A}$, we have
% $\Imc_{\Amc} \models \vp$ iff \Amc is consistent w.r.t.\ $(\Tmc,\Sigma_\Csf)$.
Let $(\Tmc_{1},\Sigma_{1})$ and $(\Tmc_{2},\Sigma_{2})$ be TBoxes with closed predicates.
Then we say that $(\Tmc_{1},\Sigma_{1})$ and $(\Tmc_{2},\Sigma_{2})$ are \emph{UCQ-inseparable on consistent ABoxes}~\cite{DBLP:conf/rweb/BotoevaKLRWZ16,DBLP:journals/ai/BotoevaKRWZ16}
if 
  \[\abox\models 
  (\tbox_{1},\Sigma_{1},q)(\vec{a})\quad \text{ iff 
  }  \quad
  \abox\models(\tbox_{2},\Sigma_{2},q)(\vec{a})\]
  holds for all UCQs $q$, all ABoxes $\abox$ consistent w.r.t.\ both
  $(\tbox_{1},\Sigma_{1})$ and $(\tbox_{2},\Sigma_{2})$, and all
  tuples $\vec{a}$ in ${\sf Ind}(\Amc)$. The notion of
  UCQ-inseparability is used in Condition~2(a) of the following
  dichotomy theorem. Informally, it says that tractable query
  evaluation implies that query evaluation with closed predicates
  coincides with query evaluation without closed predicates.
\begin{thm}\label{thm:dlliter_dichotomy}
Let $(\Tmc,\Sigma_{\Csf})$ be a \dlliter TBox with closed predicates.
Then
\begin{enumerate}
\item If $(\Tmc,\Sigma_{\Csf})$ is not safe, then $(\Tmc,\Sigma_{\Csf})$ is not convex for tCQs and tCQ evaluation
w.r.t.~$(\Tmc,\Sigma_{\Csf})$ is \conp-hard.
\item If $(\Tmc,\Sigma_{\Csf})$ is safe, then $(\Tmc,\Sigma_{\Csf})$ is convex for tCQs and
\begin{itemize}
\item[(a)] $(\Tmc,\Sigma_{\Csf})$ and $(\Tmc,\emptyset)$ are UCQ-inseparable on consistent ABoxes.
\item[(b)] UCQ evaluation w.r.t.~$(\Tmc,\Sigma_{\Csf})$ is FO-rewritable.
\end{itemize}
\end{enumerate}
\end{thm}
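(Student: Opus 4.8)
The plan is to treat the two statements separately. For Statement~1 the only real task is to exhibit an ABox together with two tCQs witnessing non-convexity; the \conp-hardness is then immediate from Lemma~\ref{lem:nonconvex_imp_conp}. For Statement~2 the strategy is to show that \emph{safeness forces the usual \dlliter canonical model of $\Tmc$ and $\Amc$ to respect $\Sigma_\Csf$} whenever $\Amc$ is consistent w.r.t.\ $(\Tmc,\Sigma_\Csf)$; everything in Statement~2 — (a), (b), and convexity — then reduces to the well-understood behaviour of $(\Tmc,\emptyset)$.

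\emph{Statement~1.} Fix basic concepts $B_1,B_2$ and a role $r$ witnessing that $(\Tmc,\Sigma_\Csf)$ is not safe. Generalizing Examples~\ref{ex1} and~\ref{ex:rolehierarchies}, I would build an ABox $\Amc$ with a distinguished individual $a$ and two further individuals $b_1,b_2$ that we decorate with fresh concept names $A_1,A_2$, together with the tCQs $q_i(x)=\exists y\,(r(x,y)\wedge A_i(y))$. The four safeness conditions enter exactly as follows. Conditions~1 and~3 let $\Amc$ be chosen so that $B_1(a)$ is forced, that $\Amc$ is consistent w.r.t.\ $(\Tmc,\Sigma_\Csf)$, and that $a$ is not committed to any concrete $r$-successor in $\Amc$ (which, if $B_1=\exists s$, uses that $\Tmc\not\models s\sqsubseteq r$). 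Condition~2 together with $\mn{sig}(B_2)\subseteq\Sigma_\Csf$ forces, in every model $\Imc$ of $\Tmc$ and $\Amc$ respecting $\Sigma_\Csf$, an $r$-successor of $a$ lying in $B_2^\Imc$ and hence, by closedness of $\mn{sig}(B_2)$, among the named individuals satisfying $B_2$, which we arrange to be exactly $\{b_1,b_2\}$; thus $\Amc\models(\Tmc,\Sigma_\Csf,q_1\vee q_2)(a)$. Finally, the second part of Condition~4 — every role $r'$ with $\Tmc\models B_1\sqsubseteq\exists r'$ and $\Tmc\models r'\sqsubseteq r$ is open, in particular $r$ itself is open — makes it possible, for each $i$, to construct a model of $\Tmc$ and $\Amc$ respecting $\Sigma_\Csf$ in which $b_i$ is the only $r$-successor of $a$, so that $\Amc\not\models(\Tmc,\Sigma_\Csf,q_j)(a)$ for $j\neq i$. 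Hence $(\Tmc,\Sigma_\Csf)$ is not convex for tCQs, and Lemma~\ref{lem:nonconvex_imp_conp} supplies a tCQ $q$ for which evaluating $(\Tmc,\Sigma_\Csf,q)$ is \conp-hard.

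\emph{The technical core of Statement~2.} The crucial lemma is: if $(\Tmc,\Sigma_\Csf)$ is safe and $\Amc$ is consistent w.r.t.\ $(\Tmc,\Sigma_\Csf)$, then the canonical model $\mathcal{C}_{\Tmc,\Amc}$ of $\Tmc$ and $\Amc$ used for classical \dlliter~\cite{CDLLR07} respects $\Sigma_\Csf$. On named individuals this is easy: any closed-predicate fact holding in $\mathcal{C}_{\Tmc,\Amc}$ is entailed by $\Tmc$ and $\Amc$, hence lies in $\Amc$ by consistency w.r.t.\ $(\Tmc,\Sigma_\Csf)$. The substantial part is to rule out that an anonymous element lies in the extension of a closed concept name, or that a closed role edge is incident to an anonymous element; I would prove this by induction along the tree-shaped anonymous part of $\mathcal{C}_{\Tmc,\Amc}$. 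If an anonymous $s$-successor $d$ of an element $e$ carried an incident closed role $r$ (so $\Tmc\models s\sqsubseteq r$), or satisfied a closed concept name $A$ (so $\Tmc\models\exists s^-\sqsubseteq A$), then, taking $B_1$ to be a minimal basic concept witnessing $e\in(\exists s)^{\mathcal{C}_{\Tmc,\Amc}}$ — minimality giving Conditions~1 and~3 — and setting $B_2=\exists s^-$ (resp.\ $B_2=A$) and $r=s$, the triple $(B_1,B_2,s)$ would satisfy Conditions~1--3 and the first half of Condition~4; safeness then forces the failure of the second half of Condition~4, i.e.\ $e$ is obliged to have an $r'$-successor for some \emph{closed} role $r'$ with $\Tmc\models r'\sqsubseteq s$. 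Following this obligation — to a named individual, or, if satisfied anonymously, upwards in the tree by the same minimality argument — contradicts either the non-redundancy of $\mathcal{C}_{\Tmc,\Amc}$ or the induction hypothesis. I expect this lemma to be the main obstacle, since it is the only place where all four safeness conditions must cooperate.

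\emph{Deriving Statement~2.} Granting the core lemma, note that $\mathcal{C}_{\Tmc,\Amc}$ is the universal model of $(\Tmc,\emptyset)$ and $\Amc$: it maps $\NI$-preservingly and homomorphically into every model of $\Tmc$ and $\Amc$, in particular into every such model respecting $\Sigma_\Csf$, and by the core lemma it is itself one when $\Amc$ is consistent w.r.t.\ $(\Tmc,\Sigma_\Csf)$. Using Lemma~\ref{homo} and the universal-model property of $(\Tmc,\emptyset)$, we obtain, for every UCQ $q$, every $\Amc$ consistent w.r.t.\ $(\Tmc,\Sigma_\Csf)$, and every tuple $\vec a$ in $\mn{Ind}(\Amc)$,
\[
\Amc\models(\Tmc,\Sigma_\Csf,q)(\vec a)\ \Longleftrightarrow\ \mathcal{C}_{\Tmc,\Amc}\models q(\vec a)\ \Longleftrightarrow\ \Amc\models(\Tmc,\emptyset,q)(\vec a),
\]
which is Statement~2(a). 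Convexity for tCQs follows at once: it is trivial on ABoxes inconsistent w.r.t.\ $(\Tmc,\Sigma_\Csf)$, and on consistent ones it reduces via (a) to the convexity of $(\Tmc,\emptyset)$ for tCQs, which holds for every \dlliter TBox. For Statement~2(b), observe first that ABox consistency w.r.t.\ $(\Tmc,\Sigma_\Csf)$ is FO-rewritable: by the core lemma, $\Amc$ is consistent w.r.t.\ $(\Tmc,\Sigma_\Csf)$ iff it is consistent w.r.t.\ $(\Tmc,\emptyset)$ and, for every closed predicate $P$ and tuple $\vec a$ in $\mn{Ind}(\Amc)$, $\Tmc,\Amc\models P(\vec a)$ implies $P(\vec a)\in\Amc$; both conjuncts are FO-rewritable by the classical FO-rewritability of consistency and of instance queries for \dlliter. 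Combining the FO-rewriting of the complement of ABox consistency with the FO-rewriting of $(\Tmc,\emptyset,q)$ via the displayed equivalence then gives an FO-rewriting of $(\Tmc,\Sigma_\Csf,q)$ for every UCQ $q$.
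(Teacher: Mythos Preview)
Your overall strategy coincides with the paper's: for Statement~1, exhibit non-convexity and invoke Lemma~\ref{lem:nonconvex_imp_conp}; for Statement~2, prove that safeness forces the ordinary \dlliter canonical model to respect $\Sigma_\Csf$ and derive (a), (b), and convexity from that.

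The substantive gap is in the construction for Statement~1. An ABox built from just $a,b_1,b_2$ and a handful of assertions will in general \emph{not} be consistent w.r.t.\ $(\Tmc,\Sigma_\Csf)$, and you will not be able to obtain the required countermodels of~$\Tmc$ by merely toggling one $r$-edge: adding $(a,b_i)$ to $r^\Imc$ may trigger further $\Tmc$-obligations at $b_i$ (and at $a$), possibly involving closed predicates, which your three-element ABox cannot absorb. The paper's device is to start from a \emph{finite model} $\Imc$ of~$\Tmc$ (via the finite-model property) in which some $a_0\in B_1^\Imc$ lies outside $(\exists r')^\Imc$ for every closed $r'$ with $\Tmc\models r'\sqsubseteq r$; delete from $\Imc$ all $r'$-edges out of $a_0$ with $\Tmc\models r'\sqsubseteq r$; take \emph{two disjoint copies} of the resulting ABox, label the $B_2$-elements of copy~$i$ with $A_i$, and restore some deleted edges between the copies. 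The countermodels are then obtained by routing the deleted (open) edges from $(a_0,1)$ into copy~2, and vice versa. This two-copy trick is what simultaneously ensures consistency w.r.t.\ $(\Tmc,\Sigma_\Csf)$, keeps the interpretations models of all of~$\Tmc$, and separates $q_1$ from $q_2$. Your sketch does not yet contain this idea, and without it the argument does not go through.

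For Statement~2 your plan is essentially the paper's. Two small corrections. First, in the inductive step, when the anonymous $s$-successor $d$ carries an incident closed role $r$, the right $B_2$ is a closed basic concept holding at $d$, e.g.\ $\exists r^-$, not $\exists s^-$ ($s$ need not be closed, so your choice fails Condition~4). Second, your FO characterisation of consistency is slightly too weak: beyond ``$\Tmc,\Amc\models P(\vec a)\Rightarrow P(\vec a)\in\Amc$'' for closed $P$, you also need that $\Amc\models(\Tmc,\emptyset,\exists y\,r(x,y))(a)$ implies $a\in(\exists r)^{\Imc_\Amc}$ for closed roles $r$; this is exactly what your induction consumes when the parent $e$ is a named individual and safeness yields a closed $r'\sqsubseteq s$ with $\Tmc\models B_1\sqsubseteq\exists r'$.
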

\begin{proof}
We start with the proof of Point~(1). Assume that $(\Tmc,\Sigma_{\Csf})$ is not safe.
Consider basic concepts $B_{1},B_{2}$ and a role $r$ satisfying Points~(1) to (4) of Definition~\ref{def:safe_dllite_tbox}. 
By Points~(1) and (4) of Definition~\ref{def:safe_dllite_tbox}, $B_{1}$ is satisfiable w.r.t.~$\Tmc$
and $\Tmc\not\models B_{1}\sqsubseteq \exists r'$ for any role $r'$ with ${\sf sig}(r')\subseteq \Sigma_{\Csf}$
and $\Tmc\models r'\sqsubseteq r$. We obtain
$$
\Tmc\not\models B_{1}\sqsubseteq \bigsqcup_{\Tmc\models r'\sqsubseteq r,{\sf sig}(r')\subseteq \Sigma_{\Csf}}\exists r'
$$ 
since $(\Tmc,\emptyset)$ is convex. Observe that the CI to the right is a $\mathcal{ALCI}$ CI and $\Tmc$ is an $\mathcal{ALCHI}$ TBox. It is well known that 
$\mathcal{ALCHI}$ has the finite model property in the sense that any CI that does not follow from an 
$\mathcal{ALCHI}$ TBox is refuted in a finite model of the TBox \cite{Baader-et-al-03b}. Thus, 
we can take a finite model $\Imc$ of $\Tmc$ and some $a_{0}\in B_{1}^{\Imc}$ such that $a_{0}\not\in (\exists r'.\top)^{\Imc}$ 
for any role $r'$ with ${\sf sig}(r')\subseteq \Sigma_\Csf$ and $\Tmc \models r'\sqsubseteq r$.  Let $\Imc_{r}$ be the
interpretation obtained from $\Imc$ by removing all pairs $(a_{0},b)$ from any $r'^{\Imc}$ with $\Tmc\models r'\sqsubseteq r$.
Take the ABox $\Amc_{r}$ corresponding to $\Imc_{r}$ and let $\Amc$ be the disjoint union of two copies of $\Amc_{r}$. 
We denote the individual names of the first copy by $(b,1)$, $b\in \Delta^{\Imc_{r}}$, and the individual names of the
second copy by $(b,2)$, $b\in \Delta^{\Imc_{r}}$. Let $\Amc'$ be defined as
$$
\begin{array}{l}
  \Amc \; \cup \\[1mm]
  \{ A_{1}(b,1)\mid b\in B_{2}^{\Imc}\} 
  \cup \{ A_{2}(b,2) \mid b\in B_{2}^{\Imc}\} \; \cup\\[1mm]
  \{r'((a_{0},i),(b,j))\mid (a_{0},b)\in r'^{\Imc},
  \Tmc \not\models r' \sqsubseteq r,
  {\sf sig}(r')\subseteq \Sigma_\Csf, i,j\in \{1,2\}\}
\end{array}
$$
where $A_{1}$ and $A_{2}$ are fresh concept names. Define, for 
$i\in\{1,2\}$, the tCQs
$$
  q_i(x) = \exists y \,(r(x,y) \wedge A_i(y) \wedge B_2 (y)),
$$
if $B_{2}$ is a concept name. If $B_{2}=\exists s$ (or $B_{2}=\exists s^{-}$), for a role name $s$, then 
set $q_{i}(x)=\exists y,z \,(r(x,y) \wedge A_i(y) \wedge s(y,z))$ (or $q_{i}(x)=\exists y,z \,(r(x,y) \wedge A_i(y) \wedge s(z,y))$, respectively).
We use $\Amc'$ and $q_{i}(x)$ to prove that $(\tbox,\Sigma_{\Csf})$ is not convex for tCQs.
\begin{claim}
$\Amc'\models(\tbox,\Sigma_\Csf,q_1\vee q_{2})(a_{0},1)$.
\end{claim}
\begin{clmproof}
Let $\Jmc$ be a model of $\Tmc$ and $\Amc'$ that respects
$\Sigma_\Csf$. We have $(a_{0},1)\in B_{1}^{\Jmc}$ (since, by
Point~(3) of Definition~\ref{def:safe_dllite_tbox}, $B_{1}\not=\exists r'$ for every $r'$ with $\Tmc\models r' \sqsubseteq r$).  
It follows from the conditions that $\Jmc$ is a model of $\Tmc$, $\Tmc\models B_{1}\sqsubseteq \exists r$, and 
$\Tmc\models \exists r^{-}\sqsubseteq B_{2}$, that there exists $e\in \Delta^{\Jmc}$ with
$((a_{0},1),e)\in r^{\Jmc}$ and $e\in B_{2}^{\Jmc}$. Using the condition that ${\sf sig}(B_{2})\subseteq \Sigma_\Csf$
it follows from the definition of $\Amc'$ that $e$ is of the form $(e',i)$ with $e'\in B_{2}^{\Imc}$ and $i\in\{1,2\}$.
If $i=1$, we have $A_{1}(e',1)\in \Amc'$ and so $(a_{0},1)\in \exists r.(A_{1} \sqcap B_{2})^{\Jmc}$, as required. 
If $i=2$, we have $A_{2}(e',2)\in \Amc'$ and so $(a_{0},1)\in \exists r.(A_{2} \sqcap
B_{2})^{\Jmc}$, as required.
\end{clmproof}
\begin{claim}
$\Amc'\not\models(\tbox,\Sigma_\Csf,q_i)(a_{0},1)$, for $i\in \{1,2\}$.
\end{claim}
\begin{clmproof}
Let $i=1$ (the case $i=2$ is similar and omitted). We construct a model $\Jmc$ of $\Tmc$ and $\Amc'$ that respects
$\Sigma_\Csf$ such that $(a_{0},1)\not\in (\exists r.(A_{1} \sqcap B_{2}))^{\Jmc}$. 
$\Jmc$ is defined as the interpretation corresponding to the ABox $\Amc'$ extended by
$$
 \{r'((a_{0},1),(e,2)) \mid (a_{0},e)\in r'^{\Imc}\} \cup
 \{r'((a_{0},2)),(e,1))\mid (a_{0},e)\in r'^{\Imc}\},
$$
for all roles $r'$ such that ${\sf sig}(r')\cap \Sigma_\Csf=\emptyset$ and
$\Tmc\models r'\sqsubseteq r$, and
$$
\{r'((a_{0},i),(e,j)) \mid (a_{0},e)\in r'^{\Imc}, i,j\in \{1,2\}\},
$$
for all roles $r'$ with ${\sf sig}(r')\cap \Sigma_\Csf=\emptyset$ and
$\Tmc\not\models r'\sqsubseteq r$.

Clearly $(a_{0},1)\not\in (\exists r.(A_{1} \sqcap B_{2}))^{\Jmc}$.
Thus it remains to show that $\Jmc$ is a model of $\Tmc$ and
$\Amc'$ that respects $\Sigma_\Csf$.  Since no symbol from
$\Sigma_\Csf$ has changed its interpretation, it is sufficient to show
that $\Jmc$ satisfies all inclusions in $\Tmc$.

Let $s\sqsubseteq s'$ be an RI in $\Tmc$. Since $\Imc$ is a
model of $\Tmc$, the only pairs where $s\sqsubseteq s'$ can possibly
be refuted are of the form $((a_{0},i),(b,j))$ with $i,j\in \{1,2\}$.
Assume $((a_{0},i),(b,j)) \in s^{\Jmc}$. Then, by definition,
$(a_{0},b)\in s^{\Imc}$ and so $(a_{0},b)\in s'^{\Imc}$ because $\Imc$
is a model of $\Tmc$. We distinguish the following cases:
\begin{itemize}
\item $\Tmc \not\models s'\sqsubseteq r$. Then, by definition of
  $\Jmc$, $((a_{0},i),(b,j))\in s'^{\Jmc}$ since
  $((a_{0},i'),(b,j'))\in s'^{\Jmc}$ for all $i',j'\in \{1,2\}$.
\item $\Tmc\models s'\sqsubseteq r$. Then $\Tmc\models s \sqsubseteq
  r$. Note that, by construction of $\Imc$, ${\sf sig}(s)\cap
  \Sigma_\Csf=\emptyset$ and ${\sf sig}(s')\cap
  \Sigma_\Csf=\emptyset$.  Hence, by construction of $\Jmc$, $(i,j) =
  (1,2)$ or $(i,j)=(2,1)$.  In both cases we have
  $((a_{0},i),(b,j))\in s'^{\Jmc}$ as well.
\end{itemize}
To prove that all CIs of $\Tmc$ are satisfied in $\Jmc$
observe that $B^{\Jmc}= (B^{\Imc} \times \{1\})\cup (B^{\Imc}\times \{2\})$ holds for all 
basic concepts $B$. Thus, $\Jmc$ satisfies all CIs satisfied in 
$\Imc$ and, therefore, is a model of any CI in $\Tmc$, as required.
\end{clmproof}

It follows from Claims~1 and 2 that $(\tbox,\Sigma_{\Csf})$ is not convex for tCQs. The coNP-hardness of
tCQ evaluation follows from Lemma~\ref{lem:nonconvex_imp_conp}. This finishes the proof of Point~(1).
% : in this case, $\Jmc$ is defined as the interpretation
% corresponding to the ABox $\Amc'$ extended by
%$$
%\{r((a_{0},1),(e,1)) \mid (a_{0},e)\in r^{\Imc}\} \cup
% \{r((a_{0},2)),(e,2))\mid (a_{0},e)\in r^{\Imc}%\}
%$$
%Again $\Jmc$ is a model of $(\Tmc,\Sigma)$ and $\Amc'$, and
% $(a_{0},1)\not\in (\exists r.(A_{2} \sqcap B_{2}))^{\Jmc}$.

We come to the proof of Point~(2). The 
proof relies on the canonical model associated with an ABox 
and a \dlliter TBox~\cite{CDLLR07,DBLP:conf/kr/KontchakovLTWZ10}. 
Specifically, for every ABox \Amc 
that is consistent w.r.t.~a $\dlliter$ TBox $\Tmc$ without closed predicates, there is a model 
\Imc of \Amc and \Tmc which is minimal in the sense that for all UCQs $q$ and tuples 
$\vec{a}$ in ${\sf Ind}(\Amc)$:
$$
\Amc\models(\tbox,\emptyset, q)(\vec{a}) \quad \text{iff} \quad \Imc \models q(\vec{a}).
$$
We show that if $\Imc$ is constructed in a careful way and 
$(\Tmc,\Sigma_\Csf)$ is safe, then \Imc respects $\Sigma_\Csf$. This means 
that a tuple $\vec{a}\in\adom{\abox}$ is a certain answer to 
$(\tbox,\emptyset, q)$ on $\abox$ iff it is 
a certain answer to $(\tbox,\Sigma_\Csf,q)$ on $\abox$ since 
closed predicates can only result in additional answers, but not in invalidating answers. It follows that $(\Tmc,\Sigma_{\Csf})$ is
convex for tCQs and that $(\Tmc,\Sigma_{\Csf})$ and 
$(\Tmc,\emptyset)$ are UCQ-inseparable on consistent ABoxes (Point~(a)). 
Additionally, we show that ABox consistency w.r.t.\ $(\Tmc,\Sigma_\Csf)$ is FO-rewritable when 
$(\Tmc,\Sigma_\Csf)$ is safe, which together with the first observation implies that UCQ evaluation w.r.t.\ a safe $(\tbox,\Sigma_{\Csf})$ is FO-rewritable: 
if $p_{c}$ is an FO-rewriting of ABox consistency w.r.t.~$(\Tmc,\Sigma_{\Csf})$ and $q'(\vec{x})$ is an 
FO-rewriting of 
$(\Tmc,\emptyset,q(\vec{x}))$, then $\neg p_{c}\vee q'(\vec{x})$ is an FO-rewriting 
$(\Tmc,\Sigma_{\Csf},q(\vec{x}))$.
Thus, Point~(b) follows. It thus remains to prove Claims~3 and 4 below.

\begin{claim} Let $(\tbox,\Sigma_\Csf)$ be a safe \dlliter TBox with closed
  predicates. Then for every UCQ $q$, we have 
  \[\abox\models 
  (\tbox,\Sigma_\Csf,q)(\vec{a})\quad \text{ iff 
  }  \quad
  \abox\models(\tbox,\emptyset,q)(\vec{a})\]
  for all ABoxes $\abox$ that are consistent w.r.t.\ $(\tbox,\Sigma_\Csf)$ and 
  all $\vec{a}\in\adom{\abox}$.
\end{claim}
\begin{clmproof}
  Let $(\Tmc,\Sigma_\Csf)$ be safe and assume that $\Amc$ is consistent
  w.r.t.~$(\Tmc,\Sigma_\Csf)$. We construct a canonical model of $\Amc$ and $\Tmc$
  as the interpretation corresponding to a (possibly infinite) ABox $\Amc_{c}$ that is the limit
  of a sequence of ABoxes $\Amc_{0},\Amc_{1},\ldots$. Let
  $\Amc_{0}=\Amc$ and assume $\Amc_{j}$ has been defined already. Then $\Amc_{j+1}$ is obtained from $\Amc_{j}$ by applying
  the following two rules:
  % and is satisfiable w.r.t.~$(\Tmc,\Sigma)$.
\begin{itemize}
\item[(R1)] if there exist roles $r,s$ and $a,b\in \NI$ with $\Tmc\models r \sqsubseteq s$, $r(a,b)\in \Amc_{j}$, and $s(a,b)\not\in \Amc_{j}$,
            then add $s(a,b)$ to $\Amc_{j}$;
\item[(R2)] if (R1) does not apply and there are basic concepts $B_{1},B_{2}$ and $a\in \NI$ such that $\Tmc \models B_{1}\sqsubseteq B_{2}$,
      $a\in B_{1}^{\inter_{\Amc_{j}}}$, and $a\not\in B_{2}^{\inter_{\Amc_{j}}}$, then add $B_{2}(a)$ to $\Amc_{j}$ if $B_{2}$ is a concept name
      and add $r(a,b)$ for some fresh $b\in \NI$ to $\Amc_{j}$ if $B_{2}=\exists r$ for some role $r$.
\end{itemize}
% (if no such $i$ exists, then set $\Amc_{c}:=\Amc_{j}$). Then
%   \begin{itemize}
%   \item if $B_{2}$ is a concept name, let $\Amc_{j+1}= \Amc_{j}\cup
%     \{B_{2}(a_{i})\}$;
%   \item if $B_{2}= \exists s$, then take a fresh individual
%     $b_{a_{i},s}$ and set $\Amc_{j+1}$ to $\Amc_{j} \cup
%     \{s(a_{i},b_{a_{i},s})\}$ if $s$ is a role name, or to $\Amc_{j}\cup
%     \{s(b_{a_{i},s},a_{i})\}$ if $s$ is an inverse role.
%   \end{itemize}
We assume that (R1) and (R2) are applied in a fair way. Now let $\Imc_{\Tmc,\Amc}=\inter_{\abox_{c}}$, where $\Amc_{c} = \bigcup_{i\geq 0}\Amc_{i}$.
  % in which exactly the individual names in ${\sf Ind}(\Amc)$ are
  % interpreted.
It is known \cite{DBLP:conf/kr/KontchakovLTWZ10} (and easy to prove) that $\Imc_{\Tmc,\Amc}$ is a model of $\Tmc$ and 
$\Amc$ with the following properties:
  \begin{itemize}
 \item[(p1)] For all UCQs $q(\vec{x})$ and $\vec{a}\in {\sf Ind}(\Amc)$: $\Amc\models (\tbox,\emptyset,q)(\vec{a})$ iff $\Imc_{\Tmc,\Amc}\models q(\vec{a})$.
%
    % for any model $\Imc$ of $\Tmc$ and $\Amc$ there is a
    % homomorphism $h_{i}$ from $\Imc_{\Amc'}$ to $\Imc$ with
    % $h(a)=a^{\Imc}$ for all $a\in \Amc_{c}$
 \item[(p2)] For any individual name $b\in {\sf
      Ind}(\Amc_{c})\setminus {\sf Ind}(\Amc)$ introduced as a witness
    for Rule (R2) for some CI of the form $B_{1}\sqsubseteq \exists s$ and every basic concept $B$
    : $b\in B^{\Imc_{\Tmc,\Amc}}$ iff $\Tmc\models \exists s^{-} \sqsubseteq B$.
  \end{itemize}
  To show that $\Imc_{\Tmc,\Amc}$ is a model of $\Tmc$ and $\Amc$ that respects
  $\Sigma_\Csf$ it is sufficient to prove that every assertion using
  predicates from $\Sigma_\Csf$ in $\Amc_{c}$ is contained in $\Amc$. We first show that for all $a,b\in {\sf Ind}(\Amc)$,
    \begin{itemize}
    \item if $A(a)\in \Amc_{c}$ and $A\in\Sigma_\Csf$, then
      $A(a)\in \Amc$; and
    \item if $r(a,b)\in \Amc_{c}$ and $r\in \Sigma_\Csf$,
      then $r(a,b)\in \Amc$.
    \end{itemize}
    For a proof by contradiction assume that $A(a)\in \Amc_{c}$ but 
    $A(a)\not\in \Amc$ for some concept name $A\in\Sigma_\Csf$. 
    By Point~(p1), the former implies 
    $\Amc\models(\tbox,\emptyset,A(x))(a)$. Thus, $\Amc\models 
    (\tbox,\Sigma_\Csf, A(x))(a)$ which contradicts the assumption that 
    $\Amc$ is consistent w.r.t.~$(\Tmc,\Sigma_\Csf)$. The argument 
    for role assertions $r(a,b)$ is similar and omitted.

It remains to show there are no $a\in {\sf Ind}(\Amc_{c})\setminus{\sf Ind}(\Amc_{0})$ 
and basic concept $B$ with ${\sf sig}(B)\subseteq \Sigma_\Csf$ such 
that $a\in B^{\Imc_{\Tmc,\Amc}}$. For a proof by contradiction, assume that there exist an $a\in {\sf Ind}(\Amc_{c})\setminus{\sf Ind}(\Amc_{0})$ 
and basic concept $B$ with ${\sf sig}(B)\subseteq \Sigma_\Csf$ such that $a\in B^{\Imc_{\Tmc,\Amc}}$. 
Let $a$ be the first such individual name introduced using Rule (R2) in the construction 
of $\Amc_{c}$. By Point~(p2) and the construction of $\Amc_{c}$ there exist $B_{1}$, $r$, $a_{0}$ and $j\geq 0$ such that
$\Tmc\models B_{1} \sqsubseteq \exists r$, $a_{0}\in B_{1}^{\inter_{\abox_j}}$,
    $a_{0}\not\in(\exists r)^{\inter_{\Amc_{j}}}$,
    $(a_{0},a)\in r^{\inter_{\Amc_{j+1}}}$,
    $\Tmc\models \exists r^{-}\sqsubseteq B$.
    We show that $B_{1}$, $B_{2}$, and $r$ satisfy Conditions~1 to 4 from 
    Definition~\ref{def:safe_dllite_tbox} for $B_{2}:=B$ and thus derive a contradiction 
    to the assumption that $(\Tmc,\Sigma_\Csf)$ is safe. Conditions~1 and 
    2 are clear. For Condition~3, assume that $B_{1}= \exists r'$ for 
    some $r'$ such that $\Tmc\models r'\sqsubseteq r$. Then $(a_{0},e)\in 
    (r')^{\inter_{\Amc_{j}}}$ for some $e$. But then, since Rule (R1) is exhaustively
    applied before Rule (R2) is applied, we have $(a_{0},e)\in 
    r^{\inter_{\Amc_{j}}}$ which contradicts $a_{0}\not\in (\exists 
    r)^{\inter_{\Amc_{j}}}$. For Condition~4 assume that $\Tmc\models 
    B_{1} \sqsubseteq \exists r'$ for some role $r'$ such that ${\sf 
    sig}(r')\subseteq \Sigma_\Csf$ and $\Tmc\models r'\sqsubseteq r$. 
    Then $a_{0} \in {\sf Ind}(\Amc)$ because otherwise $a_{0}$ is an individual name 
    introduced before $a$ such that $a_{0}\in(\existsr{r'}{})^{\inter_{\Tmc,\Amc}}$ and 
    ${\sf sig}(\exists r')\subseteq \Sigma_\Csf$, which contradicts our 
    assumption about $a$. By Point~(p1) and the consistency of $\abox$ 
    w.r.t.\ $(\tbox,\Sigma_\Csf)$, there is some $b\in\adom{\abox}$ such 
    that $(a_{0},b)\in (r')^{\inter_\Amc}$. But then, again since Rule (R1) is exhaustively
    applied before Rule (R2) is applied, $(a_{0},b)\in 
    r^{\inter_{\Amc_{j}}}$ which contradicts $a_{0}\not\in (\exists 
    r)^{\inter_{\Amc_{j}}}$. 

Observe that we have also proved that if $\Amc$ is consistent w.r.t.~$(\Tmc,\emptyset)$, then
there do not exist $a\in {\sf Ind}(\Amc_{c})\setminus{\sf Ind}(\Amc_{0})$ 
and a basic concept $B$ with ${\sf sig}(B)\subseteq \Sigma_\Csf$ such that $a\in B^{\Imc_{\Tmc,\Amc}}$. 
\end{clmproof}

As the final step in the proof of Point~(2), we show that ABox consistency w.r.t.\ a safe \dlliter TBox 
with closed predicates is FO-rewritable. 
\begin{claim}
  Let $(\Tmc,\Sigma_\Csf)$ be a safe \dlliter TBox with closed
  predicates. Then ABox consistency w.r.t.~$(\Tmc,\Sigma_{\Csf})$ is FO-rewritable.
\end{claim}
\begin{clmproof}
  It follows immediately from the final remark in the proof of Claim~3 above
  that an ABox $\Amc$ is consistent w.r.t.~a safe 
  $(\Tmc,\Sigma_\Csf)$ if, and only if, (i) $\Amc$ is consistent w.r.t.\ 
  $(\Tmc,\emptyset)$, (ii) $\Amc\models 
  (\Tmc,\emptyset, A(x))(a)$ implies $a\in 
  A^{\inter_\abox}$ for all concept names $A\in \Sigma_\Csf$, (iii) $\Amc\models 
  (\Tmc,\emptyset, \exists y\, r(x,y))(a)$ implies $a\in 
  (\exists r)^{\inter_\abox}$ for all roles $r$ with ${\sf sig}(r)\subseteq \Sigma_\Csf$,
  and (iv) $\Amc\models 
  (\Tmc,\emptyset,r(x,y))(a,b)$ implies 
  $(a,b)\in r^{\inter_\Amc}$ for all role names $r\in \Sigma_\Csf$. 

  To obtain an FO-rewriting of ABox consistency w.r.t.~$(\Tmc,\Sigma_{\Csf})$, let 
  $p_{c}$ be an FO-rewriting of
  ABox consistency w.r.t.~$(\Tmc,\emptyset)$, let $q_{A}(x)$ be an FO-rewriting of 
  $(\Tmc,\emptyset,A(x))$, for $A\in \Sigma_{\Csf}$,
  let $q_{\exists r}(x)$ be an FO-rewriting of $(\Tmc,\emptyset,\exists y\, r(x,y))$ for $r\in \Sigma_{\Csf}$, 
  let $q_{\exists r^{-}}(x)$ be an FO-rewriting of $(\Tmc,\emptyset,\exists y\, r(y,x))$ for $r\in \Sigma_{\Csf}$ and let
  $q_{r}(x,y)= \bigvee_{\Tmc\models s \sqsubseteq r}s(x,y)$, for $r\in \Sigma_{\Csf}$. 
  Then $p_{c} \wedge q_{1} \wedge q_{2} \wedge q_{3} \wedge q_{4}$
with 
\begin{eqnarray*}
q_{1} & = &\forall x \bigwedge_{A\in \Sigma_{\Csf}}(q_{A}(x)\rightarrow A(x))\\
q_{2} & = &\forall x \bigwedge_{r\in \Sigma_{\Csf}}(q_{\exists r}(x)\rightarrow \exists y \,r(x,y))\\
q_{3} & = &\forall x \bigwedge_{r\in \Sigma_{\Csf}}(q_{\exists r^{-}}(x)\rightarrow \exists y \,r(y,x))\\
q_{4} & = &\forall x\forall y \bigwedge_{r\in \Sigma_{\Csf}}(q_{r}(x,y)\rightarrow r(x,y))
\end{eqnarray*}
is an FO-rewriting of ABox consistency w.r.t.~$(\Tmc,\Sigma_{\Csf})$. \hfill $\dashv$ \hfill
$\square$
\phantom{\qedhere}
\end{clmproof}
\phantom{\qedhere}
\end{proof}

% In a sense, Lemmas~\ref{lem:nonconvex_imp_conp}, \ref{dllite1}, and 
% \ref{lem:cqcoincides_dllite} show that CQ evaluation w.r.t.\ \dlliter 
% TBoxes with closed predicates  is inherently intractable: in all cases 
% where closing predicates results in additional answers to queries on 
% satisfiable ABoxes, CQ answering is \conp-hard.  In all tractable 
% cases, the only effect that closing predicates can thus have is to act 
% as integrity constraints on the ABox (but see Section~\ref{sect:foqs} 
% for another virtue of closing predicates) \todo{reference FO 
% closed predicate query lemma}. Note that all TBoxes that refer 
% \emph{only} to closed predicates (thus express only integrity 
% constaints) are safe.

\subsection{Dichotomy for $\mathcal{EL}$}

We show the announced dichotomy for $\EL$ TBoxes with closed
predicates.  While we follow the same strategy as in the DL-Lite case,
there are some interesting new aspects. In particular, we identify an
additional reason for \conp-hardness that we treat by using a variant
of the Craig interpolation property for $\EL$.  We call a concept $E$
a \emph{top-level conjunct (tlc) of} an \EL concept $C$ if $C$ is of
the form $D_{1}\sqcap \cdots \sqcap D_{n}$ and $E=D_{i}$ for some~$i$.
We use the following version of safeness.

\begin{defi}[Safe \EL TBox]
\label{def:elsafe}
  An $\EL$ TBox with closed predicates $(\Tmc,\Sigma_\Csf)$ is \emph{safe}
  if there exists no $\EL$ concept inclusion $C \sqsubseteq \exists r.D$ such
  that
  \begin{enumerate}
  \item\label{it:elsafe_ent} $\Tmc\models C \sqsubseteq \exists r.D$;
  \item\label{it:elsafe_tlc} there is no tlc $\exists r.C'$
    of $C$ with $\Tmc\models C' \sqsubseteq D$;
  \item one of the following is true:
    \begin{enumerate}
      \renewcommand{\theenumii}{(s\arabic{enumii})}
      \renewcommand{\labelenumii}{\theenumii}
    \item\label{it:el_safeness_cond1} $r\not\in\Sigma_\Csf$ and ${\sf
        sig}(D) \cap \Sigma_\Csf\not=\emptyset$;
    \item \label{it:el_safeness_cond2} $r\in \Sigma_\Csf$, ${\sf sig}(D)
      \not\subseteq\Sigma_\Csf$,
% \todo[inline]{is it not redundant to say
%         that $\sig{}{D}\not\subseteq\Sigma$ because of the
%         non-existence of a concept $E$ defined next?} 
      and there is no $\el$ concept $E$ with 
      $\sig{E}{}\subseteq\Sigma_\Csf$, $\Tmc\models C \sqsubseteq 
      \exists r.E$, and $\Tmc\models E \sqsubseteq D$. \hfill$\triangle$
    \end{enumerate}
  \end{enumerate}
\end{defi}

\smallskip
\noindent
Condition~\ref{it:el_safeness_cond1} captures a reason for
non-convexity that is similar to the \dllite case. For example,
we can recast Example~\ref{ex1} using 
%
% An
% example illustrating this condition is given by the TBox with closed
% predicates $(\Tmc,\Sigma_\Csf)$ where
$\Tmc = \{ A \sqsubseteq \exists r . B \}$ and
$\Sigma_\Csf = \{ B \}$. Then the inclusion
$A \sqsubseteq \exists r.B$ shows that
$(\Tmc,\Sigma_\Csf)$ is not safe as $r\not\in\Sigma_\Csf$ and
$B\in \Sigma_\Csf$. % One can also show in exactly the same way as in
% Example~\ref{ex1} that $(\Tmc,\Sigma_\Csf)$ is not convex for tCQs.
However, in \EL there is an additional reason for non-convexity that
is captured by Condition~\ref{it:el_safeness_cond2}. % . For  example,
% we can use the same TBox \Tmc, but replace $\Sigma_\Csf$ by $\{ r\}$.
%
\begin{exa}\label{elex2}
Let $\Tmc = \{A \sqsubseteq \exists r.B\}$ and $\Sigma_\Csf = \{r\}$. Clearly, by 
Condition~\ref{it:el_safeness_cond2}, $(\Tmc,\Sigma_\Csf)$ is not safe.
We show that $(\Tmc,\Sigma_\Csf)$ is not convex for dtCQs. Let
 $$
  \begin{array}{rcl}
    \Amc &=&
  \{A(a),r(a,b_{1}), A_{1}(b_{1}), r(a,b_{2}),A_{2}(b_{2})\} \\[1mm]
  q_i &=& \exists y \, (r(x,y) \wedge A_i(y) \wedge B(y)) 
\end{array}
$$
Then $(\Tmc,\Sigma_\Csf)$ is not convex because $\Amc\models (\Tmc,\Sigma_\Csf, q_1 \vee q_2)(a)$, whereas
$\Amc \not\models (\Tmc,\Sigma_\Csf,q_{i})(a)$ for any $i \in  \{1,2\}$. 
Observe that one cannot reproduce this example in DL-Lite: for example, for the
TBox $\Tmc'= \{A \sqsubseteq \exists r, \exists r^{-} \sqsubseteq B\}$ with $\Sigma'_\Csf=\{r\}$, 
we have $\Amc \models (\Tmc',\Sigma'_\Csf,B(x))(b_{i})$ for $i=1,2$ and thus convexity for dtCQs is not violated.
\end{exa}
Note that Condition~\ref{it:el_safeness_cond2} additionally requires
the non-existence of a certain concept $E$ which can be viewed as an
interpolant between $C$ and $\exists r.D$ that uses only closed
predicates. The following example illustrates why this condition is
needed.
\begin{exa}
  \label{ex2}
  Let $\Tmc= \{A \sqsubseteq \exists r.E,E\sqsubseteq B\}$ and first
  assume that $\Sigma_\Csf= \{r\}$. Then the CI $A\sqsubseteq
  \exists r.B$ satisfies Condition~\ref{it:el_safeness_cond2} and thus
  $(\Tmc,\Sigma_\Csf)$ is not safe.
  % (the inclusion $A\sqsubseteq \exists r.E$ shows this as well).
  Now let $\Sigma'_\Csf=\{r,E\}$. In this case, the CI
  $A\sqsubseteq \exists r.B$ does not violate safeness because $E$ can
  be used as a `closed interpolant'. Indeed, it is not difficult to
  show that $(\Tmc,\Sigma'_\Csf)$ is both safe and convex for dtCQs.
\end{exa}
We now formulate our dichotomy result for $\mathcal{EL}$.
\begin{thm}\label{eltheorem}
  Let $(\Tmc,\Sigma_{\Csf})$ be an $\EL$ TBox with closed predicates. Then
%  the following holds:
  \begin{enumerate}
  \item If $(\tbox,\Sigma_{\Csf})$ is not safe, then $(\Tmc,\Sigma_{\Csf})$ is not
    convex for dtCQs and evaluating dtCQs w.r.t.~$(\Tmc,\Sigma_{\Csf})$ is
    \conp-hard.
  \item If $(\Tmc,\Sigma_{\Csf})$ is safe, then $(\Tmc,\Sigma_{\Csf})$ is convex for tCQs and
    \begin{enumerate}
    \item $(\Tmc,\Sigma_{\Csf})$ and $(\Tmc,\emptyset)$ are UCQ inseparable on consistent ABoxes.
    \item UCQ evaluation w.r.t.\ $(\Tmc,\Sigma_{\Csf})$ is in \ptime.
    \end{enumerate}
  \end{enumerate}
\end{thm}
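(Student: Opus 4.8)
The plan is to follow the two-step strategy of the \dlliter dichotomy (Theorem~\ref{thm:dlliter_dichotomy}), with \dlliter safeness replaced by the $\el$ notion of Definition~\ref{def:elsafe} and the \dlliter canonical model replaced by the canonical model for $\el$ TBoxes and ABoxes. For Point~(1) I would show that non-safeness yields non-convexity for dtCQs and then invoke Lemma~\ref{lem:nonconvex_imp_conp}. For Point~(2) I would show that for every ABox $\Amc$ consistent w.r.t.\ a safe $(\Tmc,\Sigma_\Csf)$ there is a canonical (universal) model of $(\Tmc,\emptyset)$ and $\Amc$ that \emph{respects} $\Sigma_\Csf$; from this, convexity for tCQs, UCQ-inseparability of $(\Tmc,\Sigma_\Csf)$ and $(\Tmc,\emptyset)$ on consistent ABoxes, and tractability all follow as in the \dlliter case, after noting that $\el$ ABox consistency w.r.t.\ $(\Tmc,\Sigma_\Csf)$ is decidable in \ptime.

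For Point~(1), suppose $(\Tmc,\Sigma_\Csf)$ is not safe, witnessed by a CI $C\sqsubseteq\exists r.D$ satisfying Conditions~\ref{it:elsafe_ent}, \ref{it:elsafe_tlc}, and~(3) of Definition~\ref{def:elsafe}. View $C$ as a tree-shaped ABox $\Amc_C$ with root $a$; since $\el$ concepts are preserved under homomorphisms, $a\in C^{\Imc}$ in every model $\Imc$ of $\Amc_C$, so by Condition~\ref{it:elsafe_ent} every model of $\Tmc$ and $\Amc_C$ respecting $\Sigma_\Csf$ must provide an $r$-successor of $a$ in $D$. To turn this into non-convexity I would extend $\Amc_C$ by two disjoint copies --- tagged with fresh concept names $A_1,A_2$ --- of the ``closed boundary'' of $\exists r.D$, i.e.\ of the minimal sub-structure of $\exists r.D$ (obtained by descending from the root through $r$ and through the open part of $D$ down to the first closed predicates) that every $\Sigma_\Csf$-respecting model must realize inside the ABox, plus enough further (open) material so that each copy can be completed to a full witness for $\exists r.D$. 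With the resulting ABox $\Amc'$ and dtCQs
$$
q_i(x)\;=\;\exists y\,(\,r(x,y)\wedge A_i(y)\wedge D(y)\,),\qquad i\in\{1,2\}
$$
(where $D$ denotes the dtCQ $q_D$; in case~\ref{it:el_safeness_cond1} $r$ is replaced by the path leading to the marked boundary) one gets $\Amc'\models(\Tmc,\Sigma_\Csf,q_1\vee q_2)(a)$ because the forced $r$-successor of $a$ must route through one of the two copies, while $\Amc'\not\models(\Tmc,\Sigma_\Csf,q_i)(a)$ because a model can route it through copy $3-i$ only. In case~\ref{it:el_safeness_cond1} ($r$ open, $\mn{sig}(D)\cap\Sigma_\Csf\neq\emptyset$) the commitment to a copy is forced by the first closed predicate inside $D$; this generalizes Example~\ref{ex1}. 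In case~\ref{it:el_safeness_cond2} ($r$ closed) it is forced already by the closed edge $r$, so the copies are just $b_1,b_2$ with $r(a,b_i)\in\Amc'$, and $\Amc'\not\models(\Tmc,\Sigma_\Csf,q_i)(a)$ relies precisely on the \emph{absence of a closed interpolant}: as there is no $\el$ concept $E$ with $\mn{sig}(E)\subseteq\Sigma_\Csf$, $\Tmc\models C\sqsubseteq\exists r.E$, and $\Tmc\models E\sqsubseteq D$, one can keep $b_i$ out of $D^{\Imc}$ while forcing $b_{3-i}\in D^{\Imc}$; this generalizes Examples~\ref{elex2} and~\ref{ex2}. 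In both cases Condition~\ref{it:elsafe_tlc} ensures that $a$ does not already have an $r$-successor in $D$ ``for free'' in $\Amc_C$, so a genuine choice remains. Since $q_1,q_2$ are dtCQs, Lemma~\ref{lem:nonconvex_imp_conp} delivers a dtCQ $q$ with $(\Tmc,\Sigma_\Csf,q)$ \conp-hard. The bookkeeping needed to place and coordinate the markers (so that any $\Sigma_\Csf$-respecting model commits to a single copy while neither copy is individually forced) is more delicate than in the \dlliter case because $D$ may be a tree rather than a basic concept, but it follows the same pattern.

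For Point~(2), I would build $\Imc_{\Tmc,\Amc}$ by a ``careful'' chase analogous to rules (R1)/(R2) in the proof of Theorem~\ref{thm:dlliter_dichotomy}(2): an existential restriction $\exists r.D$ owed by an element $d$ is satisfied by a \emph{fresh} successor only when $d$ has no suitable $r$-successor in the current ABox. As in the \dlliter case, $\Imc_{\Tmc,\Amc}$ is universal: $\Amc\models(\Tmc,\emptyset,q)(\vec a)$ iff $\Imc_{\Tmc,\Amc}\models q(\vec a)$ for all UCQs $q$. The crux is that if $(\Tmc,\Sigma_\Csf)$ is safe and $\Amc$ is consistent w.r.t.\ it, then $\Imc_{\Tmc,\Amc}$ respects $\Sigma_\Csf$. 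For ABox individuals this follows from consistency exactly as in Claim~3 of the proof of Theorem~\ref{thm:dlliter_dichotomy}: a closed atom holding at ABox individuals in $\Imc_{\Tmc,\Amc}$ is a certain answer of $(\Tmc,\emptyset)$, hence of $(\Tmc,\Sigma_\Csf)$, hence already explicit in $\Amc$ by consistency. For anonymous elements I would argue by induction on their creation: an anonymous $x$ is an $r$-successor of some $d$ owing $\exists r.D$, where the entailment is not already witnessed at $d$ (this is Condition~\ref{it:elsafe_tlc}); the failure of Condition~\ref{it:el_safeness_cond1} excludes closed predicates on $x$ or the edge into it when $r$ is open, while when $r$ is closed the failure of Condition~\ref{it:el_safeness_cond2} yields a closed interpolant $E$, so the required $r$-successor must satisfy a closed concept and is therefore (by the careful chase plus consistency) located among the ABox individuals rather than created anonymously. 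From the crux, for consistent ABoxes one obtains $\Amc\models(\Tmc,\Sigma_\Csf,q)(\vec a)$ iff $\Amc\models(\Tmc,\emptyset,q)(\vec a)$ (``$\Leftarrow$'' holds for all TBoxes; ``$\Rightarrow$'' uses that $\Imc_{\Tmc,\Amc}$ is a model of $(\Tmc,\Sigma_\Csf)$), which gives Point~(2a), and since $\Imc_{\Tmc,\Amc}$ is a single interpretation it also gives convexity for tCQs (in fact for all UCQs). For Point~(2b), $\el$ ABox consistency w.r.t.\ $(\Tmc,\Sigma_\Csf)$ reduces to computing the canonical model of $(\Tmc,\emptyset)$ (feasible in \ptime) and checking that closing $\Sigma_\Csf$ does not shrink its closed-predicate extensions below $\Amc$; on inconsistent ABoxes every query is certain, and on consistent ones query evaluation coincides with $(\Tmc,\emptyset)$-evaluation, which is in \ptime for $\el$.

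The main obstacle is the anonymous-element step in Point~(2): $\Tmc$ may entail that a closed concept name holds at an anonymous element even though the type concept of that element and all predicates on the path to it are open, because the entailment is ``non-local''. Ruling this out against safeness is where a tailored version of the Craig interpolation property for $\el$ enters: one splits the entailment witnessing $x$'s closed-predicate membership into a part about the parent $d$ and a part about $x$, so as to expose a CI $C'\sqsubseteq\exists r'.D'$ with $D'$ mentioning a closed predicate and all of Conditions~\ref{it:elsafe_ent},~\ref{it:elsafe_tlc},~\ref{it:el_safeness_cond1}/\ref{it:el_safeness_cond2} satisfied --- a safeness violation. Getting this interpolation argument and its interaction with the careful chase right (including that the interpolant $E$ can be chosen of controlled shape) is the technically demanding part; the remaining \dlliter-style reasoning is routine by comparison.
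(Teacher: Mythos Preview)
Your overall plan mirrors the paper's, and your treatment of Point~(2) is essentially right: the careful chase (Rule~(R3) that reuses an existing $r$-successor whenever possible) gives a universal model $\Jmc_{\Tmc,\Amc}$, and one shows it respects $\Sigma_\Csf$ by deriving a safeness violation from any offending closed atom. However, you have misidentified where the Craig interpolation lemma does the heavy lifting, and this leads to a genuine gap in your argument for Point~(1), case~\ref{it:el_safeness_cond2}.

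For Point~(2), interpolation is \emph{not} needed. The paper handles the anonymous-element step via the unfolding lemma (Lemma~\ref{ac}): if a fresh node $a\cdot r\cdot C'$ is created, then $\Tmc\models C_a^m\sqsubseteq\exists r.C'$ for some unfolding $C_a^m$ of $\Amc$ at $a$; one then checks directly that Conditions~\ref{it:elsafe_tlc} and~(3) of Definition~\ref{def:elsafe} hold for this CI, using only consistency of $\Amc$ and the careful chase rule. No splitting of entailments, no interpolant of ``controlled shape'' is required.

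Interpolation is instead the crux of Point~(1), case~\ref{it:el_safeness_cond2}, and your two-successor sketch is not enough there. Saying ``keep $b_i$ out of $D^\Imc$ while forcing $b_{3-i}\in D^\Imc$'' hides the hard step: you must exhibit a model of $\Tmc$ and $\Amc'$ that respects $\Sigma_\Csf$, in which $b_i$ carries exactly the closed-predicate assertions you put into $\Amc'$ yet fails~$D$. The mere non-existence of a closed interpolant between $C$ and $\exists r.D$ does not hand you such a model. The paper obtains it by renaming the non-$\Sigma_\Csf$ predicates to produce two disjoint copies $\Tmc^1,\Tmc^2$ of $\Tmc$ and applying Lemma~\ref{lem:interpolation} to conclude $\Tmc^1\cup\Tmc^2\not\models G^1\sqsubseteq D^2$ for every $G$ in $K=\{G\mid \exists r.G\in\mn{sub}(\Tmc),\ \Tmc\models C\sqsubseteq\exists r.G\}$. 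The canonical model of $\Tmc^1\cup\Tmc^2$ and $G^1$ then yields, for each $G\in K$, a pair of models $\Imc_G^1$ (with $a_G\in G$) and $\Imc_G^2$ (with $a_G\notin D$) that share the same $\Sigma_\Csf$-reduct. The ABox is built not from two bare successors $b_1,b_2$, but from two copies of each $\Sigma_\Csf$-reduct $\Amc_{G,\Sigma_\Csf}$, for all $G\in K$, attached to $a_C$ via $r$ and tagged with $A_1,A_2$. The witness models are then assembled by choosing, for each copy, whether to expand it as $\Imc_G^1$ or as $\Imc_G^2$. Without this machinery you cannot guarantee that the closed-predicate environment you commit to in $\Amc'$ simultaneously supports ``$b$ realizes some $G\in K$'' in one model and ``$b\notin D$'' in the other.

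So: move the interpolation argument from Point~(2) to Point~(1)\ref{it:el_safeness_cond2}, and replace the two-successor sketch there by the $2|K|$-successor construction driven by the renamed-TBox interpolation claim. Point~(2) then goes through with the unfolding lemma alone.
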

Towards a proof of Theorem~\ref{eltheorem}, we start with introducing
canonical models, prove some fundamental lemmas regarding such models,
and establish a variant of the Craig interpolation property for \EL
that we use to address Condition~\ref{it:el_safeness_cond2} of
Definition~\ref{def:elsafe}. In fact, we introduce several versions of
canonical models. For the proof of Point~(1) of
Theorem~\ref{eltheorem}, we use \emph{finite} canonical models for \EL
TBoxes and \EL concepts. For Point~(2) and for establishing Craig
interpolation, we use (essentially) \emph{tree-shaped} canonical
models of \EL TBoxes and possibly infinite ABoxes and,
as a special case, the same kind of canonical models of \EL TBoxes and
\EL concepts. The constructions of all these canonical models does not
involve closed predicates. However, to deal with closed predicates in
the proofs, it turns out that we need a more careful definition of
(tree-shaped) canonical models than usual.

We start with the definition of finite canonical models for \EL TBoxes
$\Tmc$ and $\EL$ concepts $C$.  Take for every
$D\in {\sf sub}(\Tmc,C)$ an individual name $a_{D}$ and define the
\emph{canonical model}
$\Imc_{\Tmc,C} =(\Delta^{\inter_{\Tmc,C}},\cdot^{\inter_{\Tmc,C}})$ of
$\tbox$ and $C$ as follows:
\begin{itemize}
\item $\Delta^{\inter_{\Tmc,C}}= \{a_{C}\} \cup \{ a_{C'} \mid \exists
  r.C'\in \mn{sub}(\Tmc,C)\}$;
\item $a_{D} \in A^{\Imc_{\Tmc,C}}$ if $\Tmc \models D \sqsubseteq A$, for all $A \in \NC$ and $a_{D} \in
  \Delta^{\Imc_{\Tmc,C}}$;
\item $(a_{D_{0}},a_{D_{1}})\in r^{{\Imc}_{\Tmc,C}}$ if $\Tmc \models
  D_{0} \sqsubseteq \exists r.D_{1}$ and $\exists r.D_{1} \in
  \mn{sub}(\Tmc)$ or $\exists r . D_{1}$ is a tlc of $D_{0}$, for all
  $a_{D_{0}},a_{D_{1}}\in \Delta^{\Imc_{\Tmc,C}}$ and $r\in \NR$.
\end{itemize}
Deciding whether $\Tmc\models C \sqsubseteq D$ is in \ptime
\cite{BaBrLu-IJCAI-05}, and thus $\inter_{\tbox,C}$
can be constructed in time polynomial in the size of $\tbox$ and $C$. 
The following lemma, shown in \cite{jlc} as Lemma~12, is the reason
for why $\Imc_{\Tmc,C}$ is called a canonical
model.
\begin{lem}\label{lem:el_can_model}
  Let $C$ be an $\EL$ concept and $\tbox$ an $\EL$ TBox. Then
  \begin{itemize}
  \item $\Imc_{\Tmc,C}$ is a model of $\Tmc$;
  \item for all $D_{0}\in {\sf sub}(\Tmc,C)$ and
    all $\EL$ concepts $D_{1}$: $\Tmc \models D_{0} \sqsubseteq D_{1}$
    iff $a_{D_{0}} \in D_{1}^{\mathcal{I}_{\Tmc,C}}$.
  \end{itemize}
\end{lem}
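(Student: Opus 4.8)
The plan is to prove both items by an induction arranged so as to sidestep the apparent circularity between ``$\Imc_{\Tmc,C}$ is a model of $\Tmc$'' and the completeness half of the characterization ($\Tmc\models D_{0}\sqsubseteq D_{1}$ implies $a_{D_{0}}\in D_{1}^{\Imc_{\Tmc,C}}$). Recall that the elements of $\Imc_{\Tmc,C}$ are exactly $a_{C}$ and the $a_{C'}$ with $\exists r.C'\in{\sf sub}(\Tmc,C)$, so every element has the form $a_{D_{0}}$ with $D_{0}\in{\sf sub}(\Tmc,C)$, and it is for such $D_{0}$ that the second item is to be read. First I would prove \emph{soundness}: by induction on the structure of $D_{1}$, if $a_{D_{0}}\in D_{1}^{\Imc_{\Tmc,C}}$ then $\Tmc\models D_{0}\sqsubseteq D_{1}$. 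The atomic, $\top$, and conjunction cases are immediate; for $D_{1}=\exists r.D'$ one picks a witness $a_{D_{2}}$ with $(a_{D_{0}},a_{D_{2}})\in r^{\Imc_{\Tmc,C}}$ and $a_{D_{2}}\in D'^{\Imc_{\Tmc,C}}$, notes $\Tmc\models D_{0}\sqsubseteq\exists r.D_{2}$ directly from the definition of $r^{\Imc_{\Tmc,C}}$ (trivially so in the top-level-conjunct case), and combines this with $\Tmc\models D_{2}\sqsubseteq D'$ from the induction hypothesis. This step does not use that $\Imc_{\Tmc,C}$ is a model.

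Next I would prove the \emph{name lemma}: $a_{D_{0}}\in D_{0}^{\Imc_{\Tmc,C}}$ for every element $a_{D_{0}}$, by structural induction on $D_{0}$. For a concept-name top-level conjunct $A$ of $D_{0}$ this is immediate from $\Tmc\models D_{0}\sqsubseteq A$; for an existential top-level conjunct $\exists r.E'$ of $D_{0}$ (and for $D_{0}$ itself when $D_{0}=\exists r.E'$) the ``tlc'' clause in the definition of $r^{\Imc_{\Tmc,C}}$ supplies the edge $(a_{D_{0}},a_{E'})\in r^{\Imc_{\Tmc,C}}$, while $a_{E'}\in E'^{\Imc_{\Tmc,C}}$ comes from the induction hypothesis (here $\exists r.E'\in{\sf sub}(\Tmc,C)$, so $a_{E'}$ is indeed a domain element). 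Using this I would establish a \emph{restricted} completeness statement and from it modelhood: call an $\EL$ concept $F$ \emph{$\Tmc$-grounded} if every existential subconcept of $F$ lies in ${\sf sub}(\Tmc)$ (in particular every concept occurring in $\Tmc$ is $\Tmc$-grounded), and show by induction on $F$ that if $F$ is $\Tmc$-grounded and $\Tmc\models D_{0}\sqsubseteq F$ then $a_{D_{0}}\in F^{\Imc_{\Tmc,C}}$; the only interesting case $F=\exists r.F'$ uses the \emph{first} clause of the definition of $r^{\Imc_{\Tmc,C}}$ (legitimate because $\exists r.F'\in{\sf sub}(\Tmc)$) together with the name lemma applied to $a_{F'}$. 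Modelhood now follows: given a CI $E\sqsubseteq F\in\Tmc$ and $a_{D_{0}}\in E^{\Imc_{\Tmc,C}}$, soundness yields $\Tmc\models D_{0}\sqsubseteq E$, hence $\Tmc\models D_{0}\sqsubseteq F$, and since $F$ is $\Tmc$-grounded the restricted statement gives $a_{D_{0}}\in F^{\Imc_{\Tmc,C}}$.

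Finally, \emph{full completeness} is immediate: if $\Tmc\models D_{0}\sqsubseteq D_{1}$ for an arbitrary $\EL$ concept $D_{1}$, then $a_{D_{0}}\in D_{0}^{\Imc_{\Tmc,C}}$ by the name lemma and $a_{D_{0}}\in D_{1}^{\Imc_{\Tmc,C}}$ because $\Imc_{\Tmc,C}$ is a model of $\Tmc$; together with soundness this is the claimed equivalence, and modelhood is the first item. The main obstacle is exactly the circularity pointed at above, and its resolution is the interplay of the name lemma with the observation that the right-hand sides of CIs of $\Tmc$, and all their subconcepts, belong to ${\sf sub}(\Tmc)$ and are therefore produced by the ``$\exists r.D_{1}\in{\sf sub}(\Tmc)$'' clause of the edge definition rather than by the semantic condition $\Tmc\models D_{0}\sqsubseteq\exists r.D_{1}$ --- which is precisely why that clause is phrased with ${\sf sub}(\Tmc)$ and not ${\sf sub}(\Tmc,C)$. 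Two minor bookkeeping points require care: conjunctions should be read flattened, so that every top-level conjunct is a concept name or an existential, and one should keep track of which $D\in{\sf sub}(\Tmc,C)$ actually label domain elements.
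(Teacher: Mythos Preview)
Your proof is correct. The paper does not give its own proof of this lemma; it simply cites an external reference (Lemma~12 of \cite{jlc}). Your argument is the standard one and correctly handles the one genuine subtlety: establishing modelhood without presupposing full completeness, which you do via the ``$\Tmc$-grounded'' restricted completeness step together with the name lemma. Your observation about why the edge relation is defined with the disjunct ``$\exists r.D_{1}\in{\sf sub}(\Tmc)$'' (rather than ${\sf sub}(\Tmc,C)$) is exactly the point that makes the restricted-completeness step go through.
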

The next lemma, shown in \cite{jlc} as Lemma~16, is concerned
with the implication of existential restrictions in \EL. We will use
it in proofs below. It is proved using Lemma~\ref{lem:el_can_model}
and the construction of canonical models.
\begin{lem}\label{lem:existential_conseq}
  Suppose $\tbox\models C\sqsubseteq\existsr{r}{D}$, where $C$, $D$
  are $\el$ concepts and $\tbox$ is an $\el$ TBox. Then one of the
  following holds:
  \begin{itemize}
  \item there is a tlc $\existsr{r}{C'}$ of $C$ such that
    $\tbox\models C'\sqsubseteq D$;
  \item there is a concept $\existsr{r}{C'}\in\subc{}{\tbox}$ such that
    $\tbox\models C\sqsubseteq \existsr{r}{C'}$ and $\tbox\models
    C'\sqsubseteq D$.
  \end{itemize}
\end{lem}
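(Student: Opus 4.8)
The plan is to extract both alternatives directly from the finite canonical model $\Imc_{\Tmc,C}$ together with Lemma~\ref{lem:el_can_model}. First I would note that $C$ is a subconcept of itself, so $C \in \mn{sub}(\Tmc,C)$, and hence Lemma~\ref{lem:el_can_model} applies to $a_{C}$: from $\Tmc \models C \sqsubseteq \exists r.D$ we get $a_{C} \in (\exists r.D)^{\Imc_{\Tmc,C}}$. Unfolding the semantics of $\exists r.D$ yields a domain element of $\Imc_{\Tmc,C}$ which, by the definition of $\Delta^{\Imc_{\Tmc,C}}$, has the form $a_{C'}$ for some $C' \in \mn{sub}(\Tmc,C)$, and which satisfies $(a_{C},a_{C'}) \in r^{\Imc_{\Tmc,C}}$ and $a_{C'} \in D^{\Imc_{\Tmc,C}}$. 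Applying Lemma~\ref{lem:el_can_model} a second time, now to $a_{C'}$ (legitimate since $C' \in \mn{sub}(\Tmc,C)$) and the $\EL$ concept $D$, gives $\Tmc \models C' \sqsubseteq D$.

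It then remains to explain why the edge $(a_{C},a_{C'}) \in r^{\Imc_{\Tmc,C}}$ was added to the canonical model. By the construction of $\Imc_{\Tmc,C}$ this happens for one of exactly two reasons. Either $\exists r.C'$ is a top-level conjunct of $C$; then $\exists r.C'$ is a tlc of $C$ with $\Tmc \models C' \sqsubseteq D$, which is the first disjunct of the lemma. Or $\Tmc \models C \sqsubseteq \exists r.C'$ and $\exists r.C' \in \mn{sub}(\Tmc)$; then $\exists r.C'$ is a concept in $\mn{sub}(\Tmc)$ with $\Tmc \models C \sqsubseteq \exists r.C'$ and $\Tmc \models C' \sqsubseteq D$, which is the second disjunct. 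In both cases we obtain the claimed dichotomy, which finishes the argument.

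I do not anticipate a genuine obstacle: the proof is essentially a single-pass reading of the canonical model construction. The only points that require a little care are bookkeeping ones --- verifying that the index $C'$ of the chosen $r$-successor of $a_{C}$ really lies in $\mn{sub}(\Tmc,C)$ so that Lemma~\ref{lem:el_can_model} can be re-applied to $a_{C'}$, and parsing the (somewhat terse) edge condition in the definition of $\Imc_{\Tmc,C}$ correctly as the disjunction of the case ``$\exists r.C' \in \mn{sub}(\Tmc)$ and $\Tmc \models C \sqsubseteq \exists r.C'$'' and the case ``$\exists r.C'$ is a tlc of $C$''. Both are immediate from the definitions recalled immediately above the lemma, so no additional machinery is needed.
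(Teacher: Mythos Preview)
Your proposal is correct and follows exactly the approach the paper indicates (``proved using Lemma~\ref{lem:el_can_model} and the construction of canonical models''): read off an $r$-successor $a_{C'}$ of $a_C$ in $\Imc_{\Tmc,C}$, use Lemma~\ref{lem:el_can_model} to obtain $\Tmc\models C'\sqsubseteq D$, and case-split on the two clauses in the definition of $r^{\Imc_{\Tmc,C}}$. The bookkeeping points you flag (membership of $C'$ in $\mn{sub}(\Tmc,C)$ and the correct parsing of the edge condition) are handled correctly.
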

We next construct tree-shaped canonical models. We start with
canonical models $\Jmc_{\Tmc,\Amc}$ of an $\mathcal{EL}$ TBox $\Tmc$
and a (possibly infinite) ABox $\Amc$.
In the construction of $\Jmc_{\Tmc,\Amc}$, we use \emph{extended
  ABoxes} that additionally admit assertions of the form $C(a)$ with
$C$ an arbitrary \EL concept. We construct a sequence of extended ABoxes
$\Amc_0,\Amc_1,\dots$, starting with $\Amc_0 = \Amc$.  In what
follows, we use additional individual names of the form
$a \cdot r_1\cdot C_1 \cdots r_k\cdot C_k$ with
$a \in \mn{Ind}(\Amc_0)$, $r_1,\dots,r_k$ role names that occur in
\Tmc, and $C_1,\dots,C_k \in \mn{sub}(\Tmc)$. We set
${\sf tail}(a \cdot r_1\cdot C_1 \cdots r_k\cdot C_k)=C_{k}$. Each
extended ABox $\Amc_{i+1}$ is obtained from $\Amc_i$ by applying the
following rules (the interpretation $\Imc_{\Amc_{i}}$ corresponding to
the extended ABox $\Amc_{i}$ ignores assertions $C(a)$ with $C$ not a concept name):
\begin{itemize}

  % \item[{\sf R1}] if $a \in \mn{Ind}(\Amc_i)$, then add $C_\Tmc(a)$.
 
\item[(R1)] if $C \sqcap D(a) \in \Amc_i$, then add $C(a)$ and
  $D(a)$ to $\Amc_{i}$;

\item[(R2)] if %$\inter_{\Amc_{i}}\models C(a)$ 
$a \in C^{\inter_{\Amc_{i}}}$ 
and $C \sqsubseteq D \in
  \Tmc$, then add $D(a)$ to $\Amc_{i}$;

\item[(R3)] if $\exists r.C(a)\in \Amc_{i}$ and there exist $b \in
  \adom{\Amc_{i}}$ with $r(a,b)\in \Amc_{i}$ and $\Amc_{i}\models 
  (\Tmc,\emptyset,q_{C})(b)$, then add $C(b)$ to $\Amc_{i}$; otherwise add $r(a,a \cdot r\cdot C)$ and $C(a\cdot 
  r\cdot C)$ to $\Amc_{i}$. (Recall that $q_{C}$ denotes the directed tree CQ corresponding to the concept $C$.)
\end{itemize}
Let $\Amc_c = \bigcup_{i \geq 0} \Amc_i$. Note that $\Amc_c$ may be
infinite even if $\Amc$ is finite. Also note that rule (R3) carefully 
avoids to introduce fresh successors as witnesses for existential
restrictions when this is not strictly necessary. This will be useful
when closing predicates which might preclude the introduction of
fresh successors. 
% , and that none of the above rules
% adds anything to $\Amc_{c}$.
% In the following, we write $\Amc_{c} \vdash \bot$ instead of
% $\Amc_{c} \cdash \bot(a)$.  e $\Amc_{c} \vdash \In what follows we
% denote by $\Imc_{\Tmc,\Amc}$ the interpretation $\Imc_{\Amc_{c}}$
% corresponding to $\Imc_{\Amc}$ extended by setting
% $a^{\Imc_{\Tmc,\Amc}}= a$, for $a\in {\sf Ind}(\Amc)$.
%
Let $\Jmc_{\Tmc,\Amc}$
be the interpretation that corresponds to $\Amc_{c}$.
% We can view $\Jmc_{\Tmc,\Amc}$
% as a canonical model for the ABox \Amc w.r.t.\ the TBox \Tmc. 
Points~1 and~2 of the following lemma show that $\Jmc_{\Tmc,\Amc}$
is canonical, essentially in the sense of
Lemma~\ref{lem:el_can_model}, and Points~3 and~4 show that, in
addition, it is \emph{universal} for UCQs: answers given by
$\Jmc_{\Tmc,\Amc}$ coincide with the certain answers.
\begin{lem}\label{aboxcan}
  Let $\Tmc$ be an $\EL$ TBox and $\Amc$ a possibly infinite ABox. Then
  \begin{enumerate}
  \item $\Jmc_{\Tmc,\Amc}$ is a model of $\Tmc$ and $\Amc$;
  \item for all $p\in \Delta^{\Jmc_{\Tmc,\Amc}}\setminus{\sf
      Ind}(\Amc)$ and all $\EL$ concepts $D$: $p\in
    D^{\Jmc_{\Tmc,\Amc}}$ iff \mbox{$\Tmc \models {\sf
        tail}(p)\sqsubseteq D$};
  \item for every model $\inter$ of $\tbox$ and $\abox$, there is a
    homomorphism $h$ from $\interj_{\tbox,\abox}$ to $\inter$ that preserves
         $\adom{\abox}$;
  \item for all UCQs $q(\vec{x})$ and tuples $\vec{a}$ in $\adom{\Amc}$: 
    $\Amc\models (\Tmc,\emptyset,q)(\vec{a})$ iff $\Jmc_{\Tmc,\Amc}\models q(\vec{a})$.
  \end{enumerate}
\end{lem}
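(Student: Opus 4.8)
The plan is to first record a \emph{saturation property} of the limit extended ABox $\Amc_c$ and then derive all four claims from it. Concretely, I would prove by induction on the structure of an $\EL$ concept $C$ that for every $d\in\Delta^{\Jmc_{\Tmc,\Amc}}$ we have $d\in C^{\Jmc_{\Tmc,\Amc}}$ iff $d\in C^{\Imc_{\Amc_i}}$ for some $i\ge 0$, and moreover that $C(d)\in\Amc_c$ implies $d\in C^{\Jmc_{\Tmc,\Amc}}$. The first equivalence uses monotonicity of the construction together with the fact that any edge $r(d,e)$ of $\Jmc_{\Tmc,\Amc}$ already occurs in some $\Amc_j$; the implication from $C(d)\in\Amc_c$ uses rules (R1) and (R3) together with fairness. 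Point~1 is then immediate: $\Jmc_{\Tmc,\Amc}$ satisfies $\Amc=\Amc_0$, and for a CI $C\sqsubseteq D\in\Tmc$, if $d\in C^{\Jmc_{\Tmc,\Amc}}$ then $d\in C^{\Imc_{\Amc_i}}$ for some $i$, so fairness forces (R2) to add $D(d)$ to some $\Amc_j$, whence $d\in D^{\Jmc_{\Tmc,\Amc}}$.

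For Point~2, fix $p\notin\mn{Ind}(\Amc)$ and put $C_k:=\mn{tail}(p)$; then $C_k(p)\in\Amc_c$, as this assertion was added when (R3) created $p$. The direction ``$\Tmc\models C_k\sqsubseteq D$ implies $p\in D^{\Jmc_{\Tmc,\Amc}}$'' follows from $p\in C_k^{\Jmc_{\Tmc,\Amc}}$ (by the saturation property) together with Point~1, since every model of $\Tmc$ satisfies every CI entailed by $\Tmc$. For the converse I would establish, by induction on the construction stage $i$ and then on $C$, the invariant that $p'\notin\mn{Ind}(\Amc)$ and $p'\in C^{\Imc_{\Amc_i}}$ imply $\Tmc\models\mn{tail}(p')\sqsubseteq C$; the crucial structural facts are that the only incoming edge of a fresh individual is the one created together with it, that new concept-name assertions at a fresh individual arise only through (R2) from $\Tmc$, and that new outgoing edges at a fresh individual arise only through (R3) from some $\exists s.C'\in\mn{sub}(\Tmc)$ (or from a top-level conjunct of its tail). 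This is the analogue, for the present lazy and possibly infinite construction, of the canonical-model lemma for the finite case (Lemma~12 of \cite{jlc}).

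Point~3 is the main obstacle. I would construct $h$ stage by stage, maintaining the invariant that $h$ is a homomorphism from $\Imc_{\Amc_i}$ to $\Imc$ with $h(a)=a$ for all $a\in\mn{Ind}(\Amc)$ and, in addition, $h(c)\in C^{\Imc}$ for \emph{every} assertion $C(c)\in\Amc_i$ (compound $C$ included). Rule (R1) is immediate from the semantics of $\sqcap$; rule (R2) uses that $\EL$ concepts are preserved along $h$ and that $\Imc\models\Tmc$; and the ``otherwise'' branch of (R3) extends $h$ by sending the new successor $a\cdot r\cdot C$ to a witness $e\in\Delta^{\Imc}$ of $h(a)\in(\exists r.C)^{\Imc}$. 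The delicate case is the first branch of (R3), which adds $C(b)$ for an \emph{already existing} $b$ precisely when $\Amc_i\models(\Tmc,\emptyset,q_C)(b)$: here one observes that, reading each individual name $c$ of $\Amc_i$ as $h(c)$, the interpretation $\Imc$ is a model of $\Tmc$ and of all assertions of $\Amc_i$ (by the invariant), hence witnesses this certain-answer statement, so $h(b)\in C^{\Imc}$. The only remaining subtlety is that $h$ need not be injective; this is handled by the standard fact that $\EL$-UCQ query answering is insensitive to the standard name assumption (equivalently, by first passing to a canonical model of $\Tmc$ and $\Amc_i$ that maps homomorphically into $\Imc$).

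Finally, Point~4 follows from Points~1 and~3. The direction ``$\Rightarrow$'' holds because $\Jmc_{\Tmc,\Amc}$ is a model of $\Tmc$ and $\Amc$ by Point~1 and trivially respects the empty set of closed predicates. For ``$\Leftarrow$'', given $\Jmc_{\Tmc,\Amc}\models q(\vec a)$ and an arbitrary model $\Imc$ of $\Tmc$ and $\Amc$, the homomorphism $h$ supplied by Point~3 fixes the individual names in $\vec a$, so $\Imc\models q(\vec a)$ by Lemma~\ref{homo}; as $\Imc$ was arbitrary, $\Amc\models(\Tmc,\emptyset,q)(\vec a)$.
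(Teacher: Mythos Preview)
The paper states this lemma without proof, treating it as a (known) variation of standard canonical-model facts for $\EL$; so there is no proof in the paper to compare against. Your outline is correct and follows the expected route: saturation for Point~1, induction on the construction for Point~2, a stage-by-stage homomorphism for Point~3, and the combination of Points~1 and~3 (via Lemma~\ref{homo}) for Point~4.

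The one place I would tighten is exactly the spot you flag as delicate: the first branch of (R3) in Point~3, where $C(b)$ is added because $\Amc_i \models (\Tmc,\emptyset,q_C)(b)$. Your appeal to ``$\EL$-UCQ answering is insensitive to the standard name assumption'' is true, but the most direct justification of that fact is the universality of the canonical model---precisely what you are proving---so as written there is a whiff of circularity; likewise the parenthetical ``passing to a canonical model of $\Tmc$ and $\Amc_i$'' needs to refer to a \emph{different} (simpler) canonical model to be non-circular. A cleaner and fully non-circular argument, available from the paper itself, is to invoke Lemma~\ref{ac} (proved independently in \cite{jlc}): from $\Amc_i \models (\Tmc,\emptyset,q_C)(b)$ one gets $\Tmc \models C_b^m \sqsubseteq C$ for some unfolding $C_b^m$ of $\Amc_i$ at $b$; since $b \in (C_b^m)^{\Imc_{\Amc_i}}$ and $h$ is a homomorphism, $h(b) \in (C_b^m)^{\Imc}$; and $\Imc \models \Tmc$ then yields $h(b) \in C^{\Imc}$, maintaining your invariant. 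With this one adjustment the argument is complete.
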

We now construct tree-shaped canonical models $\Jmc_{\Tmc,C}$
of an \EL TBox~\Tmc and an \EL \emph{concept} $C$.
A \emph{path} in $C$
is a finite sequence $C_{0}\cdot
r_{1}\cdot C_{1}\cdots r_{n}\cdot C_{n}$, where $C_{0}=C$, $n\geq
0$, and $\exists r_{i+1}.C_{i+1}$ is a tlc of $C_{i}$, for $0\leq i
<n$. We use ${\sf paths}(C)$ to denote the set of paths in
$C$.
If $p\in
{\sf paths}(C)$, then ${\sf tail}(p)$ denotes the last element of
$p$.
The \emph{ABox} $\Amc_{C}$ associated with $C$ is defined by setting
\begin{eqnarray*}
  \Amc_{C} & = & \{ r(p,q) \mid p,q\in {\sf paths}(C); q = p \cdot r\cdot C'\}\\
  &  &      \{ A(p)\mid A \mbox{ a tlc of } {\sf tail}(p), p \in {\sf paths}(C)\}.
  % & & \{ \top(a_{p}) \mid \top \mbox{ a conjunct of } {\sf
  % Tail}(p),p\in {\sf paths}(C)\}
\end{eqnarray*}
Then $\Jmc_{\Tmc,C}:= \Jmc_{\Tmc,\Amc_{C}}$ is the \emph{tree-shaped canonical model} of $\Tmc$ and $C$. 
The following is an easy consequence of Lemma~\ref{aboxcan}.
\begin{lem}\label{conceptcan}
  Let $\Tmc$ be an $\EL$ TBox and $C$ an $\el$ concept. Then
  \begin{itemize}
  \item $\Jmc_{\Tmc,C}$ is a model of $\Tmc$;
  \item for all $p\in \Delta^{\Jmc_{\Tmc,C}}$ and all $\EL$ concepts
    $D$: $p\in D^{\Jmc_{\Tmc,C}}$ iff $\Tmc \models {\sf
      tail}(p)\sqsubseteq D$.
    % \item for all CQs $q(\vec{x})$ and $\vec{a}\subseteq {\sf
    %   Ind}(\Amc)$: $\Tmc,\Amc\models q(\vec{a})$ iff
    %   $\Jmc_{\Tmc,\Amc}\models q[\vec{a}]$.
  \end{itemize}
\end{lem}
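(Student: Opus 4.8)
The plan is to derive both claims from Lemma~\ref{aboxcan}, exploiting that by definition $\Jmc_{\Tmc,C}=\Jmc_{\Tmc,\Amc_{C}}$. The first bullet is then immediately Point~1 of Lemma~\ref{aboxcan}. For the characterisation in the second bullet I would split on whether $p$ is a path in $C$ or not. If $p$ is \emph{not} a path in $C$, then $p$ was introduced by rule~(R3), so $p\notin{\sf Ind}(\Amc_{C})$, and the claim is literally Point~2 of Lemma~\ref{aboxcan} (${\sf tail}$ is defined so that both statements speak about the same elements). So all the work is in the case $p\in{\sf paths}(C)$.

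For that case, the ``$\Leftarrow$'' direction reduces to showing $p\in{\sf tail}(p)^{\Jmc_{\Tmc,C}}$: once this is known, $\Tmc\models{\sf tail}(p)\sqsubseteq D$ yields $p\in D^{\Jmc_{\Tmc,C}}$ because $\Jmc_{\Tmc,C}$ is a model of $\Tmc$ (Point~1 of Lemma~\ref{aboxcan}). I would prove $p\in{\sf tail}(p)^{\Jmc_{\Tmc,C}}$ by induction on the structure of the $\EL$ concept ${\sf tail}(p)$: it is the conjunction of its top-level conjuncts, and for a conjunct that is a concept name $A$ we have $A(p)\in\Amc_{C}$, while for a conjunct $\exists r.C'$ we have $r(p,p\cdot r\cdot C')\in\Amc_{C}$ with ${\sf tail}(p\cdot r\cdot C')=C'$, so the induction hypothesis applied to the path $p\cdot r\cdot C'$ and the structurally smaller concept $C'$ gives $p\cdot r\cdot C'\in {C'}^{\Jmc_{\Tmc,C}}$.

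For the ``$\Rightarrow$'' direction with $p\in{\sf paths}(C)$, I would compare $\Jmc_{\Tmc,C}$ with the \emph{finite} canonical model $\Imc_{\Tmc,C}$. Since ${\sf tail}(p)$ is a subconcept of $C$ for every $p\in{\sf paths}(C)$, we have ${\sf tail}(p)\in{\sf sub}(\Tmc,C)$ and in particular $a_{{\sf tail}(p)}\in\Delta^{\Imc_{\Tmc,C}}$. The ``folding'' map $p\mapsto a_{{\sf tail}(p)}$ is a homomorphism from $\Imc_{\Amc_{C}}$ to $\Imc_{\Tmc,C}$ (a routine check from the definition of $\Imc_{\Tmc,C}$ and Lemma~\ref{lem:el_can_model}), and, since $\Imc_{\Tmc,C}$ is a model of $\Tmc$, it extends — by following the chase rules (R1)--(R3) — to a homomorphism $h\colon\Jmc_{\Tmc,C}\to\Imc_{\Tmc,C}$. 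As membership of a point in an $\EL$ concept is preserved along homomorphisms (if $h$ is a homomorphism and $d\in E^{\Imc}$, then $h(d)\in E^{\Jmc}$; cf.\ the UCQ version in Lemma~\ref{homo}), $p\in D^{\Jmc_{\Tmc,C}}$ implies $a_{{\sf tail}(p)}\in D^{\Imc_{\Tmc,C}}$, and by Lemma~\ref{lem:el_can_model} the latter is equivalent to $\Tmc\models{\sf tail}(p)\sqsubseteq D$.

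I expect the single non-routine step to be the extension of the folding homomorphism along the chase; within it, the delicate point is the ``reuse'' branch of (R3), where an assertion $C(b)$ is added for an existing successor $b$ with $\Amc_{i}\models(\Tmc,\emptyset,q_{C})(b)$ and one must deduce $h(b)\in C^{\Imc_{\Tmc,C}}$. This amounts to a mild strengthening of the universality statement in Point~3 of Lemma~\ref{aboxcan} — that the restricted chase of an instance maps homomorphically into \emph{any} model of $\Tmc$ into which the instance itself maps homomorphically — proved by the usual step-by-step chasing argument. As a fallback for the ``$\Rightarrow$'' part one can instead rewrite $p\in D^{\Jmc_{\Tmc,C}}$ as $\Amc_{C}\models(\Tmc,\emptyset,q_{D})(p)$ via Point~4 of Lemma~\ref{aboxcan}, and, from a countermodel for $\Tmc\models{\sf tail}(p)\sqsubseteq D$, build a model of $\Tmc$ and $\Amc_{C}$ falsifying $q_{D}$ at $p$ by grafting $\Imc_{\Tmc,{\sf tail}(p)}$ onto the chase of $\Amc_{C}$ at $p$ — correctness of the graft using that $\EL$ has no inverse roles, so the part of the chase above $p$ does not reach $p$'s forward cone.
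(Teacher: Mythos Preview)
Your plan is correct and matches the paper's approach: the paper gives no detailed proof but simply states that the lemma ``is an easy consequence of Lemma~\ref{aboxcan}''. You have supplied exactly the details one needs to fill in, and you correctly identify the case split --- Point~2 of Lemma~\ref{aboxcan} handles $p\notin\mathsf{paths}(C)$ directly, while $p\in\mathsf{paths}(C)$ requires separate work.

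Your argument for $p\in\mathsf{paths}(C)$ is sound in both directions. For the ``$\Rightarrow$'' direction, however, your main route (extend the folding map to a homomorphism $\Jmc_{\Tmc,C}\to\Imc_{\Tmc,C}$) is more work than necessary, and you are right that the reuse branch of~(R3) is the delicate point of that extension. A shorter route avoids this entirely: by Point~4 of Lemma~\ref{aboxcan}, $p\in D^{\Jmc_{\Tmc,C}}$ iff $\Amc_C\models(\Tmc,\emptyset,q_D)(p)$; by Lemma~\ref{ac} (stated just after, but an external result from~\cite{jlc}), this is equivalent to $\Tmc\models C_p^m\sqsubseteq D$ for some $m$. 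Now observe that, since $\Amc_C$ is precisely the tree-unfolding of $C$, the depth-$m$ unfolding $C_p^m$ of $\Amc_C$ at $p$ is (up to $\top$ and reordering of conjuncts) the depth-$m$ truncation of $\mathsf{tail}(p)$, so $\models\mathsf{tail}(p)\sqsubseteq C_p^m$ always, and $C_p^m\equiv\mathsf{tail}(p)$ once $m$ exceeds the depth of $\mathsf{tail}(p)$. Both directions of the biconditional then follow immediately. This is probably what the paper has in mind by ``easy'', and it sidesteps both the homomorphism-extension argument and the grafting fallback.
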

We next give a lemma that connects an answers $a$
to a dtCQs $q_C$
on an ABox \Amc under a TBox~\Tmc with the entailment by \Tmc of
concept inclusions of the form $C_a^m
\sqsubseteq C$ where $C_a^m$ is obtained by unfolding \Amc at
$a$ up to depth $m$. More precisely, for every $m \geq 0$ define
%
% entailed  and certain answers over ABoxes. 
% Let $\Amc$ be an ABox. For $a\in {\sf Ind}(\Amc)$, we define an $\EL$ concept $C_{a}^{m}$ by ``unfolding'' $\Amc$ at 
% $a$ up to depth $m$:
$$
C_{a}^{0} = (\bigsqcap_{A(a)\in \Amc} A), \quad C_{a}^{m+1}=
(\bigsqcap_{A(a)\in \Amc} A)\sqcap (\bigsqcap_{r(a,b)\in \Amc} \exists
r.C_{b}^{m}).
$$
The following is shown in \cite{jlc} as~Lemma~22.
\begin{lem}\label{ac}
For all $\EL$ TBoxes $\Tmc$, $\mathcal{EL}$ concepts $C$, ABoxes $\Amc$, and $a\in {\sf Ind}(\Amc)$:
$$
\Amc\models (\Tmc,\emptyset,q_{C})(a) \quad \text{ iff } \quad \exists m\geq 0: \quad
\Tmc\models C_{a}^{m}\sqsubseteq C
$$
\end{lem}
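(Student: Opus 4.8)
The implication from right to left is routine, and the plan is as follows. First I would show by induction on $m$ that $b\in (C_{b}^{m})^{\Imc}$ for every model $\Imc$ of $\Amc$ and every $b\in{\sf Ind}(\Amc)$: for $m=0$ this holds since $\Imc\models\Amc$ makes $b$ an instance of every concept name that is a conjunct of $C_{b}^{0}$, and in the inductive step the additional conjuncts $\exists r.C_{c}^{m-1}$ are satisfied using $r(b,c)\in\Amc$ and the hypothesis at $c$. Given this, whenever $\Tmc\models C_{a}^{m}\sqsubseteq C$ then every model $\Imc$ of $\Tmc$ and $\Amc$ has $a\in (C_{a}^{m})^{\Imc}\subseteq C^{\Imc}$, i.e.\ $\Imc\models q_{C}(a)$, so $\Amc\models(\Tmc,\emptyset,q_{C})(a)$. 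Along the way I would record the purely concept-level fact, proved by a short induction on $m$, that $C_{b}^{m}\sqsubseteq C_{b}^{m'}$ is valid for $m\geq m'$; this monotonicity is used repeatedly below.

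For the converse, I would use the universal model $\Jmc_{\Tmc,\Amc}$ from Lemma~\ref{aboxcan}. By Lemma~\ref{aboxcan}(4), $\Amc\models(\Tmc,\emptyset,q_{C})(a)$ yields $a\in C^{\Jmc_{\Tmc,\Amc}}$, i.e.\ a homomorphism $\mu$ from the tree-shaped ABox $\Amc_{C}$ into $\Jmc_{\Tmc,\Amc}$ that maps the root of $\Amc_{C}$ to $a$. The goal is to convert the membership facts read off along $\mu$ into a single entailment $\Tmc\models C_{a}^{m}\sqsubseteq C$ with $m$ large enough; once this is achieved we are done. Equivalently, one may aim at a homomorphism from $\Amc_{C}$ into the canonical model $\Jmc_{\Tmc,C_{a}^{m}}$ mapping root to root and invoke Lemma~\ref{conceptcan}, using that the ABox $\Amc_{C_{a}^{m}}$ associated with $C_{a}^{m}$ is, up to renaming, the depth-$m$ unfolding of $\Amc$ at $a$, and that its root $p$ satisfies ${\sf tail}(p)=C_{a}^{m}$.

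The crux is the following strengthened statement, which I would prove by induction on the stage at which an assertion enters the chase $\Amc_{c}=\bigcup_{i}\Amc_{i}$: \emph{for every assertion $D^{\star}(b)\in\Amc_{c}$ with $b\in{\sf Ind}(\Amc)$ and $D^{\star}$ an arbitrary (possibly compound) $\EL$ concept, there is $m\geq 0$ with $\Tmc\models C_{b}^{m}\sqsubseteq D^{\star}$.} At stage $0$ the assertion is a concept name that is a conjunct of $C_{b}^{0}$. If $D^{\star}(b)$ was added by (R1) from a larger conjunction asserted earlier, the hypothesis for that assertion suffices. If it was added by (R2) from $C'\sqsubseteq D^{\star}\in\Tmc$ and $b\in (C')^{\Imc_{\Amc_{j}}}$, it suffices to find $m$ with $\Tmc\models C_{b}^{m}\sqsubseteq C'$; this I would obtain by analysing the homomorphism witnessing $b\in (C')^{\Imc_{\Amc_{j}}}$ along the tree underlying the $\EL$ concept $C'$. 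Its maximal initial part whose image lies in ${\sf Ind}(\Amc)$ is rebuilt conjunct by conjunct inside $C_{b}^{m}$ — here one uses that the chase never adds a role atom between two individuals of ${\sf Ind}(\Amc)$, so every such atom is already in $\Amc$; the concept-name assertions met along the way, and the existential obligations $\exists r.D'''(b')$ met at the frontier (these have smaller stage, hence give $\Tmc\models C_{b'}^{m}\sqsubseteq\exists r.D'''$), are dealt with by the induction hypothesis together with Lemma~\ref{lem:existential_conseq}. The parts of the tree of $C'$ whose image leaves ${\sf Ind}(\Amc)$ point into the anonymous part of $\Jmc_{\Tmc,\Amc}$, where Lemma~\ref{aboxcan}(2) turns the match directly into an entailment $\Tmc\models{\sf tail}(\cdot)\sqsubseteq(\cdot)$. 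Taking the maximum of the finitely many $m$'s obtained, increasing it by the role depth of $C'$, and using the monotonicity of $C_{b}^{m}$ in $m$, gives the required $m$. The same tree analysis applied to $\mu$ itself (an ABox part, an anonymous part handled by Lemma~\ref{aboxcan}(2), and top-level existentials handled via Lemma~\ref{lem:existential_conseq}) finally yields $m$ with $\Tmc\models C_{a}^{m}\sqsubseteq C$.

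I expect the main obstacle to be the bookkeeping in this nested induction, and in particular the fact that the required $m$ genuinely depends on $\Tmc$ and not merely on the role depth of $C$: a $C$-depth unfolding does \emph{not} suffice, since the chase may look arbitrarily deep into $\Amc$ before deriving a concept membership at a node near $a$. The other delicate point is keeping the ABox part and the anonymous part of each match cleanly separated — the anonymous part controlled for free by Lemma~\ref{aboxcan}(2), the ABox part carried by the stage induction.
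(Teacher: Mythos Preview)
The paper does not actually prove this lemma: it simply cites \cite{jlc}, Lemma~22. So there is no in-paper argument to compare against, and your plan has to stand on its own.

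Your $(\Leftarrow)$ direction is fine, and the overall architecture for $(\Rightarrow)$---pass to $\Jmc_{\Tmc,\Amc}$, read off a match $\mu$, split it into an ABox part and an anonymous part, and control the anonymous part by Lemma~\ref{aboxcan}(2)---is the right shape. The observation that the chase never adds role atoms between two individuals of $\mn{Ind}(\Amc)$, and that once $\mu$ leaves $\mn{Ind}(\Amc)$ it never returns, is exactly what makes the decomposition clean.

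There is, however, a genuine gap in your stage induction for the strengthened claim. You handle stage~$0$, (R1), and (R2), but you omit (R3). In the paper's chase, rule (R3) has a \emph{reuse} branch: if $\exists r.C(a)\in\Amc_i$ and some existing $b$ with $r(a,b)\in\Amc_i$ satisfies $\Amc_i\models(\Tmc,\emptyset,q_C)(b)$, then $C(b)$ is added rather than a fresh witness being created. When $b\in\mn{Ind}(\Amc)$ this produces a compound assertion $C(b)$ at an ABox individual, and such assertions are exactly what your induction hypothesis must cover---they can later serve as the ``$\exists r.E(b')$ at the frontier'' that you appeal to. But the side condition of this branch is $\Amc_i\models(\Tmc,\emptyset,q_C)(b)$, which is essentially an instance of the very lemma you are proving; trying to unwind it via the final-step analysis again invokes the strengthened claim at assertions whose stage need not be below~$i$. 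So the induction as written does not close.

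The simplest fix is to avoid this chase altogether for the present lemma: Lemma~\ref{ac} is about $\Sigma_\Csf=\emptyset$, so the reuse branch of (R3) is irrelevant here. Work instead with the \emph{standard} $\EL$ chase in which every $\exists r.C(a)$ is discharged by creating a fresh successor $a\cdot r\cdot C$. That model still enjoys the properties of Lemma~\ref{aboxcan}(1)--(4), and now (R3) never adds a concept assertion at an element of $\mn{Ind}(\Amc)$; your stage induction then only needs the (R1) and (R2) cases you already sketched, and the argument goes through. (Incidentally, Lemma~\ref{lem:existential_conseq} is not actually needed anywhere in your argument; the combination ``$\Tmc\models C_{b'}^{m}\sqsubseteq\exists r.E$ from the IH'' plus ``$\Tmc\models E\sqsubseteq D'$ from Lemma~\ref{aboxcan}(2)'' already gives $\Tmc\models C_{b'}^{m}\sqsubseteq\exists r.D'$ directly.)
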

We now establish a variant of the Craig interpolation property that is
suitable for addressing Condition~\ref{it:el_safeness_cond2} of
Definition~\ref{def:elsafe}. It has been studied before for $\alc$
and several of its extensions in the context of query rewriting for
DBoxes and of Beth definability~\cite{seylan09effective,tencate13beth}.
Note that it is different from the interpolation property investigated
in~\cite{jlc} for \el, which requires the interpolant to be a TBox
instead of a concept. For brevity, we set ${\sf
  sig}(\Tmc,C)={\sf sig}(\Tmc)\cup {\sf sig}(C)$ for any TBox~$\Tmc$ and concept $C$.

\begin{lem}[\EL Interpolation]\label{lem:interpolation}
  Let $\Tmc_{1},\Tmc_{2}$ be $\mathcal{EL}$ TBoxes and let 
  $D_{1},D_{2}$ be $\EL$ concepts with $\Tmc_{1}\cup \Tmc_{2} \models 
  D_{1} \sqsubseteq D_{2}$ and ${\sf sig}(\Tmc_{1},D_{1}) \cap {\sf 
  sig}(\Tmc_{2},D_{2})=
  \Sigma$. Then there exists an $\EL$ concept $F$ such that $\sig{F}{}\subseteq\Sigma$,
  $\Tmc_{1}\cup \Tmc_{2}\models D_{1} \sqsubseteq F$, and
  \mbox{$\Tmc_{1}\cup \Tmc_{2}\models F \sqsubseteq D_{2}$}.
\end{lem}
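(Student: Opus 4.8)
The plan is to establish the interpolation property via the tree-shaped canonical models introduced above, using a bisimulation-style argument. First, I would assume $\Tmc_1 \cup \Tmc_2 \models D_1 \sqsubseteq D_2$ and, for contradiction, that no interpolant $F$ exists. The key object is the canonical model $\Jmc_{\Tmc_1 \cup \Tmc_2, D_1}$, which by Lemma~\ref{conceptcan} is a model of $\Tmc_1 \cup \Tmc_2$ in which the root $p_0$ (the path consisting of $D_1$ alone) satisfies $D_1$, hence also $D_2$ since the CI is entailed. I would then take the $\Sigma$-reduct $\Jmc^\Sigma$ of $\Jmc_{\Tmc_1\cup\Tmc_2,D_1}$ restricted to the $\Sigma$-signature, read it as a (possibly infinite) $\Sigma$-ABox $\Amc^\Sigma$ at the root, and build the canonical model $\Jmc_{\Tmc_2, \Amc^\Sigma}$. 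The aim is to show that this model also satisfies $D_2$ at its root, and then to extract from that a finite witness via compactness/unfolding, namely a concept $F$ with $\sig{F}{}\subseteq\Sigma$ such that $\Tmc_1\cup\Tmc_2 \models D_1 \sqsubseteq F$ and $\Tmc_2 \models F \sqsubseteq D_2$ — and a fortiori $\Tmc_1\cup\Tmc_2 \models F \sqsubseteq D_2$.

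The core technical step is a \emph{homomorphism/amalgamation argument}: I want to show that if $F$ were \emph{not} an interpolant for any $\Sigma$-concept, then I can construct a model $\Imc^*$ of $\Tmc_1 \cup \Tmc_2$ refuting $D_1 \sqsubseteq D_2$, contradicting the hypothesis. Concretely, I would consider, on the $\Tmc_2$-side, the tree-shaped canonical model of $\Tmc_2$ together with the $\Sigma$-reduct (unfolded to depth $m$, for each $m$) of $\Jmc_{\Tmc_1\cup\Tmc_2, D_1}$ read as a concept $C^m$ over $\Sigma$ at the root. By Lemma~\ref{ac}, the non-existence of an interpolant means $\Tmc_2 \not\models C^m \sqsubseteq D_2$ for every $m$. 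Each failure gives a model $\Imc_m$ of $\Tmc_2$ with a root satisfying $C^m$ but not $D_2$. I would then amalgamate: glue $\Imc_m$ (restricted to its tree part) onto the canonical model $\Jmc_{\Tmc_1,D_1}$ along the shared $\Sigma$-part at depth up to $m$, using that the two canonical models agree on $\Sigma$-atoms in the relevant finite region. A König's-lemma / compactness step over growing $m$ then produces a single model of $\Tmc_1 \cup \Tmc_2$ whose root satisfies $D_1$ but not $D_2$ — the desired contradiction. Finally, running this contradiction in reverse gives an explicit finite $F$, namely the $\Sigma$-unfolding $C^m$ of $\Jmc_{\Tmc_1\cup\Tmc_2,D_1}$ at the root for the $m$ at which $\Tmc_2 \models C^m \sqsubseteq D_2$; one checks $\Tmc_1\cup\Tmc_2 \models D_1 \sqsubseteq C^m$ directly from Lemma~\ref{ac} applied to $\Amc_{D_1}$ and the canonical model.

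The main obstacle I anticipate is the \emph{amalgamation step over the shared $\Sigma$-signature}. The canonical models $\Jmc_{\Tmc_1,D_1}$ and $\Jmc_{\Tmc_2, \text{(}\Sigma\text{-reduct)}}$ are both tree-shaped, but the way rule (R3) reuses existing successors rather than always creating fresh ones means the two trees need not have literally identical $\Sigma$-skeletons; one has to argue that $\Sigma$-reducts can be taken to agree after a suitable quotient or by working with $\Sigma$-bisimilar tree unravellings. A clean way to handle this is to first replace both canonical models by their $\Sigma$-tree-unravellings from the root, observe that these are $\Sigma$-bisimilar to the originals (hence satisfy the same $\EL$-over-$\Sigma$ concepts at corresponding nodes, since $\EL$ concepts are preserved under $\Sigma$-bisimulation in the relevant direction — via homomorphisms), and then amalgamate by identifying the two $\Sigma$-trees outright. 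A secondary subtlety is ensuring the amalgamated structure is still a model of \emph{both} $\Tmc_1$ and $\Tmc_2$: the $\Tmc_1$-part of the disjoint-signature atoms comes from $\Jmc_{\Tmc_1,D_1}$, the $\Tmc_2$-part from $\Imc_m$, and these interact only through $\Sigma$, so closure under each TBox's CIs must be checked to be preserved — this is where the restriction that the overlap is exactly $\Sigma = \sig{\Tmc_1,D_1}{} \cap \sig{\Tmc_2,D_2}{}$ is used essentially. I expect the rest (the reductions to Lemma~\ref{ac}, the finiteness of $F$, the direction $\Tmc_1\cup\Tmc_2\models D_1\sqsubseteq F$) to be routine.
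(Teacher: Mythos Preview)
Your overall strategy---assume no interpolant, build a countermodel by amalgamating along the shared signature $\Sigma$---matches the paper's. But the paper's execution is considerably simpler and sidesteps the very obstacle you flag.

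The paper never separates into $\Tmc_1$- and $\Tmc_2$-sides, never takes a sequence $\Imc_m$, and uses no compactness or K\"onig argument. It works with $\Tmc_1\cup\Tmc_2$ throughout. From $\Jmc_{\Tmc_1\cup\Tmc_2,D_1}$ it takes the $\Sigma$-reduct as an ABox $\Amc_\Sigma$ (so $\mn{Ind}(\Amc_\Sigma)=\Delta^{\Jmc_{\Tmc_1\cup\Tmc_2,D_1}}$), and then forms $\Jmc_{\Tmc_1\cup\Tmc_2,\Amc_\Sigma}$, again with respect to the \emph{full} TBox. The non-existence of an interpolant gives, via Lemmas~\ref{ac} and~\ref{conceptcan}, that the root $a_{D_1}$ does not satisfy $D_2$ in $\Jmc_{\Tmc_1\cup\Tmc_2,\Amc_\Sigma}$. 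The countermodel $\Imc$ is then obtained in one shot: take the domain of $\Jmc_{\Tmc_1\cup\Tmc_2,\Amc_\Sigma}$, keep its interpretation of all predicates outside $\mn{sig}(\Tmc_1,D_1)\setminus\Sigma$, and for each $P\in\mn{sig}(\Tmc_1,D_1)\setminus\Sigma$ overlay $P^{\Jmc_{\Tmc_1\cup\Tmc_2,D_1}}$. Because there is a homomorphism $h$ from $\Jmc_{\Tmc_1\cup\Tmc_2,\Amc_\Sigma}$ to $\Jmc_{\Tmc_1\cup\Tmc_2,D_1}$ (Lemma~\ref{aboxcan}, since the latter is a model of $\Amc_\Sigma$) that is the identity on $\mn{Ind}(\Amc_\Sigma)$, and the identity is a $\mn{sig}(\Tmc_1,D_1)$-homomorphism in the other direction, preservation of $\EL$ concepts under homomorphisms yields that $\Imc$ is a model of $\Tmc_1$ with $a_{D_1}\in D_1^{\Imc}$. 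That $\Imc$ is a model of $\Tmc_2$ and $a_{D_1}\notin D_2^{\Imc}$ is immediate, since $\Imc$ agrees with $\Jmc_{\Tmc_1\cup\Tmc_2,\Amc_\Sigma}$ on every predicate in $\mn{sig}(\Tmc_2,D_2)$.

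Your plan has a concrete mismatch that would bite you: you define $C^m$ as the $\Sigma$-unfolding of $\Jmc_{\Tmc_1\cup\Tmc_2,D_1}$, but then propose to glue the resulting $\Imc_m$ onto $\Jmc_{\Tmc_1,D_1}$. These two canonical models need not have the same $\Sigma$-skeleton, so the ``shared $\Sigma$-part'' along which you want to glue is not actually shared. Your proposed fix via $\Sigma$-tree-unravellings does not obviously help, because $\Imc_m$ merely \emph{satisfies} $C^m$ (there is a $\Sigma$-homomorphism from the $C^m$-tree into $\Imc_m$), which is far weaker than having a matching $\Sigma$-skeleton; identifying along that homomorphism can collapse nodes and break $\Tmc_1$. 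The paper avoids all of this because $\mn{Ind}(\Amc_\Sigma)$ is literally $\Delta^{\Jmc_{\Tmc_1\cup\Tmc_2,D_1}}$, so one domain sits inside the other and the $\Sigma$-parts coincide by construction. If you want to salvage your route, the cleanest repair is exactly the paper's move: replace $\Jmc_{\Tmc_2,\Amc^\Sigma}$ by $\Jmc_{\Tmc_1\cup\Tmc_2,\Amc_\Sigma}$ and drop the compactness detour.
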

\begin{proof}
  Let $\Tmc_{1}\cup \Tmc_{2} \models D_{1} \sqsubseteq D_{2}$ with 
  ${\sf sig}(\Tmc_{1},D_{1}) \cap {\sf sig}(\Tmc_{2},D_{2})= \Sigma$. 
  Assume that the required $\el$ concept $F$ does not exist. Consider 
  the tree-shaped canonical model $\Jmc_{\Tmc_{1}\cup \Tmc_{2},D_{1}}$. 
  Denote by $\Amc_{\Sigma}$ the ABox corresponding to the 
  $\Sigma$-reduct of $\Jmc_{\Tmc_{1}\cup \Tmc_{2},D_{1}}$, thus
  \[\abox_{\Sigma}=\bigcup_{A\in\Sigma}\{A(a) \mid a\in 
  A^{\interj_{\tbox_{1}\cup\tbox_{2},D_{1}}}\}\cup \bigcup_{r\in \Sigma}\{r(a,b) \mid r(a,b)\in 
  r^{\interj_{\tbox_{1}\cup\tbox_{2},D_{1}}}\}
 \]
%\cup\bigcup_{a\in\Delta^{\interj_{\tbox_{1}\cup\tbox_{2},D_{0}}}}\top(a)\] 
  We may assume w.l.o.g.~that ${\sf Ind}(\Amc_{\Sigma})= \Delta^{\Jmc_{\Tmc_{1}\cup \Tmc_{2},D_{1}}}$. Recall that
  the individual names in $\Amc_{\Sigma}$ are paths. For the sake of readability, we denote them by $a_{p}$
  rather than $p$. Also recall that $q_{D}$ denotes the dtCQ corresponding to the $\mathcal{EL}$ concept $D$.

  \begin{claim*}
    $\Amc_{\Sigma}\not\models (\Tmc_{1}\cup \Tmc_{2},\emptyset,q_{D_{2}})(a_{D_{1}})$.
  \end{claim*}
  \begin{clmproof}
    Assume for a proof by contradiction that $\Amc_{\Sigma}\models (\Tmc_{1}\cup 
    \Tmc_{2},\emptyset,q_{D_{2}})(a_{D_{1}})$.  
    By Lemma~\ref{ac}, there is an $\el$ concept $F$ such that 
    $\sig{F}{}\subseteq\Sigma$, $\Tmc_{1}\cup \Tmc_{2}\models F 
    \sqsubseteq D_{2}$, and $a_{D_1} \in F^{\Jmc_{\Tmc_{1}\cup \Tmc_{2},D_{1}}}$. Then, using Lemma~\ref{conceptcan}, we 
    obtain $\Tmc_{1}\cup \Tmc_{2}\models D_1 \sqsubseteq F$. This contradicts our assumption that no such concept $F$ exists.
  \end{clmproof}
    
  Obviously, $\interj_{\tbox_{1}\cup\tbox_{2},D_{1}}$ is a model of 
  $\tbox_{1}\cup\tbox_{2}$ and $\abox_{\Sigma}$. Then, by 
  Lemma~\ref{aboxcan}, there is a homomorphism $h$ from 
  $\Jmc_{\Tmc_{1}\cup \Tmc_{2},\Amc_{\Sigma}}$ to 
  $\interj_{\tbox_{1}\cup\tbox_{2},D_{1}}$ with $h(a)=a$ for all 
  $a\in\adom{\abox_{\Sigma}}$. Conversely, $h'=\{a\mapsto a\mid 
  a\in\Delta^{\interj_{\tbox_{1}\cup\tbox_{2},D_{1}}}\}$ is a 
  homomorphism from the $\Sigma$-reduct of $\interj_{\tbox_{1}\cup\tbox_{2},D_{1}}$ 
  to $\Jmc_{\Tmc_{1}\cup \Tmc_{2},\Amc_{\Sigma}}$. Define the interpretation $\inter$ as follows:
  \begin{eqnarray*} 
    \domain & = & \Delta^{\interj_{\tbox_{1}\cup\tbox_{2},\abox_{\Sigma}}}\\
    \ext{P} & = & P^{\interj_{\tbox_{1}\cup\tbox_{2},\abox_{\Sigma}}}\cup 
      P^{\interj_{\tbox_{1}\cup\tbox_{2},D_{1}}}, \text{ for all $P\in\sig{\tbox_{1}}{D_{1}}\setminus\Sigma$}\\
    \ext{P}  & = & P^{\interj_{\tbox_{1}\cup\tbox_{2},\abox_{\Sigma}}}, \text{ for all $P\not\in \sig{\tbox_{1}}{D_{1}}\setminus\Sigma$}
 \end{eqnarray*}
  Observe that the mapping $h$ defined above is a homomorphism from $\inter$ to 
  $\interj_{\tbox_{1}\cup\tbox_{2},D_{1}}$ with $h(a)=a$ for all 
  $a\in\adom{\abox_{\Sigma}}$. Conversely, the mapping $h'$ defined above is a homomorphism from the $\sig{\tbox_{1}}{D_{1}}$-reduct of 
  $\interj_{\tbox_{1}\cup\tbox_{2},D_{1}}$ to $\inter$. Now it is readily checked that $\EL$ concepts $C$ are 
  preserved under homomorphisms (if $d\in C^{\Imc_{1}}$,
  then $h(d)\in C^{\Imc_{2}}$ if $h$ is a homomorphism from $\Imc_{1}$ to $\Imc_{2}$).
  Thus, $\Imc$ is a model of $\Tmc_{1}$ since $\interj_{\tbox_{1}\cup\tbox_{2},D_{1}}$ is a model of $\Tmc_{1}$ and 
  $a_{D_{1}}\in D_{1}^{\Imc}$ since $a_{D_{1}}\in D_{1}^{\interj_{\tbox_{1}\cup\tbox_{2},D_{1}}}$. Moreover, by construction, $\Imc$ is a model of $\Tmc_{2}$ and 
  $a_{D_{1}}\not\in D_{2}^{\Imc}$, by the claim proved above.
  We have shown that $\Tmc_{1}\cup \Tmc_{2}\not\models D_{1}\sqsubseteq D_{2}$ and thus derived a contradiction.
\end{proof}

We are now in the position to prove Theorem~\ref{eltheorem}. We first prove Point~(1).
The proof requires two separate constructions that both show non-convexity for dtCQs and address 
Cases~\ref{it:el_safeness_cond1} and~\ref{it:el_safeness_cond2} from 
Definition~\ref{def:elsafe}. It then follows from Lemma~\ref{lem:nonconvex_imp_conp} that
dtCQ evaluation w.r.t.~($\Tmc,\Sigma_{\Csf})$ is \conp-hard.

We begin by considering Case~\ref{it:el_safeness_cond1}.

\begin{lem}\label{lem:point1case1}
  Let $(\Tmc,\Sigma_\Csf)$ be an $\mathcal{EL}$ TBox with closed 
  predicates such that safeness is violated by an inclusion $C 
  \sqsubseteq \exists r.D$ because Condition~\ref{it:el_safeness_cond1} from Definition~\ref{def:elsafe}
  holds. Then $(\tbox,\Sigma_{\Csf})$ is not 
  convex for dtCQs.
\end{lem}
\begin{proof}
  Assume $C \sqsubseteq \exists r.D$ satisfies $\Tmc\models C \sqsubseteq \exists r.D$,
  there is no tlc $\exists r.C'$ of $C$ with $\Tmc\models C' \sqsubseteq D$, $r\not\in\Sigma_\Csf$,
  and ${\sf sig}(D) \cap \Sigma_\Csf\not=\emptyset$.
  Consider the finite canonical model
  $\Imc_{\Tmc,C}$ of $\Tmc$ and $C$. Assume w.l.o.g.\ that $C$ does not
  occur in $\Tmc$ (if it does, replace $C$ by $A\sqcap C$ for a fresh
  concept name $A$). Note that it follows that there is no
  $a\in \Delta^{\Imc_{\Tmc,C}}$ with $(a,a_{C})\in
  s^{\Imc_{\Tmc,C}}$ for any role name $s$. 
  
Let $\Imc_{r}$ be the interpretation obtained from $\Imc_{\Tmc,C}$ by removing all pairs $(a_{C},a_{E})$
from $r^{\Imc_{\Tmc,C}}$ such that $\exists r.E$ is not a tlc of $C$. Let $\Amc_{r}$ be the ABox corresponding 
to $\Imc_{r}$ and let $\Amc$ be the disjoint union of
two copies of $\Amc_{r}$. We denote the individual names of the first copy by $(a,1)$, $a\in \Delta^{\Imc_{\Tmc,C}}$, and the
individual names of the second copy by $(a,2)$, $a\in \Delta^{\Imc_{\Tmc,C}}$. Let $A_{1}$ and $A_{2}$ be fresh concept
names and set 
$$
\Amc'= \Amc \cup \{ A_{1}(a,1)\mid a\in \Delta^{\Imc_{\Tmc,C}}\} \cup
\{ A_{2}(a,2) \mid a\in \Delta^{\Imc_{\Tmc,C}}\}
$$   
Some predicate $P\in \Sigma_{\Csf}$ occurs in $D$. If a concept name $E\in \Sigma_\Csf$ occurs in $D$, then fix one such
$E$ and denote, for $i\in \{1,2\}$, by $D_{i}$ the resulting concept after one occurrence
of $E$ is replaced by $A_{i}\sqcap E$. For example, if $D=A \sqcap \exists s_{1}.E \sqcap \exists s_{2}.E$, $E\in \Sigma_{\Csf}$
and $A\not\in \Sigma_{\Csf}$, then either $D_{i}= A \sqcap \exists s_{1}.(A_{i} \sqcap E) \sqcap \exists s_{2}.E$ or
$D_{i} = A \sqcap \exists s_{1}.E \sqcap \exists s_{2}.(A_{i} \sqcap E)$. Similarly, if no concept name
from $\Sigma_\Csf$ occurs in $D$, then let $s\in \Sigma_\Csf$ be a role name such that a
concept of the form $\exists s.G$ occurs in $D$. Denote by $D_{i}$
the resulting concept after one occurrence of $\exists s.G$ is
replaced by $A_{i}\sqcap \exists s.G$.

We now use $\Amc'$ and the dtCQs $q_{\exists r.D_{i}}$ to prove that $(\Tmc,\Sigma_{\Csf})$ is not convex for dtCQs.
Using the condition $\Tmc\models C \sqsubseteq \exists r.D$ and the construction of $D_{1}$ and $D_{2}$, 
it is straightforward to show that $\Amc'\models (\Tmc,\Sigma_\Csf,q_{\exists r.D_{1}}\vee q_{\exists r.D_{2}})(a_{C},1)$.
We show that $\Amc'\not\models (\Tmc,\Sigma_\Csf,q_{\exists r.D_{i}})(a_{C},1)$ for $i=1,2$.
Let $i=1$ (the case $i=2$ is similar and omitted). We construct a model $\Jmc$ of
$\Tmc$ and $\Amc'$ that respects $\Sigma_\Csf$ with $(a_{C},1)\not\in (\exists r.D_{1})^{\Jmc}$. 
$\Jmc$ is defined as the interpretation corresponding to the ABox $\Amc'$ extended by
  $$
  \{r((a_{C},1),(e_{E},2)),r((a_{C},2),(a_{E},1)) \mid (a_{C},a_{E})\in r^{\Imc_{\Tmc,C}}\setminus r^{\Imc_{r}}\} 
  $$
  Using the fact that $\Imc_{\Tmc,C}$ is a model of $\Tmc$ it is readily checked that
  $\Jmc$ is a model of $\Tmc$ and $\Amc'$ that respects $\Sigma_\Csf$.  Moreover,
  $(a_{C},1)\not\in (\exists r.D_{1})^{\Jmc}$. To prove this assume
  $(a_{C},1)\in (\exists r.D_{1})^{\Jmc}$.  Then one of the following
  two conditions holds:
  \begin{itemize}
  \item there exists a tlc $\exists r.E$ of $C$ such that
    $(a_{E},1)\in D_{1}^{\Jmc}$;
  \item there exists $a_{E}$ with $(a_{C},a_{E})\in r^{\Imc_{\Tmc,C}}$
    such that $(a_{E},2)\in D_{1}^{\Jmc}$.
  \end{itemize}
  If the first condition holds, then $a_{E}\in D^{\Imc_{\Tmc,C}}$. Then, by 
  Lemma~\ref{lem:el_can_model}, $\Tmc\models E \sqsubseteq D$ for
  a tlc $\exists r.E$ of $C$ which contradicts Point~(2) of the definition of safeness.
  The second condition does not
  hold since $(a_{E},2)\in G^{\Jmc}$ iff $(a_{E},2)\in G^{\Jmc|_{\{(a,2) \mid a\in \Delta^{\Imc}\}}}$,
  for every $\EL$ concept $G$, and $A_{1}^{\Jmc} \cap \{(a,2) \mid a\in \Delta^{\Imc}\}=\emptyset$,
  but $D_{1}$ contains $A_{1}$.
  \end{proof}
We now consider Case~\ref{it:el_safeness_cond2} from Definition~\ref{def:elsafe}.
\begin{lem}\label{lem:point1case2}
  Let $(\Tmc,\Sigma_\Csf)$ be an $\mathcal{EL}$ TBox with closed 
  predicates such that safeness is violated by an inclusion $C 
  \sqsubseteq \exists r.D$ because Condition~\ref{it:el_safeness_cond2} from Definition~\ref{def:elsafe}
  holds. Then $(\tbox,\Sigma_{\Csf})$ is not 
  convex for dtCQs.
\end{lem}
\begin{proof}
  Assume $C \sqsubseteq \exists r.D$ satisfies $\Tmc\models C \sqsubseteq \exists r.D$,
  there is no tlc $\exists r.C'$ of $C$ with $\Tmc\models C' \sqsubseteq D$, and 
  Condition~\ref{it:el_safeness_cond2} holds. 
  Let  
$$
K= \{ G \mid \exists r.G \in {\sf sub}(\Tmc), \Tmc \models C
\sqsubseteq \exists r.G\}
$$ 
Observe that since there is no tlc $\exists r . C'$ of $C$ with $\Tmc\models C'
\sqsubseteq D$, by Lemma~\ref{lem:existential_conseq}, there exists
$G\in K$ with $\Tmc\models G \sqsubseteq D$. We now apply the interpolation lemma.
Obtain $\Tmc^{i}$ from $\Tmc$ by replacing every predicate $P\not\in\Sigma_{\Csf}$ by a 
fresh predicate $P_{i}$ of the same arity, $i\in \{1,2\}$. Similarly, for any \EL concept
$F$ we denote by $F^{i}$ the resulting concept when every predicate $P\not\in\Sigma_{\Csf}$ is
replaced by a fresh predicate $P_{i}$ of the same arity, $i\in \{1,2\}$. We show the following
using the interpolation lemma.
\begin{claim}
  For all $G\in K$: $\Tmc^{1}\cup \Tmc^{2}\not\models
  G^{1}\sqsubseteq D^{2}$.
\end{claim}
\begin{clmproof}
The proof is indirect. Assume there exists $G\in K$ such that $\Tmc^{1}\cup
\Tmc^{2}\models G^{1}\sqsubseteq D^{2}$.  By Lemma~\ref{lem:interpolation}, there exists an $\EL$ concept
$F$ with ${\sf sig}(F)\subseteq \Sigma_{\Csf}$
such that $\Tmc^{1}\cup \Tmc^{2}\models G^{1}\sqsubseteq F$ and
$\Tmc^{1}\cup \Tmc^{2}\models F\sqsubseteq D^{2}$. Then $\Tmc\models G
\sqsubseteq F$ and $\Tmc\models F \sqsubseteq D$. But then we obtain 
from $\Tmc\models C \sqsubseteq \exists r.G$ that $\Tmc\models C \sqsubseteq \exists
r.F$ which contradicts Condition~\ref{it:el_safeness_cond2}.
\end{clmproof}

By Claim~1 we can take the finite canonical models
$\Jmc_{G}:=\Imc_{\Tmc^{1}\cup \Tmc^{2},G^{1}}$, $G\in K$, and
obtain for $a_{G}:=a_{G^{0}}$ that $a_{G}\not\in (D^{2})^{\Jmc_{G}}$.
Let $\Amc_{G,\Sigma_\Csf}$ be the ABox corresponding to the $\Sigma_\Csf$-reduct 
of $\Jmc_{G}$. We may assume that the sets of individual names ${\sf Ind}(\Amc_{G,\Sigma_\Csf})$ are
mutually disjoint, for $G\in K$, and that $a_{G}\in {\sf Ind}(\Amc_{G,\Sigma_\Csf})$, for all $G\in K$.

\begin{claim} For every $G\in K$, there exist
\begin{itemize}
\item a model $\Imc_{G}^{1}$ of $\Tmc$ and $\Amc_{G,\Sigma_\Csf}$ that respects the closed predicates $\Sigma_{\Csf}$
such that $\Delta^{\Imc_{G}^{1}}={\sf Ind}(\Amc_{G,\Sigma_\Csf})$, $a_{G}\in G^{\Imc_{G}^{1}}$,
and $a_{G}\in H^{\Imc_{G}^{1}}$ only if $\Tmc\models G\sqsubseteq H$, for all 
$\EL$ concepts $H$;

\item a model $\Imc_{G}^{2}$ of $\Tmc$ and $\Amc_{G,\Sigma_\Csf}$ that respects the closed predicates $\Sigma_{\Csf}$
such that $\Delta^{\Imc_{G}^{2}}={\sf Ind}(\Amc_{G,\Sigma_\Csf})$, $a_{G}\not\in D^{\Imc_{G}^{2}}$, and $a_{G}\in H^{\Imc_{G}^{2}}$
only if $\Tmc\models G\sqsubseteq H$, for all $\EL$ concepts $H$.
\end{itemize}
\end{claim}
\begin{clmproof}
The interpretation $\Imc_{G}^{1}$ is obtained from $\Jmc_{G}$ by setting $P^{\Imc_{G}^{1}}:= (P^{1})^{\Jmc_{G}}$ for all predicates $P\in {\sf sig}(\Tmc,C,D)\setminus \Sigma_\Csf$ 
and $P^{\Imc_{G}^{1}}:= \emptyset$ for all predicates $P$ not in ${\sf sig}(\Tmc,C,D) \cup \Sigma_{\Csf}$. The properties stated follow from the properties of the finite canonical
model $\Imc_{\Tmc^{1}\cup \Tmc^{2},G^{1}}$. In particular, $a_{G}\in H^{\Imc_{G}^{1}}$ only if $\Tmc\models G\sqsubseteq H$ follows from Lemma~\ref{lem:existential_conseq}, Point~2.
The interpretation $\Imc_{G}^{2}$ is obtained from $\Jmc_{G}$ by setting $P^{\Imc_{G}^{2}}:= (P^{2})^{\Jmc_{G}}$ for all predicates $P\in {\sf sig}(\Tmc,C,D)\setminus \Sigma_\Csf$ 
and $P^{\Imc_{G}^{2}}:= \emptyset$ for all predicates $P$ not in ${\sf sig}(\Tmc,C,D) \cup \Sigma_{\Csf}$. The properties stated follow again 
from the properties of the finite canonical model $\Imc_{\Tmc^{1}\cup \Tmc^{2},G^{1}}$.
\end{clmproof}

Introduce two copies $\Amc_{G,\Sigma_\Csf}^{1}$ and $\Amc_{G,\Sigma_\Csf}^{2}$
of $\Amc_{G,\Sigma_\Csf}$, for $G\in K$.  We denote the individual names of the
first copy by $(a,1)$, for $a\in {\sf Ind}(\Amc_{G,\Sigma_\Csf})$, and the
individual names of the second copy by $(a,2)$, for $a\in {\sf Ind}(\Amc_{G,\Sigma_\Csf})$. 
Let $\Amc_{r}$ be the ABox defined in the beginning of the proof of Lemma~\ref{lem:point1case1}. 
Define the ABox $\Amc$ by taking two fresh concept names $A_{1}$ and $A_{2}$ and adding
to
$$
\Amc_{r} \cup \bigcup_{G\in K}\Amc_{G,\Sigma_\Csf}^{1} \cup
\Amc_{G,\Sigma_\Csf}^{2}
$$
the assertions
\begin{itemize}
\item $r(a_{C},(a_{G},1)),r(a_{C},(a_{G},2))$, for every $G\in K$;
\item $A_{1}(a_{G},1)$, for every $G\in K$;
\item $A_{1}(a_{E})$, for every tlc $\exists r.E$ of $C$;
\item $A_{2}(a_{G},2)$, for every $G\in K$.
\end{itemize}
We use $\Amc$ and the dtCQs $q_{\exists r.(A_{i}\sqcap D)}$ to show that $(\Tmc,\Sigma_{\Csf})$ is not convex for dtCQs.
The proof that $\Amc\models (\Tmc,\Sigma_\Csf,q_{\exists r.(A_{1}\sqcap D)}\vee q_{\exists r.(A_{2}\sqcap D)})(a_{C})$ is straightforward
using the condition that $\Tmc\models C \sqsubseteq \exists r.D$ and the construction of $\Amc$ ($r\in \Sigma_{\Csf}$ and all $r$-successors
of $a_{C}$ in $\Amc$ are either in $A_{1}$ or in $A_{2}$).
It remains to show $\Amc\not\models (\Tmc,\Sigma_\Csf,q_{\exists r.(A_{i}\sqcap D)})(a_{C})$, for $i=1,2$.
For $i=2$, construct a witness interpretation $\Jmc$ showing this 
by expanding all $\Amc_{G,\Sigma_\Csf}^{2}$, $G\in K$, to
(isomorphic copies of) $\Imc_{G}^{2}$, all $\Amc_{G,\Sigma_\Csf}^{1}$, $G\in K$, to (isomorphic copies of) $\Imc_{G}^{1}$, and
$\Amc_{r}$ to $\Imc_{r}$. Using the properties of $\Imc_{G}^{1}$ and
$\Imc_{G}^{2}$ established in the claim above, it is readily checked that $\Jmc$ is a
model of $\Tmc$ and $\Amc$ that respects $\Sigma_\Csf$. Moreover, $a_{C}\not\in (\exists r.(A_{2}\sqcap D))^{\Jmc}$ 
since $(a_{G},2)\not\in D^{\Jmc}$ for any $G\in K$ (by Claim~2).
  
For $i=1$, construct a witness interpretation $\Jmc$ showing this by
expanding all $\Amc_{G,\Sigma_\Csf}^{2}$, $G\in K$, to (isomorphic copies of) $\Imc_{G}^{1}$, 
all $\Amc_{G,\Sigma_\Csf}^{1}$, $G\in K$, to (isomorphic copies of) $\Imc_{G}^{2}$, and $\Amc_{r}$ to $\Imc_{r}$.
Using again the properties of $\Imc_{G}^{1}$ and $\Imc_{G}^{2}$ established above, it can
be checked that $\Jmc$ is a model of $\Tmc$ and $\Amc$ that respects $\Sigma_\Csf$. 
$a_{C}\not\in (\exists r.(A_{1}\sqcap D))^{\Jmc}$ since $a_{E}\not\in D^{\Jmc}$ for any tlc $\exists r.E$ of 
$C$ and since $(a_{G},1)\not\in D^{\Jmc}$ for any $G\in K$ (by Claim~2).
This finishes the proof.
\end{proof}

This finishes the proof of Point~(1) of Theorem~\ref{eltheorem}. We now prove Part~(a) of Point~(2). 
The proof strategy is exactly the same as in the proof for \dlliter.

\begin{lem}\label{lem:point2parta} Let $(\tbox,\Sigma_\Csf)$ be a safe $\EL$ TBox with closed 
  predicates. Then for every UCQ $q$, we have 
  \[\abox\models 
  (\tbox,\Sigma_\Csf,q)(\vec{a})\quad \text{ iff 
  }  \quad
  \abox\models(\tbox,\emptyset,q)(\vec{a})\]
  for all ABoxes $\abox$ that are consistent w.r.t.\ $(\tbox,\Sigma_\Csf)$ and all $\vec{a}\in\adom{\abox}$.
\end{lem}
\begin{proof}
  Let $(\Tmc,\Sigma_\Csf)$ be safe and assume that $\Amc$ is consistent
  w.r.t.~$(\Tmc,\Sigma_\Csf)$.
  We consider the tree-shaped canonical model $\Jmc_{\Tmc,\Amc}$ introduced above. It suffices to show that
  $\Jmc_{\Tmc,\Amc}$ respects $\Sigma_{\Csf}$. To this end it suffices to prove for all $A,r\in \Sigma_{\Csf}$:
\begin{enumerate}
\item for all $a\in {\sf Ind}(\Amc)$, if $a\in A^{\Jmc_{\Tmc,\Amc}}$, then $A(a)\in\abox$; 
\item for all $a,b\in {\sf Ind}(\Amc)$, if $r(a,b)\in r^{\Jmc_{\Tmc,\Amc}}$, then $r(a,b)\in\abox$;
\item for all $a\in {\sf Ind}(\Amc)$ and $C\in {\sf sub}(\Tmc)$, $a\cdot r\cdot C\not\in\Delta^{\interj_{\tbox,\abox}}$;
\item for all $d\in\Delta^{\interj_{\tbox,\abox}}\setminus\adom{\abox}$, 
    there is no $\el$ concept $D$ with $d\in D^{\interj_{\tbox,\abox}}$ 
    and $\sig{D}{}\cap\Sigma_\Csf\neq\emptyset$.
\end{enumerate}
For Item~(1), assume $a\in A^{\Jmc_{\Tmc,\Amc}}$. By Lemma~\ref{aboxcan}, $\abox\models (\tbox,\emptyset,A(x))(a)$,
and so we obtain $\abox\models (\tbox,\Sigma_\Csf,A(x))(a)$. By consistency of $\abox$ w.r.t.\ $(\tbox,\Sigma_\Csf)$, 
we then have $A(a)\in\abox$. Item~(2) follows directly from the construction of $\Jmc_{\Tmc,\Amc}$.
For Item~(3), assume for a proof by contradiction that there are $a\in {\sf Ind}(\Amc)$, $r\in \Sigma_{\Csf}$, and a concept $C$
such that $a\cdot r\cdot C\in\Delta^{\interj_{\tbox,\abox}}$. By Lemma~\ref{aboxcan}, we have $\abox\models
    (\tbox,\emptyset,q_{\existsr{r}{C}})(a)$. By Lemma~\ref{ac}, this implies
    that there is some $m\geq 0$ with $\tbox\models
    C_a^m\sqsubseteq\existsr{r}{C}$, where $C_{a}^{m}$ is the unfolding of $\Amc$ at $a$ of depth $m$.
    We show that this contradicts the assumption that $(\tbox,\Sigma_\Csf)$ is safe.
    There does not exist a tlc $\existsr{r}{C'}$ of $C_a^m$ with
    $\tbox\models C'\sqsubseteq C$ because otherwise there is some
    $b\in\adom{\abox}$ with $r(a,b)\in\abox$ and $\abox\models
    (\tbox,\emptyset,q_{C})(b)$ and thus, $a\cdot r\cdot C$ would have
    never been introduced by Rule (R3) in the construction of
    $\interj_{\tbox,\abox}$. Moreover, there is no
    $\EL$ concepts $E$ with ${\sf sig}(E)\subseteq \Sigma_\Csf$ and $\tbox\models C_a^m\sqsubseteq
    \existsr{r}{E}$ and $\tbox\models E\sqsubseteq C$ because
    otherwise there is a $b\in\adom{\abox}$ with $r(a,b)\in\abox$ and
    $\Imc_{\abox}\models E(b)$ since $\abox$ is
    consistent w.r.t.\ $(\tbox,\Sigma_\Csf)$. But then $\abox\models
    (\tbox,\emptyset,q_{C})(b)$ by the fact that $\interj_{\tbox,\abox}$
    is a model of $\tbox$ and $\abox$. Again, in this case, $a\cdot
    r\cdot C$ would have never been introduced by Rule (R3) in the
    construction of $\interj_{\tbox,\abox}$.
    Hence $C_a^m\sqsubseteq\existsr{r}{C}$ witnesses that $(\tbox,\Sigma_\Csf)$ is not safe.

    For Item~(4), assume for a proof by contradiction that there is a
    $d\in\Delta^{\interj_{\tbox,\abox}}\setminus\adom{\abox}$ and an
    $\el$ concept $D$ such that $d\in D^{\interj_{\tbox,\abox}}$ and
    $\sig{D}{}\cap\Sigma_\Csf\neq\emptyset$. By definition, $d=a\cdot
    r_0\cdot C_0\cdots r_n\cdot C_n$ for some $a\in\adom{\abox}$. Let
    $G=C_0\sqcap \exists r_1.\exists r_2\ldots\exists
    r_n.D$. Obviously, $\sig{G}{}\cap\Sigma_\Csf\neq\emptyset$ and
    $a\cdot r_0\cdot C_0\in G^{\interj_{\tbox,\abox}}$. By the
    latter and Lemma~\ref{aboxcan}, we have $\abox\models
    (\tbox,\emptyset,q_{\existsr{r_0}{G}})(a)$. By Lemma~\ref{ac}, this implies
    that there is some $m\geq 0$ with $\tbox\models
    C_a^m\sqsubseteq\existsr{r_0}{G}$. We show that it follows that $(\tbox,\Sigma_\Csf)$ is not safe.
    We have $\sig{G}{}\cap\Sigma_\Csf\neq\emptyset$ and, by Item~(3),
    $r_{0}\not\in \Sigma_\Csf$. To show that $(\tbox,\Sigma_\Csf)$ is not safe it remains to show that there
    is no tlc $\existsr{r_0}{C'}$ of $C_a^m$ with $\tbox\models
    C'\sqsubseteq G$. This is indeed the case because otherwise there
    is some $b\in\adom{\abox}$ with $r_0(a,b)\in\abox$ and
    $\abox\models (\tbox,\emptyset,q_{C_0})(b)$ and thus, $a\cdot r_0\cdot
    C_0$ would have never been introduced by Rule (R3) in the construction of $\interj_{\tbox,\abox}$. 
  \end{proof}
This finishes the proof of Part~(a) of Point~(2) of Theorem~\ref{eltheorem}.
Before we prove Part~(b) of Point~(2) we show the following 
observation of independent interest.

\begin{lem}\label{lem:independent}
  Let $(\Tmc,\Sigma_\Csf)$ be a safe $\el$ TBox with closed predicates. 
  Then there exists an $\EL$ TBox $\Tmc'$
  equivalent to $\Tmc$ such that for any $C \sqsubseteq D \in
  \Tmc'$, ${\sf sig}(D) \subseteq \Sigma_\Csf$ or ${\sf sig}(D) \cap
  \Sigma_\Csf=\emptyset$.
\end{lem}
\begin{proof}
We apply the following three rules exhaustively (and recursively) to $\Tmc$:
  \begin{itemize}
  \item replace any $C\sqsubseteq D_{1}\sqcap D_{2}$ by $C\sqsubseteq D_{1}$ and $C\sqsubseteq D_{2}$;
  \item replace any $C \sqsubseteq \exists r.D$ such that there exists a tlc $\exists r.C'$ of
    $C$ with $\Tmc\models C' \sqsubseteq D$ by $C'\sqsubseteq D$;
  \item replace any $C \sqsubseteq \exists r.D$ with $r\in \Sigma_\Csf$ and ${\sf
      sig}(D)\not\subseteq \Sigma_\Csf$ by $C \sqsubseteq \exists r.F$ and
    $F\sqsubseteq D$, where $F$ is an
    $\EL$ concepts with ${\sf sig}(F)\subseteq \Sigma_{\Csf}$ such that $\Tmc\models C \sqsubseteq \exists r.F$ and
    $\Tmc\models F \sqsubseteq D$. (Note that such a concept $F$ always exists by Condition~\ref{it:el_safeness_cond2}.)
  \end{itemize}
  It is straightforward to show that the resulting TBox $\Tmc'$ is as required.
\end{proof}

Recall that UCQ evaluation for $\EL$ TBoxes without closed predicates is in \ptime. 
Thus, the following lemma and Part~(a) directly imply Part~(b) of Theorem~\ref{eltheorem}.

\begin{lem}~\label{lem:point2partb}
  Let $(\Tmc,\Sigma_\Csf)$ be a safe $\el$ TBox with closed predicates. 
  Then consistency of ABoxes w.r.t.\ $(\Tmc,\Sigma_\Csf)$ is in \ptime.
\end{lem}
\begin{proof}
We may assume that $\Tmc$ is in the form of the claim above: for all $C \sqsubseteq D \in
  \Tmc$, ${\sf sig}(D) \subseteq \Sigma_\Csf$ or ${\sf sig}(D) \cap
  \Sigma_\Csf=\emptyset$. We show that the following conditions are
equivalent, for every ABox $\Amc$:
\begin{enumerate}
\item $\Amc$ is consistent w.r.t.~$(\Tmc,\Sigma_{\Csf})$;
\item for all $C\sqsubseteq F\in \Tmc$ with ${\sf sig}(F)\subseteq \Sigma_\Csf$ and all $a\in {\sf Ind}(\Amc)$, if $\Amc\models (\Tmc,\emptyset,q_{F})(a)$, then
$\Imc_{\Amc}\models F(a)$.
\end{enumerate}
The implication from Condition~(1) to Condition~(2) is obvious. Conversely, assume that Condition~(2) holds.
It suffices to show that the tree-shaped canonical model $\Jmc_{\Tmc,\Amc}$ respects closed predicates $\Sigma_{\Csf}$.
One can readily check that the proofs of Points~(2) and (4) of Lemma~\ref{lem:point2parta} do not use 
the condition that $\Amc$ is consistent
w.r.t.~$(\Tmc,\Sigma_{\Csf})$. Thus, it suffices to prove that Points~(1) and (3) of Lemma~\ref{lem:point2parta}
hold for $\Jmc_{\Tmc,\Amc}$. 
But they follow directly from Condition~(2) and the construction of $\Jmc_{\Tmc,\Amc}$.
The result now follows from the fact that Condition~(2) can be checked in polynomial time in the size of $\Amc$.
\end{proof}
This finishes the proof of Theorem~\ref{eltheorem}.

\section{Quantified Query Case: Deciding Tractability of \ptime Query Evaluation}
\label{sec:tboxdec}
We consider the meta problem to decide whether query evaluation w.r.t.~a TBox with closed predicates is tractable.
We show that the following problems are in \ptime:
\begin{enumerate}
\item decide whether UCQ evaluation w.r.t.~\dlliter TBoxes with closed predicates is FO-rewritable (equivalently, in \ptime); and
\item decide whether UCQ evaluation w.r.t.~$\EL$ TBoxes with closed predicates is in \ptime. 
\end{enumerate}
In both cases, we use the characterization via safeness given in the previous section and show that safeness can be decided in \ptime.
For \dlliter, the proof is actually straightforward: to check safeness of a \dlliter TBox with closed predicates $(\Tmc,\Sigma_\Csf)$ 
it suffices to consider all basic concepts $B_1,B_2$ and roles $r$ from ${\sf sig}(\Tmc)$ (of which there are only polynomially many) 
and make satisfiability checks for basic concepts w.r.t.~\dlliter TBoxes and entailment checks of \dllitecore CIs and RIs by \dlliter TBoxes
according to the definition of safeness. Both can be done in polynomial time~\cite{CDLLR07}.
\begin{thm}
It is in \ptime to decide whether a \dlliter TBox with closed predicates is safe.
\end{thm}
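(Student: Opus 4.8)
The plan is to reduce the safeness test to inspecting a polynomially bounded set of candidate witnesses. Recall from Definition~\ref{def:safe_dllite_tbox} that $(\Tmc,\Sigma_\Csf)$ fails to be safe precisely when there are basic concepts $B_1,B_2$ and a role $r$ meeting Conditions~1--4. The first, and key, step is to show that such a witness, if one exists, can always be taken with ${\sf sig}(B_1)\cup{\sf sig}(B_2)\cup{\sf sig}(r)\subseteq{\sf sig}(\Tmc)$. Since there are only $O(|{\sf sig}(\Tmc)|)$ basic concepts and roles over ${\sf sig}(\Tmc)$, this leaves $O(|{\sf sig}(\Tmc)|^3)$ candidate triples to inspect, and each inspection reduces to concept satisfiability and to entailment of \dllitecore CIs and RIs w.r.t.\ \dlliter TBoxes, tasks that are in \ptime~\cite{CDLLR07,DBLP:journals/jair/ArtaleCKZ09}.

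For the restriction to predicates from ${\sf sig}(\Tmc)$ I would use two elementary facts: (i) a predicate $P\notin{\sf sig}(\Tmc)$ can be reinterpreted freely (in particular, emptied) in any model of $\Tmc$ without destroying that it is a model of $\Tmc$; and (ii) no \dlliter TBox entails $\top\sqsubseteq\exists s$, witnessed by the interpretation with a single domain element and all extensions empty, which is a model of every \dlliter TBox. Now take a witness $(B_1,B_2,r)$. If the name of $r$ is fresh, then Conditions~2 and~3 exclude $B_1=\exists r$ and a two-element model of $\Tmc$ excludes $B_1=\exists r^-$, so the name of $r$ does not occur in $B_1$, and emptying $r^\Imc$ in a model witnessing Condition~1 contradicts $\Tmc\models B_1\sqsubseteq\exists r$; hence $r$ is over ${\sf sig}(\Tmc)$. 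If some predicate of $B_2$ is fresh, it differs from $r$, and emptying its extension in a model of $\Tmc$ in which $\exists r^-$ is non-empty (such a model exists by Conditions~1 and~2) contradicts $\Tmc\models\exists r^-\sqsubseteq B_2$. If some predicate of $B_1$ is fresh, reinterpreting it shows that $\Tmc\models B_1\sqsubseteq\exists r$ forces $\Tmc\models\top\sqsubseteq\exists r$, contradicting~(ii). The same two facts show that every role $r'$ relevant to Conditions~3 and~4 (those with $\Tmc\models r'\sqsubseteq r$) is over ${\sf sig}(\Tmc)$, so the inner quantifications there range over polynomially many roles as well.

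With the reduction established, the algorithm enumerates all triples $(B_1,B_2,r)$ over ${\sf sig}(\Tmc)$ and, for each, tests Condition~1 (satisfiability of $B_1$ w.r.t.\ $\Tmc$), Condition~2 (two CI-entailment checks w.r.t.\ $\Tmc$), Condition~3 (for each role $r'$ over ${\sf sig}(\Tmc)$ with $\Tmc\models r'\sqsubseteq r$ --- an RI-entailment check --- the syntactic test $B_1\neq\exists r'$), and Condition~4 (the syntactic test ${\sf sig}(B_2)\subseteq\Sigma_\Csf$, plus, for each role $r'$ over ${\sf sig}(\Tmc)$ with $\Tmc\models B_1\sqsubseteq\exists r'$ and $\Tmc\models r'\sqsubseteq r$, the syntactic test ${\sf sig}(r')\cap\Sigma_\Csf=\emptyset$). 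Every satisfiability/entailment check runs in polynomial time in $|\Tmc|$, only polynomially many are performed, and $(\Tmc,\Sigma_\Csf)$ is safe iff no witnessing triple is found; hence safeness is decidable in \ptime. I expect the main obstacle to be the w.l.o.g.\ restriction to predicates from ${\sf sig}(\Tmc)$ in the second step: it is not deep, but it needs a careful case distinction on the shape of $B_1$ together with fact~(ii); the remaining steps are a routine assembly of known polynomial-time reasoning tasks for \dlliter.
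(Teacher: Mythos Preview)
Your proposal is correct and follows essentially the same approach as the paper: restrict the search for a witness $(B_1,B_2,r)$ to basic concepts and roles over ${\sf sig}(\Tmc)$, then use the known polynomial-time satisfiability and entailment checks for \dlliter. The paper's proof is only a one-sentence sketch that asserts the restriction to ${\sf sig}(\Tmc)$ without argument, so your detailed justification of this w.l.o.g.\ step (via free reinterpretation of fresh predicates and the empty single-element model) is additional rigor rather than a different route.
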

Such a straightforward argument does not work for \EL TBoxes
since Definition~\ref{def:elsafe} quantifies over all $\EL$ concepts $C$,
$D$, and $E$, of which there are infinitely many. In the following, we
show that, nevertheless, safeness of an \EL TBox with closed
predicates $(\Tmc,\Sigma_\Csf)$ can be decided in \PTime. The first step of the proof is
to convert $\Tmc$ into a \emph{reduced} $\EL$ TBox $\Tmc^*$ with the following properties:
\begin{enumerate}
      \renewcommand{\theenumi}{(red\arabic{enumi})}
      \renewcommand{\labelenumi}{\theenumi}

\item\label{tstarcond1} $\Tmc^*$ contains no CI of the form $C \sqsubseteq D_1 \sqcap D_2$;

\item\label{tstarcond2} if $C \sqsubseteq \exists r . D \in \Tmc^*$,
% and  $\mn{sig}(\exists r . D) \not\subseteq \Sigma$, 
then there is no tlc $\exists r.C'$ of $C$ with $\Tmc^* \models C' \sqsubseteq D$.

\end{enumerate}
\begin{lem}\label{lem:reducee}
For every $\el$ TBox, one can compute in polynomial time an equivalent reduced $\EL$ TBox.
\end{lem}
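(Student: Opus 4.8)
The plan is to obtain $\Tmc^{*}$ from the given $\EL$ TBox $\Tmc_{0}$ by exhaustively applying two equivalence-preserving rewriting rules to a ``current'' TBox $\Tmc$, initialised to $\Tmc_{0}$: \emph{(split)} if some CI $C \sqsubseteq D_{1} \sqcap D_{2}$ is in $\Tmc$, replace it by $C \sqsubseteq D_{1}$ and $C \sqsubseteq D_{2}$; and \emph{(reduce)} if some CI $C \sqsubseteq \exists r.D$ is in $\Tmc$ and $\exists r.C'$ is a tlc of $C$ with $\Tmc \models C' \sqsubseteq D$, replace that CI by $C' \sqsubseteq D$. When neither rule applies, $\Tmc$ satisfies \ref{tstarcond1} and \ref{tstarcond2} by definition and is thus reduced, so the work is to verify that each step preserves equivalence, that the process halts after polynomially many steps, and that each step is computable in polynomial time.

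I would argue equivalence as follows. Rule \emph{(split)} clearly preserves equivalence. For \emph{(reduce)}, write $\Tmc' = (\Tmc \setminus \{C \sqsubseteq \exists r.D\}) \cup \{C' \sqsubseteq D\}$. Since $\exists r.C'$ is a tlc of $C$, the inclusion $C \sqsubseteq \exists r.C'$ is valid, and monotonicity of $\exists r.(\cdot)$ gives $\{C' \sqsubseteq D\} \models C \sqsubseteq \exists r.D$; hence $\Tmc' \models \Tmc$. Conversely $\Tmc \models C' \sqsubseteq D$ by the side condition of the rule, so $\Tmc \models \Tmc'$, and therefore $\Tmc \equiv \Tmc'$.

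For the complexity bounds I would first control the size of $\Tmc$ throughout the run. Let $S = \mn{sub}(\Tmc_{0})$; then $|S| = O(|\Tmc_{0}|)$ and $S$ is closed under subconcepts. An induction on the number of steps shows that both sides of every CI in $\Tmc$ always lie in $S$: \emph{(split)} leaves the left-hand side fixed and replaces the right-hand side by a conjunct of it, while \emph{(reduce)} replaces the left-hand side $C$ by the proper subconcept $C'$ and the right-hand side $\exists r.D$ by $D$. Hence $\Tmc$ always has at most $|S|^{2} = O(|\Tmc_{0}|^{2})$ CIs, each of size $\le |\Tmc_{0}|$. For termination I would use the measure $\mu(\Tmc) = \sum_{(C \sqsubseteq D) \in \Tmc} |C|\cdot|D|$. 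Because $|D_{1}\sqcap D_{2}| = |D_{1}|+|D_{2}|+1$, an application of \emph{(split)} decreases $\mu$ by $|C| \ge 1$; and because $C'$ is a proper subconcept of $C$ (so $|C'| < |C|$) and $|D| < |\exists r.D|$, an application of \emph{(reduce)} replaces a summand $|C|\cdot|\exists r.D|$ by the strictly smaller $|C'|\cdot|D|$ and so also decreases $\mu$ (even more if $C' \sqsubseteq D$ was already present). Since $\mu$ is a non-negative integer with $\mu(\Tmc_{0}) \le |\Tmc_{0}|^{2}$, there are at most $|\Tmc_{0}|^{2}$ steps. Each step is polynomial: \emph{(split)} is immediate, and to test applicability of \emph{(reduce)} one checks $\Tmc \models C' \sqsubseteq D$ for the polynomially many candidate pairs consisting of a CI $C \sqsubseteq \exists r.D \in \Tmc$ and a tlc $\exists r.C'$ of $C$, and $\EL$ subsumption is in \ptime \cite{BaBrLu-IJCAI-05}. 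Altogether the procedure runs in polynomial time and returns a reduced $\EL$ TBox equivalent to $\Tmc_{0}$.

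The step I expect to be the main obstacle is the termination analysis. Applying \emph{(reduce)} and then re-splitting the new right-hand side can turn one CI into several, so the number of CIs, and hence the plain total size of $\Tmc$, may temporarily grow; the point of taking $\mu$ to be the sum of the \emph{products} $|C|\cdot|D|$ is that it behaves multiplicatively in the two sides of a CI, so the shrinking of the left-hand side caused by \emph{(reduce)} dominates the subsequent distribution of the right-hand side among the split pieces, and $\mu$ strictly decreases under both rules.
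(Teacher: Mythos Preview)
Your proposal is correct and follows the same approach as the paper: the paper applies exactly the same two rewrite rules exhaustively and simply asserts that equivalence, reducedness, and polynomial time are ``straightforward,'' citing the \ptime bound for $\EL$ subsumption. Your write-up supplies the details the paper omits, in particular the termination measure $\mu(\Tmc)=\sum |C|\cdot|D|$ and the invariant that both sides of every CI remain in $\mn{sub}(\Tmc_0)$; the only cosmetic difference is that the paper checks the side condition $\Tmc\models C'\sqsubseteq D$ against the \emph{original} TBox rather than the current one, which is immaterial since your equivalence argument shows all intermediate TBoxes are equivalent.
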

\begin{proof}
Assume that $\Tmc$ is an $\el$ TBox. Compute $\Tmc^*$ by applying the following two rules exhaustively to $\Tmc$:
\begin{itemize}
\item replace any CI $C \sqsubseteq D_1 \sqcap D_2$ with the CIs $C \sqsubseteq D_1$
and $C \sqsubseteq D_2$;
\item replace any CI $C \sqsubseteq \exists r .D$ for which there exists a tlc $\exists r.C'$ of $C$ with $\Tmc\models C'\sqsubseteq D$
by $C'\sqsubseteq D$.
\end{itemize}
It is straightforward to prove that $\Tmc^{\ast}$ is reduced, equivalent to $\Tmc$, and is constructed in polynomial time
(using the fact that checking $\Tmc\models C \sqsubseteq D$ is in \ptime \cite{BaBrLu-IJCAI-05}).
\end{proof}
We now formulate a stronger version of safeness. While
Definition~\ref{def:elsafe} quantifies over all CIs
$C \sqsubseteq \exists r . D$ that are \emph{entailed} by the TBox
\Tmc, the stronger version only considers CIs of this form that are
\emph{contained} in \Tmc. For deciding tractability based on safeness,
this is clearly a drastic improvement since only the concept $E$
from Definition~\ref{def:elsafe} remains universally quantified.
\begin{defi}\label{def:elstrongsafe}
  An $\EL$ TBox with closed predicates $(\Tmc,\Sigma_\Csf)$ is
  \emph{strongly safe} if there exists no $\EL$ CI $C \sqsubseteq \exists r.D \in \Tmc$ such 
  that one of the following holds:
\begin{enumerate}
  \renewcommand{\theenumi}{(st\arabic{enumi})}
  \renewcommand{\labelenumi}{\theenumi}
\item\label{it:rw_cond1} $r\not\in\Sigma_\Csf$ and there is some $\EL$ concept
  $E$ such that $\tbox\models D\sqsubseteq E$ and ${\sf sig}(E)
  \cap \Sigma_\Csf\not=\emptyset$;
\item \label{it:rw_cond2} $r\in \Sigma_\Csf$, ${\sf sig}(D)
  \not\subseteq\Sigma_\Csf$, and there is no $\EL$ concept $E$ with ${\sf sig}(E) \subseteq \Sigma_\Csf$ such that
  $\Tmc\models C \sqsubseteq \exists r.E$ and $\Tmc\models
  E \sqsubseteq D$. \hfill$\triangle$
\end{enumerate}

\end{defi}

\smallskip
\noindent
The crucial observation is that, for \EL TBoxes in reduced form, the
original notion of safeness can be replaced by strong safeness.
\begin{lem}\label{lem:safe_eq_tbox_star_local}
  If $\Tmc$ is a reduced $\el$ TBox and $\Sigma_{\Csf}$ a signature, then $(\Tmc,\Sigma_\Csf)$ is safe iff it is strongly safe.
\end{lem}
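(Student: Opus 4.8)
The plan is to prove the two contrapositives: \emph{(A)} if $(\Tmc,\Sigma_\Csf)$ is not strongly safe then it is not safe, and \emph{(B)} if $(\Tmc,\Sigma_\Csf)$ is not safe then it is not strongly safe. Throughout I would exploit that $\Tmc$ is reduced: by \ref{tstarcond1} the right-hand side of every CI in $\Tmc$ is a concept name or a single existential restriction, and by \ref{tstarcond2} no CI $C\sqsubseteq\exists r.D\in\Tmc$ admits a tlc $\exists r.C'$ of $C$ with $\Tmc\models C'\sqsubseteq D$. This last fact is exactly what makes the two notions coincide: strong safeness looks only at CIs \emph{syntactically present} in $\Tmc$, and reducedness guarantees that such CIs automatically satisfy Condition~\ref{it:elsafe_tlc} of Definition~\ref{def:elsafe}.

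For direction \emph{(A)}, I would take a witnessing CI $C\sqsubseteq\exists r.D\in\Tmc$. If it violates strong safeness via \ref{it:rw_cond2}, the \emph{same} CI witnesses non-safeness: Condition~\ref{it:elsafe_ent} holds since $C\sqsubseteq\exists r.D\in\Tmc$, Condition~\ref{it:elsafe_tlc} holds by \ref{tstarcond2}, and the last condition of Definition~\ref{def:elsafe} is met via \ref{it:el_safeness_cond2}, which is literally \ref{it:rw_cond2}. If it violates strong safeness via \ref{it:rw_cond1}, there is an $\EL$ concept $E$ with $\Tmc\models D\sqsubseteq E$, ${\sf sig}(E)\cap\Sigma_\Csf\neq\emptyset$ and $r\notin\Sigma_\Csf$; here I would instead use the CI $C\sqsubseteq\exists r.(D\sqcap E)$. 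From $\Tmc\models D\sqsubseteq D\sqcap E$ we get $\Tmc\models C\sqsubseteq\exists r.(D\sqcap E)$ (Condition~\ref{it:elsafe_ent}); any tlc $\exists r.C'$ of $C$ with $\Tmc\models C'\sqsubseteq D\sqcap E$ would give $\Tmc\models C'\sqsubseteq D$, contradicting \ref{tstarcond2} for $C\sqsubseteq\exists r.D\in\Tmc$ (Condition~\ref{it:elsafe_tlc}); and ${\sf sig}(D\sqcap E)\cap\Sigma_\Csf\supseteq{\sf sig}(E)\cap\Sigma_\Csf\neq\emptyset$ with $r\notin\Sigma_\Csf$ gives \ref{it:el_safeness_cond1}. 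Hence $(\Tmc,\Sigma_\Csf)$ is not safe.

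For direction \emph{(B)}, I would start from a witnessing $\EL$ CI $C\sqsubseteq\exists r.D$ (now only \emph{entailed}), and apply Lemma~\ref{lem:existential_conseq}: Condition~\ref{it:elsafe_tlc} rules out the first alternative, so there is $\exists r.C'\in{\sf sub}(\Tmc)$ with $\Tmc\models C\sqsubseteq\exists r.C'$ and $\Tmc\models C'\sqsubseteq D$, and $\exists r.C'$ is not a tlc of $C$ (else Condition~\ref{it:elsafe_tlc} fails). The crucial auxiliary claim is: \emph{for a reduced $\EL$ TBox, if $\Tmc\models C\sqsubseteq\exists r.C'$ and $\exists r.C'$ is not a tlc of $C$, then $\Tmc$ contains a CI $C_0\sqsubseteq\exists r.D_0$ with $\Tmc\models C\sqsubseteq C_0$ and $\Tmc\models D_0\sqsubseteq C'$.} Granting this, pick such a CI; then $\Tmc\models C\sqsubseteq\exists r.D_0$ and $\Tmc\models D_0\sqsubseteq C'$, hence $\Tmc\models D_0\sqsubseteq D$. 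I would then check that $C_0\sqsubseteq\exists r.D_0\in\Tmc$ witnesses non-strong-safeness. If the non-safe witness used \ref{it:el_safeness_cond1}, taking $E:=D$ gives \ref{it:rw_cond1}. If it used \ref{it:el_safeness_cond2}, then $r\in\Sigma_\Csf$; also ${\sf sig}(D_0)\not\subseteq\Sigma_\Csf$, since otherwise $D_0$ itself would be a closed interpolant between $C$ and $\exists r.D$ (using $\Tmc\models C\sqsubseteq\exists r.D_0$ and $\Tmc\models D_0\sqsubseteq D$), contradicting \ref{it:el_safeness_cond2}; and any closed interpolant $E$ between $C_0$ and $\exists r.D_0$ would, composed with $\Tmc\models C\sqsubseteq C_0$ and $\Tmc\models D_0\sqsubseteq D$, be a closed interpolant between $C$ and $\exists r.D$, again impossible --- so \ref{it:rw_cond2} holds for $C_0\sqsubseteq\exists r.D_0$.

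The main obstacle is the auxiliary claim. I would prove it by inspecting the finite canonical model $\Imc_{\Tmc,C}$ (Lemma~\ref{lem:el_can_model}): the $r$-edge from $a_C$ witnessing $a_C\in(\exists r.C')^{\Imc_{\Tmc,C}}$ is present because $\exists r.C'\in{\sf sub}(\Tmc)$ and $\Tmc\models C\sqsubseteq\exists r.C'$; since this existential is not already a conjunct of $C$, the entailment must trace back to a CI of $\Tmc$ whose right-hand side carries an $r$-existential and which fires at $C$, i.e.\ of the form $C_0\sqsubseteq\exists r.D_0$ with $\Tmc\models C\sqsubseteq C_0$ (here reducedness is used to ensure the right-hand side is not a conjunction hiding the $r$-existential among other conjuncts) and $\Tmc\models D_0\sqsubseteq C'$. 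This is essentially a consequence of the completeness of the $\EL$ completion procedure of \cite{BaBrLu-IJCAI-05}; distilling a clean self-contained argument for it is the part requiring the most care.
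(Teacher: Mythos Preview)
Your direction \emph{(A)} is exactly the paper's argument for the forward implication.

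Your direction \emph{(B)} has the right endgame --- extracting a CI $C_0\sqsubseteq\exists r.D_0\in\Tmc$ with $\Tmc\models C\sqsubseteq C_0$ and $\Tmc\models D_0\sqsubseteq D$, then checking \ref{it:rw_cond1} or \ref{it:rw_cond2} --- and your handling of both cases once that CI is in hand is correct and matches the paper. However, your auxiliary claim as stated is \emph{false}. Take $\Tmc=\{\exists r.B\sqsubseteq A,\ F\sqsubseteq B\}$ (reduced), $C=\exists r.F$, and $C'=B$. Then $\Tmc\models C\sqsubseteq\exists r.B$, $\exists r.B\in\mn{sub}(\Tmc)$, and $\exists r.B$ is not a tlc of $C$, yet $\Tmc$ contains no CI whose right-hand side is an existential. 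The hypothesis you actually need (and do have available, via Condition~\ref{it:elsafe_tlc} together with $\Tmc\models C'\sqsubseteq D$) is the stronger one: \emph{no tlc $\exists r.F$ of $C$ satisfies $\Tmc\models F\sqsubseteq C'$}. Your argument survives with this correction, but the claim must be restated.

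On the proof of the (corrected) auxiliary claim, the paper takes a cleaner route than your finite-canonical-model-plus-completion plan. It uses the \emph{tree-shaped} canonical model $\Jmc_{\Tmc,C}$, whose construction rules (R1)--(R3) are explicit: from $a_C\in(\exists r.D)^{\Jmc_{\Tmc,C}}$ one finds a successor $a_{C\cdot r\cdot E}$; the strengthened hypothesis forces this successor to be generated by (R3), which in turn means an assertion $\exists r.E(a_C)$ was present; by \ref{tstarcond1} that assertion cannot come from (R1), so it came from (R2), i.e.\ from an actual CI $C_0\sqsubseteq\exists r.E\in\Tmc$ that fired at $a_C$. This delivers exactly your $C_0\sqsubseteq\exists r.D_0$ without any appeal to an external completion procedure. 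Your detour through Lemma~\ref{lem:existential_conseq} also becomes unnecessary: the paper applies the tree-shaped model directly to the original witness $C\sqsubseteq\exists r.D$.
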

\begin{proof}
  Suppose that $\tbox$ satisfies Conditions~\ref{tstarcond1}
  and~\ref{tstarcond2} for reduced $\EL$ TBoxes.

  \noindent $(\Rightarrow)$ Suppose that $(\Tmc,\Sigma_\Csf)$ is not
  strongly safe, that is, there is some
  $C\sqsubseteq\existsr{r}{D}\in\tbox$ satisfying \ref{it:rw_cond1} or
  \ref{it:rw_cond2}.  If $C\sqsubseteq\existsr{r}{D}$ satisfies
  Condition~\ref{it:rw_cond1}, then $r\not\in\Sigma_\Csf$ and there is some
  concept $E$ such that $\tbox\models D\sqsubseteq E$ and ${\sf
  sig}(E) \cap \Sigma_\Csf\not=\emptyset$. We show that $(\tbox,\Sigma_{\Csf})$ is not safe because the CI $C\sqsubseteq\existsr{r}{(D\sqcap E)}$ 
  violates safeness:
  \begin{enumerate}

  \item $\Tmc \models C \sqsubseteq \exists r . (D \sqcap E)$ since 
    $C \sqsubseteq \exists r . D \in \Tmc$ and $\Tmc \models D \sqsubseteq E$.

  \item there is no tlc $\existsr{r}{C'}$ of $C$ with $\tbox\models
    C'\sqsubseteq D \sqcap E$; this follows from Condition~\ref{tstarcond2}.

  \item Condition~\ref{it:el_safeness_cond1} is satisfied because $r \notin \Sigma_\Csf$ and 
    $\mn{sig}(D \sqcap E) \cap \Sigma_\Csf \neq \emptyset$ since $\mn{sig}(E) \cap \Sigma_\Csf \neq \emptyset$.

  \end{enumerate}
  If $C\sqsubseteq\existsr{r}{D}$ satisfies
  Condition~\ref{it:rw_cond2}, then it follows directly that $(\tbox,\Sigma_{\Csf})$ is not safe because $C\sqsubseteq\existsr{r}{D}$ satisfies
  Condition~\ref{it:el_safeness_cond2}.

  \proofdirectionskip ($\Leftarrow$) Suppose that $(\tbox,\Sigma_{\Csf})$ is not safe. Take any $\el$ CI 
  $C\sqsubseteq\existsr{r}{D}$ violating safeness. In the following, we
  use the tree-shaped canonical model $\interj_{\tbox,C}$ defined above. For the sake of
  readability denote the individual name $p$ of $\interj_{\tbox,C}$ by $a_{p}$ (in particular, $a_{C}$
  denotes $C$). Note that Lemma~\ref{conceptcan} yields
  $a_{C}\in\extn{(\existsr{r}{D})}{\interj_{\tbox,C}}$ since $\Tmc\models C \sqsubseteq \exists r.D$. 
  Thus there is some $d\in\domainn{\interj_{\tbox,C}}$ such that
  $(a_{C},d)\in\extn{r}{\interj_{\tbox,C}}$ and
  $d\in\extn{D}{\interj_{\tbox,C}}$. By definition of
  $\interj_{\tbox,C}$, $d=a_{C\cdot r\cdot E}$ for some $\EL$ concept
  $E$. By Lemma~\ref{conceptcan} and
  $d\in\extn{D}{\interj_{\tbox,C}}$, we have $\tbox\models E\sqsubseteq D$.

  Let $\Amc_C = \Amc_0, \Amc_1,\dots$ be the ABoxes used in the
  construction of $\interj_{\tbox,C}$. By definition of $\Amc_C$ and
  Condition~\ref{tstarcond2}, we have $C\cdot r\cdot E\not\in\paths{C}$, that is,
  $d=a_{C\cdot r\cdot E}$ must have been generated by (R3).
  Consequently, there is an $i\in\natno$ such that
  $\existsr{r}{E}(a_{C})\in\abox_i$, and $\abox_{i+1}=\abox_i\cup\{
  r(a_{C},a_{C\cdot r\cdot E}),E(a_{C\cdot r\cdot E})\}$.
  Using Condition~\ref{tstarcond1} one can now easily prove that
  $\existsr{r}{E}(a_{C})$ can only have been added due to an application
  of (R2).
Thus, there is some $C'\sqsubseteq \existsr{r}{E}\in\tbox$ with $\Imc_{\abox_j}\models C'(a_{C})$. 
We obtain $a_{C}\in\extn{(C')}{\interj_{\tbox,C}}$ and this implies, by Lemma~\ref{conceptcan},
that $\tbox\models C\sqsubseteq C'$. 
% Moreover,
% Condition~\ref{tstarcond2} yields
% $\sig{}{\existsr{r}{E}}\not\subseteq\Sigma$. 
As $(\tbox,\Sigma_{\Csf})$ is not safe due to the CI
$C\sqsubseteq\existsr{r}{D}$, we obtain one of the following cases:
\begin{itemize}
\item $C \sqsubseteq \exists r . D$ satisfies
  Condition~\ref{it:el_safeness_cond1}.
  Then $r\not\in\Sigma_\Csf$ and $\sig{}{D}\cap\Sigma_\Csf\neq\emptyset$.  Since
  $\tbox\models E\sqsubseteq D$, we thus have that $C'\sqsubseteq
  \existsr{r}{E}\in\tbox$ satisfies Condition~\ref{it:rw_cond1}. We have shown that \Tmc is not strongly safe.

\item  $C \sqsubseteq \exists r . D$ satisfies
  Condition~\ref{it:el_safeness_cond2}.
  Then $r\in\Sigma_\Csf$, $\sig{}{D}\not\subseteq\Sigma_\Csf$, and there is no  $\EL$ concept $F$ with ${\sf sig}(F)\subseteq \Sigma_\Csf$ and
  $\tbox\models C\sqsubseteq\existsr{r}{F}$ and $\tbox\models F\sqsubseteq D$.
  We aim at showing that $C'\sqsubseteq \existsr{r}{E}\in\tbox$
  satisfies Condition~\ref{it:rw_cond2}.  We already know that $r \in
  \Sigma_\Csf$. From $\tbox\models C\sqsubseteq C' \sqsubseteq \exists
  r.E$ and $\tbox\models E\sqsubseteq D$, we obtain $\mn{sig}(E) \not\subseteq \Sigma_\Csf$ (otherwise set $F:=E$ above to derive a contradiction). 
  We also obtain that there is no $\EL$ concept $F$ with ${\sf sig}(F)\subseteq \Sigma_\Csf$ and with $\tbox\models
  C'\sqsubseteq\existsr{r}{F}$ and $\tbox\models F\sqsubseteq E$. Again it follows that $\Tmc$ is not strongly safe.
  \qedhere
%This, however, is a consequence of ($\dagger$) and the facts
%  that $\tbox\models C\sqsubseteq C'$ and $\tbox\models E\sqsubseteq
%  D$.
  \end{itemize}
\end{proof}
\begin{lem}\label{lem:99}
It is in \ptime to decide whether a reduced \EL TBox with closed predicates is strongly safe.
\end{lem}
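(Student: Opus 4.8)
The plan is to decide strong safeness by running over the (polynomially many) concept inclusions $C \sqsubseteq \exists r.D \in \Tmc$ and checking for each whether it satisfies Condition~\ref{it:rw_cond1} or Condition~\ref{it:rw_cond2} of Definition~\ref{def:elstrongsafe}; by definition $(\Tmc,\Sigma_\Csf)$ is strongly safe iff no inclusion satisfies either. Since $C \sqsubseteq \exists r.D \in \Tmc$ we have $C,D\in\mn{sub}(\Tmc)$. Throughout I use that deciding $\Tmc\models C\sqsubseteq D$ in $\EL$ is in \ptime~\cite{BaBrLu-IJCAI-05} and that finite canonical models $\Imc_{\Tmc,D}$ are constructible in polynomial time. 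It remains to show that each of the two conditions can be tested in \ptime for a fixed inclusion.

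For Condition~\ref{it:rw_cond1}: if $r\in\Sigma_\Csf$ it fails; otherwise I claim that there is an $\EL$ concept $E$ with $\Tmc\models D\sqsubseteq E$ and $\mn{sig}(E)\cap\Sigma_\Csf\neq\emptyset$ iff, in $\Imc_{\Tmc,D}$, there is an element $d$ reachable from $a_D$ along role edges such that either $d\in A^{\Imc_{\Tmc,D}}$ for some concept name $A\in\Sigma_\Csf$, or $d$ has an outgoing $s$-edge for some role name $s\in\Sigma_\Csf$. For the ``if'' direction, a witnessing path of roles $s_1,\dots,s_k$ from $a_D$ to $d$ together with the closed atom at $d$ yields $E=\exists r_1.\cdots\exists r_k.A$ resp.\ $E=\exists r_1.\cdots\exists r_k.\exists s.\top$ with $a_D\in E^{\Imc_{\Tmc,D}}$, and then $\Tmc\models D\sqsubseteq E$ by Lemma~\ref{lem:el_can_model} (using $D\in\mn{sub}(\Tmc,D)$). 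For ``only if'', if such an $E$ exists then $a_D\in E^{\Imc_{\Tmc,D}}$ by the same lemma, and since $E$ mentions some $P\in\Sigma_\Csf$, inspecting the syntax of $E$ gives a purely logical containment $E\sqsubseteq\exists r_1.\cdots\exists r_k.P$ (if $P$ is a concept name) or $E\sqsubseteq\exists r_1.\cdots\exists r_k.\exists P.\top$ (if $P$ is a role name), which forces the required reachable witness in $\Imc_{\Tmc,D}$. As $\Imc_{\Tmc,D}$ has polynomial size, this is a polynomial-time reachability check.

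Condition~\ref{it:rw_cond2} is the main obstacle, since here one must decide the existence of a ``closed interpolant'' $E$ with $\mn{sig}(E)\subseteq\Sigma_\Csf$, $\Tmc\models C\sqsubseteq\exists r.E$, and $\Tmc\models E\sqsubseteq D$, quantifying over infinitely many $E$. I will first peel off the existential via Lemma~\ref{lem:existential_conseq}: $\Tmc\models C\sqsubseteq\exists r.E$ holds iff there is a concept $C'$ — with $\exists r.C'$ a top-level conjunct of $C$, or $\exists r.C'\in\mn{sub}(\Tmc)$ — such that $\Tmc\models C\sqsubseteq\exists r.C'$ and $\Tmc\models C'\sqsubseteq E$. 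There are only polynomially many such $C'$, and $\Tmc\models C\sqsubseteq\exists r.C'$ is \ptime-decidable; hence the desired $E$ exists iff for one of these $C'$ with $\Tmc\models C\sqsubseteq\exists r.C'$ there is a closed $E$ with $\Tmc\models C'\sqsubseteq E$ and $\Tmc\models E\sqsubseteq D$. To decide the latter I reuse the two-copies device from the proof of Theorem~\ref{eltheorem}(1): let $\Tmc^1,\Tmc^2$ arise from $\Tmc$ by renaming every $P\notin\Sigma_\Csf$ to fresh $P_1$ resp.\ $P_2$, and let $(C')^1,D^2$ be the corresponding renamings. Then a closed $E$ with $\Tmc\models C'\sqsubseteq E$ and $\Tmc\models E\sqsubseteq D$ exists iff $\Tmc^1\cup\Tmc^2\models (C')^1\sqsubseteq D^2$. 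The ``only if'' direction is immediate since an all-closed $E$ is fixed by both renamings and so $\Tmc^1\models (C')^1\sqsubseteq E$ and $\Tmc^2\models E\sqsubseteq D^2$. For ``if'', note $\mn{sig}(\Tmc^1,(C')^1)\cap\mn{sig}(\Tmc^2,D^2)\subseteq\Sigma_\Csf$, so Lemma~\ref{lem:interpolation} yields $F$ with $\mn{sig}(F)\subseteq\Sigma_\Csf$, $\Tmc^1\cup\Tmc^2\models (C')^1\sqsubseteq F$, and $\Tmc^1\cup\Tmc^2\models F\sqsubseteq D^2$; copying any model of $\Tmc$ to a model of $\Tmc^1\cup\Tmc^2$ (interpreting both $P_1$ and $P_2$ as $P$, and closed predicates unchanged) then shows $\Tmc\models C'\sqsubseteq F$ and $\Tmc\models F\sqsubseteq D$, so $E:=F$ works. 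Since $\Tmc^1\cup\Tmc^2$ has polynomial size, each subsumption $\Tmc^1\cup\Tmc^2\models (C')^1\sqsubseteq D^2$ is \ptime-decidable, so Condition~\ref{it:rw_cond2} is \ptime-checkable. Combining the two conditions and iterating over all inclusions of $\Tmc$ gives a polynomial-time decision procedure for strong safeness.
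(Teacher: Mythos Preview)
Your proof is correct. For Condition~\ref{it:rw_cond1} you take essentially the same reachability approach in the finite canonical model $\Imc_{\Tmc,D}$ as the paper does. For Condition~\ref{it:rw_cond2}, however, you take a genuinely different route. The paper works directly with the finite canonical model $\Imc_{\Tmc,C}$: letting $\Amc$ be the ABox corresponding to its $\Sigma_\Csf$-reduct, it shows that a closed interpolant $E$ exists iff some $r$-successor $a$ of $a_C$ in $\Imc_{\Tmc,C}$ satisfies $\Amc\models(\Tmc,\emptyset,q_D)(a)$, thereby reducing the check to (combined-complexity) dtCQ evaluation in $\EL$, with the unfoldings $C_a^m$ of $\Amc$ serving as the witnessing interpolants. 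Your approach instead first peels off the leading $\exists r$ via Lemma~\ref{lem:existential_conseq}, collapsing the search to the polynomially many candidates $C'$ with $\exists r.C'$ a tlc of $C$ or in $\mn{sub}(\Tmc)$, and then detects a closed interpolant between $C'$ and $D$ by the two-copy device together with Lemma~\ref{lem:interpolation}, reducing everything to a single $\EL$ subsumption test $\Tmc^1\cup\Tmc^2\models (C')^1\sqsubseteq D^2$. The paper's route is more self-contained (it does not re-invoke interpolation) and stays closer to the semantic objects already at hand; your route is arguably cleaner conceptually, since it makes explicit that the ``does a closed interpolant exist'' question is itself equivalent to a single \ptime\ $\EL$ subsumption, and it reuses machinery already developed in the proof of Theorem~\ref{eltheorem}(1) rather than introducing a new canonical-model argument.
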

\begin{proof}
Assume $(\Tmc,\Sigma_{\Csf})$ is a reduced $\EL$ TBox with closed predicates.
Assume $C\sqsubseteq \exists r.D\in \Tmc$ is given. It suffices to show that
Conditions~\ref{it:rw_cond1} and~\ref{it:rw_cond2} can be checked in polynomial time.
For Condition~\ref{it:rw_cond1}, it suffices to show that one can check in polynomial time whether there
exists an $\EL$ concept $E$ with $\Tmc\models D \sqsubseteq E$ and ${\sf sig}(E)\cap \Sigma_{\Csf}\not=\emptyset$.
We reduce this to a reachability problem in the directed graph induced by the finite canonical model $\Imc_{\Tmc,D}$. 
In detail, let $G=(V,R)$ be the directed graph
with $V=\Delta^{\Imc_{\Tmc,D}}$ and $R = \bigcup_{r\in \NR}r^{\Imc_{\Tmc,D}}$. Let $T= \bigcup_{A\in \Sigma_{\Csf}}A^{\Imc_{\Tmc,D}}\cup
\bigcup_{r\in \Sigma_{\Csf}}(\exists r)^{\Imc_{\Tmc,D}}$.
Using Lemma~\ref{lem:el_can_model}, it is readily checked that there exists an \el concept $E$
with $\Tmc\models D \sqsubseteq E$ and ${\sf sig}(E)\cap \Sigma_{\Csf}\not=\emptyset$ iff
there exists a path from $a_{D}$ to a node in $T$ in $G$. The latter reachability problem can be checked in
polynomial time.

For Condition~\ref{it:rw_cond2}, assume $r\in \Sigma_{\Csf}$ and let $\Amc$ denote the ABox corresponding to 
the $\Sigma$-reduct of $\Imc_{\Tmc,C}$. We show that the following conditions are equivalent:
\begin{enumerate}
\item there exists an $\EL$ concept $E$ such that ${\sf sig}(E)\subseteq \Sigma_{\Csf}$, $\Tmc \models C\sqsubseteq \exists r.E$
and $\Tmc\models E \sqsubseteq D$;
\item there exists $a\in {\sf Ind}(\Amc)$ such that $(a_{C},a)\in r^{\Imc_{\Tmc,C}}$ and $\Amc\models (\Tmc,\emptyset,q_{D})(a)$.
\end{enumerate}
For the proof of the implication from (1) to (2) take an \el concept $E$ satisfying (1). Then $a_{C} \in (\exists r.E)^{\Imc_{\Tmc,C}}$, 
by Lemma~\ref{lem:el_can_model}. Then there exists $a\in {\sf Ind}(\Amc)$ such that $(a_{C},a)\in r^{\Imc_{\Tmc,C}}$ and $a\in E^{\Imc_{\Tmc,C}}$. 
Hence $a\in E^{\Imc_{\Amc}}$ as ${\sf sig}(E)\subseteq \Sigma_{\Csf}$. There is an unfolding $C_{a}^{m}$ of $\Amc$ at $a$ of depth $m$ such that 
$\emptyset \models C_{a}^{m}\sqsubseteq E$.
From $\Tmc\models E \sqsubseteq D$ we obtain $\Tmc\models C_{a}^{m}\sqsubseteq D$. 
But then, by Lemma~\ref{ac}, $\Amc\models (\Tmc,\emptyset,q_{D})(a)$, as required.

Conversely, let $a\in {\sf Ind}(\Amc)$ such that $(a_{C},a)\in r^{\Imc_{\Tmc,C}}$ and $\Amc\models (\Tmc,\emptyset,q_{D})(a)$.
By Lemma~\ref{ac}, there exists an unfolding $C_{a}^{m}$ of $\Amc$ at $a$ such that $\Tmc \models C_{a}^{m}\sqsubseteq D$.
It is readily checked that $E=C_{d}^{m}$ is as required for Condition~(1).

Condition~(2) can be checked in \ptime since query evaluaton for OMQCs in $(\mathcal{EL},\emptyset,\text{dtCQ})$ is in \ptime
(in combined complexity).
\end{proof}

\begin{thm}
It is in \ptime to decide whether an $\mathcal{EL}$ TBox with closed predicates is safe.
\end{thm}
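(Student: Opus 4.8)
The plan is to obtain the result by simply chaining together the three preceding lemmas. Given an $\EL$ TBox with closed predicates $(\Tmc,\Sigma_\Csf)$, I would first invoke Lemma~\ref{lem:reducee} to compute in polynomial time a reduced $\EL$ TBox $\Tmc^*$ that is equivalent to $\Tmc$. Since safeness (Definition~\ref{def:elsafe}) refers to the TBox only through logical entailment of concept inclusions and through signature membership, equivalent TBoxes over the same signature are safe or unsafe together; hence $(\Tmc,\Sigma_\Csf)$ is safe iff $(\Tmc^*,\Sigma_\Csf)$ is safe. Next, Lemma~\ref{lem:safe_eq_tbox_star_local} tells us that for the reduced TBox $\Tmc^*$, safeness coincides with strong safeness (Definition~\ref{def:elstrongsafe}). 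Finally, Lemma~\ref{lem:99} shows that strong safeness of a reduced $\EL$ TBox with closed predicates can be decided in polynomial time. Composing these three polynomial-time steps yields the claimed bound.

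Concretely, the algorithm is: (1) compute $\Tmc^*$ from $\Tmc$ via the rewrite rules from the proof of Lemma~\ref{lem:reducee}, which terminate after polynomially many steps since each application either splits a conjunction on the right-hand side or strictly decreases its size, using the \ptime subsumption test for $\EL$; (2) for each CI $C \sqsubseteq \exists r.D \in \Tmc^*$ --- of which there are only polynomially many --- check Conditions~\ref{it:rw_cond1} and~\ref{it:rw_cond2} of Definition~\ref{def:elstrongsafe}, where Condition~\ref{it:rw_cond1} reduces to a reachability test in the polynomial-size directed graph underlying the canonical model $\Imc_{\Tmc,D}$ and Condition~\ref{it:rw_cond2} reduces to evaluating a dtCQ over the $\Sigma_\Csf$-reduct of $\Imc_{\Tmc,C}$, both of which are in \ptime; (3) answer \emph{safe} iff no CI of $\Tmc^*$ witnesses a violation of strong safeness.

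Since every intermediate result is already available in the excerpt, there is no serious obstacle at this level of argument. If one were proving the statement from scratch, the crux would be Lemma~\ref{lem:safe_eq_tbox_star_local}: reducing the unbounded search over all \emph{entailed} CIs $C \sqsubseteq \exists r.D$ in Definition~\ref{def:elsafe} to the finite search over CIs syntactically present in $\Tmc^*$. This requires the canonical-model analysis showing that any existential witness $d = a_{C \cdot r \cdot E}$ in $\Imc_{\Tmc,C}$ must have been generated by rule (R3) --- which, together with Condition~\ref{tstarcond1}, forces a syntactic CI $C' \sqsubseteq \exists r.E \in \Tmc$ with $\Tmc \models C \sqsubseteq C'$ and $\Tmc \models E \sqsubseteq D$ --- and then transferring the failing safeness condition back onto that syntactic CI. The second delicate point, handled inside Lemma~\ref{lem:99}, is deciding the existence of a ``closed interpolant'' $E$ with $\mn{sig}(E) \subseteq \Sigma_\Csf$, $\Tmc \models C \sqsubseteq \exists r.E$, and $\Tmc \models E \sqsubseteq D$; here one uses Lemma~\ref{ac} to turn this into a query-answering problem over a finite canonical model, which is effective.
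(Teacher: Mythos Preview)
Your proposal is correct and follows essentially the same approach as the paper's proof: compute an equivalent reduced TBox via Lemma~\ref{lem:reducee}, invoke Lemma~\ref{lem:safe_eq_tbox_star_local} to identify safeness with strong safeness for reduced TBoxes, and then apply Lemma~\ref{lem:99} to decide strong safeness in polynomial time. You additionally make explicit the (easy but needed) observation that safeness depends on the TBox only through entailment, so it is invariant under passing to an equivalent TBox; the paper leaves this implicit.
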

\begin{proof}
Assume $(\Tmc,\Sigma_{\Csf})$ is given. By Lemma~\ref{lem:reducee}, we can construct, in polynomial time,
a reduced $\mathcal{EL}$ TBox $\Tmc'$ equivalent to $\Tmc$. By Lemma~\ref{lem:99}, we can check in
\ptime whether $(\Tmc',\Sigma_{\Csf})$ is strongly safe. By Lemma~\ref{lem:safe_eq_tbox_star_local}, strong safeness is
equivalent to safeness for $(\Tmc',\Sigma_{\Csf})$.
\end{proof}

\section{Closing Concept Names in the Fixed Query Case and Surjective CSPs}
\label{sect:closingconcepts}

% \section{From Surjective CSP to
% (DL-Lite$_{\mn{core}}$,tree-UCQ)}\label{sec:scsp_dllite_core_tucq}

We now switch from the quantified query case to the fixed query case.
In this section, we consider OMQC languages that only admit closing
concept names while the case of closing role names is deferred to the
subsequent section. Regarding the former, our main aim is to establish
a close connection between UCQ evaluation for such OMQC languages and
generalized surjective constraint satisfaction problems (CSPs).  Let
BUtCQ denote the class of Boolean queries that can be obtained from a
union of tCQs by existentially quantifying the answer variable and let
BAQ denote the class of \emph{Boolean atomic queries} which take the
form $\exists x A(x)$, $A$ a concept name. We consider OMQC languages
between $(\dllitecore,\NC,\text{BUtCQ})$ and
$(\alchi,\NC,\text{BUtCQ})$ as well as between
$(\EL,\NC,\text{BUtCQ})$ and $(\alchi,\NC,\text{BUtCQ})$ and show that
for all these, a \ptime/\conp dichotomy is equivalent to a \ptime/{\sc
  NP} dichotomy for generalized surjective CSPs, a problem that is
wide open.  In fact, understanding the complexity of surjective CSPs,
generalized or not, is a very difficult, ongoing research effort. As
pointed out in the introduction, there are even concrete surjective
CSPs with very few elements whose complexity is unknown and, via the
connection established in this section, these problems can be used to
derive concrete OMQCs from the mentioned languages whose computational
properties are currently not understood.
%It also follows other consequence is that full complexity classifications of the
% relevant OMQC classes as well as dichotomies are very difficult to
% attain.

We next introduce CSPs and then give a more detailed overview of the
results obtained in this section. An interpretation \Imc is a
\emph{$\Sigma$-interpretation} if it only interprets predicates in
$\Sigma$, that is, all other predicates are interpreted as empty. For
every finite $\Sigma$-interpretation \Imc we denote by
$\mn{CSP}(\Imc)$ the following \emph{constraint satisfaction problem
  (in signature $\Sigma$)}: given a finite
$\Sigma$-interpretation~$\Jmc$, decide whether there is a homomorphism
$h$ from $\Jmc$ to $\inter$. The \emph{surjective constraint
  satisfaction problem}, $\mn{CSP}(\Imc)^{\mn{sur}}$, is the variant
of $\mn{CSP}(\Imc)$ where we require $h$ to be surjective. $\Imc$ is
then called the \emph{template} of $\mn{CSP}(\Imc)^{\mn{sur}}$. In
this article we only consider CSPs with predicates of arity at most two.
%
% A \emph{BUtCQ} is a union of queries $\exists x\, q(x)$ with $q$ a tCQ
% and a \emph{BUdtCQ} is defined likewise based on dtCQs. 
% In the remainder of this section, we first show that for every
% $\mn{CSP}(\Imc)^{\mn{sur}}$, there is an OMQC $Q$ from
% $(\dllitecore,\NC,\text{BUtCQ})$ such that the evaluation problem for
% $Q$ has the same complexity as the complement of
% $\mn{CSP}(\Imc)^{\mn{sur}}$, up to polynomial time reductions. We then
% observe that the same holds for $(\mathcal{EL},\NC,\text{BAQ})$.
% For the converse direction, we construct, for every OMQC in $(\mathcal{ALCHI},\NC,\text{BUtCQ})$,
% a \emph{generalized} surjective constraint satisfaction problem of the
% same complexity, up to polynomial reductions.
 A \emph{generalized
  surjective CSP in signature $\Sigma$} is characterized by a
\emph{finite set} $\Gamma$ of finite $\Sigma$-interpretations instead
of a single such interpretation, denoted
$\mn{CSP}(\Gamma)^{\mn{sur}}$. The problem is to decide, given a
$\Sigma$-interpretation $\Jmc$, whether there is a surjective
homomorphism from $\Jmc$ to some interpretation in~$\Gamma$. The
interpretations $\Imc$ in $\Gamma$ are called the \emph{templates} of
$\mn{CSP}(\Gamma)^{\mn{sur}}$. % Note that, in the non-surjective case,
% every generalized CSP can be translated into an equivalent
% non-generalized CSP \cite{DBLP:journals/ejc/FoniokNT08}. In the
% surjective case, such a translation is not known.

We first show that for every constraint satisfaction problem $\mn{CSP}(\Gamma)^{\mn{sur}}$, there is an
OMQC $Q$ from $(\dllitecore,\NC,\text{BUtCQ})$ such that the
evaluation problem for $Q$ has the same complexity as the complement
of $\mn{CSP}(\Gamma)^{\mn{sur}}$, up to polynomial time reductions; we
then observe that the same holds for $(\mathcal{EL},\NC,\text{BAQ})$.
To achieve a cleaner presentation, we first present the construction
for non-generalized surjective CSPs and then sketch the modifications
required to lift it it to generalized surjective CSPs.
Consider $\mn{CSP}(\Imc)^{\mn{sur}}$ in signature $\Sigma$. Let $A$,
$V$, and $V_{d}$, $d\in \Delta^{\Imc}$, be concept names not in
$\Sigma$, and ${\sf val}$, ${\sf aux}_{d}$, and ${\sf force}_{d}$,
$d\in \Delta^{\Imc}$, be role names not in $\Sigma$.  Define the OMQC
$Q_{\Imc}=(\Tmc,\Sigma_\Asf,\Sigma_\Csf,q)$ from
$(\dllitecore,\NC,\text{BUtCQ})$ as follows:
%
% todo[inline]{For uniformity with the other direction, we might want
% to admit unary predicates as well; also: Is it clear that this can
% be assumed w.l.o.g.\ also for surjective CSPs?}  \todo[color=green,
% inline]{We need to assume something about $\inter$, e.g., it
% contains at least two elements, so that we have an interpretation
% that we can not homomorphically embed in $\inter$. Otherwise, the
% second lemma in this section is not a polynomial-time many-one
% reduction but a Turing one} Define the DL-Lite$_{\mn{core}}$ TBox
% $\Tmc$ with closed predicates $\Sigma_{\mn{closed}}$ and Boolean
% tree-shaped UCQ $q$ as follows.
%
% Same setup as above, but now we assume $\mn{CSP}(\Imc)$ to be a
% surjective CSP, that is, the involved homomorphisms $h$ have to hit
% all elements of the template \Imc.
%
% This can easily be enforced by two modifications. First, add to the
% TBox \Tmc the following CIs, for each $d \in \Delta^\Imc$ and $r \in
% \Sigma_{\mn{csp}}$:
% %
% $$
% \begin{array}{rclcrcl}
%     \exists r &\sqsubseteq& \exists \mn{force}_d & \qquad & 
%     \exists^- r &\sqsubseteq& \exists \mn{force}_d \\[1mm]
%     \exists \mn{force}^-_d & \sqsubseteq& 
% \end{array}
% $$
% %
% They select an element that must map to $d$. This is then guaranteed
% by disjunctively adding the following to $q$:
% %
% $$
% \bigvee_{d,d' \in \Delta^\Imc \mid d \neq d'} \exists x \exists y
% \exists z \, \mn{force}_d(z,x) \wedge \mn{val}(x,y) \wedge
% V_{d'}(y).
% $$
% %
% In the proof of the claim, the homomorphism defined in the `if'

$$
\begin{array}{rcl}
  \Tmc &= &  \{  A \sqsubseteq \exists \mn{val}, \ \exists \mn{val}^- \sqsubseteq V \} \, \cup\\[1mm]
  && \{  A \sqsubseteq \exists \mn{aux}_d, \ \exists \mn{aux}_d^- \sqsubseteq V \sqcap V_d \mid  d \in \Delta^\Imc \} \, \cup \\[1mm]
  && \{ A \sqsubseteq \exists \mn{force}_d, \
  \exists \mn{force}^-_d  \sqsubseteq A\mid d\in\domain \} \\[2mm]
  % \Tmc &= & \{ C \sqsubseteq \exists \mn{val} \mid C \in \Gamma \}
  % \cup \{ \exists \mn{val}^- \sqsubseteq V \} \, \cup\\[1mm]
  % && \{ C \sqsubseteq \exists \mn{aux}_d, \ \exists \mn{aux}_d^-
  % \sqsubseteq V \sqcap V_d \mid C \in \Gamma, d \in \Delta^\Imc \}
  % \\[3mm]
  % && \{ \exists r \sqsubseteq \exists \mn{val}, \exists r^-
  % \sqsubseteq \exists \mn{val} \mid r \in \Sigma_{\mn{csp}} \text{
  % binary} \} \, \cup\\[1mm]
  % &&\exists e \sqsubseteq \exists \mn{aux1}, \\[1mm] \exists
  % \mn{aux}_1^- &\sqsubseteq& C \sqcap R, \\[1mm]
  % &&\exists e \sqsubseteq \exists \mn{aux2}, \\[1mm] \exists
  % \mn{aux}_2^- &\sqsubseteq& C \sqcap G, \\[1mm]
  % &&\exists e \sqsubseteq \exists \mn{aux3}, \\[1mm] \exists
  % \mn{aux}_3^- &\sqsubseteq& C \sqcap B \quad \} \\[3mm]
%     
  \Sigma_{\Csf} &=&
  \{ A,V \} \cup \{ V_d \mid d \in \Delta^\Imc \} \\[2mm]
  \Sigma_{\Asf} &=& \Sigma \cup \Sigma_\Csf \\[2mm]
  q &=& q_1\lor q_2\lor q_3 \lor q_{4}
\end{array}
$$
where %$q_1$, $q_2$, and $q_3$ are defined in Figure~\ref{fig:scsp_dllite_core_tucq}.
$$
\begin{array}{r@{\,}c@{\,}l}
  q_1  &=& \,\,\,\,\displaystyle\bigvee_{d,d' \in \Delta^\Imc \mid d \neq d'}
  \exists x \exists y_1 \exists y_2 \, A(x) \wedge \mn{val}(x,y_1) \;
  \wedge  \\[-3mm]
  && \hspace*{3.2cm}  \mn{val}(x,y_2) \wedge V_d(y_1) \wedge V_{d'}(y_2) \\[1mm]
   q_2  &=&  
   \displaystyle\bigvee_{d \in \Delta^\Imc, E \in
   	\Sigma \mid d\not\in E^\Imc} \!\!\!\!\!\!\!\!\!\!\!
   \exists x\exists
   y \, A(x) \wedge E(x) \; \wedge
   \\[-3mm]
   && \hspace*{3.2cm} \mn{val}(x,y) \wedge V_d(y) \\[1mm]
  q_3  &=&  \!\!\!\!\!\displaystyle\bigvee_{d,d' \in \Delta^\Imc, r \in
    \Sigma \mid (d,d') \notin r^\Imc} \!\!\!\!\!\!\!\!\!\!\!
  \exists x\exists
  y\exists x_1  \exists y_1 \,  A(x) \wedge A(y) \wedge r(x,y) \;
  \wedge \\[-3mm]
  && \hspace*{3.2cm} \mn{val}(x,x_1) \wedge \mn{val}(y,y_1) \; \wedge
  \\[1.5mm]
  && \hspace*{3.2cm}V_d(x_1) \wedge V_{d'}(y_1) \\[1mm]
  q_4 &=&  \,\,\,\,\displaystyle \bigvee_{d,d' \in \Delta^\Imc \mid d \neq d'}
  \exists x \exists y \exists z \, A(x) \wedge \mn{force}_d(z,x) \;
  \wedge \\[-3mm]
  && \hspace*{3.2cm} \mn{val}(x,y) \wedge V_{d'}(y).
\end{array}
$$
The following lemma links $\mn{CSP}(\Imc)^{\mn{sur}}$ to the
constructed OMQC $Q_{\Imc}$.
\begin{lem}\label{lem:comp-csp-reduc-to-omqc}
The complement of $\mn{CSP}(\Imc)^{\mn{sur}}$ and the evaluation problem for $Q_{\Imc}$ 
are polynomially reducible to each other. 
\end{lem}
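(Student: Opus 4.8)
The plan is to establish the two required polynomial reductions in turn. For the reduction from the complement of $\mn{CSP}(\Imc)^{\mn{sur}}$ to the evaluation problem for $Q$, I would map a $\Sigma$-interpretation $\Jmc$ to the $\Sigma_\Asf$-ABox $\Amc_\Jmc$ obtained by keeping all $\Sigma$-facts of $\Jmc$, adding $A(e)$ for every $e\in\Delta^{\Jmc}$, and adding, for each $d\in\Delta^{\Imc}$, a fresh individual $v_d$ with the assertions $V(v_d)$ and $V_d(v_d)$; this is computable in polynomial time. It then remains to show that $\Amc_\Jmc\models Q$ iff there is no surjective homomorphism from $\Jmc$ to $\Imc$. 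Since $A$, $V$ and the $V_d$ are closed, every model $\Imc'$ of $\Tmc$ and $\Amc_\Jmc$ respecting $\Sigma_\Csf$ has $A^{\Imc'}=\Delta^{\Jmc}$, $V^{\Imc'}=\{v_d\mid d\in\Delta^{\Imc}\}$ and $V_d^{\Imc'}=\{v_d\}$, so every $\mn{val}$-successor of an $A$-element is some $v_d$, and failure of $q_1$ forces all $\mn{val}$-successors of a given $A$-element to coincide; hence $\Imc'$ induces a map $h\colon\Delta^{\Jmc}\to\Delta^{\Imc}$ with $\mn{val}(e,v_{h(e)})\in\Imc'$. I would then argue that failure of $q_2$ and $q_3$ says exactly that $h$ is a homomorphism from $\Jmc$ to $\Imc$, and that failure of $q_4$, together with the inclusions $A\sqsubseteq\exists\mn{force}_d$ and $\exists\mn{force}_d^-\sqsubseteq A$ (which for each $d$ produce an $A$-element reached by a $\mn{force}_d$-edge, whose $\mn{val}$-successor must be $v_d$), says exactly that $h$ is surjective. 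Conversely, from a surjective homomorphism $h$ I would build the model with $\mn{val}^{\Imc'}=\{(e,v_{h(e)})\mid e\in\Delta^{\Jmc}\}$, $\mn{aux}_d^{\Imc'}=\{(e,v_d)\mid e\in\Delta^{\Jmc}\}$ and $\mn{force}_d^{\Imc'}=\{(e,f_d)\mid e\in\Delta^{\Jmc}\}$ for a fixed $f_d\in h^{-1}(d)$, and check that it is a model of $\Tmc$ and $\Amc_\Jmc$ respecting $\Sigma_\Csf$ in which every $q_i$ fails. The boundary cases $\Delta^{\Imc}=\emptyset$ and $\Delta^{\Jmc}=\emptyset$ I would handle directly.

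For the reduction from the evaluation problem for $Q$ to the complement of $\mn{CSP}(\Imc)^{\mn{sur}}$, I would first characterise the ABoxes $\Amc$ with $\Amc\models Q$. A countermodel can w.l.o.g.\ be taken to have domain $\adom{\Amc}$, to give each $A$-element exactly one $\mn{val}$-successor, and to add no $\Sigma$-facts, since $\Tmc$ uses no $\Sigma$-predicate and $q$ is monotone; such a countermodel therefore amounts to a map $m\colon A^{\Amc}\to V^{\Amc}$ together with maps $g_d\colon A^{\Amc}\to A^{\Amc}$ for the $\mn{force}_d$-successors. Calling $w\in V^{\Amc}$ \emph{monochromatic} with colour $d$ if it lies in exactly one $V_d^{\Amc}$, failure of the disjuncts translates to: each $m(a)$ is monochromatic (from $q_1$); the induced colouring $a\mapsto\mathrm{col}(m(a))$ respects all $\Sigma$-concepts and $\Sigma$-roles restricted to $A^{\Amc}$ (from $q_2$ and $q_3$); and $\mathrm{col}(m(g_d(a)))=d$ for all $a$, $d$ (from $q_4$). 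Working out when such $m$, $g_d$ exist yields a trichotomy: (a) if $A^{\Amc}=\emptyset$, or $\Amc$ is consistent w.r.t.\ $(\Tmc,\Sigma_\Csf)$ and some element of $V^{\Amc}$ lies in no $V_d^{\Amc}$, then $\Amc\not\models Q$; (b) if $\Amc$ is inconsistent w.r.t.\ $(\Tmc,\Sigma_\Csf)$, or no element of $V^{\Amc}$ is monochromatic, or some $d\in\Delta^{\Imc}$ is the colour of no monochromatic element of $V^{\Amc}$, then $\Amc\models Q$; (c) otherwise $\Amc\not\models Q$ iff there is a surjective homomorphism from $\Jmc_\Amc$ to $\Imc$, where $\Jmc_\Amc$ is the $\Sigma$-interpretation with domain $A^{\Amc}$, $E^{\Jmc_\Amc}=E^{\Amc}\cap A^{\Amc}$ for $\Sigma$-concepts $E$, and $r^{\Jmc_\Amc}=r^{\Amc}\cap(A^{\Amc}\times A^{\Amc})$ for $\Sigma$-roles $r$. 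Since all side conditions of (a) and (b), as well as consistency w.r.t.\ $(\Tmc,\Sigma_\Csf)$, are properties of $A^{\Amc}$, $V^{\Amc}$, the $V_d^{\Amc}$ and the fixed $\Imc$ that are checkable in polynomial time, the reduction on input $\Amc$ would test (a) and (b) and output a fixed yes-instance of $\mn{CSP}(\Imc)^{\mn{sur}}$ (e.g.\ $\Imc$ with the identity homomorphism) if (a) holds, a fixed no-instance (e.g.\ the empty interpretation, using $\Delta^{\Imc}\neq\emptyset$) if (b) holds, and $\Jmc_\Amc$ otherwise; correctness is immediate from the trichotomy, and the procedure is polynomial. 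The case $\Delta^{\Imc}=\emptyset$ would be handled separately.

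The step I expect to be the main obstacle is the structural characterisation underlying the second reduction: showing that the constraints the disjuncts $q_1,\dots,q_4$ impose on a countermodel genuinely decouple into a colouring of the $A$-elements and an essentially independent choice of $\mn{force}_d$-successors, so that the $\mn{force}_d$ gadget is the sole link between surjectivity and the homomorphism conditions, and then pinning down exactly which degenerate ABoxes fall outside the resulting equivalence so that they can be decided in polynomial time. The first reduction is comparatively routine: closedness and $\Sigma$-reducts do most of the work, and the only real care needed is the case-by-case verification that each $q_i$ fails in the model built from a surjective homomorphism.
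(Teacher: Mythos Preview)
Your proposal is correct and follows essentially the same approach as the paper: the first reduction (encoding a $\Sigma$-interpretation as an ABox by adding $A$, $V$, $V_d$ assertions and reading off a homomorphism from the $\mn{val}$-edges of a countermodel) is identical, and the second reduction (restricting the ABox to its $A$-part to obtain the CSP instance, after disposing of degenerate ABoxes) is the same idea. Your case analysis in the second direction is slightly more refined than the paper's---you explicitly separate out the cases where some $d$ has no monochromatic $V$-element of colour $d$, which the paper folds into its preliminary checks less carefully---but this is a refinement of the same argument rather than a different route.
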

\begin{proof}
  Assume that $\mn{CSP}(\Imc)^{\mn{sur}}$ is given.
  For the polynomial reduction of $\mn{CSP}(\Imc)^{\mn{sur}}$ to the evaluation problem for $Q_{\Imc}$, 
  let $\interj$ be a $\Sigma$-interpretation that is an input of
  $\mn{CSP}(\Imc)^{\mn{sur}}$. Let $\Amc_\Jmc$ be the ABox corresponding to \Jmc. Introduce, for every $d\in \Delta^{\Imc}$, a fresh individual name $a_{d}$ and let
  the ABox \Amc be defined as
  \[\abox_\interj\cup\{A(a_d)\mid
  d\in\Delta^\Jmc\}\cup\{V(a_d),V_d(a_d)\mid d\in\domain\}.\]
  Obviously, $\abox$ can be constructed in polynomial time. We claim
  that $\interj\in\mn{CSP}(\Imc)^{\mn{sur}}$ iff $\abox\not\models
  Q_{\Imc}$.

  \smallskip
  
  $(\Rightarrow)$ Suppose that there is a surjective homomorphism $h$
  from \Jmc to \Imc. Define the interpretation $\inter'$ as follows:
  \begin{eqnarray*}
    \Delta^{\inter'} & = & \mn{Ind}(\abox)\\
    A^{\inter'} & = & \mn{Ind}(\abox_\interj)\\
    V^{\inter'} & = & \domain\\
    V_d^{\inter'} & = & \{a_d\},\text{ for all }d\in\domain\\
    \mn{val}^{\inter'}  & = & \{(a,a_{h(a)})\mid a\in\mn{Ind}(\abox_\interj)\}\\ 
    \mn{aux}_d^{\inter'} & = & \{(a,a_d)\mid a\in\mn{Ind}(\abox_\interj)\}\text{, for all }d\in\domain \\
    \mn{force}_d^{\inter'} & = & \{(a,a')\in\mn{Ind}(\abox_\interj)\times\mn{Ind}(\abox_\interj)\mid h(a')=d\}\text{, for all }d\in\domain\\
    P^{\inter'} & = & \extj{P}\text{, for all predicates }P \not\in(\{A,V,\mn{val}\}\cup\{V_d,\mn{aux}_d,\mn{force}_d\mid d\in\domain\})
  \end{eqnarray*}
  One can now verify that $\inter'$ is a model of \Tmc and \Amc that
  respects closed predicates $\Sigma_{\Csf}$, and that
  $\inter'\not\models q$. Thus, $\Amc\not\models Q_{\Imc}$, as required.

  \smallskip
  
  $(\Leftarrow)$ Suppose $\abox\not\models Q_{\Imc}$. Then there is a model 
  $\inter'$ of $\Tmc$ and \Amc that respects closed predicates 
  $\Sigma_\Csf$ and such that $\inter'\not\models q$. Define 
  $h=\{(d,a_e)\in\mn{val}^{\inter'}\mid d\in \Delta^\Jmc \}$. We show 
  that $h$ is a surjective homomorphism from $\Jmc$ to $\inter$.

  We first show that the relation $h$ is a function. Assume that this
  is not the case, that is, there are $d\in\Delta^\Jmc$ and
  $e_1,e_2\in\domain$ such that $e_1\neq e_2$ and $(d,a_{e_i}) \in
  \mn{val}^{\Imc'}$ for $i \in \{1,2\}$. Note that $a_{e_i} \in
  V_{e_i}^{\Imc'}$. Thus we get $\inter'\models q_1$, which is a
  contradiction against our choice of $\Imc'$. 

  To show that $h$ is
  total, take some $d \in \Delta^\Jmc$. Then $d \in A^{\Imc'}$ and
  thus the first line of \Tmc yields an $f \in V^\Jmc$ with $(d,f) \in
  \mn{val}^{\Imc'}$. Since $V$ is closed, we must have $f=a_e$ for
  some $e$, and thus $h(a_e)=f$.

  We show that $h$ is a homomorphism. We show, using $q_{3}$, that $h$ preserves role names. Using $q_{2}$, one can show in 
  the same way that $h$ preserves concept names. Assume for a
  contradiction that there is $(d,e) \in r^\Jmc$ with
  $(h(d),h(e))\not\in\ext{r}$. The latter implies that 
  the following is a disjunct of $q_3$:
  $$
  \begin{array}{l}
    \exists
    x\exists y\exists x_1 \exists y_1\, A(x) \wedge A(y) \wedge r(x,y)
    \wedge \mn{val}(x,x_1) \; \wedge \\[1mm]
    \hspace*{2cm} \mn{val}(y,y_1) \wedge V_{h(d)}(x_1) \wedge V_{h(e)}(y_1).
  \end{array}
  $$
  Note that $d,e\in A^{\inter'}$, $(d,a_{h(a)}),
  (e,a_{h(e)})\in\mn{val}^{\inter'}$, $a_{h(d)}\in
  V^{\inter'}_{h(d)}$, and $a_{h(e)}\in V^{\inter'}_{h(e)}$. Thus $\inter'\models q_3$, which contradicts our choice of~$\Imc'$.

  It remains to show that $h$ is surjective. Fix a $d\in\domain$. We
  have to show that there is an $e \in \Delta^\Jmc$ with
  $h(e)=d$. Take some $f\in\Delta^\Jmc$. Then by the third line of
  \Tmc and since $A$ is closed, there is some $e\in\Delta^\Jmc$ such
  that $(f,e)\in\mn{force}_d^{\inter'}$. We show that $e$ is as
  required.  Assume to the contrary that $h(e) \neq d$. Then the
  following is a disjunct of $q_4$:
  $$
  A(x) \wedge \mn{force}_d(z,x) \wedge \mn{val}(x,y) \wedge
  V_{h(e)}(y).
  $$
  Note that $f \in A^{\Imc'}$, $(e,a_{h(e)}) \in \mn{val}^{\Imc'}$,
  and $a_{h(e)} \in V_{h(e)}^{\Imc'}$. Thus, $\Imc'\models q_{4}$ which contradicts our choice of
  $\Imc'$. This finishes the proof of the reduction from $\mn{CSP}(\Imc)^{\mn{sur}}$ to evaluating $Q_{\Imc}$.
  
\bigskip

  We now give the polynomial reduction of the evaluation problem for $Q_{\Imc}$ to $\mn{CSP}(\Imc)^{\mn{sur}}$. 
  Assume a $\Sigma_\Asf$-ABox $\Amc$ is given. To decide whether $\Amc\models Q_{\Imc}$, we start with the following:
  \begin{enumerate}

  \item If $\Amc$ does not contain any assertion of the form $A(a)$, then $\Amc \not\models Q_{\Imc}$. In fact, let
    $\Imc_\Amc$ be \Amc viewed as an interpretation.
    Then $\Imc_\Amc$ is a model of \Amc that respects closed predicates $\Sigma_\Csf$. Since $\Amc$ does not contain any assertion of the form $A(a)$, $\Imc_\Amc$ is also a model of \Tmc
    and satisfies $\inter_\abox\not\models q$ (note that each disjunct
    of $q$ demands the existence of an instance of $A$). Thus answer `$\Amc\not\models Q_{\Imc}$'.

  \item Otherwise, if $\Amc$ does not contain for each $d \in 
	\Delta^\Imc$ an individual name $a$ with $V(a),V_d(a)\in \Amc$, 
	then $\Amc$ is not consistent w.r.t.\ $(\Tmc,\Sigma_\Csf)$. Thus 
	answer `$\Amc\models Q_{\Imc}$'.

      \item Otherwise, if \Amc contains an individual name $a$ with
        $V(a)\in \Amc$ and $V_d(a)\not\in \Amc$ for each $d\in
        \Delta^{\Imc}$, then $\Amc \not\models Q_{\Imc}$. In fact,
        we can build a model of \Amc and \Tmc that makes $q$
        false in the following way:
        Line~1 of \Tmc can be satisfied by linking every element to
        $a$ via $\mn{val}$; Line~2 can be satisfied since Case~(2)
        above does not apply; Line~3 can trivially be satisfied. All
        remaining choices can be taken in an arbitrary way.

  \end{enumerate}
  If none of the above applies, let $\Amc_{|A}$ be the restriction
  of \Amc to $\{ a\in {\sf Ind}(\Amc) \mid A(a) \in \Amc\}$. Since Case~(1)
  above does not apply, $\Amc_{|A}$ is non-empty. Let $\Jmc_{A}$ be the $\Sigma$-reduct of the 
  interpretation corresponding to $\Amc_{|A}$.
  We show that $\Jmc_{A}\in \mn{CSP}(\Imc)^{\mn{sur}}$ iff $\abox\not\models Q_{\Imc}$.

  $(\Leftarrow)$. Assume that
  $\abox\not\models Q_{\Imc}$. Then there is a model
  $\Jmc$ of \Tmc and \Amc that respects closed predicates
  $\Sigma_{\Csf}$ and such that $\Jmc \not\models q$. By the first
  line of $\Tmc$, since $V$ is closed, Case~(3) does not apply, and by $q_{1}$, for each 
  $a \in \mn{Ind}(\Amc_{|A})$ there is
  exactly one $d \in \Delta^\Imc$ such that $a \in (\exists \mn{val}. V_d)^{\Jmc}$. 
  Define a homomorphism $h:\Jmc_{A} \rightarrow \Imc$ by
  mapping each $a$ in $\Amc_{|A}$ to the value $d \in \Delta^\Imc$ thus
  determined. By $q_{2}$ and $q_{3}$, $h$ is indeed a
  homomorphism. By the third line of \Tmc and $q_{4}$
  and since $A$ is closed, $h$ must be surjective.

  $(\Rightarrow)$. Assume that $\Jmc_{A} \in
  \mn{CSP}(\Imc)^{\mn{sur}}$, and let $h$ be a surjective homomorphism
  from $\Jmc_{A}$ to \Imc. Build an interpretation $\Jmc$ as
  follows. Start by setting $\Jmc =\Imc_\Amc$. Since Case~(2) above does
  not apply, for each $d \in \Delta^\Imc$ we can select an individual name
  $a_d$ of $\Amc$ such that $V(a_{d})$ and $V_d(a_{d})$ are in \Amc. For each
  individual name $a$ in $\Amc_{|A}$, extend $\Jmc$ by adding $(a,a_{h(a)})$ to
  $\mn{val}^{\Jmc}$ and $(a,a_d)$ to $\mn{aux}_d^{\Jmc}$ for each $d \in
  \Delta^\Imc$. Since $h$ is surjective, for each $d \in \Delta^\Imc$
  there must be an individual name $a_d'$ of $\Amc_{|A}$ with $h(a_d')=d$. Further
  extend $\Jmc$ by adding $(a,a_d')$ to $\mn{force}_d^{\Jmc}$ for all $a
  \in \mn{Ind}(\Amc_{|A})$ and all $d \in \Delta^\Imc$.  It is readily
  checked that $\Jmc$ is a model of $\Tmc$ and $\Amc$ that respects
  closed predicates $\Sigma_{\Csf}$, and that $\Jmc \not\models q$. Thus, $\Amc\not\models Q_{\Imc}$, as required.
\end{proof}
%
%
%We say that two decision problems $P_1$ and $P_2$ are
%\emph{polynomially equivalent} if $P_1$ polynomially reduces to $P_2$
%and vice versa.
%% , and identify Boolean OMQCs with the decision problem of answering
%% them. (DEFINED ABOVE)
%%
%\begin{lem}\label{lem:scsp_to_dllitecore_ucq}
%  The complement of $\mn{CSP}(\Imc)^{\mn{sur}}$ is polynomially
%  equivalent to evaluating $Q$.
%\end{lem}
%%
Note that the same reduction works when \dllitecore is
replaced with \EL. One simply has to replace the TBox $\Tmc$ by the $\EL$ TBox 
$$
\begin{array}{rcl}
	\Tmc' &= &  \{  A \sqsubseteq \exists \mn{val}.V\} \, \cup\\[1mm]
	&& \{  A \sqsubseteq \exists \mn{aux}_d . (V \sqcap V_d) \mid  d \in \Delta^\Imc \} \, \cup \\[1mm]
	&& \{ A \sqsubseteq \exists \mn{force}_d . A \mid d\in\domain \} 
\end{array}	
$$
and observe that all CQs in $q$ have the form $\exists x q'(x)$
with $q'(x)$ a dtCQ which enables the following modification:
introduce a fresh concept name $B$, then for each CQ $\exists x q'(x)$
in $q$, take the \EL concepts $C_{q'}$ that corresponds to $q'(x)$ and
extend $\Tmc'$ with $C_{q'} \sqsubseteq B$, and finally replace $q$
with the BAQ $\exists x \, B(x)$. 

We now describe how to extend the reduction from surjective CSPs to
generalized surjective CSPs. Let $\mn{CSP}(\Gamma)^\mn{sur}$ be such a
CSP. Let $\Gamma = \{ \Imc_1,\dots,\Imc_n \}$. The main idea is to use
$n$ copies of each non-$\Sigma$ symbol in the above reduction, one for
each template in $\Gamma$.  Let the $i$-th copy of $A$ be $A_i$, of
$\mn{val}$ be $\mn{val}_i$, and so on. This gives us $n$ copies of the
TBox \Tmc and the UCQ $q$ in the above reduction, which we call
$\Tmc_1,\dots,\Tmc_n$ and $q_1,\dots,q_n$. Note that the $\Tmc_i$ do
not share any symbols and that the $q_i$ share only the symbols from
$\Sigma$. We define $Q_\Gamma=(\Tmc,\Sigma_\Asf,\Sigma_\Csf,q)$ where
$\Tmc = \Tmc_1 \cup \cdots \cup \Tmc_n$, $q$ is the BUtCQ obtained
from $q_1 \wedge \cdots \wedge q_n$ by pulling disjunction outside,
and $\Sigma_\Asf$ and $\Sigma_\Csf$ are defined as expected. It is
then possible to prove an analogue of
Lemma~\ref{lem:comp-csp-reduc-to-omqc}, we only sketch the required
modifications. In the reduction of $\mn{CSP}(\Gamma)^{\mn{sur}}$ to
the evaluation problem for $Q_{\Gamma}$, one builds on ABox \Amc for
each $\Imc \in \Gamma$, each as in the corresponding of the proof of
Lemma~\ref{lem:comp-csp-reduc-to-omqc}, and then takes their union.
In the reduction of the evaluation problem for $Q_{\Gamma}$ to
$\mn{CSP}(\Gamma)^{\mn{sur}}$, one first checks whether for some $i$,
the given $\Sigma_\Asf$-ABox \Amc contains an assertion $A_i(a)$, but
no assertion $V_i(a)$ and answers `$\Amc \models Q_\Gamma$' if this is
the case (this corresponds to Point~(2) in the original proof). One
then checks whether for some $i$ there is no assertion of the form
$A_i(a)$ and answers `$\Amc \not\models Q_\Gamma$' if this is
the case (corresponding to Point~(1) in the original proof). Point~(3)
and the remainder of the reduction need no major adaptations.

In summary, we have obtained the following result.
\begin{thm}
  \label{thn:cspfirstdir}
  For every $\mn{CSP}(\Gamma)^\mn{sur}$, % in binary signature,
  there is an OMQC $Q_{\Gamma}$ in  
  $(\dllitecore,\NC,\text{BUtCQ})$ such that the 
  complement of $\mn{CSP}(\Gamma)^\mn{sur}$ has the same complexity as 
  the evaluation problem for $Q_{\Gamma}$, up to polynomial time reductions. The same 
  holds for $(\EL,\NC,\text{BAQ})$.
\end{thm}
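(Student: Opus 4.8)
The plan is to obtain the statement by assembling Lemma~\ref{lem:comp-csp-reduc-to-omqc} with the variant of the construction described immediately above. For the $(\dllitecore,\NC,\text{BUtCQ})$ part, given $\mn{CSP}(\Imc)^{\mn{sur}}$ I would simply take the OMQC $Q=(\Tmc,\Sigma_\Asf,\Sigma_\Csf,q)$ constructed above; by inspection $Q$ lies in $(\dllitecore,\NC,\text{BUtCQ})$, since $\Tmc$ consists of \dllitecore CIs, $\Sigma_\Csf$ contains only concept names, and $q$ is a disjunction of Boolean tree-shaped CQs. Lemma~\ref{lem:comp-csp-reduc-to-omqc} then already states exactly that the complement of $\mn{CSP}(\Imc)^{\mn{sur}}$ and the evaluation problem for $Q$ are mutually polynomial-time reducible, which is the claim.

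For the $(\EL,\NC,\text{BAQ})$ part I would carry out the modification indicated above and check that it preserves the reductions. Concretely, I would replace $\Tmc$ by the $\EL$ TBox $\Tmc'$, observe that every disjunct of $q$ has the form $\exists x\, q'(x)$ with $q'(x)$ a dtCQ (routine for $q_1,q_2,q_3$; for $q_4$ the directed tree is rooted at the variable bound to the first argument of $\mn{force}_d$), introduce a fresh concept name $B$, add $C_{q'}\sqsubseteq B$ to $\Tmc'$ for each such $q'$ with $C_{q'}$ the $\EL$ concept corresponding to $q'(x)$, and replace $q$ by the BAQ $q_B=\exists x\, B(x)$; the resulting OMQC $Q'$ lies in $(\EL,\NC,\text{BAQ})$. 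Then I would verify two things. First, for every $\Sigma_\Asf$-ABox $\Amc$ one has $\Amc\models(\Tmc',\Sigma_\Asf,\Sigma_\Csf,q_B)$ iff $\Amc\models(\Tmc',\Sigma_\Asf,\Sigma_\Csf,q)$: since $B$ occurs in $\Tmc'$ only in the inclusions $C_{q'}\sqsubseteq B$, any model of $\Tmc'$ and $\Amc$ respecting $\Sigma_\Csf$ can be turned into one in which $B$ is interpreted as the union of the extensions of the $C_{q'}$ without leaving the relevant model class, and in such a minimal model $\exists x\, B(x)$ holds iff some $C_{q'}$ is nonempty iff $q$ holds. Second, the two reductions in the proof of Lemma~\ref{lem:comp-csp-reduc-to-omqc} adapt with only cosmetic changes to $\Tmc'$ and $Q'$: the roles $\mn{val},\mn{aux}_d,\mn{force}_d$ belong to neither $\Sigma$ nor $\Sigma_\Csf$, hence never occur in input ABoxes, and the only properties of $\Tmc$ used in that proof are that $A\sqsubseteq\exists\mn{val}.V$, $A\sqsubseteq\exists\mn{aux}_d.(V\sqcap V_d)$, and $A\sqsubseteq\exists\mn{force}_d.A$ are entailed --- which also holds for $\Tmc'$ --- together with the closedness of $A$, $V$, and the $V_d$, which is unchanged, and the final query check, which is handled by the first point. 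Combining the two points with the adapted lemma yields mutual polynomial-time reducibility between the complement of $\mn{CSP}(\Imc)^{\mn{sur}}$ and the evaluation problem for $Q'$.

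I expect the only genuinely non-mechanical step to be the second verification above, specifically the observation that $\Tmc'$ --- which, unlike $\Tmc$, does not force \emph{every} $\mn{val}$-, $\mn{aux}_d$-, or $\mn{force}_d$-successor into the prescribed closed concept --- still suffices for the ``if'' halves of both reductions. This holds because in every model constructed in that proof, and in every input ABox, the relevant edges are created only towards elements that the existential inclusions of $\Tmc'$ together with the closedness of $A$, $V$, and the $V_d$ already force into the prescribed closed concept, so the stronger ``$\exists r^{-}\sqsubseteq\cdots$'' inclusions of $\Tmc$ are never actually needed. Once this is in place, everything else is bookkeeping and the statement follows.
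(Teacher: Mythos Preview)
Your proposal is correct and follows essentially the same approach as the paper. For the $\dllitecore$ part you invoke Lemma~\ref{lem:comp-csp-reduc-to-omqc} exactly as the paper does, and for the $\EL$ part you carry out precisely the modification the paper describes (replace $\Tmc$ by $\Tmc'$, observe each disjunct is $\exists x\,q'(x)$ with $q'$ a dtCQ, absorb the disjuncts via $C_{q'}\sqsubseteq B$, and use the BAQ $\exists x\,B(x)$). Your additional verification---that the weaker $\EL$ inclusions $A\sqsubseteq\exists r.C$ suffice in place of the $\dllitecore$ pair $A\sqsubseteq\exists r$, $\exists r^-\sqsubseteq C$, because the arguments in Lemma~\ref{lem:comp-csp-reduc-to-omqc} only ever exploit the existence of \emph{some} successor in the prescribed closed concept together with closedness---is a point the paper passes over with ``the same reduction works'', so you are simply being more explicit where the paper is terse.
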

We note that, as can easily be verified by checking the
constructions in the proof of Lemma~\ref{lem:comp-csp-reduc-to-omqc},
the complement of $\mn{CSP}(\Gamma)^\mn{sur}$ and the evaluation problem 
for $Q_{\Gamma}$ actually have the same complexity up to \emph{FO reductions} 
\cite{Immerman}.  This links the complexity of the two problems even 
closer. For example, if one is complete for {\sc LogSpace} or in {\sc 
AC}$^0$, then so is the other.

\smallskip

We now establish a rather general converse of Theorem~\ref{thn:cspfirstdir} by
showing that for every OMQC $Q$ from
$(\alchi,\NC,\text{BUtCQ})$, there is a generalized
surjective CSP % (over binary signature)
that has the same complexity as the complement of the evaluation problem for $Q$, up to polynomial time
reductions. 

% \footnote{{\color{blue}Alternatively, we could go via an
% intermediate step: reduce to homomorphism problems where some unary
% predicates must be surjective, and then reduce that to surjective
% homomorphism problems. The first translation then preserves
% expressive power. We could do the same with fixed role names.  In
% this way, we shift much of the analysis to the CSP-world.}}
%
Let $Q=(\Tmc,\Sigma_\Asf,\Sigma_\Csf,q)$ be an OMQC from
$(\alchi,\NC,\text{BUtCQ})$.  We can assume w.l.o.g.\ that $q$ is a
BAQ, essentially because every tCQ can be rewritten into an \ALCI
concept; see the remark on \EL and BAQs made after the proof of
Lemma~\ref{lem:comp-csp-reduc-to-omqc}. Thus, let $q=\exists x \, A_{0}(x)$
with $A_{0}$ a concept name in $\Tmc$.
% has the form $\exists x \,
% A_0(x)$ with $A_0$ occurring in \Tmc. The argument as to why this assumption is w.l.o.g. is given above already (for dtCQs and $\mathcal{EL}$):
% every disjunct of any BUtCQ takes the form $\exists x\; q'(x)$ with $q'(x)$ a tCQ and thus one can add all $C_{q'} \sqsubseteq A_0$ 
% to the TBox with $A_0$ a fresh concept name and replace $q$ with $\exists x \, A_0(x)$.
%
We use the notation for types introduced in
Section~\ref{sec:dichotomytbox}. A subset $T$ of the set
${\sf TP}(\Tmc)$ of $\Tmc$-types is \emph{realizable in a
  countermodel of $Q$} if there is a $\Sigma_\Asf$-ABox \Amc and model \Imc
of $\Tmc$ and $\Amc$ that respects closed predicates $\Sigma_\Csf$
such that $\Imc \not\models q$ and
$T = \{ \mn{tp}_\Imc(a) \mid a \in \mn{Ind}(\Amc) \}$.
The desired surjective generalized CSP is defined by taking one template for each 
$T \subseteq \mn{TP}(\Tmc)$ that is realizable in a countermodel of $Q$. 
The signature $\Sigma$ of the CSP comprises the predicates
in $\Sigma_\Asf$ and one concept name $\overline{A}$ for each concept name in $\Sigma_\Csf$. We assume
w.l.o.g.\ that there is at least one concept name in $\Sigma_{\Csf}$
and at least one concept name $A_{\mn{open}} \in \Sigma_{\Asf}
\setminus \Sigma_{\Csf}$.
% \footnote{Adding fresh concept names to $\Sigma_\Csf$ or to
% $\Sigma_\Asf \setminus \Sigma_\Csf$ does clearly not change the
% complexity of OMQCs.}

\smallskip

Pick for every $A\in \Sigma_{\Csf}$ an element $d_{A}$. 
Then for each $T \subseteq \mn{TP}(\Tmc)$ realizable in a countermodel of $Q$ 
we define the template $\Imc_T$ as follows:
$$
\begin{array}{rcl}
  \Delta^{\Imc_T} &=& T \uplus \{ d_A \mid A \in \Sigma_{\Csf} \}\\[1mm]
  A^{\Imc_T} &=& \{ t \in T \mid A \in t \} \cup \{ d_B \mid B \in \Sigma_{\Csf}\setminus \{A\} \} \\[1mm]
  \overline{A}^{\Imc_T} &=& \{ t \in T \mid A \notin t \} \cup \{ d_B
  \mid B \in \Sigma_{\Csf}\setminus\{A\} \} \\[1mm]
  r^{\Imc_T} &=& \{ (t,t') \in T \times T\mid t \rightsquigarrow_r  t'
  \} \, \cup \\[1mm]
  && \{ (d,d') \in \Delta^{\Imc_T} \times \Delta^{\Imc_T} \mid
  \{ d,d'\} \setminus T \neq \emptyset \} .
\end{array}
$$
Note that, in $\Imc_T$ restricted to domain $T$, $\overline{A}$ is
interpreted as the complement of $A$. At each element $d_A$, all
concept names except $A$ and $\overline{A}$ are true, and these
elements are connected to all elements with all roles. Intuitively, we
need the concept names $\overline{A}$ to ensure that when an assertion
$A(a)$ is missing in an ABox \Amc with $A$ closed, then $a$ can only
be mapped to a template element that does not make $A$ true; this is
done by extending \Amc with $\overline{A}(a)$ and exploiting that
$\overline{A}$ is essentially the complement of $A$ in each
$\Imc_T$. The elements $d_A$ are then needed to deal with inputs to
the CSP where some point satisfies neither $A$ nor $\overline{A}$.
%
% , and that
% %
% \begin{equation}
% \tag{$*$}
% t \in C^{\Imc_T} \text{ iff } C \in t \text{ for all } C \in
% \mn{sub}(\Tmc,q)
% \text{ and } t \in T.
% \end{equation}
%
Let $\Gamma_{Q}$ be the set of all interpretations $\Imc_T$ obtained in
the described way.
%
% closed/surjective: wenn man in der TBox $A \sqsubseteq \exists r
% . B$ hat mit $B$ closed, dann muss jede ABox, die nen $A$-Punkt
% enth\"alt, auch nen $B$-Punkt enthalten. Also geht BELIEBIGE
% Teilmenge $\Gamma$ nicht.
%
% \bigskip SATISFIABILITY / CONSISTENCY
%
\begin{lem}
  \label{lem:first}
  Let $Q=(\Tmc,\Sigma_\Asf,\Sigma_\Csf,q)$ be an OMQC from
  $(\alchi,\NC,\text{BUtCQ})$. Then the evaluation problem for $Q$ reduces in polynomial time to 
  the complement of $\mn{CSP}(\Gamma_{Q})^{\mn{sur}}$.
\end{lem}
\begin{proof}
  Let \Amc be a $\Sigma_{\Asf}$-ABox that is an input for $Q$ and let 
  $\Amc'$ be its extension with
  \begin{enumerate}

  \item all assertions $\overline{A}(a)$ such that $A \in
    \Sigma_{\Csf}$, $a\in {\sf Ind}(\Amc)$, and $A(a) \notin \Amc$;

  \item assertions $A_{\mn{open}}(a_B)$, where $a_{B}$ is a fresh individual name
   for each $B \in \Sigma_{\Csf}$.

  \end{enumerate}
  We claim that $\Amc \not\models Q$ iff there is an interpretation
  $\Imc_T \in \Gamma_{Q}$ such that there exists a surjective
  homomorphism from $\Imc_{\Amc'}$ to $\Imc_T$. The assertions of
  type~(2) are needed to obtain a homomorphism that is surjective in
  the `$\Rightarrow$' direction, despite the presence of the elements
  $d_B$ in the templates in~$\Gamma_Q$.

  \smallskip
  \noindent
  ($\Leftarrow$). Let $\Imc_T \in \Gamma_{Q}$ and let $h$ be a surjective
  homomorphism from $\Imc_{\Amc'}$ to $\Imc_T$. Note that each element $a$ of
  ${\sf Ind}(\Amc)$ is mapped by $h$ to some element $t \in T$ of $\Imc_T$ because
  $A(a) \in \Amc'$ or $\overline{A}(a) \in \Amc'$ for every $A \in
  \Sigma_{\Csf}$ (which is non-empty). Since $\Imc_T \in \Gamma_{Q}$,
  there are a $\Sigma_{\Asf}$-ABox~\Bmc and a model \Imc of $\Tmc$ and
  $\Bmc$ that respects closed predicates $\Sigma_{\Csf}$ such that
  $\Imc \not\models q$ and $ T = \{ \mn{tp}_\Imc(a) \mid a \in
  \mn{Ind}(\Bmc) \}$.
%
  % Let $\mu$ be a function that assigns to every $a
  % \in \mn{Ind}(\Amc)$ an individual $\mu(a) \in \mn{Ind}(\Bmc)$ such
  % that $\mn{tp}_\Imc(b) = h(a)$.
%
  % Assume w.l.o.g.\ that $\mn{Ind}(\Amc) \cap
  % \mn{Ind}(\Bmc)=\emptyset$.
  For each $a \in \mn{Ind}(\Amc)$, set $t_a=h(a) \in T$ and for each
  $d \in \Delta^\Imc$, set $t_d=\mn{tp}_\Imc(d)$.  Construct an
  interpretation \Jmc as follows:
  $$
  \begin{array}{rcl}
    \Delta^\Jmc &=& \mn{Ind}(\Amc) \cup (\Delta^\Imc \setminus \mn{Ind}(\Bmc)) \\[1mm]
    A^\Jmc &=& \{ d \in \Delta^\Jmc \mid A \in t_d \} \\[1mm]
    % A^\Jmc &=& \{ a \in \mn{Ind}(\Amc) \mid A \in h(a) \} \, \cup
    % \\[1mm]
    % && A^\Imc \setminus \mn{Ind}(\Bmc) \\[1mm]
    % r^\Jmc &=& \{ (a,b) \in \mn{Ind}(\Amc) \times \mn{Ind}(\Amc)
    % \mid
    % r(a,b) \in \Amc \} \\[1mm]
    % r^\Jmc &=& \{ (a,b) \in \mn{Ind}(\Amc) \times \mn{Ind}(\Amc)
    % \mid
    % (\mu(a),\mu(b)) \in r^\Imc \} \, \cup \\[1mm]
    % &&\{ (a,d) \in \mn{Ind}(\Amc) \times (\Delta^\Imc \setminus
    % \mn{Ind}(\Bmc)) \mid (\mu(a),d) \in r^ \Imc\} \, \cup \\[1mm]
    % &&\{ (d,a) \in (\Delta^\Imc \setminus
    % \mn{Ind}(\Bmc)) \times \mn{Ind}(\Amc)\mid (d,\mu(a)) \in r^
    % \Imc\}
    % \, \cup \\[1mm]
    % &&r^\Imc \cap ((\Delta^\Imc \setminus
    % \mn{Ind}(\Bmc)) \times (\Delta^\Imc \setminus
    % \mn{Ind}(\Bmc))) \\[1mm]
    r^\Jmc &=& \{ (d,e) \in \Delta^\Jmc \times \Delta^\Jmc \mid
    t_d\rightsquigarrow_r t_e\}.
  \end{array}
  $$
  %
  % for all closed roles $r$ and open roles $s$.
  First note that \Jmc is clearly a model of $\Amc$ that respects
  closed predicates $\Sigma_{\Csf}$. Specifically, if $A(a) \in \Amc$,
  then $h(a) \in A^{\Imc_T}$, thus $A \in h(a) =t_a$ by construction
  of $\Imc_T$ which yields $a \in A^\Jmc$ by construction of \Jmc; if
  $r(a,b) \in \Amc$, then $(h(a),h(b)) \in r^{\Imc_T}$, thus $t_a
  \rightsquigarrow_r t_b$ implying $(a,b) \in r^\Jmc$; finally if $A
  \in \Sigma_{\Csf}$ and $d \in A^\Jmc$, then we must have $d=a$ for
  some $a \in \mn{Ind}(\Amc)$ by definition of \Jmc and since $d
  \notin A^\Imc$ for all $d \in \Delta^\Imc \setminus \mn{Ind}(\Bmc)$.
  Thus, $A \in t_a=h(a)$ by construction of \Jmc. This implies $A(a)
  \in \Amc$ since otherwise $\overline{A}(a) \in \Amc'$, which would
  imply $\overline{A} \in h(a)$, in contradiction to $A \in h(a)$.

  \smallskip
  
  It thus remains to show that \Jmc is a model of \Tmc and $\Jmc \not
  \models q$. By definition, \Jmc satisfies all RIs in
  \Tmc. Satisfaction of the CIs in $\Tmc$ and $\Jmc \not
  \models q$ follow from the subsequent claim together with the condition that no type in $T$ contains $A_{0}$ and each type in $\Imc_T$ is satisfied in a model of $\Tmc$.
  \begin{claim*}
    For all $d \in \Delta^\Jmc$ and $C \in \mn{cl}(\Tmc)$, we have $d 
    \in C^\Jmc$ iff $C \in t_d$.
  \end{claim*}
  \begin{clmproof}
  The proof is by induction on the structure of $C$, with the
  induction start and the cases $C= \neg D$ and $C = D_1 \sqcap D_2$
  being trivial.  Thus let $C = \exists r . D$ and first assume $d \in
  C^\Jmc$.  Then there is an $e \in D^\Jmc$ with $(d,e) \in
  r^\Jmc$. Thus $t_d \rightsquigarrow_r t_e$ by definition of \Jmc,
  and IH yields $D \in t_e$. By definition of `$\rightsquigarrow_r$',
  we must thus have $C \in t_d$ as required.  Now let $C \in t_d$. We
  distinguish two cases:
  \begin{itemize}

  \item $d=a \in \mn{Ind}(\Amc)$.

    Let $a' \in \mn{Ind}(\Bmc)$ be such that
    $h(a)=\mn{tp}_\Imc(a')$. Since $t_a=h(a)$, we must have $a' \in
    C^\Imc$ and thus there is some $e \in D^\Imc$ with $(a',e) \in
    r^\Imc$, which yields $\mn{tp}_\Imc(a') \rightsquigarrow_r
    \mn{tp}_\Imc(e)$ and $D \in \mn{tp}_\Imc(e)$. If $e=b' \in
    \mn{Ind}(\Bmc)$, then since $h$ is surjective there is some $b \in
    \mn{Ind}(\Amc)$ with $h(b)=\mn{tp}_\Imc(b')$. We have
    $t_a=\mn{tp}_\Imc(a')$ and $t_b=\mn{tp}_\Imc(b')$, thus $t_a
    \rightsquigarrow_r t_b$ which yields $(a,b) \in r^\Jmc$ by
    definition of \Jmc. We also have $D \in t_b$, which by IH yields
    $b \in D^\Jmc$.

  \item $d \notin \mn{Ind}(\Amc)$.

    Then $d \in \Delta^\Imc \setminus \mn{Ind}(\Bmc)$. Since $C \in
    t_d$, we thus have $C \in \mn{tp}_\Imc(d)$. Thus, there is an $e
    \in D^\Imc$ with $(d,e) \in r^\Imc$, which implies
    $\mn{tp}_\Imc(d) \rightsquigarrow_r \mn{tp}_\Imc(e)$ and $D \in
    \mn{tp}_\Imc(e)$.  If $e \notin \mn{Ind}(\Bmc)$, then the
    definition of \Jmc and IH yields $d \in C^\Jmc$. Thus assume $e=b'
    \in \mn{Ind}(\Bmc)$. Since $h$ is surjective, there is some $b \in
    \mn{Ind}(\Amc)$ with $h(b)=\mn{tp}_\Imc(b')$. Since
    $t_d=\mn{tp}_\Imc(d)$ and $t_b=h(b)$, we have $t_d
    \rightsquigarrow_r t_b$, thus $(d,b) \in r^\Jmc$. By IH, $D \in
    \mn{tp}_\Imc(b')=h(b)$ yields $b \in D^\Jmc$.
    
  \end{itemize}
  \end{clmproof}
  
  \noindent
  $(\Rightarrow)$.  Assume that $\Amc \not\models Q$. Then there is a 
  model \Imc of \Tmc and $\Amc$ that respects closed predicates 
  $\Sigma_{\Csf}$ and such that $\Imc\not\models q$.  Let $\Imc_T \in 
  \Gamma_{Q}$ be the corresponding template, that is, $ T = \{ 
  \mn{tp}_\Imc(a) \mid a \in \mn{Ind}(\Amc) \}$. For each $a \in 
  \mn{Ind}(\Amc)$, set $h(a)=\mn{tp}_\Imc(a)$; for each $a_B \in 
  \mn{Ind}(\Amc') \setminus \mn{Ind}(\Amc)$, set $h(a_B)=d_B$ (recall that such $a_{B}$ have
  been added to ${\sf Ind}(\Amc)$ for every $B\in \Sigma_{\Csf}$). It is 
  readily checked that $h$ is a surjective homomorphism from $\Imc_{\Amc'}$ 
  to~$\Imc_T$.  In particular, $\overline{A}(a) \in \Amc'$ implies 
  $A(a) \notin \Amc'$, thus $A \notin \mn{tp}_\Imc(a)$ (since $A$ is 
  closed), which yields $h(a)=\mn{tp}_\Imc(a) \in 
  \overline{A}^{\Imc_\Tmc}$ by definition of~$\Imc_\Tmc$.
\end{proof}

\begin{lem}
   Let $Q=(\Tmc,\Sigma_\Asf,\Sigma_\Csf,q)$ be an OMQC from
  $(\alchi,\NC,\text{BUtCQ})$. Then $\mn{CSP}(\Gamma_{Q})^{\mn{sur}}$ 
  reduces in polynomial time to the complement of the evaluation problem for $Q$.
\end{lem}
\begin{proof}
  Let $\Amc'$ be the ABox corresponding to an input $\Jmc$ for $\mn{CSP}(\Gamma_{Q})^{\mn{sur}}$. An
  element $a$ of ${\sf Ind}(\Amc')$ is \emph{special for} $A \in \Sigma_{\Csf}$ if
  $A(a) \notin \Amc'$ and $\overline{A}(a) \notin \Amc'$; it is
  \emph{special} if it is special for some $A \in \Sigma_{\Csf}$.
  First perform the following checks:
  \begin{enumerate}

  \item if there is a non-special element $a$ of ${\sf Ind}(\Amc')$ such that
    $A(a) \in \Amc'$ and $\overline{A}(a) \in \Amc'$ for some $A \in
    \Sigma_{\Csf}$, then return `no' (there is no template in $\Gamma_{Q}$
    that has any element to which $a$ can be mapped by a
    homomorphism);

  \item if $\Amc'$ does not contain a family of distinct elements
    $(a_A)_{A \in \Sigma_{\Csf}}$, such that each $a_A$ is special for
    $A$, then return `no' (we cannot map surjectively to the elements
    $d_A$ of the templates in $\Gamma_{Q}$).

  \end{enumerate}
  Note that, to check Condition~2, we can go through all candidate
  families in polytime since the size of $\Sigma_{\Csf}$ is constant.
  If none of the above checks succeeds, then let $\Amc$ be the ABox
  obtained from $\Amc'$ by
  \begin{itemize}

  \item deleting all assertions of the form $\overline{A}(a)$ and

  \item deleting all special elements.

  \end{itemize}
  We have to show that $\Amc \not\models Q$ iff there exists an $\Imc_T 
  \in \Gamma_{Q}$ such that there is a surjective homomorphism from $\Jmc$  
  to~$\Imc_T$.

  \smallskip
  \noindent
  ($\Leftarrow$). Let $\Imc_T \in \Gamma_{Q}$ and let $h$ be a surjective
  homomorphism from $\Jmc$ to $\Imc_T$. Note that each element $a$ of
  ${\sf Ind}(\Amc)$ is mapped by $h$ to some element $t \in T$ of $\Imc_T$ because
  $A(a) \in \Amc'$ or $\overline{A}(a) \in \Amc'$ for every $A \in
  \Sigma_{\Csf}$ (which is non-empty). Since $\Imc_T \in \Gamma_{Q}$,
  there is a $\Sigma_{\Asf}$-ABox~\Bmc and model \Imc of $\Tmc$ and
  $\Bmc$ that respects closed predicates $\Sigma_{\Csf}$ and such that
  $\Imc \not\models q$ and $ T = \{ \mn{tp}_\Imc(a) \mid a \in
  \mn{Ind}(\Bmc) \}$. We can now proceed as in the proof of
  Lemma~\ref{lem:first} to build a model $\Jmc'$ of \Tmc and $\Amc$ that
  respects closed predicates $\Sigma_{\Csf}$ and such that $\Jmc'\not\models q$.

  \smallskip
  \noindent
  ($\Rightarrow$).  Assume that $\Amc
  \not\models Q$. Then there is a model \Imc of
  \Tmc and $\Amc$ that respects closed predicates $\Sigma_{\Csf}$ and
  such that $\Imc\not\models q$.  Let $\Imc_T \in \Gamma_{Q}$ be the
  corresponding template, that is, $ T = \{ \mn{tp}_\Imc(a) \mid a \in
  \mn{Ind}(\Amc) \}$. For each $a \in \mn{Ind}(\Amc)$, set
  $h(a)=\mn{tp}_\Imc(a)$; for each element $a \in \mn{Ind}(\Amc')
  \setminus \mn{Ind}(\Amc)$, we can choose some $A \in \Sigma_{\Csf}$
  such that $A(a) \notin \Amc'$ and $\overline{A}(a) \notin \Amc'$, and
  set $h(a)=d_A$; by Check~2 above, these choices can be made such
  that the resulting map $h$ is surjective. Moreover, it is readily
  checked that $h$ is a homomorphism from $\Jmc$ to $\Imc_T$.  In
  particular, $\overline{A}(a) \in \Amc'$ implies $A(a) \notin \Amc'$
  by Check~1, thus $A \notin \mn{tp}_\Imc(a)$ (since $A$ is closed),
  which yields $h(a)=\mn{tp}_\Imc(a) \in \overline{A}^{\Imc_\Tmc}$ by
  definition of~$\Imc_\Tmc$.
\end{proof}

We have thus established the following result.
\begin{thm}
  \label{thn:cspsecdir}
  For every OMQC $Q$ from $(\alchi,\NC,\text{BUtCQ})$, there
  is a generalized $\mn{CSP}(\Gamma_{Q})^\mn{sur}$ 
  such that the evaluation problem for $Q$ has the same complexity as the complement of
  $\mn{CSP}(\Gamma_{Q})^\mn{sur}$, up to polynomial time reductions.
\end{thm}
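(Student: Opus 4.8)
The plan is to obtain the theorem by combining the two polynomial-time reductions established just above, after first normalizing the query. So the first step is to reduce to the case where the actual query of $Q$ is a Boolean atomic query $\exists x\, A_0(x)$ with $A_0$ a concept name occurring in $\Tmc$. This is harmless: every tCQ is equivalent to an $\mathcal{ALCI}$ concept, and $\mathcal{ALCHI}$ TBoxes are closed under adding fresh concept inclusions, so a BUtCQ can be turned into a BAQ by introducing one fresh concept name per disjunct together with a CI subsuming the corresponding $\mathcal{ALCI}$ concept under it, exactly as in the remark following the proof of Lemma~\ref{lem:comp-csp-reduc-to-omqc}. Since $\Tmc$ and $q$ are fixed, this preprocessing is trivially polynomial and leaves the evaluation problem unchanged.

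The second step is to set up the generalized surjective CSP: enumerate all $T\subseteq\mn{TP}(\Tmc)$ that are realizable in a countermodel, and for each such $T$ build the $\Sigma$-interpretation $\Imc_T$ over domain $T\uplus\{d_A\mid A\in\Sigma_\Csf\}$, where $\Sigma$ contains the ABox predicates $\Sigma_\Asf$, a complement concept name $\overline{A}$ for each $A\in\Sigma_\Csf$, and $A_0$; the relations on $T$ are given by $\rightsquigarrow_r$, and the $d_A$ are fully connected and satisfy all concept names except $A,\overline A$. Because $\Tmc$ is fixed, $\mn{TP}(\Tmc)$ has constant size, so $\Gamma$ is a finite set of constant-size templates and thus a bona fide generalized surjective CSP instance. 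The crucial design features to stress are that an input ABox \Amc for $Q$ is mapped to a CSP input by adding $\overline{A}(a)$ whenever $A\in\Sigma_\Csf$ and $A(a)\notin\Amc$, so that $\overline A$ being (on $T$) the complement of $A$ forces homomorphisms into $\Imc_T$ to respect the closed predicate $A$; and that the auxiliary elements $d_A$ together with the open name $A_{\mn{open}}$ absorb CSP inputs where some point satisfies neither $A$ nor $\overline A$, while the surjectivity condition onto $T$ mirrors the requirement that every type in $T$ be attained.

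With these ingredients, the theorem is immediate from Lemma~\ref{lem:first} and the lemma following it, which give the two polynomial-time reductions between the evaluation problem for $Q$ and the complement of $\mn{CSP}(\Gamma)^{\mn{sur}}$; "same complexity up to polynomial-time reductions" is then just the conjunction of the two. For the direction from $Q$ to the complement of $\mn{CSP}(\Gamma)^{\mn{sur}}$, one extends \Amc to $\Amc'$ by the complement assertions and fresh $d_B$-witnesses, and argues that $\Amc\not\models Q$ iff $\Imc_{\Amc'}$ has a surjective homomorphism into some $\Imc_T\in\Gamma$. The forward implication reads off the realizable type-set of a countermodel directly; the backward implication reconstructs a countermodel \Jmc of \Tmc and \Amc by grafting the anonymous part of the witnessing model for $T$ onto $\mn{Ind}(\Amc)$, using surjectivity of $h$ to pull named witnesses of existential restrictions back into $\mn{Ind}(\Amc)$. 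For the converse direction one does the symmetric bookkeeping: reject inputs that are obviously non-homomorphic or that lack the family of distinct special elements needed to cover the $d_A$, and otherwise strip the $\overline A$-assertions and special elements to recover an ABox for $Q$.

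I expect the main obstacle to be exactly the countermodel reconstruction inside the proof of Lemma~\ref{lem:first}, namely the claim that $d\in C^\Jmc$ iff $C\in t_d$ for all $C\in\mn{cl}(\Tmc)$: the induction on $C$ must verify that existential restrictions are witnessed in \Jmc, and the only nontrivial case uses surjectivity of the CSP homomorphism to replace a witness lying among the named elements $\mn{Ind}(\Bmc)$ of the realizing model by a corresponding element of $\mn{Ind}(\Amc)$. Everything else — the reduction to BAQs, the polynomiality of both directions given the fixed $\Tmc$ and $q$, and the finitely many case distinctions driven by $\Sigma_\Csf$ having constant size — is routine. Since both reductions are already carried out as Lemma~\ref{lem:first} and the subsequent lemma, the theorem follows by simply juxtaposing them.
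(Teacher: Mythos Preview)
Your proposal is correct and follows essentially the same route as the paper: reduce the BUtCQ to a BAQ via the $\mathcal{ALCI}$-concept encoding, build the templates $\Imc_T$ over realizable-in-a-countermodel type sets augmented with the $d_A$ elements and complement names $\overline{A}$, and then invoke Lemma~\ref{lem:first} together with the subsequent lemma for the two polynomial reductions. You have also correctly identified the only nontrivial step, namely the structural-induction claim inside Lemma~\ref{lem:first} where surjectivity of $h$ is needed to pull named witnesses for existential restrictions back into $\mn{Ind}(\Amc)$.
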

Again, the theorem can easily be strengthened to state the same
complexity up to FO reductions. Note that the DL \alchi
used in Theorem~\ref{thn:cspsecdir} is a significant extension of the
DLs referred to in Theorem~\ref{thn:cspfirstdir} and thus our results
apply to a remarkable range of DLs: all DLs between
\dllitecore and \alchi as well as all DLs between
\EL and \alchi.

\section{Closing Role Names in the Fixed Query Case: Turing Machine Equivalence}
\label{TMequi}

We generalize the setup from the previous section by allowing also
role names to be closed. Our main results are that for every
non-determinstic polynomial time Turing machine $M$, there is an OMQC
$Q$ in $(\dlliter,\NC \cup \NR,\text{BUtCQ})$ such that evaluating $Q$
and the complement of $M$'s word problem are polynomial time reducible
to each other, and that it is undecidable whether evaluating OMQCs in
$(\dlliter,\NC \cup \NR,\text{BUtCQ})$ is in \ptime (unless \ptime =
\np).  By Ladner's theorem, it follows that there are
\conp-intermediate OMQCs (unless \ptime = \np) and that a full
complexity classification of the OMQCs in this language is beyond
reach of the techniques available today. As in the previous section,
the same results hold for $(\mathcal{EL},\NC\cup \NR,\text{BAQ})$.

To establish these results, we utilize two related results
from~\cite{DBLP:journals/lmcs/LutzW17,DBLP:journals/tods/BienvenuCLW14}:
(1)~for every \np Turing machine~$M$, there is an ontology-mediated
query $Q$ from $(\mathcal{ALCF},\emptyset,\text{BAQ})$ such that
evaluating $Q$ is reducible in polynomial time to the complement of
$M$'s word problem and vice versa, where $\mathcal{ALCF}$ is the
extension of $\mathcal{ALC}$ with functional roles; and (2)~ it is
undecidable whether an OMQC from
$(\mathcal{ALCF},\emptyset,\text{BAQ})$ is in \ptime. For using these
results in our context, however, it is more convenient to phrase them
in terms of (a certain kind of) monadic disjunctive datalog programs
with inequality rather than in terms of OMQCs from
$(\mathcal{ALCF},\emptyset,\text{BAQ})$. This is what we do in the
following, starting with the introduction of a suitable version of
monadic disjunctive datalog.  For a more thorough introduction, see~\cite{EiterGottlob}.

A \emph{monadic disjunctive datalog rule (MDD
  rule)} $\rho$ takes the form
$$
P_1(x) \vee \cdots \vee P_m(x) \leftarrow R_1(\vec{x}_{1})\land \cdots\land R_n(\vec{x}_n) \quad \mbox{ or } \quad {\sf goal}\leftarrow R_1(\vec{x}_{1})\land \cdots\land R_n(\vec{x}_n)
$$
with $m,n>0$ and where all $P_{i}$ are unary predicates, ${\sf goal}$ is the \emph{goal predicate} of arity $0$,
and all $R_{i}$ are predicates of arity one or two, including possibly the non-equality predicate $\not=$.
We refer to $P_1(x) \vee \cdots \vee P_m(x)$ and, respectively, ${\sf goal}$ as the \emph{head} of $\rho$, and to $R_1(\vec{x}_{1}) \wedge \cdots \wedge
R_n(\vec{x}_{n})$ as the \emph{body}. A \emph{monadic disjunctive
  datalog (MDD) program} $\Pi$ is a finite set of MDD rules containing at least one rule with the goal predicate in its head and
no rule with the goal predicate in its body. Predicates that occur in the head of at
least one rule of $\Pi$ are \emph{intensional (IDB) predicates}, denoted ${\sf IDB}(\Pi)$, and
all remaining predicates in $\Pi$ are \emph{extensional (EDB) predicates}, denoted ${\sf EDB}(\Pi)$. 
An interpretation $\Imc$ is a \emph{model} of $\Pi$ if it satisfies all rules in $\Pi$ (viewed as universally quantified first-order sentences).
$\Pi$ is \emph{entailed on} a ${\sf EDB}(\Pi)$-ABox $\Amc$, in symbols $\Amc\models \Pi$, iff ${\sf goal}$ is true in every model of $\Pi$ and $\Amc$.
Note that it suffices to consider models that respect closed predicates ${\sf EDB}(\Pi)$. 
The \emph{evaluation problem for $\Pi$} is the problem to decide whether $\Pi$ is entailed by an ${\sf EDB}(\Pi)$-ABox~$\Amc$.

For our reduction, we use the following kind of MDD programs that we
call basic. A binary predicate $r$ is \emph{functional in an ABox}
$\Amc$ if $r(a,b_{1}),r(a,b_{2})\in \Amc$ implies $b_{1}=b_{2}$ and
$r$ is \emph{empty in} $\Amc$ if $r$ does not occur in $\Amc$. Then an
MDD program $\Pi$ is \emph{basic} if
\begin{itemize}
\item $\Pi$ uses exactly two binary predicates, $r_{1},r_{2}$, and
  contains exactly the following functionality rules, for $i=1,2$:
$$
{\sf goal} \leftarrow r_{i}(x,y) \wedge r_{i}(x,z) \wedge (y\not=z) 
$$
\item all remaining rules of $\Pi$ are of the form
$$
P_{1}(x)\vee \cdots \vee P_{n}(x) \leftarrow q \quad \mbox{ or } \quad {\sf goal}\leftarrow q
$$
where $n\geq 1$ and $q$ is a dtCQ with root $x$ (with the quantifier prefix removed).
\item if $r_{1},r_{2}$ are functional and at least one $r_{i}$ is empty in an ${\sf EDB}(\Pi)$-ABox $\Amc$, then $\Amc\not\models\Pi$.
\end{itemize}
The following result can be obtained by starting from the results for
$(\mathcal{ALCF},\emptyset,\text{BAQ})$ from
\cite{DBLP:journals/lmcs/LutzW17,DBLP:journals/tods/BienvenuCLW14}
mentioned above and translating the involved OMQs into a basic MDD
program. Such a translation is given in
\cite{DBLP:journals/tods/BienvenuCLW14} for the case of \ALC TBoxes
and MDD programs without inequality, but the extension to functional
roles and inequality is trivial.
\begin{thm}\label{thm:mdd}~\\[-4mm]
\begin{enumerate}
\item For every non-deterministic polynomial time Turing machine $M$, there exists a basic MDD program $\Pi$ such that the
evaluation problem for $\Pi$ and the complement of $M$'s word problem are polynomial time reducible to each other.
\item It is undecidable whether the evaluation problem for a basic MDD program is in \ptime (unless \ptime=\np).
\end{enumerate}
\end{thm}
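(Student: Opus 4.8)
The plan is to derive both parts from the two results about $(\mathcal{ALCF},\emptyset,\text{BAQ})$ recalled at the start of this section: (i)~for every \np Turing machine $M$ there is an OMQ $Q_M=(\Tmc_M,\emptyset,\exists x\,A_0(x))$ in $(\mathcal{ALCF},\emptyset,\text{BAQ})$ whose evaluation problem and the complement of $M$'s word problem are polynomial time reducible to each other, and (ii)~it is undecidable (unless $\ptime=\np$) whether the evaluation problem for a given OMQ in $(\mathcal{ALCF},\emptyset,\text{BAQ})$ is in \ptime. The heart of the proof is a single, polynomial time computable translation $Q\mapsto\Pi_Q$ sending an OMQ $Q=(\Tmc,\emptyset,\exists x\,A_0(x))$ in $(\mathcal{ALCF},\emptyset,\text{BAQ})$ to a \emph{basic} MDD program $\Pi_Q$ with ${\sf EDB}(\Pi_Q)=\NC\cup\NR$ such that $\Amc\models\Pi_Q$ iff $\Amc\models Q$ for every ${\sf EDB}(\Pi_Q)$-ABox $\Amc$; recall that MDD entailment only quantifies over models respecting the EDB predicates, so this coincides with certain-answer semantics. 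Granting such a translation, Part~(1) follows by taking $\Pi:=\Pi_{Q_M}$ and composing with~(i), and Part~(2) follows because $Q\mapsto\Pi_Q$ is computable and turns the evaluation problem for $Q$ into the \emph{same} problem for $\Pi_Q$, hence reduces the undecidable problem in~(ii) to deciding \ptime-membership for basic MDD programs (the caveat $\ptime\neq\np$ being inherited from~(ii)); for this last step one also uses that the family of OMQs witnessing~(ii) may be taken, like the one in~(i), to lie in the fragment that the translation maps to basic programs.

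For the translation I would start from the OMQ-to-MDD rewriting of \cite{DBLP:journals/tods/BienvenuCLW14}: normalise the \ALC part of $\Tmc$ so that every CI has the shape $A_1\sqcap\cdots\sqcap A_n\sqsubseteq B_1\sqcup\cdots\sqcup B_m$, $A\sqsubseteq\forall r.B$, or $\exists r.A\sqsubseteq B$ (introducing fresh concept names, among them a complementation name $\overline{B}$ with $B\sqcup\overline{B}\equiv\top$ and $B\sqcap\overline{B}\equiv\bot$ to remove $\forall$-restrictions, a domain predicate ${\sf dom}$ to encode $\top\sqsubseteq{\sf dom}$, and fresh IDB copies, with the obvious copying rules, of concept names occurring both in the data and in TBox-derived positions). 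Each normal-form CI becomes one MDD rule whose head is either ${\sf goal}$ or a disjunction $P_1(x)\vee\cdots\vee P_n(x)$ and whose body is a directed tree-shaped CQ with root $x$, the head variable --- namely $P_1(x)\vee\cdots\vee P_n(x)\leftarrow A_1(x)\wedge\cdots\wedge A_n(x)$, ${\sf goal}\leftarrow A(x)\wedge r(x,y)\wedge\overline{B}(y)$, and $B(x)\leftarrow r(x,y)\wedge A(y)$, respectively. The functionality of the role names $r_1,r_2$ of $\mathcal{ALCF}$ is, under the standard name assumption, captured \emph{exactly} by the two rules ${\sf goal}\leftarrow r_i(x,y)\wedge r_i(x,z)\wedge(y\neq z)$ --- the only rules using inequality --- so the inequality-free \ALC argument of \cite{DBLP:journals/tods/BienvenuCLW14} lifts verbatim, and the query is turned into ${\sf goal}\leftarrow A_0(x)$. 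Soundness and completeness then follow by the usual argument matching models of $\Pi_Q$ restricted to ${\sf Ind}(\Amc)$ with ${\sf Ind}(\Amc)$-labellings consistent with the universal consequences of $\Tmc$, using that $\Tmc_M$ (and the TBoxes in the family for~(ii)) need no existential restrictions on right-hand sides, since the computation tableau is supplied by the ABox, so that anonymous elements are never forced.

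It then remains to verify that $\Pi_Q$ is \emph{basic}. By construction, the body of every non-functionality rule is a dtCQ whose root is the head variable. Binary predicates never occur in MDD rule heads, so the binary predicates of $\Pi_Q$ are exactly the role names of $\Tmc$ and the query, which by our assumption on the $\mathcal{ALCF}$ ontologies are precisely the two functional roles $r_1,r_2$ --- a ``next cell'' and a ``next step'' role suffice to lay out a computation tableau. Finally, the third condition of basicness --- if $r_1,r_2$ are functional and at least one is empty in $\Amc$, then $\Amc\not\models\Pi_Q$ --- holds because an ABox without $r_i$-edges does not encode a computation of $M$: the encodings in~(i) and~(ii) can be arranged so that ${\sf goal}$ is then not derived, or one simply guards every ${\sf goal}$-deriving rule other than the functionality rules by an atom of each $r_i$.

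The main obstacle --- really the only point needing care --- is this last round of bookkeeping: keeping the translation inside the rigid ``basic'' format (dtCQ-rooted bodies, exactly two binary predicates both functional, the empty-role clause) while preserving equivalence, and in particular making sure the elimination of $\forall$-restrictions via the complement and domain predicates does not leak out of that format (every relevant ABox element carries a unary extensional atom or is the source of an $r_i$-edge, which holds for the tableau-shaped ABoxes at hand). By contrast, the genuinely hard content --- the Turing machine simulations and the undecidability argument --- is entirely encapsulated in the cited results~(i) and~(ii), which we use as black boxes; the extension to functional roles and inequality is, as the text already notes, routine.
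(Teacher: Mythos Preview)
Your proposal is correct and follows exactly the route the paper takes: the paper does not give a detailed proof of this theorem at all, merely stating that it ``can be obtained by starting from the results for $(\mathcal{ALCF},\emptyset,\text{BAQ})$ \dots\ and translating the involved OMQs into a basic MDD program,'' with the translation being the one from \cite{DBLP:journals/tods/BienvenuCLW14} and the extension to functional roles and inequality being ``trivial.'' You have simply unpacked this reference and sketched the translation in more detail than the paper does.

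One small point: your fallback suggestion of ``guarding every ${\sf goal}$-deriving rule by an atom of each $r_i$'' does not quite work as stated, since adding disconnected atoms $r_1(u_1,v_1)\wedge r_2(u_2,v_2)$ to a rule body destroys the dtCQ-with-root-$x$ shape required by the definition of \emph{basic}. You should rely instead on your first alternative --- that the specific $\mathcal{ALCF}$ encodings in the cited constructions already have the property that an ABox with one of $r_1,r_2$ empty cannot entail the goal --- which is indeed how the standard grid-based Turing machine simulations behave (the grid is supplied entirely by the ABox via the two roles, and without it no computation step propagates).
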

We next prove the following central theorem.
\begin{thm}
  \label{thm:tmequi}
  For every basic MDD program $\Pi$, one can construct an
  OMQC $Q$ in the language $(\mathcal{EL},\NC\cup\NR,\text{BAQ})$ such that the evaluation problem for $Q$
  and $\Pi$ are polynomial time reducible to each other. The same is true for $(\dlliter,\NC\cup \NR,\text{BUtCQ})$.
\end{thm}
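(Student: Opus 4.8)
The plan is to prove the claim for $(\EL,\NC\cup\NR,\text{BAQ})$ in detail and then indicate the small changes needed for $(\dlliter,\NC\cup\NR,\text{BUtCQ})$. In both reductions the two functionality rules of $\Pi$ are disposed of by preprocessing: given an ${\sf EDB}(\Pi)$-ABox $\Amc$, one checks in polynomial time whether $r_1,r_2$ are functional in $\Amc$; if not, then $\Amc\models\Pi$ and we output a fixed ABox on which $Q$ returns ``yes''; if they are functional but some $r_i$ is empty, then $\Amc\not\models\Pi$ by basicness of $\Pi$ and we output a fixed ``no''-ABox. Thus in the main case $r_1,r_2$ are functional in the input, so the functionality rules never fire and can be ignored. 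By a routine renaming of IDB predicates (splitting each occurrence of an IDB predicate in a rule head into a fresh copy and adding single-atom copy rules) I may also assume that every IDB predicate of $\Pi$ occurs in the head of exactly one rule, at exactly one position.

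For each of the remaining rules $\rho$ of $\Pi$, say of the form $P_1(x)\vee\cdots\vee P_{n_\rho}(x)\leftarrow q_\rho(x)$ or ${\sf goal}\leftarrow q_\rho(x)$ with $q_\rho$ a dtCQ rooted at $x$, I introduce a fresh \emph{open} role name $\mn{val}_\rho$, fresh \emph{closed} concept names $V_\rho,V_{\rho,0},\dots,V_{\rho,n_\rho}$, and fresh individual names $c_{\rho,0},\dots,c_{\rho,n_\rho}$; in addition I take one fresh closed concept name $A$ and one fresh open concept name $B$. I set $\Sigma_\Asf=\Sigma_\Csf={\sf EDB}(\Pi)\cup\{A\}\cup\{V_\rho,V_{\rho,i}\mid \rho,i\}$, so that $\mn{val}_\rho$ and $B$ stay open. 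The intended meaning is that in a model every individual is forced (since $V_\rho$ is closed) to choose a $\mn{val}_\rho$-successor among $c_{\rho,0},\dots,c_{\rho,n_\rho}$, where picking $c_{\rho,i}$ with $i\ge 1$ encodes ``the $i$-th head atom of $\rho$ holds here'' and $c_{\rho,0}$ means ``no head atom picked''. Accordingly, for an $\EL$ concept $C$ over ${\sf EDB}(\Pi)\cup{\sf IDB}(\Pi)$ let $\widehat{C}$ be obtained by replacing every occurrence of an IDB concept name $P$ (with $P$ the $i$-th head atom of its rule $\rho$) by $\exists\mn{val}_\rho.V_{\rho,i}$. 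The TBox is
\[
\Tmc\ =\ \{A\sqsubseteq\exists\mn{val}_\rho.V_\rho\}\ \cup\ \{\widehat{q_\rho}\sqcap\exists\mn{val}_\rho.V_{\rho,0}\sqsubseteq B\}\ \cup\ \{\widehat{q_{\rho'}}\sqsubseteq B\mid \rho'\text{ a goal rule}\},
\]
where $\rho$ ranges over the propagation rules in the first two families and $q_\rho$ is read as an $\EL$ concept via its dtCQ shape; the query is the BAQ $q=\exists x\,B(x)$. All listed inclusions are legal $\EL$ CIs, and $\Tmc$ and $q$ are computed from $\Pi$ in polynomial time.

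The central step is then the following equivalence. Given a (preprocessed) ${\sf EDB}(\Pi)$-ABox $\Amc$, let $\Amc'$ be obtained by adding $A(a)$ for every individual $a$ and adding the fresh individuals $c_{\rho,i}$ together with $V_\rho(c_{\rho,i})$ and $V_{\rho,i}(c_{\rho,i})$; I claim $\Amc\models\Pi$ iff $\Amc'\models Q$. For ``$\Leftarrow$'' one takes a model $\Imc$ of $\Pi$ and $\Amc$ with ${\sf goal}$ false, builds $\Jmc$ over domain $\mn{Ind}(\Amc')$ by adding a $\mn{val}_\rho$-edge from $a$ to $c_{\rho,i}$ whenever $a\in P_i^{\Imc}$ (for $i\ge 1$) and to $c_{\rho,0}$ when $a$ lies in no head atom of $\rho$, and checks that $\Jmc$ is a model of $\Tmc$ and $\Amc'$ respecting $\Sigma_\Csf$ with $B^{\Jmc}=\emptyset$: the inclusions $\widehat{q_{\rho'}}\sqsubseteq B$ cannot fire because $\widehat{q_{\rho'}}^{\Jmc}$ and $q_{\rho'}^{\Imc}$ agree on $\mn{Ind}(\Amc)$ and ${\sf goal}$ is false in $\Imc$, and $\widehat{q_\rho}\sqcap\exists\mn{val}_\rho.V_{\rho,0}\sqsubseteq B$ cannot fire because $\Imc$ satisfies $\rho$. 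For ``$\Rightarrow$'' one takes a countermodel $\Jmc$ of $Q$ on $\Amc'$, lets $\Imc$ be the restriction of $\Jmc$ to $\mn{Ind}(\Amc)$ over ${\sf EDB}(\Pi)$ together with $P_i^{\Imc}:=\{a\in\mn{Ind}(\Amc)\mid(a,c_{\rho,i})\in\mn{val}_\rho^{\Jmc}\}$, and verifies that $\Imc$ is a model of $\Pi$ and $\Amc$ with ${\sf goal}$ false: $A\sqsubseteq\exists\mn{val}_\rho.V_\rho$ plus closedness of $V_\rho$ forces each $a$ to point into $\{c_{\rho,0},\dots,c_{\rho,n_\rho}\}$, and if $q_\rho(a)$ holds in $\Imc$ then $\widehat{q_\rho}(a)$ holds in $\Jmc$ (since $\Jmc\models\Amc'$ and the closed EDB forces all witnesses into $\mn{Ind}(\Amc)$), so $a$ cannot point to $c_{\rho,0}$ without firing $B$, whence $a$ satisfies $\rho$'s head; the goal rules are handled symmetrically via $\widehat{q_{\rho'}}\sqsubseteq B$. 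This gives the reduction from evaluating $\Pi$ to evaluating $Q$. The converse reduction — deciding $\Amc'\models Q$ for an \emph{arbitrary} $\Sigma_\Asf$-ABox $\Amc'$ — is obtained by a case analysis in the style of the second half of the proof of Lemma~\ref{lem:comp-csp-reduc-to-omqc}: one checks consistency of $\Amc'$ w.r.t.\ $(\Tmc,\Sigma_\Csf)$ and whether $\Amc'$ carries the expected auxiliary structure, reads off the answer directly in the degenerate cases, and otherwise extracts $\Amc:=\Amc'|_{{\sf EDB}(\Pi)}$ and answers according to $\Amc\models\Pi$.

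For $(\dlliter,\NC\cup\NR,\text{BUtCQ})$ the construction is the same except that the two DL features missing in \dlliter are compensated by the richer query language. First, $A\sqsubseteq\exists\mn{val}_\rho.V_\rho$ is replaced by the \dlliter inclusions $A\sqsubseteq\exists\mn{val}_\rho$ and $\exists\mn{val}_\rho^-\sqsubseteq V_\rho$, which with $V_\rho$ closed have the same effect. Second, since $\widehat{q_\rho}\sqcap\exists\mn{val}_\rho.V_{\rho,0}\sqsubseteq B$ and $\widehat{q_{\rho'}}\sqsubseteq B$ have complex left-hand sides, they are moved into the query: $B$ is dropped and one uses
\[
q\ =\ \bigvee_{\rho'\text{ goal rule}}\exists x\,\widehat{q_{\rho'}}(x)\ \vee\ \bigvee_{\rho\text{ prop.\ rule}}\exists x\,\bigl(\widehat{q_\rho}(x)\wedge\exists y\,(\mn{val}_\rho(x,y)\wedge V_{\rho,0}(y))\bigr),
\]
each disjunct of which, being a dtCQ with its IDB atoms expanded into $\mn{val}_\rho$-branches, is a tree CQ, so $q\in\text{BUtCQ}$; the correspondence then goes through exactly as above with ``$B^{\Jmc}=\emptyset$'' replaced by ``$\Jmc\not\models q$''. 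The main obstacle I expect is designing this choice gadget so that, in \emph{every} model respecting the closed predicates, the disjunctive head of each rule is simulated faithfully — in particular that an element is forced to pick a disjunct \emph{exactly} when the (recursively defined, dtCQ-shaped) body holds at it, which in \dlliter cannot be stated by the TBox and must instead be caught by the positive query via the ``no-choice'' element $c_{\rho,0}$; the secondary, more routine source of bookkeeping is the case analysis needed for arbitrary input ABoxes in the reduction from evaluating $Q$ to evaluating $\Pi$.
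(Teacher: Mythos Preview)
Your approach differs from the paper's in two design choices, and both create genuine gaps.

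\textbf{Functionality by preprocessing breaks the backward reduction.} You dispose of the functionality rules of $\Pi$ on the $\Pi$-side by checking functionality of the input ABox before invoking $Q$; consequently the constructed $Q$ contains no encoding of functionality at all. This is fine for the forward reduction, but for the backward reduction you must decide $\Amc'\models Q$ for an \emph{arbitrary} $\Sigma_\Asf$-ABox $\Amc'$, and you propose (after handling degenerate cases) to extract $\Amc=\Amc'|_{{\sf EDB}(\Pi)}$ and test $\Amc\models\Pi$. If $r_1$ is non-functional in $\Amc$, the functionality rule fires and $\Amc\models\Pi$ trivially; yet $Q$ is oblivious to functionality, so $\Amc'\not\models Q$ can still hold. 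Concretely, take $\Pi$ consisting of the two functionality rules plus $P(x)\leftarrow E(x)$, and let $\Amc'$ contain $r_1(a,b),r_1(a,c),E(a)$ together with the expected $A$- and $V$-assertions: then $\Amc\models\Pi$, but the model that sends every $A$-individual to $c_{\rho,1}$ via $\mn{val}_\rho$ witnesses $\Amc'\not\models Q$. There is no evident repair within your design: deciding $\Amc'\models Q$ on non-functional inputs is exactly as hard as evaluating the non-functionality fragment of $\Pi$, and you cannot access that by querying $\Pi$ itself (which answers ``yes'' trivially). The paper avoids this by encoding functionality \emph{inside} $Q$, via the gadget $\top\sqsubseteq\exists s_i.(\exists r_i.A_i\sqcap\exists r_i.B_i)$ together with the query conjunct $q_{\mathcal{F}}$, so that both reductions work uniformly.

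\textbf{The normalization step does not work as described.} Your encoding $\widehat{C}$ requires each IDB predicate $P$ to have a unique head position $(\rho,i)$, and you claim this can be arranged by ``splitting each occurrence \ldots\ into a fresh copy and adding single-atom copy rules''. But after replacing head occurrences of $P$ by fresh $P^{\rho,i}$ and adding $P(x)\leftarrow P^{\rho,i}(x)$, the predicate $P$ appears in the heads of all these copy rules, so uniqueness fails; orienting the copy rules the other way turns $P$ into an EDB predicate and changes the semantics. Because dtCQ bodies admit no disjunction, a polynomial normalization of the kind you need does not appear to exist. The paper sidesteps the issue by using a \emph{per-predicate} truth-value gadget (one role $\mn{val}_P$ per IDB $P$, shared closed concepts $T,F,V$) rather than your per-rule gadget, so no normalization is needed: the violation of a rule $P_1(x)\vee\cdots\vee P_n(x)\leftarrow q$ is expressed directly as $q^T\wedge\bigwedge_iP_i^F$, regardless of where else the $P_i$ occur.
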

\begin{proof}
  Assume a basic MDD program $\Pi$ of the form defined above is given. We first construct an OMQC $Q_{\Pi}=(\mathcal{T}_{\Pi},\Sigma_{\Pi},\Sigma_{\Pi},q_{\Pi})$ in
  $(\mathcal{EL},\NC\cup\NR,\text{BUtCQ})$ and then obtain the
  required OMQCs in $(\mathcal{EL},\NC\cup\NR,\text{BAQ})$ and $(\dlliter,\NC\cup\NR,\text{BUtCQ})$ by rather straightforward
  modifications of $Q_{\Pi}$. Note that we construct a $Q_{\Pi}$ in which the ABox signature and set of closed predicates coincide.
  We set $\Sigma_{\Pi} = {\sf EDB}(\Pi)\cup \{T,F,V\}$, where $T,F,V$ are fresh concept names. We also use auxiliary predicates which are
  not in the ABox signature of $Q_{\Pi}$: role names ${\sf val}_{P}$ for every unary $P\in {\sf IDB}(\Pi)$ and role names $s_{i}$
  and concept names $A_{i},B_{i}$, for $i=1,2$. $\Tmc_{\Pi}$ contains the following CIs:
$$
\begin{array}{rcl}
  T &\sqsubseteq& V \\[1mm]
  F &\sqsubseteq& V  \\[1mm]
  \top &\sqsubseteq& \exists {\sf val}_{P}.V, \text{ for all unary $P\in {\sf IDB}(\Pi)$ } \\[1mm]
  \top &\sqsubseteq& \exists s_{i}.(\exists r_{i}.A_{i} \sqcap \exists r_{i}.B_{i}), \text{ for $i=1,2$ }.
\end{array}
$$
Using $\Tmc_{\Pi}$, we encode the truth value of IDB predicates $P$ using the CQs 
$$
P^{T}(x,y):= ({\sf val}_{P}(x,y)\wedge T(y)), \quad P^{F}(x,y):= ({\sf val}_{P}(x,y)\wedge F(y)).
$$
For any tCQ $q$, we denote by $q^{T}$ the result of replacing every occurrence of an 
IDB $P(x)$ in $q$ by $P^{T}(x,y_{0})$, where the variable $y_{0}$ is fresh for every 
occurrence of $P(x)$, and existentially quantified.
Thus, $q^{T}$ is again a tCQ (and a dtCQ if $q$ is already a dtCQ).
The final CI is used to encode functionality of the roles $r_{1},r_{2}$. We define
CQs $q_{\mathcal{F}}^{1}$ and $q_{\mathcal{F}}^{2}$ by setting
$$
q_{\mathcal{F}}^{i}= (s_{i}(x,y)\wedge r_{i}(y,z)\wedge A_{i}(z) \wedge B_{i}(z)),
$$
for $i=1,2$. Then, for 
the OMQC $Q_{i}=(\{\top \sqsubseteq \exists s_{i}.(\exists r_{i}.A_{i} \sqcap \exists r_{i}.B_{i})\},\Sigma_{\Pi},\Sigma_{\Pi},
\exists y\,\exists z\,q_{\mathcal{F}}^{i})$
and any $\Sigma_{\Pi}$-ABox $\Amc$:
\begin{itemize}
\item if $r_{i}$ is empty in $\Amc$, then $\Amc$ is not consistent w.r.t.~$(\{\top \sqsubseteq \exists s_{i}.(\exists r_{i}.A_{i} \sqcap \exists r_{i}.B_{i})\},\Sigma_{\Pi})$, and 
\item if $r_{i}$ is not empty in $\Amc$, then $r_{i}$ is functional in $\Amc$ iff 
$\Amc\models Q_{i}(a)$, for some (equivalently, all) $a\in {\sf Ind}(\Amc)$.
\end{itemize}
Define $q_{\Pi}$ as the union of the following Boolean CQs, where for brevity we omit the existential quantifiers:
\begin{itemize}
\item $q_{\mathcal{F}}^{1}\wedge q_{\mathcal{F}}^{1} \wedge q^{T}$, for every rule 
${\sf goal}\leftarrow q\in \Pi$, where we assume that the only variable shared by
any two of the conjuncts $q_{\mathcal{F}}^{1}$, $q_{\mathcal{F}}^{2}$ and $q^{T}$ is $x$.
\item $q_{\mathcal{F}}^{1} \wedge q_{\mathcal{F}}^{2} \wedge q^{T} \wedge \bigwedge_{1\leq i \leq n}P_{i}^{F}$, 
for every $P_{1}(x)\vee \cdots \vee P_{n}(x) \leftarrow q \in \Pi$,
where we assume again that the only variable shared by any two of the conjuncts $q_{\mathcal{F}}^{1}$, 
$q_{\mathcal{F}}^{2}$, $q^{T}$, $P_{i}^{F}$, $1\leq i \leq n$, is $x$.
\end{itemize}
We prove the following

\begin{claim*}
The problem of evaluating $\Pi$ and the problem of evaluating $Q_{\Pi}$ are polynomial time reducible to each other.
\end{claim*}
\begin{clmproof}
($\Rightarrow$) Assume an $\mn{EDB}(\Pi)$-ABox $\Amc$ is given
as an input to $\Pi$. If $r_{1}$ or $r_{2}$ is not functional in $\Amc$, then output
`$\Amc\models \Pi$'. Otherwise, if $r_{1}$ or $r_{2}$ is empty, then output `$\Amc\not\models\Pi$'. 
Now assume that $r_{1}$ and $r_{2}$ are not empty and both are functional in $\Amc$.
Let
$$
\Amc'=\Amc \cup \{T(a),F(b),V(a),V(b)\},
$$
where we asume w.l.o.g.~that $a,b$ occur in ${\sf Ind}(\Amc)$.  We show that
$\Amc \models \Pi$ iff $\Amc'\models Q_{\Pi}$.

Assume first that $\Amc\not\models \Pi$. Let $\Imc$ be a model of
$\Amc$ and $\Pi$ that respects closed predicates $\mn{EDB}(\Pi)$ and
satisfies no body of any rule ${\sf goal}\leftarrow q\in \Pi$.  Define $\Imc'$ in the same
way as $\Imc$ except that 
\begin{itemize}
\item $T^{\Imc'}=\{a\}$, $F^{\Imc'}=\{b\}$, and $V^{\Imc'}=\{a,b\}$;
  % \item Let $s_{r}^{\Imc'}=A_{r}^{\Imc'} = B_{r}^{\Imc'}:=\emptyset$
  %   if $r$ is a functional role with $r^{\Imc}=\emptyset$;
\item $s_{i}^{\Imc'} = \Delta^{\Imc}\times {\sf dom}(r_{i}^{\Imc})$ and $A_{i}^{\Imc'} = B_{i}^{\Imc'} = \Delta^{\Imc}$, for $i=1,2$,
where ${\sf dom}(r^{\Imc})$ denotes the domain of $r^{\Imc}$;
\item 
${\sf val}_{P}^{\Imc'} = (P^{\Imc} \times \{a\}) \cup ((\Delta^{\Imc}\setminus P^{\Imc}) \times \{b\})$, 
for all unary $P\in {\sf IDB}(\Pi)$.
\end{itemize}
It is straightforward to show that $\Imc'$ is a model of $\Tmc_{\Pi}$
and $\Amc'$ that respects closed predicates $\Sigma_{\Pi}$. It
remains to show that $\Imc'\not\models q_{\Pi}$.  To this end it is
sufficient to show that
\begin{enumerate}
\item No $q^{T}$ with ${\sf goal}\leftarrow q\in \Pi$ is satisfied in $\Imc'$;
\item No $q^{T}\wedge \bigwedge_{1\leq i \leq n} P_{i}^{F}$ with $P_{1}(x)\vee \cdots \vee P_{n}(x) \leftarrow q\in \Pi$
is satisfied in $\Imc'$.
\end{enumerate}
Point~(1) holds since $P^{\Imc}=\{ d \mid \Imc'\models \exists y \, P^{T}(d,y))\}$ for all unary $P\in {\sf IDB}(\Pi)$,
by definition of $\Imc'$ and since $q$ is not satisfied in $\Imc$ for any rule ${\sf goal}()\leftarrow q\in \Pi$. 
Point~(2) holds since all rules $P_{1}(x)\vee \cdots \vee P_{n}(x) \leftarrow q\in \Pi$ are satisfied in $\Imc$
and $P^{\Imc}= \Delta^{\Imc}\setminus\{ d \mid \Imc'\models \exists y\, P^{F}(d,y)\}$ for all unary $P\in {\sf IDB}(\Pi)$.

\medskip

Assume now that $\Amc'\not\models Q_{\Pi}$. Take a model
$\Imc$ of $\Amc'$ that respects closed predicates $\Sigma_{\Pi}$ and
such that $\Imc\not\models q_{\Pi}$. Define a model $\Imc'$ by modifying $\Imc$ by setting
$P^{\Imc'} = \{ d \mid \Imc\models \exists y\,P^{T}(d,y)\}$, for all unary $P\in {\sf IDB}(\Pi)$. It follows
from the condition that $r_{1},r_{2}$ are non-empty and
functional in $\Amc'$ that $\Imc\models \forall x (\exists y\exists z q_{\mathcal{F}}^{1} \wedge \exists y \exists z
q_{\mathcal{F}}^{2})$. 
From $\Imc\not\models q_{\Pi}$ we obtain that no $q$ with ${\sf goal}()\leftarrow q\in \Pi$ is satisfied
in $\Imc'$ and that all rules $P_{1}(x)\vee \cdots \vee P_{n}(x) \leftarrow q\in \Pi$ are satisfied in $\Imc'$.
Thus, $\Imc'$ is a model of $\Amc$ and $\Pi$ witnessing that $\Amc\not\models\Pi$.

\medskip
\noindent
($\Leftarrow$) Assume a $\Sigma_{\Pi}$-ABox $\Amc$ is given as an
input to $Q_\Pi$. There exists a model of $\Tmc_{\Pi}$ and $\Amc$ that
respects closed predicates $\Sigma_{\Pi}$ iff (i) $V$ is
non-empty in $\Amc$, (ii) $T,F$ are both contained
in $V$ in $\Amc$, and (iii) $r_{1},r_{2}$ are
non-empty in $\Amc$. Thus, output `$\Amc\models Q_{\Pi}$' whenever (i),
(ii), or (iii) is violated.  Now assume (i), (ii), and (iii) hold.  If
$r_{1}$ or $r_{2}$ are not functional in $\Amc$, then we can construct a model of $\Tmc_{\Pi}$ and $\Amc$ that respects
closed predicates $\Sigma_{\Pi}$ and such that $\exists x (\exists y\exists z q_{\mathcal{F}}^{1} \wedge \exists y \exists z
q_{\mathcal{F}}^{2})$ is not satisfied in $\Imc$. Hence, we output `$\Amc\not\models Q_{\Pi}$'.
Thus, assume in addition to (i), (ii) and (iii) that $r_{1}$ and $r_{2}$ are functional in $\Amc$. 
We distinguish five cases.
We only consider the first case in detail, the remaining cases are proved similarly.
\begin{enumerate}
\item If $F^{\Imc_{\Amc}}\cup T^{\Imc_{\Amc}}\not=V^{\Imc_{\Amc}}$, then output `$\Amc\models Q_{\Pi}$' if there exists a rule ${\sf goal}\leftarrow q\in \Pi$ such that
$q$ contains not IDBs and $q$ (which then equals $q^{T}$) is satisfied in $\Imc_{\Amc}$.
This is clearly correct since $\Amc\models (\Tmc_{\Pi},\Sigma_{\Pi},\Sigma_{\Pi},q_{\mathcal{F}}^{1}\wedge q_{\mathcal{F}}^{2}\wedge q^{T})$ 
follows. 
Otherwise output `$\Amc\not\models Q_{\Pi}$'. To prove correctness, let $a\in
V^{\Imc_{\Amc}}\setminus (F^{\Imc_{\Amc}}\cup T^{\Imc_{\Amc}})$. Construct a model $\Imc$ of $\Tmc_{\Pi}$ and $\Amc$ that respects $\Sigma_{\Pi}$ 
by extending $\Imc_{\Amc}$ by setting ${\sf val}_{P}^{\Imc}= \Delta^{\Imc}\times \{a\}$ for all unary IDB predicates $P$
and defining $s_{i}^{\Imc},A_{i}^{\Imc},B_{i}^{\Imc}$, $i=1,2$, arbitrarily so that $\Tmc_{\Pi}$ is satisfied. 
Then no $q^{T}$ with ${\sf goal}\leftarrow q\in \Pi$ and no 
$\bigwedge_{1\leq i \leq n} P_{i}^{F}$ with $P_{1}(x)\vee \cdots \vee P_{n}(x) \leftarrow q\in \Pi$ is satisfied in $\Imc$.
Thus $\Imc\not\models q_{\Pi}$.
\item If $T^{\Imc_{\Amc}}=F^{\Imc_{\Amc}}=V^{\Imc_{\Amc}}$, then output `$\Amc\models Q_{\Pi}$' if there exists a rule ${\sf goal}\leftarrow q\in \Pi$ 
or $P_{1}(x)\vee \cdots \vee P_{n}(x) \leftarrow q\in \Pi$ such that $q'$ is satisfied in $\Imc_{\Amc}$ for the query $q'$ obtained from 
$q$ by removing every atom $P(y)$ from $q$ with $P$ a unary IDB. Otherwise output `$\Amc\not\models Q_{\Pi}$'. 
\item If $T^{\Imc_{\Amc}}=V^{\Imc_{\Amc}}$ and $F^{\Imc_{\Amc}}\not=V^{\Imc_{\Amc}}$, then output `$\Amc\models Q_{\Pi}$' if there exists a rule ${\sf goal}\leftarrow q\in \Pi$ 
such that $q'$ is satisfied in $\Imc_{\Amc}$ for the query $q'$ obtained from $q$ by removing every atom $P(y)$ from $q$ with $P$ a unary IDB. 
Otherwise output `$\Amc\not\models Q_{\Pi}$'. 
\item If $F^{\Imc_{\Amc}}=V^{\Imc_{\Amc}}$ and $T^{\Imc_{\Amc}}\not=V^{\Imc_{\Amc}}$, then output `$\Amc\models Q_{\Pi}$' if there exists a rule 
${\sf goal}\leftarrow q\in \Pi$ or $P_{1}(x)\vee \cdots \vee P_{n}(x) \leftarrow q\in \Pi$ such that $q$ does not contain
any IDB and $q$ is satisfied in $\Imc_{\Amc}$. Otherwise output `$\Amc\not\models Q_{\Pi}$'.
\item If none of the four cases above apply, obtain $\Amc'$ from $\Amc$ by removing all assertions using $T,F$, or $V$. 
Then $\Amc'\models \Pi$ iff $\Amc\models Q$, and we have established the polynomial time reduction. \qedhere
\end{enumerate}
\end{clmproof}

The modification of $Q_{\Pi}$ needed to obtain an OMQC from
$(\mathcal{EL},\NC\cup \NR,\text{BAQ})$ is the same as in the proof of Theorem~\ref{thn:cspfirstdir}:
the query $q_{\Pi}$ is a BUdtCQ and so we can replace it with a query of the form $\exists x\,A(x)$:
as the disjuncts of $q_{\Pi}$ are of the form $\exists x\, q'(x)$ with $q'(x)$ a dtCQ, we can take the 
\EL concepts $C_{q'}$ corresponding to $q'(x)$ and extend $\Tmc_{\Pi}$ with $C_{q'}
\sqsubseteq A$ for every such disjunct $\exists x\, q'(x)$ of $q$.

It remains to show how one can modify $Q_{\Pi}$ to obtain an equivalent OMQC $Q_{\Pi}'$ from the language
$(\dlliter,\NC\cup\NR,\text{BUtCQ})$. First, 
to eliminate $\top$ on the left-hand-side of CIs in $\Tmc_{\Pi}$, we replace 
each CI $\top\sqsubseteq C$ by the CIs $A\sqsubseteq C$, $\exists
r\sqsubseteq C$, and $\exists r^{-}\sqsubseteq C$ for any concept name
$A\in \Sigma_{\Pi}$ and role name $r\in \Sigma_{\Pi}$. Second, we
employ the standard encoding of qualified existential restrictions in
\dlliter by replacing exhaustively any $B \sqsubseteq
\exists r.D$ by $B \sqsubseteq \exists s$, $\exists s^{-} \sqsubseteq
A_{D}$, $A_{D} \sqsubseteq D$, and $s \sqsubseteq r$, where $A_{D}$ is
a fresh concept name and $s$ is a fresh role name. Let
$\Tmc_{\Pi}'$ be the resulting TBox.  Then $Q_{\Pi}'=(\Tmc_{\Pi}',\Sigma_{\Pi},\Sigma_{\Pi},q_{\Pi})$ is as required.
\end{proof}
From Theorems~\ref{thm:mdd} and~\ref{thm:tmequi}, we obtain the main result of this section.
\begin{thm}\label{thm:mmm}~\\[-4mm]
\begin{enumerate}
\item For every non-deterministic polynomial time Turing machine $M$ one can construct a OMQC $Q$ in the languages
$(\mathcal{EL},\NC\cup\NR,\text{BAQ})$ and $(\dlliter,\NC\cup \NR,\text{BUtCQ})$
such that the evaluation problem for $Q$ and $M$'s word problem are polynomial time reducible to each other. 
\item It is undecidable whether the evaluation problem for OMQCs in
$(\mathcal{EL},\NC\cup\NR,\text{BAQ})$ and $(\dlliter,\NC\cup \NR,\text{BUtCQ})$ is in \ptime (unless \ptime=\np).
\end{enumerate}
\end{thm}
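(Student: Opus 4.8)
The plan is to obtain Theorem~\ref{thm:mmm} by chaining Theorems~\ref{thm:mdd} and~\ref{thm:tmequi}; the substantive work has already been carried out in establishing those two results, so only a short composition argument remains.

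For Point~(1), fix a non-deterministic polynomial time Turing machine $M$. Theorem~\ref{thm:mdd}(1) supplies a basic MDD program $\Pi$ whose evaluation problem is polynomial-time inter-reducible with the complement of $M$'s word problem. Feeding this $\Pi$ into the construction of Theorem~\ref{thm:tmequi} yields OMQCs $Q$ in $(\mathcal{EL},\NC\cup\NR,\text{BAQ})$ and in $(\dlliter,\NC\cup\NR,\text{BUtCQ})$ whose evaluation problems are each polynomial-time inter-reducible with the evaluation problem for $\Pi$. By transitivity of polynomial-time reducibility, the evaluation problem for $Q$ is then polynomial-time inter-reducible with the complement of $M$'s word problem, which is the desired statement; phrasing it via the complement is natural here, since the evaluation problem for any OMQC in $(\alchi,\NC\cup\NR,\text{UCQ})$ is in \conp by Theorem~\ref{thm:alchi_closed_datacomplexity}, matching the discussion that precedes the theorem. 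By Ladner's theorem this already yields \conp-intermediate OMQCs in both languages, unless $\ptime=\np$.

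For Point~(2), I would reduce from the problem shown undecidable in Theorem~\ref{thm:mdd}(2), namely deciding whether the evaluation problem for a basic MDD program is in \ptime. The two things to check are that the map $\Pi\mapsto Q_\Pi$ extracted from the proof of Theorem~\ref{thm:tmequi} is effective (indeed computable in polynomial time), and that it preserves the tractability status of query evaluation: since the evaluation problems for $Q_\Pi$ and $\Pi$ are polynomial-time inter-reducible, the former is in \ptime iff the latter is. Consequently, a decision procedure for the question ``is the evaluation problem for a given OMQC in $(\mathcal{EL},\NC\cup\NR,\text{BAQ})$ in \ptime?'' would, composed with the construction $\Pi\mapsto Q_\Pi$, decide tractability of basic MDD evaluation, contradicting Theorem~\ref{thm:mdd}(2); the caveat $\ptime\neq\np$ is inherited verbatim. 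The identical argument works with $(\dlliter,\NC\cup\NR,\text{BUtCQ})$ in place of $(\mathcal{EL},\NC\cup\NR,\text{BAQ})$.

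Since all the hard content is encapsulated in Theorems~\ref{thm:mdd} and~\ref{thm:tmequi}, there is no genuine mathematical obstacle remaining; the only point requiring attention is careful bookkeeping — confirming that the reductions of the two cited theorems really do compose into polynomial-time reductions, and that the construction underlying Theorem~\ref{thm:tmequi} is uniform and efficiently computable, so that it may legitimately be invoked inside a hypothetical decision procedure. Both follow immediately from the explicit, syntactic nature of the constructions in the proofs of those theorems, so I expect the proof of Theorem~\ref{thm:mmm} itself to be only a short paragraph.
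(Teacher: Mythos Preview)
Your proposal is correct and takes essentially the same approach as the paper: the paper's proof of Theorem~\ref{thm:mmm} is literally the one-line observation that it follows from Theorems~\ref{thm:mdd} and~\ref{thm:tmequi}, and you have spelled out exactly that composition together with the effectiveness check needed for Point~(2). Your remark about the complement of $M$'s word problem is also apt; the paper's own discussion surrounding the theorem makes the same point.
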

Note that Theorem~\ref{thm:mmm} does not cover $\text{\dllite}_{\text{core}}$. In fact,
the computational status of the language $(\text{\dllite}_{\text{core}},\NC \cup
\NR,\text{BUtCQ})$ remains open, and in particular it remains open whether
Theorem~\ref{thm:mmm} can be strengthened to this case.%  We are,

\section{Quantifier-Free UCQs and FO-Rewritability}
\label{sect:fullq}

The results in the previous sections have shown that intractability
comes quickly when predicates are closed. The aim of this section is
to identify a useful OMQC language whose UCQs are guaranteed to be
FO-rewritable. It turns out that one can obtain such a language by
combining \dlliter with quantifier-free UCQs, that is, unions of
quantifier-free CQs; we denote this class of queries with UqfCQ. Our
main result is that all OMQCs from the language
$(\dlliter,\NC \cup \NR,\text{UqfCQ})$ are FO-rewritable under the
mild restriction that there is no RI which requires an open role to be
contained in a closed one. We believe that this class of OMQCs is
potentially relevant for practical applications. Note that the query
language SPARQL, which is used in many web applications, is closely
related to UqfCQs and, in fact, does not admit existential
quantification under its standard entailment regimes
\cite{DBLP:conf/semweb/GlimmK10}. We also prove that the restriction
on RIs is needed for tractability, by constructing a \conp-hard OMQC
in $(\dlliter,\NC \cup \NR,\text{UqfCQ})$.
\begin{thm}
  \label{thm:rewr}
  Every OMQC $(\Tmc,\Sigma_{\Asf},\Sigma_{\Csf},q)$ from
  $(\dlliter,\NC \cup \NR,\text{UqfCQ})$ such that
  $\Tmc$ contains no RI of the form $s \sqsubseteq r $
  with ${\sf sig}(s)\not\subseteq\Sigma_{\Csf}$ and ${\sf sig}(r)\subseteq \Sigma_{\Csf}$ is
  FO-rewritable.
\end{thm}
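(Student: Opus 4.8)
The idea is to isolate the only two ways in which closed predicates influence a model of a consistent ABox under the RI-restriction---closure conditions on named individuals, and the forcing of certain existential witnesses to be named individuals lying in closed concepts---and to show that, because $q$ is quantifier-free, only a bounded, FO-describable part of this affects query answers. First I would use Theorem~\ref{thm:basic1} to pass to an equivalent OMQC $(\Tmc',\Sigma_\Asf,\Sigma_\Asf,q')$ in which $\Sigma_\Csf=\Sigma_\Asf$; the new RIs $P\sqsubseteq P'$ introduced there have $P$ closed and $P'$ fresh and open, hence are never of the forbidden shape ``open role contained in closed role'', and one checks the old RIs retain the property, so $\Tmc'$ still satisfies the hypothesis. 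A useful consequence of the hypothesis is that it \emph{propagates downwards}: if $\Tmc\models r'\sqsubseteq r$ and $r$ is closed, then $r'$ is closed. From now on all ABox predicates are closed.

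Next I would show that \emph{ABox consistency w.r.t.\ $(\Tmc,\Sigma_\Csf)$ is FO-rewritable}. Take the \dlliter canonical model $\Imc_{\Tmc,\Amc}$ of $(\Tmc,\emptyset)$ and $\Amc$, built carefully so that a CI $B_1\sqsubseteq\exists r$ is applied to an element $d$ only when $d$ has no $r$-successor yet. Then $\Amc$ is inconsistent w.r.t.\ $(\Tmc,\Sigma_\Csf)$ exactly when $\Imc_{\Tmc,\Amc}$ cannot be turned into a model respecting $\Sigma_\Csf$, which by the propagation property happens only if either (a)~a $(\Tmc,\emptyset)$-certain fact over a closed predicate at named individuals is missing from $\Amc$, or (b)~$\Tmc$ has a CI $B_1\sqsubseteq\exists r$ with $\Tmc\models\exists r^-\sqsubseteq A$ for some closed $A$ and some $d\in B_1^{\Imc_{\Tmc,\Amc}}$ has no $r$-successor and no named individual is available that lies in all such $A$ (so that $d$ could be routed to it); crucially, no anonymous element is ever forced onto a closed \emph{role}. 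These obstructions, together with the further closed-role successor requirements that the named individuals involved must satisfy---which are ``static'', being dictated by membership in ABox-closed concepts---are captured by a fixed FO sentence assembled from the \dlliter rewritings of polynomially many instance queries over $\Tmc$. This yields an FO-rewriting of consistency and disposes of the inconsistent case of $Q$.

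For consistent $\Amc$ I would characterize the \emph{achievable named-worlds} $\mathcal{J}=\Imc|_{\mn{Ind}(\Amc)}$ of models $\Imc$ of $(\Tmc,\Amc)$ respecting $\Sigma_\Csf$: each $\mathcal{J}$ contains the $(\Tmc,\Sigma_\Csf)$-certain facts among named individuals (the $(\Tmc,\emptyset)$-certain ones plus finitely many forced ``routing edges''), may additionally contain further routing edges sending a named individual with a closed-concept-directed existential requirement to some named individual in the relevant closed concepts, plus harmless open atoms; conversely any such $\mathcal{J}$ is achievable, by hanging fresh anonymous trees off it for the remaining open-role requirements (here the restriction is essential, since open-role successors are never dragged onto closed roles). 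Because $q=\bigvee_j p_j$ is quantifier-free, whether $\Imc\models q(\vec{a})$ depends only on $\mathcal{J}$ restricted to the bounded set $\{a_1,\dots,a_n\}$ of answer individuals, so $\Amc\models(\Tmc,\Sigma_\Csf,q)(\vec{a})$ iff every achievable restriction of a named-world to $\{a_1,\dots,a_n\}$ satisfies some $p_j(\vec{a})$. The achievable restrictions form a fixed-size set of ``local types'', and a local type is achievable iff it contains all forced facts among $\{a_1,\dots,a_n\}$, is closed under the finitely many positive CIs and RIs of $\Tmc$ relative to $\{a_1,\dots,a_n\}$, and every existential requirement of each $a_i$ is witnessed inside $\{a_1,\dots,a_n\}$, or by a named individual of $\Amc$ outside this set, or anonymously---each of these being an FO condition in $\vec{a}$ over $\Amc$, again via the \dlliter rewritings. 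Hence ``$\Amc\models Q(\vec{a})$'' becomes a fixed Boolean combination of FO conditions, which shows $Q$ FO-rewritable.

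The main obstacle I expect is the bookkeeping in the last two steps: pinning down precisely the achievable named-worlds, verifying that routing edges never create hidden inconsistencies or unbounded cascades, and that the non-forced (``free'') local choices can always be realized simultaneously subject only to the local \dlliter constraints. One subtlety worth flagging, which rules out the naive rewriting $\bigvee_j\bigwedge_{\alpha\in p_j}(\text{rewriting of }\alpha)$, is that quantifier-free UCQs \emph{do} exhibit implicit disjunction: for $\Tmc=\{A\sqsubseteq\exists r,\ \exists r^-\sqsubseteq B\}$ with $B$ closed, the query $q(x,y,z)=r(x,y)\lor r(x,z)$ is entailed on $\{A(a),B(b),B(c)\}$ at $(a,b,c)$ although neither disjunct is---so one genuinely has to quantify over all achievable local types rather than treat the disjuncts independently.
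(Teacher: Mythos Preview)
Your approach matches the paper's strategy: reduce to $\Sigma_\Csf=\Sigma_\Asf$ via Theorem~\ref{thm:basic1}, FO-rewrite consistency, then FO-rewrite query answering on consistent ABoxes by enumerating bounded-size ``typing-like'' structures. The paper's machinery is a cleaner abstraction of your ``routing'' idea: it defines a \emph{$\Tmc$-typing} as a set of $\Tmc$-types, introduces \emph{$\Sigma$-participating paths} (sequences $t,r_1,\dots,r_n$ of open roles through the anonymous part that first hit a closed basic concept at step $n$), and calls a typing \emph{$\Sigma$-realizable} if every such path has a landing type already in the typing. Consistency is then characterized (Lemma~\ref{lem:deco}) as the existence of a $\Tmc$-decoration of the ABox whose image is a $\Sigma$-realizable typing, plus the closed-RI check; the FO-rewriting is simply the disjunction over all $\Sigma$-realizable typings $T$ of a formula $\psi_T$ saying that each individual's closed type fits some $t\in T$ and each $t\in T$ is witnessed. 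For the query, the paper refines this to \emph{$\Tmc,q$-typings} $(\sim,f_0,\Gamma,\Delta)$ that additionally record the types at the answer variables and the missing closed-role edges among them, with an explicit ``avoids $q$'' condition which is your ``every achievable local type fails some atom of each $p_j$'' made precise.

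Your obstruction~(b) as stated is too local: it looks one step deep, but the correct condition must track paths of arbitrary length through the anonymous part until a closed concept is first required (e.g., with $\Tmc=\{A\sqsubseteq\exists r,\ \exists r^-\sqsubseteq\exists s,\ \exists s^-\sqsubseteq B\}$, $A,B$ closed and $r,s$ open, and $\Amc=\{A(a)\}$, the obstruction lives at depth~2, where your single-step test does not fire). This is exactly what $\Sigma$-participating paths capture, and it is what makes the condition FO-expressible via the \emph{finite} set of typings rather than via instance-query rewritings at individual elements. Similarly, your ``achievable restriction to $\{a_1,\dots,a_n\}$'' does not by itself determine whether a global model exists; the paper keeps the surrounding typing $\Gamma$ alongside the local data $(\sim,f_0,\Delta)$ for precisely this reason, and its avoidance Condition~2 carefully handles the case where an open-role successor of an answer variable must be routed to another answer variable without accidentally satisfying an atom of $q$. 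These are refinements of your plan rather than a different route, but they are exactly the ``bookkeeping'' you flag as the main obstacle, and your sketch would need them to go through.
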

We first show that ABox consistency w.r.t.\ 
$(\tbox,\Sigma_{\Asf},\Sigma_{\Csf})$ is FO-rewritable, for every 
\dlliter TBox $\Tmc$ not containing any RI of the form $s \sqsubseteq r $
with ${\sf sig}(s)\not\subseteq\Sigma_{\Csf}$ and ${\sf sig}(r)\subseteq \Sigma_{\Csf}$.
We make use of Theorem~\ref{thm:basic1} and assume w.l.o.g.\ that 
$\Sigma_{\Csf}=\Sigma_{\Asf}$. 
Let $\mn{con}(\Tmc)$ be the set of all concept names in \Tmc, and all concepts 
$\exists r, \exists r^-$ such that $r$ is a 
role name that occurs in \Tmc.  A \emph{\Tmc-type} is a set $t 
\subseteq \mn{con}(\Tmc)$ such that for all $B_1,B_2 \in 
\mn{con}(\Tmc)$:
\begin{itemize}

\item if $B_1\in t$ and $\tbox\models B_1\sqsubseteq B_2$, then $B_2 \in t$;
\item if $B_1\in t$ and $\tbox\models B_1\sqsubseteq \neg B_2$, then $B_2\notin t$.

\end{itemize}
A \emph{$\tbox$-typing} is a set $T$ of \Tmc-types.  
A \emph{path in
  $T$} is a sequence $t,r_1,\ldots,r_n$ where $t \in T$,
$\existsr{r_1}{},\dots,\existsr{r_{n}}{} \in \mn{con}(\tbox)$ use no
predicates from $\Sigma_{\Csf}$, $\existsr{r_1}{}\in t$ and for
$i\in\{1,\ldots,n-1\}$, $\mathcal{T} \models \exists r_i^- \sqsubseteq
\exists r_{i+1}$ and \mbox{$r_i^- \ne r_{i+1}$}. The path is
\emph{$\Sigma_{\Csf}$-participating} if for all $i \in \{1,\dots,n-1\}$,
there is no $B\in \mn{con}(\tbox)$ with $\sig{B}{}\subseteq\Sigma_{\Csf}$ and
$\tbox\models\existsr{r_i^-}{}\sqsubseteq B$ while there is such a $B$
for $i=n$.  A $\Tmc$-typing $T$ is \emph{$\Sigma_{\Csf}$-realizable} if for
every $\Sigma_{\Csf}$-participating path $t,r_1,\ldots,r_n$ in $T$, there is
some $u\in T$ such that $\{B\in \mn{con}(\tbox)\mid
\tbox\models\existsr{r^-_n}{}\sqsubseteq B\}\subseteq u$.

A \Tmc-typing $T$ provides partial information about a model \Imc of
\Tmc and a $\Sigma_{\Csf}$-ABox \Amc by taking $T$ to contain the
types that are realized in \Imc by ABox
elements. $\Sigma_{\Csf}$-realizability then ensures that we can build
from $T$ a model that respects the closed predicates
in~$\Sigma_{\Csf}$. To make this more precise, define a
\emph{$\tbox$-decoration of} a $\Sigma_{\Csf}$-ABox $\abox$ to be a
mapping $f$ that assigns to each $a\in \mn{Ind}(\Amc)$ a \Tmc-type
$f(a)$ such that $f(a)|_{\Sigma_{\Csf}} = t^a_\abox|_{\Sigma_{\Csf}}$
where $t^a_\abox=\{B\in \mn{con}(\tbox)\mid a \in B^{\Imc_\Amc}\}$ and
$S|_{\Sigma_{\Csf}}$ denotes the restriction of the set $S$ of
concepts to those members that only use predicates
from~$\Sigma_{\Csf}$.
%  For brevity, let
%$R_{\Sigma_{\Csf}}=\{s\sqsubseteq r\mid \tbox\models s\sqsubseteq r
%\text{ and } \mn{sig}(s\sqsubseteq r)\subseteq\Sigma_{\Csf} \}$.
  %
% \{B\in t_\abox^a\mid \sig{B}{}\subseteq\Sigma\}=\{B\in f(a)\mid
% \sig{B}{}\subseteq\Sigma\} $
% where %for each $a\in\adom{\abox}$, we denote by
%
%
The following lemma is proved in the appendix.
\begin{lem}
  \label{lem:deco}
  \mbox{A $\Sigma_{\Csf}$-ABox $\abox$ is consistent w.r.t.~$(\Tmc,\Sigma_{\Csf})$ iff}
  \begin{enumerate}
  \item $\abox$ has a $\tbox$-decoration $f$ whose image is a
    $\Sigma_{\Csf}$-realizable $\tbox$-typing and
  \item if $s(a,b)\in\abox$, $\tbox\models s\sqsubseteq r$, and $\mn{sig}(s\sqsubseteq r)\subseteq\Sigma_{\Csf}$,
     then $r(a,b)\in\abox$.
  \end{enumerate}
\end{lem}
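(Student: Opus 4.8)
The plan is to prove the two directions by transferring between models and decorations. For the direction from left to right, I would fix a model $\Imc$ of $\Tmc$ and $\Amc$ that respects $\Sigma_{\Csf}$ and set $f(a)=\{B\in\mn{con}(\Tmc)\mid a\in B^{\Imc}\}$ for $a\in\mn{Ind}(\Amc)$. Since $\Imc\models\Tmc$, each $f(a)$ is a $\Tmc$-type, and since $\Imc$ respects $\Sigma_{\Csf}$ and $\Amc$ is a $\Sigma_{\Csf}$-ABox (so $\Imc$ and $\Imc_\Amc$ agree on all closed predicates), $f(a)|_{\Sigma_{\Csf}}=t^a_\Amc|_{\Sigma_{\Csf}}$; thus $f$ is a $\Tmc$-decoration. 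Condition~(2) is immediate: if $s(a,b)\in\Amc$ and $s\sqsubseteq r\in R_{\Sigma_{\Csf}}$, then $(a,b)\in s^{\Imc}$, hence $(a,b)\in r^{\Imc}$, and as $\Imc$ respects $\Sigma_{\Csf}$ and $r$ is closed this forces $r(a,b)\in\Amc$. For $\Sigma_{\Csf}$-realizability of the image $T=\{f(a)\mid a\in\mn{Ind}(\Amc)\}$, I would take a $\Sigma_{\Csf}$-participating path $t,r_1,\dots,r_n$ in $T$, pick $a$ with $f(a)=t$, and follow existential witnesses in $\Imc$: from $\exists r_1\in f(a)$ obtain $b_1$ with $(a,b_1)\in r_1^{\Imc}$, and then use $\Tmc\models\exists r_i^-\sqsubseteq\exists r_{i+1}$ together with $\Imc\models\Tmc$ to continue, reaching $b_n$ with $b_n\in B^{\Imc}$ for all $B$ with $\Tmc\models\exists r_n^-\sqsubseteq B$. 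Because the path is $\Sigma_{\Csf}$-participating, one such $B$ uses only closed predicates, and since $\Imc$ respects $\Sigma_{\Csf}$ this already pins $b_n$ to an ABox individual (a closed concept name puts $b_n$ into an asserted extension, a closed $\exists s$ forces an asserted $s$-edge out of $b_n$); hence $u=f(b_n)\in T$ witnesses realizability of the path.

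For the converse I would start from a $\Tmc$-decoration $f$ whose image $T$ is $\Sigma_{\Csf}$-realizable and that satisfies~(2), and build a model $\Imc$ in the style of the \dlliter canonical model of Section~\ref{sec:dichotomytbox}, but with a folding-back mechanism. On $\mn{Ind}(\Amc)$ I put $a\in A^{\Imc}$ iff $A\in f(a)$, put every $r(a,b)\in\Amc$ into $r^{\Imc}$, and close the resulting edges under the RIs of $\Tmc$; by~(2) and the hypothesis that $\Tmc$ has no RI $s\sqsubseteq r$ with $s$ open and $r$ closed, no closed role thereby receives an edge outside~$\Amc$. I then generate anonymous successors: whenever an element $e$ (an ABox individual with its type $f(e)$, or an already-created anonymous element with its assigned type) has $\exists r\in\mn{tp}(e)$ with $r$ open and no $r$-successor yet, I consider $t_r:=\{B\in\mn{con}(\Tmc)\mid\Tmc\models\exists r^-\sqsubseteq B\}$. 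If $t_r$ contains no closed concept, I add a fresh successor $w$ with $\mn{tp}(w)=t_r$ and recurse. If $t_r$ does contain a closed concept, then the path from the originating ABox individual down to $e$ extended by $r$ is $\Sigma_{\Csf}$-participating (all earlier steps had no closed concept, since otherwise folding would already have occurred there), so $\Sigma_{\Csf}$-realizability yields some $u=f(a')\in T$ with $t_r\subseteq u$, and I instead add the edge $(e,a')$ to $r^{\Imc}$, reusing the ABox individual $a'$. In both cases the new edge carries an open role, so re-closing under RIs again produces only open-role edges.

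It then remains to verify three points. First, $\Imc\models\Tmc$: every type appearing is closed under $\Tmc$-entailed positive inclusions and respects negative inclusions, every $\exists r$ in $\mn{tp}(e)$ has a witness, and the role edges are closed under RIs; a routine induction gives $e\in B^{\Imc}$ iff $B\in\mn{tp}(e)$ for $B\in\mn{con}(\Tmc)$, which yields satisfaction of all CIs and RIs (recall \dlliter has no disjointness of roles). Second, $\Imc\models\Amc$: closed concept and role assertions of $\Amc$ are satisfied by construction, and $\Amc$ has no open predicates since $\Sigma_{\Asf}=\Sigma_{\Csf}$. Third, $\Imc$ respects $\Sigma_{\Csf}$: anonymous elements carry types with no closed concepts (exactly what the folding rule and the definition of $\Sigma_{\Csf}$-participating path guarantee), so $A^{\Imc}=\{a\in\mn{Ind}(\Amc)\mid A\in f(a)\}=\{a\mid A(a)\in\Amc\}$ for every closed concept name $A$, and for every closed role $r$ the edges in $r^{\Imc}$ come only from $\Amc$ and its RI-closure among closed roles, which by~(2) stays within $\Amc$, while folded and anonymous edges and their RI-closures are all open.

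The main obstacle I expect is the direction from right to left, and within it the simultaneous bookkeeping around the anonymous part: one must argue at once that (i) every anonymous element is forced to have a type with no closed concept, so that respecting $\Sigma_{\Csf}$ survives the construction; (ii) precisely at the first point where the canonical unraveling would introduce a closed concept, the traversed path is $\Sigma_{\Csf}$-participating, so $\Sigma_{\Csf}$-realizability applies and folding onto an ABox individual is legitimate; and (iii) folding neither breaks modelhood of $\Tmc$ nor creates a closed edge, this last point being exactly where the hypothesis excluding RIs $s\sqsubseteq r$ with $s$ open and $r$ closed is needed. Fitting (i)--(iii) together cleanly, rather than any individual computation, is the delicate part; the leftover easy facts (that $t_r$ is a $\Tmc$-type, that RI-closure preserves the type invariant, etc.) are standard \dlliter bookkeeping.
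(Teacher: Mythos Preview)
Your proposal is correct and follows essentially the same approach as the paper: for $(\Rightarrow)$ you read off the decoration as the types realized in a model and chase existential witnesses along a $\Sigma_{\Csf}$-participating path until a closed concept forces the endpoint into $\mn{Ind}(\Amc)$; for $(\Leftarrow)$ you build a canonical-style model that introduces fresh anonymous successors along non-$\Sigma_{\Csf}$-participating paths and folds back to an ABox individual supplied by $\Sigma_{\Csf}$-realizability as soon as a closed concept would appear. The paper organizes the backward direction as an explicit sequence $\Imc_0,\Imc_1,\dots$ with invariants (a) $t^a_{\Imc_i}\subseteq f(a)$ and (b) $t^p_{\Imc_i}\subseteq\{B\mid\Tmc\models\exists r_n^-\sqsubseteq B\}$, which is exactly the bookkeeping you describe in (i)--(iii).
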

  We now construct the required FOQ. For all role names $r$ and
  variables $x,y$, define $\psi_r(x,y)=r(x,y)$ and
  $\psi_{r^-}(x,y)=r(y,x)$.  For all concept names $A$ and roles $r$,
  define $\psi_A(x)=A(x)$ and $\psi_{\exists r}(x)=\exists
  y\,\psi_r(x,y)$. For each \Tmc-type $t$, set
  $$\psi_t(x)=\hspace*{-5mm}
  \bigwedge_{B\in \mn{con}(\tbox)\setminus t \text{ with
    }\sig{B}{}\subseteq\Sigma_{\Csf} }\neg\psi_B(x)\land \hspace*{-4mm}
  \bigwedge_{B\in t\text{ with }\sig{B}{}\subseteq\Sigma_{\Csf}}\psi_B(x)$$
  and for each \Tmc-typing $T=\{t_1,\dots,t_n\}$, set
  $$\psi_T=\forall x\bigvee_{t\in T}\psi_{t}(x)\land \exists
  x_1\cdots\exists x_n(\bigwedge_{i\neq j} x_i \neq x_j\land
  \bigwedge_i\psi_{t_i}(x_i) ).
  $$
  Let $\Rmc$ be the set of all $\Sigma_{\Csf}$-realizable typings and set
  $$
  \Psi_{\Tmc,\Sigma_{\Csf}} = \bigvee_{T\in\Rmc}\psi_T \wedge
  \bigwedge_{\tbox\models s\sqsubseteq r, \mn{sig}(s\sqsubseteq r)\subseteq\Sigma_{\Csf}}\forall x\forall
  y(\psi_s(x,y)\rightarrow \psi_r(x,y)).
  $$
  Note that the two conjuncts of $\Psi_{\Tmc,\Sigma_{\Csf}}$ express exactly
  Points~(1) and~(2) of Lemma~\ref{lem:deco}.  We have thus shown
  the following.
  \begin{prop}
    % $\Psi_{\Tmc,q}$ is an FO-rewriting of $Q$ regarding ABox
    % consistency.
	A $\Sigma_{\Csf}$-ABox $\abox$ is consistent w.r.t.\ $(\tbox,\Sigma_{\Csf})$ iff 
	$\inter_\abox \models \Psi_{\Tmc,\Sigma_{\Csf}}$.
  \end{prop}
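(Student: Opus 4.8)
The plan is to obtain the Proposition directly from Lemma~\ref{lem:deco} together with the definition of $\Psi_{\Tmc,\Sigma_{\Csf}}$, by checking that, when evaluated on $\inter_\abox$, the two conjuncts of $\Psi_{\Tmc,\Sigma_{\Csf}}$ capture exactly Conditions~(1) and~(2) of that lemma. First I would record the elementary semantic facts: for a role or inverse role $s$ we have $\inter_\abox\models\psi_s(a,b)$ iff $s(a,b)\in\abox$, and therefore for every $B\in\mn{con}(\tbox)$ we have $\inter_\abox\models\psi_B(a)$ iff $a\in B^{\inter_\abox}$. Consequently, for a $\tbox$-type $t$, $\inter_\abox\models\psi_t(a)$ holds precisely when $t|_{\Sigma_{\Csf}}=t^a_\abox|_{\Sigma_{\Csf}}$, where $t^a_\abox=\{B\in\mn{con}(\tbox)\mid a\in B^{\inter_\abox}\}$; this is the same condition that a $\tbox$-decoration must satisfy at $a$.

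Given these facts, the second conjunct $\bigwedge_{s\sqsubseteq r\in R_{\Sigma_{\Csf}}}\forall x\forall y\,(\psi_s(x,y)\to\psi_r(x,y))$ of $\Psi_{\Tmc,\Sigma_{\Csf}}$ holds in $\inter_\abox$ iff $s(a,b)\in\abox$ and $s\sqsubseteq r\in R_{\Sigma_{\Csf}}$ imply $r(a,b)\in\abox$, which is literally Condition~(2) of Lemma~\ref{lem:deco}. For the first conjunct $\bigvee_{T\in\Rmc}\psi_T$ I would argue both implications. If $\abox$ has a $\tbox$-decoration $f$ whose image $T$ is $\Sigma_{\Csf}$-realizable, then $T\in\Rmc$ and $\inter_\abox\models\psi_T$: the universal part holds because $f(a)|_{\Sigma_{\Csf}}=t^a_\abox|_{\Sigma_{\Csf}}$ gives $\inter_\abox\models\psi_{f(a)}(a)$ for every $a$; for the existential part, writing $T=\{t_1,\dots,t_n\}$, pick for each $i$ some $a_i$ with $f(a_i)=t_i$ — distinctness of the $t_i$ forces distinctness of the $a_i$, and $\inter_\abox\models\psi_{t_i}(a_i)$ by the same argument. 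Conversely, if $\inter_\abox\models\psi_T$ for some $T=\{t_1,\dots,t_n\}\in\Rmc$, the existential conjunct yields pairwise distinct $a_1,\dots,a_n$ with $\inter_\abox\models\psi_{t_i}(a_i)$; set $f(a_i)=t_i$, and for every remaining $a\in\mn{Ind}(\abox)$ use the universal conjunct to pick some $t\in T$ with $\inter_\abox\models\psi_t(a)$ and set $f(a)=t$. Then $\inter_\abox\models\psi_{f(a)}(a)$ gives $f(a)|_{\Sigma_{\Csf}}=t^a_\abox|_{\Sigma_{\Csf}}$, so $f$ is a $\tbox$-decoration; by construction its image is exactly $T$, which is $\Sigma_{\Csf}$-realizable, so Condition~(1) holds. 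Combining the two equivalences with Lemma~\ref{lem:deco} yields the Proposition.

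The only genuinely delicate point, and hence the one I would spell out carefully, is why the existential-plus-distinctness part of $\psi_T$ is really needed: from the universal conjunct alone one would only recover a decoration whose image is \emph{contained} in $T$, and a proper sub-typing of a $\Sigma_{\Csf}$-realizable typing need not itself be $\Sigma_{\Csf}$-realizable (fewer types may mean missing realizers for $\Sigma_{\Csf}$-participating paths). The $\exists x_1\cdots\exists x_n$ conjunct with the inequalities is precisely what forces the reconstructed decoration to be surjective onto $T$. Everything else is a routine unwinding of the definitions of $\psi_B$, $\psi_t$, $\psi_T$ and of $\inter_\abox$, so I would keep those verifications brief.
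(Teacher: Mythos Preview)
Your proposal is correct and follows the same approach as the paper: derive the proposition from Lemma~\ref{lem:deco} by verifying that the two conjuncts of $\Psi_{\Tmc,\Sigma_{\Csf}}$ capture exactly its Conditions~(1) and~(2). The paper itself merely asserts this in one sentence, while you spell out the verification (including the role of the existential-plus-distinctness conjunct in ensuring surjectivity onto $T$), which is entirely appropriate.
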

  The next step is to construct an FO-rewriting of 
  $Q=(\tbox,\Sigma_{\Csf},\Sigma_{\Csf},q)$ over $\Sigma_{\Csf}$-ABoxes that are 
  consistent w.r.t.~$(\Tmc,\Sigma_{\Csf})$. Whereas the FO-rewriting $\Psi_{\Tmc,\Sigma_{\Csf}}$ above is Boolean and
  identifies ABoxes that have a common model with \Tmc respecting closed predicates $\Sigma_{\Csf}$, we now aim to
  construct a FOQ $\Phi_{Q}(\vec{x})$ such that for all
  $\Sigma_{\Csf}$-ABoxes \Amc consistent w.r.t.~$(\Tmc,\Sigma_{\Csf})$ and $\vec{a} \in \mn{Ind}(\Amc)$, we have
  $\Imc_\Amc \models \Phi_{Q}(\vec{a})$ iff $\Amc \models Q(\vec{a})$. The desired FO-rewriting
  of $Q$ is then constructed as $\neg\Psi_{\Tmc,\Sigma_{\Csf}} \vee \Phi_Q(\vec{x})$.
  The construction of $\Phi_{Q}(\vec{x})$ is
  based on an extended notion of \Tmc-typing called
  \emph{$(\Tmc,q)$-typing} that provides partial information about a
  model \Imc of
  \Tmc and a $\Sigma_{\Csf}$-ABox \Amc respecting $\Sigma_{\Csf}$ which avoids an assignment from
  $\vec{x}$ to certain individual names~$\vec{a}$.
  
  Let $q=\bigvee_{i\in I}q_{i}$ with answer variables
  $\vec{x}=x_{1},\ldots,x_{n}$.
  %	Assume $\pi$ assigns individual names $\pi(x_{i})$ to $x_{i}$,
  % for $1\leq i \leq n$.
  A $(\Tmc,q)$-\emph{typing} $T$ is a quadruple
  $(\sim,f_{0},\Gamma,\Delta)$ where
  \begin{itemize}
  \item $\sim$ is an equivalence relation on $\{x_{1},\ldots,x_{n}\}$;
  \item $f_{0}$ is a function that assigns a \Tmc-type $f_{0}(x_{i})$
    to each $x_{i}$, $1\leq i \leq n$, such that
    $f_{0}(x_{i})=f_{0}(x_{j})$ when $x_{i}\sim x_{j}$;
  \item $\Gamma$ is a \Tmc-typing;
  \item $\Delta$ is a set of atoms $s(x_{i},x_{j})$, $s\in \Sigma_{\Csf}$,
    such that $s(x_{i},x_{j})\in \Delta$ iff $s(x_{i}',x_{j}')\in
    \Delta$ when $x_{i}\sim x_{i}'$ and $x_{j}\sim x_{j}'$.
  \end{itemize}
  Intuitively, $\sim$ describes the answer variables that are
  identified by an assignment $\pi$ for $q$ in an ABox $\Amc$, $f_{0}(x_{i})$
  describes the $\Tmc$-type of the ABox individual name $\pi(x_{i})$,
  $\Gamma$ describes the $\Tmc$-types of ABox individual names that are not
  in the range of $\pi$, and $\Delta$ fixes role relationships that do
  \emph{not} hold between the $\pi(x_{i})$. Let $X$ be a set of atoms.
%=\{\alpha_{i} \mid
%  i\in I\}$ 
%be a set of atoms with $\alpha_{i}$ in $q_{i}$ for all
%  $i\in I$.  
Then $T$ \emph{avoids $X$} if the following conditions hold:

  \smallskip
  \noindent
  1. for all $x_{i}$, $1\leq i \leq n$, if $A\in f_{0}(x_{i})$, then $A(x_{i})\not\in X$;

  \smallskip
  \noindent
  2. for all $x_{i}$, $1\leq i \leq n$, if $\exists s\in f_{0}(x_{i})$, then for $S=\{B\in\mn{con}(\tbox)\mid
  \tbox\models\existsr{s^-}{}\sqsubseteq B\}$ the following holds: (i) $S$ contains no predicate
  from $\Sigma_{\Csf}$ or (ii) there is a $u \in \Gamma$ such that
  $S\subseteq u$ or (iii) there is a $y$ such that $S\subseteq
  f_{0}(y)$ and there are no $x'\sim x_{i}$ and $y' \sim y$ such that
  $r(x',y') \in X$ and $\Tmc\models s\sqsubseteq r$, or 
  $r(y',x') \in X$ and $\Tmc\models s \sqsubseteq r^{-}$;
  % (y',x') \in X$ and $\Tmc\models s\sqsubseteq r^{-}$;

  \smallskip
  \noindent
  3. if $r(x,y)\in X$, then $\Delta$ contains all $s(x,y)$ with $s\in
  \Sigma_{\Csf}$ and $\Tmc\models s \sqsubseteq r$ and all $s(y,x)$ with
  $s\in \Sigma_{\Csf}$ and $\Tmc\models s^{-}\sqsubseteq r$.

  \smallskip
  \noindent
  $T$ \emph{avoids} $q$ if it avoids some set $X$ of
  atoms containing an atom $\alpha_i$ in $q_i$ for any $i\in I$.
  We use $\mn{tp}(T)$ to denote the \Tmc-typing $\Gamma$ extended with all
  $\Tmc$-types in the range of $f_0$.  Let $\Amc$ be a $\Sigma_{\Csf}$-ABox
  and let $\pi$ assign individual names $\pi(x_{i})$ to $x_{i}$,
  $1\leq i \leq n$, such that $\pi(x_{i})=\pi(x_{j})$ iff $x_{i}\sim
  x_{j}$.  A $\Tmc$-decoration $f$ of $\Amc$ \emph{realizes}
  $T=(\sim,f_{0},\Gamma,\Delta)$ \emph{using} $\pi$ iff $\mn{tp}(T)$
  is the range of $f$, $f_{0}(x_{i})=f(\pi(x_{i}))$ for $1\leq i \leq
  n$, and $r(\pi(x_{i}),\pi(x_{j}))\not\in \Amc$ if $r(x_{i},x_{j})\in
  \Delta$ for $1\leq i,j\leq n$ and all $r\in \Sigma_{\Csf}$.  $\Amc$
  \emph{realizes} $T$ using $\pi$ if there exists a $\Tmc$-decoration
  $f$ that realizes $T$ using~$\pi$.

  \begin{lem}
    \label{lem:deco2}
	Let $\Amc$ be a $\Sigma_{\Csf}$-ABox consistent 
	w.r.t.~$(\Tmc,\Sigma_{\Csf})$. Then $\Amc \not\models 
	Q(\pi(x_{1}),\ldots,\pi(x_{n}))$ iff $\Amc$ realizes some 
	$(\Tmc,q)$-typing $T$ using $\pi$ that avoids $q$ and such that 
	$\mn{tp}(T)$ is $\Sigma_{\Csf}$-realizable.
  \end{lem}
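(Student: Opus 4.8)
The plan is to prove both directions by reading a $\Tmc,q$-typing off a countermodel of $Q$ on $\Amc$ and, conversely, turning a realized avoiding typing into a countermodel, in each case reusing the canonical-model machinery from the proof of Lemma~\ref{lem:deco}. Throughout we work, as after Theorem~\ref{thm:basic1}, with $\Sigma_\Asf = \Sigma_\Csf$, so $\Amc$ is a $\Sigma_\Csf$-ABox.

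For the direction $(\Rightarrow)$, I would start from a model $\Imc$ of $\Tmc$ and $\Amc$ respecting $\Sigma_\Csf$ with $\Imc \not\models q(\pi(x_1),\dots,\pi(x_n))$. Since every $q_i$ is quantifier-free, for each $i \in I$ some atom $\alpha_i$ of $q_i$ fails in $\Imc$ under $\pi$; put $X = \{\alpha_i \mid i \in I\}$. Writing $t^d_\Imc = \{B \in \mn{con}(\Tmc) \mid d \in B^\Imc\}$, I would define $T = (\sim, f_0, \Gamma, \Delta)$ by letting $x_i \sim x_j$ iff $\pi(x_i) = \pi(x_j)$, $f_0(x_i) = t^{\pi(x_i)}_\Imc$, $\Gamma = \{t^a_\Imc \mid a \in \mn{Ind}(\Amc) \setminus \{\pi(x_1),\dots,\pi(x_n)\}\}$, and $\Delta = \{s(x_i,x_j) \mid s \in \Sigma_\Csf \text{ and } s(\pi(x_i),\pi(x_j)) \notin \Amc\}$. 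Since $\Imc \models \Tmc$ and $\Imc$ respects $\Sigma_\Csf$, each $t^a_\Imc$ is a $\Tmc$-type and $f(a) := t^a_\Imc$ is a $\Tmc$-decoration of $\Amc$ whose image is $\mn{tp}(T) = \{t^a_\Imc \mid a \in \mn{Ind}(\Amc)\}$; it follows directly that $f$ realizes $T$ using $\pi$. To see that $T$ avoids $X$: the first avoidance condition holds because every concept atom of $X$ fails in $\Imc$ under $\pi$; the third holds because $r(x,y) \in X$ forces $(\pi(x),\pi(y)) \notin r^\Imc$ and hence, whenever $\Tmc \models s \sqsubseteq r$ (resp.\ $\Tmc \models s^- \sqsubseteq r$) with $s \in \Sigma_\Csf$, also $(\pi(x),\pi(y)) \notin s^\Imc$ (resp.\ $(\pi(y),\pi(x)) \notin s^\Imc$), so the corresponding closed assertion is absent from $\Amc$; and the second is obtained, for each $\exists s \in f_0(x)$, by picking a witness $d$ with $(\pi(x),d) \in s^\Imc$ and $S := \{B \mid \Tmc \models \exists s^- \sqsubseteq B\} \subseteq t^d_\Imc$, then distinguishing whether $d$ is not among the $\pi(x_i)$ (giving (s2)) or $d = \pi(y)$ (giving (s3), the non-existence of a ``bad'' atom following since the edge $(\pi(x),d) \in s^\Imc \subseteq r^\Imc$ exists for every $r$ above $s$, yet no atom of $X$ holds in $\Imc$ under $\pi$). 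Finally, $\mn{tp}(T) = \{t^a_\Imc\}$ is $\Sigma_\Csf$-realizable: this is exactly the property established in the $(\Rightarrow)$ part of the proof of Lemma~\ref{lem:deco}.

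For $(\Leftarrow)$, given a $\Tmc$-decoration $f$ realizing an avoiding $\Tmc,q$-typing $T = (\sim, f_0, \Gamma, \Delta)$ using $\pi$ with $\mn{tp}(T)$ $\Sigma_\Csf$-realizable, I would build a model $\Imc$ of $\Tmc$ and $\Amc$ respecting $\Sigma_\Csf$ by running the construction of the $(\Leftarrow)$ part of the proof of Lemma~\ref{lem:deco} from $f$, so that $\Imc$ satisfies the analogues (a$'$), (b$'$) of that proof, but resolving the loop-back step carefully: whenever witnessing some $\exists s$ at an ABox individual $\pi(x)$ requires a loop-back (the relevant path being $\Sigma_\Csf$-participating), one has $\exists s \in f_0(x)$, so the second avoidance condition applies and supplies a loop-back target $b$ with $S \subseteq f(b)$ that is either not among the $\pi(x_i)$ (case (s2)) or a ``safe'' $\pi(y)$ (case (s3)); I would always take this choice. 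It then remains to check $\Imc \not\models \alpha_i$ under $\pi$ for each $i$. If $\alpha_i = A(x)$, then by (a$'$) $\pi(x) \in A^\Imc$ iff $A \in f_0(x)$, which the first avoidance condition excludes. If $\alpha_i = r(x,y)$, both endpoints are ABox individuals, and every edge from $\pi(x)$ to $\pi(y)$ in $\Imc$ is either inherited from $\Imc_0$ — impossible, since $\Amc$ is a $\Sigma_\Csf$-ABox and so any such edge would come from a closed role $s$ with $s(\pi(x),\pi(y)) \in \Amc$ (or $s(\pi(y),\pi(x)) \in \Amc$) and $\Tmc \models s \sqsubseteq r$ (or $\Tmc \models s^- \sqsubseteq r$), contradicting the third avoidance condition together with $\Delta$-realization — or a loop-back edge, which by the choices above never witnesses an atom of $X$. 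Hence $\Imc \not\models q(\pi(x_1),\dots,\pi(x_n))$, i.e.\ $\Amc \not\models Q(\pi(x_1),\dots,\pi(x_n))$.

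The hard part will be the $(\Leftarrow)$ direction, and within it the bookkeeping around loop-back edges between ABox individuals: one must verify that cases (s2) and (s3) really do always furnish a loop-back target whose incoming edge cannot complete a homomorphic image of any disjunct $q_i$, keeping careful track of inverse roles in the subsumptions $\Tmc \models s \sqsubseteq r$ and of the hypothesis that no open role is subsumed by a closed one (which, as in Lemma~\ref{lem:deco}, is what prevents the construction from ever altering closed predicates). One should also discharge the easy but slightly fiddly coherence conditions on $T$, namely well-definedness of $f_0$ and of $\Delta$ with respect to $\sim$.
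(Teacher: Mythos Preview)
Your proposal is correct and follows essentially the same route as the paper: read off the $\Tmc,q$-typing from a countermodel in the $(\Rightarrow)$ direction, and in the $(\Leftarrow)$ direction rerun the canonical-model construction of Lemma~\ref{lem:deco} while steering the loop-back choices via the avoidance clauses. One small patch: in your $(\Rightarrow)$ case analysis for the second avoidance condition you only distinguish ``$d$ is not among the $\pi(x_i)$'' and ``$d=\pi(y)$'', but the witness $d$ may lie outside $\mn{Ind}(\Amc)$ altogether---in that case, since $\Imc$ respects $\Sigma_\Csf$, no $B \in t^d_\Imc \supseteq S$ uses a closed predicate and clause (i) kicks in (the paper handles this as a separate case).
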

The proof is a modification of the proof of Lemma~\ref{lem:deco} and given in the
appendix.

We now construct the actual rewriting $\Phi_{Q}(\vec{x})$. For every
$(\Tmc,q)$-typing $T=(\sim,f_{0},\Gamma,\Delta)$ with
$\Gamma=\{t_{1},\ldots,t_{k}\}$
% that avoids a sequence $\vec{\alpha}$ of atoms $\alpha_{i}\in
% q_{i}$, $i\in I$, let $\psi_{T}=\psi_{=}\wedge$ $\Sigma$-Abox $\Amc$
% let and ABox any equivalence relation $\sim$ on
% $\{x_{1},\ldots,x_{n}\}$ we denote by $q^{\sim}$ the result of
% identifying all $x_{i}\sim x_{j}$ in $q$. We define a rewriting
% $q_{\Tmc}^{\sim}$ for each $q^{\sim}$. For each assignment $\pi$ for
% $q_{\Tmc}^{\sim}$ adding to $q$ the conjunct $x_{j}=x_{k}$ if
% $x_{j}\sim x_{k}$ and the conjunct $\neg (x_{j}=x_{k)$ if
% $x_{j}\not\sim x_{k}$, where $j,k\leq n$. For each $\Tmc,q$-typing
% and $\Tmc,q$-rewriting $T=(f_0,\Gamma)$ for $\pi$ with
% $\Gamma=\{t_1,\ldots,t_n\}$, define
  %
let $\Psi_{T}(\vec{x})$ be the conjunction of the following:
$$
\bigwedge_{1\leq i \leq n}\psi_{f_0(x_{i})}(x_{i}) \wedge
\bigwedge_{x_{i}\sim x_{j}}(x_{i}=x_{j}) \wedge
\bigwedge_{x_{i}\not\sim x_{j}}(x_{i}\not=x_{j})
$$
$$ 
\bigwedge_{r(x_{i},x_{j})\in \Delta}\neg r(x_{i},x_{j}) \wedge \forall
y (\bigwedge_{1\leq i \leq n}(y \not=x_{i})\rightarrow \bigvee_{t\in
  \Gamma}\psi_{t}(y))
$$
$$ 
\exists y_1\cdots\exists y_k(\bigwedge_{j\neq i} y_j \neq y_i \wedge
\bigwedge_{j\leq k, i\leq n} x_i \neq y_j\wedge \bigwedge_{j \leq k}
\psi_{t_j}(y_j))
$$
Then $\Phi_{Q}(\vec{x})$ is the conjunction over all $\neg
\Psi_{T}(\vec{x})$ such that $T$ avoids $q$ and $\mn{tp}(T)$ is
$\Sigma_{\Csf}$-realizable.

\begin{prop}\label{prop:fo}
  Let $\Amc$ be a $\Sigma_{\Csf}$-ABox that is consistent 
  w.r.t.~$(\Tmc,\Sigma_{\Csf})$. Then $\Amc \models Q(\vec{a})$ iff 
  $\inter_\abox \models \Phi_{Q}(\vec{a})$, for all $\vec{a}$ in $\mn{Ind}(\Amc)$.
\end{prop}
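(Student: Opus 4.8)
The plan is to derive Proposition~\ref{prop:fo} directly from Lemma~\ref{lem:deco2} by a syntactic translation. The key observation is that for a fixed interpretation $\inter_\abox$ of a $\Sigma_{\Csf}$-ABox $\abox$ that is consistent w.r.t.\ $(\Tmc,\Sigma_{\Csf})$, the formulas $\Psi_T(\vec{x})$ were designed precisely so that $\inter_\abox \models \Psi_T(\vec{a})$ if and only if $\abox$ realizes the $\Tmc,q$-typing $T$ using the assignment $\pi$ with $\pi(x_i)=a_i$. So first I would prove this correspondence as an auxiliary claim, going through the four parts of $\Psi_T(\vec{x})$ conjunct by conjunct: the first line encodes that the function $f_0$ of $T$ matches the $\Tmc$-types of the $a_i$ on closed predicates (using that $\psi_t(x)$ only mentions $\Sigma_{\Csf}$-predicates and that $\inter_\abox$ is the ABox interpretation), together with the equality/inequality pattern induced by $\sim$; the second line encodes that $\Delta$ lists role atoms absent from $\abox$ and that every individual name outside the range of $\pi$ realizes (on closed predicates) some type in $\Gamma$; the third line encodes that all types in $\Gamma$ are in fact realized by pairwise distinct individual names distinct from the $a_i$. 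Together these say exactly that some $\Tmc$-decoration $f$ of $\abox$ with range $\mn{tp}(T)$ realizes $T$ using $\pi$, i.e.\ $\abox$ realizes $T$ using $\pi$.

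Once this auxiliary claim is in place, the proposition follows. By Lemma~\ref{lem:deco2}, $\abox \not\models Q(\vec{a})$ iff $\abox$ realizes some $\Tmc,q$-typing $T$ using $\pi$ such that $T$ avoids $q$ and $\mn{tp}(T)$ is $\Sigma_{\Csf}$-realizable. By the auxiliary claim this holds iff $\inter_\abox \models \Psi_T(\vec{a})$ for some such $T$, i.e.\ iff $\inter_\abox \not\models \bigwedge_{T} \neg\Psi_T(\vec{x})$ where the conjunction ranges over exactly those $T$ that avoid $q$ and whose $\mn{tp}(T)$ is $\Sigma_{\Csf}$-realizable. Since $\Phi_Q(\vec{x})$ is defined to be that conjunction, we get $\abox \not\models Q(\vec{a})$ iff $\inter_\abox \not\models \Phi_Q(\vec{a})$, which by contraposition is the statement of Proposition~\ref{prop:fo}. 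It is also worth noting that there are only finitely many $\Tmc,q$-typings (since $\sim$, $f_0$, $\Gamma$, $\Delta$ all range over finite sets depending only on $\Tmc$ and $q$, not on $\abox$), so $\Phi_Q(\vec{x})$ is a well-defined first-order formula independent of the ABox, which is what FO-rewritability requires.

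The main obstacle will be the careful bookkeeping in the forward direction of the auxiliary claim, specifically matching up the existential witnesses. In $\Psi_T(\vec{x})$, the third conjunct asserts the existence of $y_1,\dots,y_k$ realizing $t_1,\dots,t_k$ and distinct from each other and from the $x_i$; but $\psi_{t_j}(y_j)$ only pins down the restriction of $y_j$'s $\Tmc$-type to $\Sigma_{\Csf}$-predicates. I need to argue that from such witnesses one can build a genuine $\Tmc$-decoration $f$ with range exactly $\mn{tp}(T)$: for individual names among the witnesses one assigns the corresponding $t_j$, for the $a_i$ one assigns $f_0(x_i)$, and for all remaining individual names one must assign \emph{some} type in $\mn{tp}(T)$ whose $\Sigma_{\Csf}$-restriction matches $t^a_\abox|_{\Sigma_{\Csf}}$ — here I would appeal to the second conjunct ($\forall y$ clause), which guarantees such a type exists in $\Gamma \subseteq \mn{tp}(T)$. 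One must check this assignment is a legal $\Tmc$-decoration, i.e.\ that $f(a)$ is always a $\Tmc$-type (true since $\Gamma$ and the range of $f_0$ consist of $\Tmc$-types) agreeing with $t^a_\abox$ on $\Sigma_{\Csf}$. The converse direction is routine: from a decoration realizing $T$ one simply reads off the truth of each conjunct. Everything else — finiteness, the reduction to Lemma~\ref{lem:deco2}, and assembling the final rewriting as $\neg\Psi_{\Tmc,\Sigma_{\Csf}} \vee \Phi_Q(\vec{x})$ for arbitrary (possibly inconsistent) ABoxes, which then completes the proof of Theorem~\ref{thm:rewr} together with Theorem~\ref{thm:basic1} — is straightforward.
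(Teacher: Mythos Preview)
Your proposal is correct and follows exactly the approach the paper takes: both directions go through Lemma~\ref{lem:deco2} via the correspondence ``$\inter_\abox \models \Psi_T(\vec{a})$ iff $\abox$ realizes $T$ using $\pi$,'' which the paper dismisses as ``readily checked'' in each direction while you spell out the bookkeeping. Your additional care in constructing the decoration $f$ from the existential witnesses (and verifying its range is exactly $\mn{tp}(T)$) is a faithful unpacking of what the paper leaves implicit.
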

\begin{proof}
  Let $\vec{a}=(a_{1},\ldots,a_{n})$. Assume $\Amc \not\models Q(a_{1},\ldots,a_{n})$. Let 
  $\pi(x_{i})=a_{i}$ for $1\leq i \leq n$. By Lemma~\ref{lem:deco2}, 
  $\Amc$ realizes some $(\Tmc,q)$-typing $T$ using $\pi$ that avoids $q$ 
  such that ${\sf tp}(T)$ is $\Sigma_{\Csf}$-realizable. It is readily checked 
  that $\Imc_{\Amc}\models \Psi_{T}(\pi_{1}(x_{1}),\ldots,\pi(x_{n}))$.  
  Thus, $\inter_\abox \not\models \Phi_{Q}(a_{1},\ldots,a_{n})$

  Conversely, assume that $\inter_\abox \not\models
  \Phi_{Q}(a_{1},\ldots,a_{n})$.  Take a $(\Tmc,q)$-typing $T$ that avoids $q$ such that ${\sf
    tp}(T)$ is $\Sigma_{\Csf}$-realizable and $\inter_\abox \models
  \Psi_{T}(a_{1},\ldots,a_{n})$. Let $\pi(x_{i})=a_{i}$ for $1\leq i 
  \leq n$.  It is readily checked that $\Amc$ realizes $T$ using $\pi$. 
  Thus $\Amc \not\models Q(a_{1},\ldots,a_{n})$, by 
  Lemma~\ref{lem:deco2}.
\end{proof}
This finishes the proof of Theorem~\ref{thm:rewr}.
We now show that without the restriction on RIs adopted in
Theorem~\ref{thm:rewr}, OMQCs from $(\dlliter,\NC \cup
\NR,\text{UqfCQ})$ are no longer FO-rewritable. In fact, we prove the
following, slightly stronger result by reduction from propositional
satisfiability.
\begin{thm}
  \label{thm:nphard} There is a \dlliter TBox with closed predicates $(\Tmc,\Sigma_{\Csf})$ 
  such that ABox consistency w.r.t.\ $(\Tmc,\Sigma_{\Csf})$ is \np-complete.
\end{thm}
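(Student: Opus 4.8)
The plan is to reduce propositional satisfiability (SAT) to ABox consistency with respect to a fixed \dlliter TBox with closed predicates, exploiting precisely the feature that Theorem~\ref{thm:rewr} forbids: a role inclusion $s \sqsubseteq r$ where $s$ is open and $r$ is closed. The idea is that closing a role $r$ forces every $r$-edge that is implied by the open part of the TBox (via $s$) to already be present in the ABox, and this global constraint can be used to encode a combinatorial choice. Since ABox consistency for \dlliter (without closed predicates) is trivially in \AC, membership in \NP is immediate once we observe that one can guess a model of the closed predicates of polynomial size and verify it; the real content is the \NP lower bound.

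For the reduction, I would design $(\Tmc,\Sigma_\Csf)$ so that, given a propositional formula $\varphi$ in variables $p_1,\dots,p_k$ and clauses $c_1,\dots,c_m$, the ABox $\Amc_\varphi$ encodes the formula and is consistent w.r.t.\ $(\Tmc,\Sigma_\Csf)$ iff $\varphi$ is satisfiable. The key gadget: for each variable $p_i$ have an individual $p_i$ with an assertion $A(p_i)$ where $A \sqsubseteq \exists s$ (so every model must give $p_i$ an $s$-successor, which by $s\sqsubseteq r$ and $r$ closed must already be an $r$-successor in the ABox). In $\Amc_\varphi$ we provide exactly two candidate $r$-successors for $p_i$, one tagged ``true'' and one tagged ``false'', so the model is forced to commit to one of them — but we cannot yet rule one out. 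To create genuine disjunctive choice that interacts across clauses, I would instead arrange that the open role $s$ emanates from a \emph{clause} individual and can land on one of its (at most three) literal occurrences; $s \sqsubseteq r$ with $r$ closed then forces that occurrence's $r$-edge to exist; and a disjointness constraint in $\Tmc$ (a \dllitecore negative inclusion $B_{\mn{true}} \sqsubseteq \neg B_{\mn{false}}$ propagated along closed roles) makes it impossible for the same variable to be witnessed both ways. The ABox is inconsistent exactly when no consistent global choice of witnesses exists, i.e.\ when $\varphi$ is unsatisfiable. Working out the exact combination of CIs, RIs, and closed predicates so that the only source of inconsistency is the SAT-style conflict — and so that a satisfying assignment really does yield a model respecting $\Sigma_\Csf$ — is the main obstacle; one has to be careful that \dlliter's limited expressivity (no qualified existentials, inverse roles available, role hierarchies available) suffices, and that spurious inconsistencies are not introduced.

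Concretely, the steps would be: (1) fix the TBox $\Tmc$ and closed signature $\Sigma_\Csf$ once and for all, including one ``choice'' RI $s \sqsubseteq r$ with $s \notin \Sigma_\Csf$, $r \in \Sigma_\Csf$, a small number of negative inclusions enforcing the truth-value disjointness, and CIs of the form $A \sqsubseteq \exists s$ and $\exists r^- \sqsubseteq B$ that propagate tags; (2) given $\varphi$, construct $\Amc_\varphi$ in polynomial time, representing variables, clauses, literal occurrences, their polarities, and the two truth-value witnesses per variable; (3) prove the forward direction: from a satisfying assignment $v$, build an interpretation $\Imc$ by choosing, for each clause, an $s$-edge to a literal occurrence made true by $v$, and checking that $\Imc$ is a model of $\Tmc$ and $\Amc_\varphi$ that respects $\Sigma_\Csf$ — here the crucial point is that the $r$-edges demanded by $s \sqsubseteq r$ are all already in $\Amc_\varphi$ (we included an $r$-edge for every literal occurrence) so closedness of $r$ is not violated, and the truth-value disjointness is respected because $v$ is consistent; (4) prove the contrapositive direction: from any model $\Imc$ of $\Tmc$ and $\Amc_\varphi$ respecting $\Sigma_\Csf$, read off for each clause which literal occurrence received the $s$-edge (it must be one present in $\Amc_\varphi$ since $r$ is closed), argue this picks a true literal per clause, and that the disjointness constraints force these choices to be globally consistent, hence yield a satisfying assignment of $\varphi$; (5) note \NP membership by a guess-and-check over a polynomially bounded closed-predicate extension, completing \NP-completeness. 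I expect step~(4), making the disjointness propagation tight enough to exclude all non-satisfying models while keeping $\Tmc$ fixed and within \dlliter, to be where the care is needed.
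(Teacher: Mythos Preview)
Your plan is essentially the paper's approach: a CNF-SAT reduction where each clause individual $c_i$ carries $A(c_i)$, the CI $A \sqsubseteq \exists s$ together with the RI $s \sqsubseteq r$ (with $r$ closed) forces a choice among the $r$-edges from $c_i$ to its literal individuals, and a disjointness axiom rules out conflicting choices for the same variable.

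The one place where your sketch is vague and the paper is specific is the consistency mechanism, which you correctly flag as the crux. Your suggested $B_{\mn{true}} \sqsubseteq \neg B_{\mn{false}}$ will not by itself link different occurrences of the same variable: in \dlliter you cannot propagate a ``chosen'' tag from a literal occurrence to a shared variable node using only closed roles and unqualified existentials. The paper solves this with a \emph{second} open-into-closed pair $s' \sqsubseteq r'$ (with $r'$ closed) and a fixed $r'$-chain $x_j^\top \to x_j^\bot \to x_j^{\mn{aux}}$ per variable. The CI $\exists s^- \sqsubseteq \exists s'$ forces every chosen literal to take its unique $r'$-successor as $s'$-successor; then the disjointness $\exists s'^- \sqcap \exists s^- \sqsubseteq \bot$ says that no individual can be both an $s'$-target and an $s$-target. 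If $x_j^\top$ is chosen by some clause, $x_j^\bot$ lands in $\exists s'^-$ and hence cannot itself be chosen, and conversely. This is the missing gadget in your outline; with it, your steps (3)--(5) go through exactly as you describe.
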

\begin{proof}
  The proof is by reduction of the satisfiability problem for
  propositional formulas in conjunctive normal form (CNF).  Consider a
  propositional formula in CNF $\varphi=c_{1} \wedge \cdots \wedge
  c_{n}$, where each $c_{i}$ is a disjunction of literals. We write
  $\ell\in c_{i}$ if $\ell$ is a disjunct in $c_{i}$.
  % $l_{i}^{1},\ldots,l_{i}^{k(i)}$.
  Let $x_{1},\ldots, x_{m}$ be the propositional variables in
  $\varphi$.  Define an ABox $\Amc_{\varphi}$ with individual names
  $c_{1},\ldots,c_{n}$ and $x_{i}^\top$, $x_{i}^{\bot}$, $x_{i}^{{\sf
      aux}}$ for $1\leq i \leq m$, a concept name $A$, and role names
  $r,r'$ as the following set of assertions:
  \begin{itemize}
  \item $r(c_{i},x_{j}^{\top})$, for all $x_{j}\in c_{i}$ and $1\leq i
    \leq n$;
  \item $r(c_{i},x_{j}^{\bot})$, for all $\neg x_{j}\in c_{i}$ and
    $1\leq i \leq n$;
  \item $r'(x_{j}^{\top},x_{j}^{\bot})$, $r'(x_{j}^{\bot},x_{j}^{{\sf
        aux}})$, for $1\leq j \leq m$;
  \item $A(c_{i})$, for $1\leq i \leq n$.
  \end{itemize}
  Let $s$ and $s'$ be additional role names and let 
$$
\Tmc = \{s \sqsubseteq r, A \sqsubseteq \exists s, \exists s^{-} \sqsubseteq \exists s', s' \sqsubseteq r',
  \exists s'^{-} \sqcap \exists s^{-} \sqsubseteq \bot\}.
$$
  Let $\Sigma_{\Csf} = \{A,r,r'\}$. We show that $\Amc_{\varphi}$ is
  consistent w.r.t.~$(\Tmc,\Sigma_{\Csf})$ iff $\varphi$ is
  satisfiable.  Assume first that $\Amc_{\varphi}$ is consistent
  w.r.t.~$(\Tmc,\Sigma_{\Csf})$. Let $\Imc$ be a model of
  \Tmc and $\Amc_{\varphi}$ that respects closed predicates
  $\Sigma_{\Csf}$.  Define a propositional valuation $v$ by setting
  $v(x_{j})= 1$ if there exists $i$ such that
  $(c_{i},x_{j}^{\top})\in s^{\Imc}$ and set $v(x_{j})=0$ if there exists
  $i$ such that $(c_{i},x_{j}^{\bot})\in s^{\Imc}$. Observe that $v$ is
  well-defined since if $(c_{i},x_{j}^{\top})\in s^{\Imc}, (c_{k},x_{j}^{\bot})
  \in s^{\Imc}$, then $(x_{j}^{\top},x_{j}^{\bot})\in s'^{\Imc}$ and so
  $x_{j}^{\bot}\in (\exists s'^{-} \sqcap \exists s^{-})^{\Imc}$ which
  contradicts the assumption that $\Imc$ satisfies $\exists s'^{-}
  \sqcap \exists s^{-} \sqsubseteq \bot$. Next observe that for every
  $c_{i}$ there exists a disjunct $\ell\in c_{i}$ such that
  $(c_{i},x_{j}^{\top})\in s^{\Imc}$ if $\ell= x_{j}$ and
  $(c_{i},x_{j}^{\bot})\in s^{\Imc}$ if $\ell=\neg x_{j}$. Thus,
  $v(\varphi)=1$ and $\varphi$ is satisfiable.

  Conversely, assume that $\varphi$ is satisfiable and let $v$ be an
  assignment with $v(\varphi)=1$. Define an interpretation $\Imc$ by expanding $\Imc_{\Amc_{\varphi}}$ as follows:
  \begin{eqnarray*}
    s^{\Imc}  & =  & \{ (c_{i},x_{j}^{\top})\} \mid x_{j}\in c_{i}, v(x_{j})=1,i\leq n\} \cup  \{ (c_{i},x_{j}^{\bot}) \mid \neg x_{j}\in c_{i}, v(x_{j}) =0,i\leq n\}\\
    s'^{\Imc}  & =  & \{ (x_{j}^{\top},x_{j}^{\bot}) \mid v(x_{j})=1\} \cup \{ (x_{j}^{\bot},x_{j}^{{\sf aux}}) \mid v(x_{j})=0\}
  \end{eqnarray*}
  It is readily checked that $\Imc$ is a model of \Tmc and
  $\Amc_{\varphi}$ that respects closed predicates $\Sigma_\Csf$.
\end{proof}
We close this section with noting that, for the case of \EL, quantifier-free queries are computationally no more
well-behaved than unrestricted queries. In fact, we have seen that OMQCs in $\mathcal{EL}$ using dtUCQs can be
equivalently expressed using atomic database queries $A(x)$ by adding CIs of the form $C_{q}\sqsubseteq A$ to
the TBox.

\section{Conclusion}
We have investigated the data complexity of ontology-mediated query
evaluation with closed predicates, focussing on a non-uniform
analysis. At the TBox level we have obtained \ptime/\conp dichotomy
results for the lightweight DLs $\mathcal{EL}$ and \dlliter. At the
query level, the situation is drastically different: there is provably
no \ptime/\conp dichotomy for neither \dlliter nor $\mathcal{EL}$
(unless \ptime=\conp) and even without closing role names,
understanding the complexity of queries is as hard as understanding
the complexity of the generalized surjective constraint satisfaction
problems.  We have also shown that by combining \dlliter with
quantifier-free database queries one obtains FO-rewritable queries and
that even for expressive DLs query evaluation is always in \conp. Many
challenging open questions remain.

Regarding the data complexity classification at TBox level, it is shown 
in~\cite{Lutz:2013:ODA:2540128.2540276} that the dichotomy proof given for \dlliter and $\mathcal{EL}$ does 
not go through for the extension $\mathcal{ELI}$ of $\mathcal{EL}$ with inverse roles. In fact, in 
contrast to \dlliter and $\mathcal{EL}$, there are $\mathcal{ELI}$ TBoxes
with closed predicates $(\Tmc,\Sigma_{\Csf})$ such that CQ evaluation w.r.t.~$(\Tmc,\Sigma_{\Csf})$ is in \ptime, 
but $(\Tmc,\Sigma_{\Csf})$ and $(\Tmc,\emptyset)$ are not CQ-inseparable on consistent ABoxes.
In particular, it remains open whether there is a \ptime/\conp dichotomy for TBoxes with closed predicates
in $\mathcal{ELI}$. The same question remains open for $\mathcal{ALCHI}$ TBoxes 
(recall that there is a \ptime/\conp dichotomy for for $\mathcal{ALCHI}$ TBoxes without closed 
predicates~\cite{DBLP:journals/lmcs/LutzW17,DBLP:conf/pods/HernichLPW17}) and for expressive Horn languages
such as Horn-$\mathcal{SHIQ}$. Also of interest are ontologies consisting of tuple-generating
dependencies (tgds) which generalizes both DL-Lite$_{\mathcal{R}}$ and $\mathcal{EL}$.
In this case, however, the \conp upper bound established here for $\mathcal{ALCHI}$ does not 
hold, even for the moderate extension consisting of linear tgds~\cite{DBLP:conf/lics/BenediktBCP16,Benedikt}.
  
Regarding the data complexity classification at the OMQC level, it would be of interest to
consider $\text{\dllite}_{\text{core}}$: it remains open whether there is a \ptime/\conp
dichotomy for the language $(\text{\dllite}_{\text{core}},\NC \cup \NR,\text{BUtCQ})$
and whether Theorem~\ref{thm:mmm} can be strengthened to this case.

\bigskip
\noindent
\textbf{Acknowledgments.} Frank Wolter was supported by EPSRC grant
EP/M012646/1.  Carsten Lutz was supported by the ERC Consolidator
Grant 647289 CODA. We thank the anonymous reviewers for their suggestions and    
comments.  

%\bibliographystyle{plain}
%\bibliography{ourbib}

\newpage
\appendix
\section{Missing Proofs for Section~\ref{sect:basicres}}
{\bf Lemma~\ref{lem:alchi_forest_model}}
{\em  Let $\abox$ be a $\Sigma_{\Asf}$-ABox, $\vec{a}$ a tuple in ${\sf Ind}(\Amc)$, and
  $Q=(\Tmc,\Sigma_\Asf,\Sigma_\Csf,q)$ a OMQC from $(\alchi,\NC \cup
  \NR,\text{UCQ})$. Then the following are equivalent:
  \begin{enumerate}
  \item $\abox\models Q(\vec{a})$;
  \item $\inter\models q(\vec{a})$ for all forest-shaped models $\inter$ of $\tbox$ and $\abox$ that respect $\Sigma_\Csf$ and such that
     \begin{itemize}
            \item the arity of $\Delta^{\Imc}$ is $|\Tmc|$, 
            \item ${\sf Ind}(\Amc)$ is the set of roots of $\Delta^{\Imc}$,
            \item for every $d\in \Delta^{\Imc}\setminus {\sf Ind}(\Amc)$ and $\exists r.C\in {\sf cl}(\Tmc)$ with $d\in (\exists r.C)^{\Imc}$, there exists $a\in {\sf Ind}(\Amc)$ with $(d,a)\in r^{\Imc}$ and $a\in C^{\Imc}$ or 
there exists a successor $d'$ of $d$ in $\Delta^{\Imc}$
                  such that $(d,d')\in r^{\Imc}$ and $d'\in C^{\Imc}$.
     \end{itemize} 
  \end{enumerate}
}
\begin{proof}
The implication from (1) to (2) is trivial. 
  % since every forest-shaped model of $\Tmc$ and $\Amc$ is a
  % modelThis is the easy direction and we make a contrapositive
  % argument. Suppose there is some forest-shaped model $\inter$ as
  % specified in the lemma such that $\inter\not\models q$. By the
  % virtue of $\inter$ being a model, we immediately conclude that
  % $\tbox,\abox\not\models_{c(\Sigma_ \Csf)}q$ and we are done.
For the converse direction, suppose $\abox\not\models Q(\vec{a})$.
  Then there is some model $\interj$ of $\tbox$ and $\abox$ that respects closed predicates 
  $\Sigma_\Csf$ such that $\interj\not\models q(\vec{a})$. We construct, by 
  induction, a sequence of interpretations 
  $\interp{_0},\interp{_1},\ldots$. The domain of each $\inter_{i}$ 
  consists of sequences of the form $d_{0}\cdot d_{1}\cdots d_{n}$, 
  where $d_{j}\in\domainj$ for all $j\in\{0,\ldots,n\}$. We call such 
  sequences \emph{paths} and denote the last element in a path $p$ by 
  $\tail{p}$, e.g., $\tail{d_{0}\cdots d_{n}}=d_{n}$.
  
  We define $\inter_0$ as the restriction of $\interj$ to
  $\adom{\abox}$.
  
  Assume now that $\inter_i$ is given. Let $p\in\domainp{_i}$ 
  such that for some $e\in\domainj$ and $\exists r.C\in {\sf cl}(\Tmc)$, we have 
  $(\tail{p},e)\in\extj{r}$ and $e\in C^{\Jmc}$ and there is no $p'\in\domainp{_{i}}$ with 
  $\tail{p'}=e'$ and $(p,p')\in\extp{r}{_{i}}$ and $e'\in C^{\Jmc}$. Assume first that 
  $e\not\in\adom{\abox}$. We extend $\inter_i$ to $\inter_{i+1}$ by setting
\begin{eqnarray*}
\Delta^{\Imc_{i+1}} & = & \Delta^{\Imc_{i}}\cup \{p\cdot e\}\\
s^{\Imc_{i+1}} & = & s^{\Imc_{i}} \cup \{ (p,p\cdot e) \mid (\tail{p},e)\in\extj{s}\}\cup
\{ (p\cdot e,p) \mid (e,\tail{p}) \in\extj{s}\}\\
A^{\Imc_{i+1}} & = & A^{\Imc_{i}} \cup \{ p\cdot e \mid e\in\extj{A}\}
\end{eqnarray*}
for all role names $s$ and concept names $A$. Suppose now that $e=a$ for some $a\in\adom{\abox}$. In this case, we 
  extend $\inter_i$ to $\inter_{i+1}$ by adding the tuple $(p,e)$ to 
  $\extp{s}{_i}$, for every role $s$ such that 
  $(\tail{p},e)\in\extj{s}$.
  
  We assume that the above construction is \emph{fair} in the sense that 
  if the conditions of the inductive step are satisfied for some 
  $p\in\domainp{_{i}}$, $e\in\domainj$, and $\exists r.C\in {\sf cl}(\Tmc)$, with $i\geq 0$, 
  then there is some $j > i$ such that the inductive step is applied to 
  $p$, $e$, and $\exists r.C$.
  
  Now we define the interpretation $\inter$ as the limit of the sequence $\Imc_{0},\Imc_{1},\ldots$:
  \begin{itemize}
  \item $\domain=\bigcup_{i\geq 0}\domainp{_i}$;
  \item $\ext{P}=\bigcup_{i\geq 0}\extp{P}{_i}$, for all
    $P\in\conceptnames\cup\rolenames$.
  \end{itemize}

  It is clear that $\inter$ is a forest-shaped interpretation with 
  $\domain$ a $\length{\tbox}$-ary forest having precisely 
  $\adom{\abox}$ as its roots. That $\inter$ is a model of $\abox$ is 
  an easy consequence of the facts that $\interj$ is a model of 
  $\abox$, $\inter_{0}$ is the restriction of $\interj$ to 
  $\adom{\abox}$, and $\inter$ is an extension of $\inter_{0}$. That 
  $\inter$ respects closed predicates $\Sigma_\Csf$ is by definition. 
  We now show that $\inter$ is a model of $\tbox$. The following is easily proved by structural induction.
  
  \begin{claim*}
    For all $p\in\domain$ and $C\in {\sf cl}(\Tmc)$, $p\in\ext{C}
    \text{ iff } \tail{p}\in\extj{C}$.
  \end{claim*}
% \begin{clmproof}
%     The proof is by structural induction. The base case follows 
%     immediately by item~1 in Claim~1 and the boolean cases are trivial. 
%     Therefore, we only consider the case where $C=\existsr{r}{D}$.

%     Suppose $p\in\ext{(\existsr{r}{D})}$. Then there is some 
%     $p'\in\domain$ such that $(p,p')\in\ext{r}$ and $p'\in\ext{D}$. By 
%     the former and item~2 in Claim~1, we have 
%     $(\tail{p},\tail{p'})\in\extj{r}$; and by the latter and the 
%     induction hypothesis, we have $\tail{p'}\in\extj{D}$. Hence 
%     $\tail{p}\in\extj{(\existsr{r}{D})}$, as required. 
    
%     For the other direction, suppose 
%     $\tail{p}\in\extj{(\existsr{r}{D})}$. Then there is some 
%     $e\in\domainj$ such that $(\tail{p},e)\in\extj{r}$ and 
%     $e\in\extj{D}$. By item~3 in Claim~1, we then have some 
%     $p'\in\domain$ with $\tail{p'}=e$ and $(p,p')\in\ext{r}$. By 
%     $\tail{p'}=e$, $e\in\extj{D}$, and the induction hypothesis, we 
%     obtain $p'\in\ext{D}$. Hence $p\in\ext{(\existsr{r}{D})}$.
%   \end{clmproof}  
The fact that $\Jmc$ is a model of $\Tmc$ now implies that 
$\inter$ is a model of every CI in $\tbox$. That 
$\inter$ is a model of every RI in $\tbox$ follows by construction.
Hence we conclude that $\inter$ is a model of $\tbox$.

Finally, to show that $\inter\not\models q(\vec{a})$, observe that 
$h=\{p\mapsto\tail{p}\mid p\in\domain\}$ is a homomorphism from $\inter$ to 
$\interj$ preserving $\NI$. Thus, $\inter\not\models q(\vec{a})$ follows from Lemma~\ref{homo}
and $\interj\not\models q(\vec{a})$.
\end{proof}

\begin{lem}
The interpretation $\Imc$ defined in the proof of Lemma~\ref{lem:alchi_mosaic} is a model of $\tbox$ and $\abox$ 
that respects closed predicates $\Sigma_\Csf$ such that $\inter\not\models q(\vec{a})$.
\end{lem}
\begin{proof}
The following conditions follow directly from the construction of $\inter$ and the conditions on mosaics:
    \begin{itemize}
    \item $\inter$ is a model of $\abox$;
    \item $\inter$ is a model of every RI in $\tbox$;
    \item $\ext{P}=\{\vec{a}\mid P(\vec{a})\in\abox\}$, for all predicates $P\in \Sigma_{\Csf}$.
    \end{itemize}
    It remains to show that $\inter$ is a model of every concept
    inclusion in $\tbox$. Define for every $d\in\domain$, a
    $\tbox$-type $t_d$ as follows.
    \begin{itemize}
    \item if $d\in\adom{\abox}$, then let $t_d=\tau(d)$ for some
      $(\interj,\tau)\in M$;
    \item if $d\in\domain\setminus\adom{\abox}$, then $t_d=\tau_d(d)$.
    \end{itemize}
    To prove that $\Imc$ is a model of $\Tmc$ it is now sufficient
    to show the following: for all $d\in\domain$ and
    $C\in\cclos{\tbox}{}$, $d\in\ext{C}$ iff $C\in t_d$. The proof is
    by structural induction.

    Let $C=A\in\conceptnames$. If $d\in\adom{\abox}$, let
    $(\interj,\tau)$ be any mosaic in $M$; and if
    $d\in\domain\setminus\adom{\abox}$, then let
    $(\interj,\tau)=(\inter_d,\tau_d)$. We have (i) $d\in\ext{B}$ iff
    $d\in \extj{B}$ for all $B\in\conceptnames\cap\cclos{\tbox}{}$ and
    (ii) $\tau(d)=t_d$. But then $d\in\ext{A}$ iff $d\in\extj{A}$ (by
    (i)) iff $A\in\tau(d)$ (by the definition of a mosaic) iff $A\in
    t_d$ (by (ii)).

    The boolean cases follow easily by the induction hypothesis and
    the fact that $t_d$ is a $\tbox$-type.

    Let $C=\existsr{r}{D}$. For the direction from left to right,
    suppose $d\in\ext{(\existsr{r}{D})}$. Then there is some
    $e\in\domain$ such that $(d,e)\in\ext{r}$ and $e\in\ext{D}$. If
    $d,e\in\adom{\abox}$, let $(\interj,\tau)$ be any mosaic in $M$;
    if $d,e\in\domain\setminus\adom{\abox}$, let
    $(\interj,\tau)=(\inter_{d'},\tau_{d'})$, where $d'$ is the
    element of $\{d,e\}$ that has the smaller depth in
    $\domain\setminus\adom{\abox}$; otherwise let
    $(\interj,\tau)=(\inter_{d'},\tau_{d'})$, where $d'$ is the only
    element of $(\domain\setminus\adom{\abox})\cap\{d,e\}$. Observe
    that $(d,e)\in\extj{r}$, $\tau(d)=t_d$, and $\tau(e)=t_e$. By
    $(d,e)\in\extj{r}$ and the definition of a mosaic, we obtain
    $\tau(d)\rightsquigarrow_r\tau(e)$ and by the induction hypothesis
    and $\tau(e)=t_e$, we obtain $D\in\tau(e)$. But then
    $\existsr{r}{D}\in\tau(d)$ and thus, $\existsr{r}{D}\in t_d$,
    which is what we wanted to show.

    For the direction from right to left, suppose $\existsr{r}{D}\in
    t_d$. We distinguish between $d\in\adom{\abox}$ or not. For the
    former case, we find by the coherency of $M$ a $(\interj,\tau)\in
    M$ such that for some $e\in\domainj$ we have $(d,e)\in\extj{r}$
    and $C\in\tau(e)$; for the latter case, we have by the definition
    of a mosaic and $\length{q}\geq 1$ that there is some
    $e\in\domainp{_d}$ with $(d,e)\in\extp{r}{_d}$ and
    $C\in\tau_d(e)$. In both cases, we have by the construction of
    $\inter$ that $(d,e)\in\ext{r}$ and by definition that $C\in
    t_e$. By the latter, the induction hypothesis yields
    $e\in\ext{C}$. Hence, $d\in\ext{(\existsr{r}{D})}$, as required.

\smallskip

  It remains to show that $\inter\not\models q(\vec{a})$. Assume $\vec{a}=(a_{1},\ldots,a_{n})$. For a proof by contradiction, 
  suppose that $\inter\models q(\vec{a})$. Then there is a disjunct 
  $\exists \vec{y}\varphi(\vec{x},\vec{y})$ of $q$ with $\vec{x}=(x_{1},\ldots,x_{n})$ and $\varphi$ a conjunction of
  atoms such that there is an assignment $\pi$ mapping the variables $\vec{x}\cup\vec{y}$ of $\varphi$ to $\Delta^{\Imc}$ 
  with $\pi(x_{i})=a_{i}$ for $1\leq i \leq n$ and $\inter \models_{\pi} \varphi$. Let $F=\{ \pi(x) \mid \pi (x)\not\in\adom{\abox}\}$.
  As $\inter$ is forest-shaped there are $T_1,\ldots,T_m$ with $F=T_{1}\cup \cdots \cup T_{m}$
  such that $T_{1},\ldots,T_{m}$ are maximal and pairwise disjoint trees in $F$.
  % By maximality, we mean that for all $i\in\{1,\ldots,n\}$ and $d\in
  % F$, if $d\not\in T_i$ then $T_i\cup\{d\}$ is not a tree.
  Fix an $i\in\{1,\ldots,m\}$. Let $d$ be the root of $T_i$. By the
  construction of $\inter$, there is an isomorphism $f_{i}$ trom $(\inter_d,\tau_d)$ to some $(\interj,\tau)\in M$. 
  Let $\pi_{i}$ be the restriction of $\pi$ to those variables that are mapped to $T_{i}$, and let $\pi_{\Amc}$ be the restriction of $\pi$
  to those variables that are mapped to $\mn{Ind}(\Amc)$. Define
  $\pi_{i}'=f_i\circ \pi_{i}$ and then
  $$
  \pi'=\bigcup^m_{i=1}\pi_i'\cup \pi_{\Amc}.
  $$
  $\pi'$ is an assignment in $\biguplus_{(\interj,\tau)\in M}\interj$ with $\pi'(x_{i})=a_{i}$ for $1\leq i \leq n$
  such that $\biguplus_{(\interj,\tau)\in M}\interj\models_{\pi'} \varphi$, and so we have derived a contradiction.
\end{proof}
\section{Missing Proofs for Section~\ref{sect:fullq}}
{\bf Lemma~\ref{lem:deco}}
{\em
  \mbox{A $\Sigma_{\Csf}$-ABox $\abox$ is consistent w.r.t.~$(\Tmc,\Sigma_{\Csf})$ iff}
  \begin{enumerate}
  \item $\abox$ has a $\tbox$-decoration $f$ whose image is a
    $\Sigma_{\Csf}$-realizable $\tbox$-typing and
  \item if $s(a,b)\in\abox$, $\tbox\models s\sqsubseteq r$, and $\mn{sig}(s\sqsubseteq r)\subseteq\Sigma_{\Csf}$,
     then $r(a,b)\in\abox$.
  \end{enumerate}
}
\begin{proof}
  $(\Rightarrow)$ Let $\inter$ be a model of $\abox$ and $\Tmc$ that
  respects closed predicates $\Sigma_{\Csf}$. For each $d\in \Delta^\Imc$,
  let $t_\inter^d=\{B\in \mn{con}(\tbox)\mid d \in B^\Imc \}$ and let
  $T_\inter=\{t_\inter^a\mid a\in\adom{\abox}\}$. We next show that
  the \Tmc-typing  $T_\inter$   is $\Sigma_{\Csf}$-realizable. Let $t_\inter^a,r_1,\ldots,r_n$ be a
  $\Sigma_{\Csf}$-participating path in $T_\inter$. Using $\inter$, we find a
  mapping $g:\{0,\ldots,n\}\rightarrow\domain$ such that $g(0)=a$ and
  for each $i\in\{1,\ldots,n\}$, we have
  \begin{itemize}
  \item[(a)] $(g(i-1),g(i))\in\ext{r_i}$,
  \item[(b)] $g(i)\in\ext{B}$ for all $B\in\subc{\tbox}{}$ with
    $\tbox\models\existsr{r^-_i}{}\sqsubseteq B$.
  \end{itemize}
  By definition of $\Sigma_{\Csf}$-participating paths, there is some
  $B^\star\in \mn{con}(\tbox)$ with $\mn{sig}(B^\star) \subseteq \Sigma_{\Csf}$
  such that $\tbox\models\existsr{r_n^-}{}\sqsubseteq B^\star$. By
  Point~(b), we obtain $g(n)\in\ext{{B^\star}}$. Since $\inter$ is a model
  of $\Amc$ and $\tbox$ that respects closed predicates $\Sigma_{\Csf}$, we
  have $g(n)=b$ for some $b\in\adom{\abox}$.  By Point~(b),
  $\tbox\models\existsr{r_n^-}{}\sqsubseteq B$ implies $B\in
  t_\inter^b$ for any $B\in\mn{con}(\tbox)$. Thus, $T_\inter$ is
  $\Sigma_{\Csf}$-realizable. Let $f(a)=t^a_\Imc$ for all
  $a\in\adom{\abox}$. It is clear that $f$ is a $\tbox$-decoration of
  $\abox$. The image of $f$ is $T_\inter$, thus a $\Sigma_{\Csf}$-realizable
  \Tmc-typing. Hence we conclude that \Amc satisfies Point~(1). Point~(2) holds by the fact that $\inter$ is a
  model of $\tbox$ and $\abox$ that respects closed predicates
  $\Sigma_{\Csf}$.

  $(\Leftarrow)$ Suppose that $\abox$ satisfies Points~(1) and~(2)
  and let $f$ be a \Tmc-decoration of \Amc whose image $T$ is a $\Sigma_{\Csf}$-realizable $\tbox$-typing. 
  Our goal is to construct a model \Imc of $\tbox$ and $\abox$ that respects closed
  predicates $\Sigma_{\Csf}$ as the limit of a sequence of interpretations
  $\Imc_0,\Imc_1,\dots$. The domains of these interpretations consist
  of the individual names from $\mn{Ind}(\Amc)$ and of paths in $T$ that
  are not $\Sigma_{\Csf}$-participating. The construction will ensure that
  for all $i$, we have
  \begin{enumerate}

  \item[(a)] for all $a \in \mn{Ind}(\Amc)$, we have $t^a_{\Imc_i}
    \subseteq f(a)$;

  \item[(b)] for all $p \in \Delta^{\Imc_{i}}$, if $p=t,r_1 \dots, r_n$, then we have
    $t^p_{\Imc_i} \subseteq \{ B \in \mn{con}(\Tmc) \mid \Tmc \models
    \exists r_n^- \sqsubseteq B \}$.

    \end{enumerate}
    Define $\inter_0=(\Delta^{\inter_0},\cdot^{\inter_0})$ where
    \begin{eqnarray*}
      \Delta^{\inter_0} & = & \mathsf{Ind}(\abox)\\
      r^{\inter_0} & = & \{(a,b)\mid s(a,b)\in\abox\text{ and
      }\tbox\models s\sqsubseteq r\} \\ %\text{, }\text{ for all roles } r\\
      A^{\inter_0} & = & \{a\mid A\in f(a)\} %\\
      % p_0(a) &=& f(a). %\text{, for all } A\in\conceptnames
    \end{eqnarray*}
    To construct $\Imc_{i+1}$ from
    $\Imc_i$, % and $p_{i+1}$ from $p_i$,
    choose $d \in \Delta^{\Imc_i}$ and $\exists s \in \mn{con}(\Tmc)$
    such that $\mn{sig}(s) \cap \Sigma_{\Csf} = \emptyset$, $\Tmc \models
    \bigsqcap t^d_{\Imc_i} \sqsubseteq \exists s$ and there is no
    $(d,e) \in s^{\Imc_i}$. Let $q=f(a),s$ if $d=a \in \mn{Ind}(\Amc)$
    and $q=d,s$ otherwise. Using Conditions~(a) and~(b), it is easy to
    verify that $q$ is a path in $T$. If $q$ is not
    $\Sigma_{\Csf}$-participating, then define $\inter_{i+1}$ as follows:
    \begin{eqnarray*}
      \Delta^{\inter_{i+1}} & = & \Delta^{\inter_i}\uplus\{q\}\\
      r^{\inter_{i+1}} & = & \begin{cases}

        r^{\inter_{i}}\cup\{(d,q)\} & \text{if }
        \tbox\models s\sqsubseteq r\\
        r^{\inter_{i}} & \text{otherwise}
      \end{cases}\\
      A^{\inter_{i+1}} & = & \begin{cases}
        A^{\inter_i}\cup\{q\} & \text{if }\tbox\models \existsr{s^-}{}\sqsubseteq A\\
        A^{\inter_i} & \text{ otherwise.}
      \end{cases}
    \end{eqnarray*}
    If $q$ is $\Sigma_{\Csf}$-participating, then by the fact that $T$ is
    $\Sigma_{\Csf}$-realizable, there is some $t\in T$ such that
    $\{B\in \mn{con}(\tbox)\mid \tbox\models\existsr{s^-}{}\sqsubseteq
    B\}\subseteq t$. We find a $b\in\mathsf{Ind}(\abox)$ with
    $t=f(b)$. Define $\inter_{i+1}$ as follows:
    \begin{eqnarray*}
      \Delta^{\inter_{i+1}} & = & \Delta^{\inter_i}\\
      r^{\inter_{i+1}} & = & \begin{cases}
        r^{\inter_{i}}\cup\{(d,b)\} & \text{if }\tbox\models s\sqsubseteq r\\
        r^{\inter_{i}} & \text{otherwise}
      \end{cases}
      \\
      A^{\inter_{i+1}} & = & A^{\inter_i}.
    \end{eqnarray*}
    Assume that the choice of $d \in \Delta^{\Imc_i}$ and $\exists s
    \in \mn{con}(\Tmc)$ is fair so that every possible combination of
    $d$ and $\exists s$ is eventually chosen. Let
    $\inter$ be the limit of the sequence $\Imc_0,\Imc_1,\dots$ (cf. the proof of Lemma~\ref{lem:alchi_forest_model}).
 % We call $\inter$
    % \emph{a chase based on $f$}. Note that the construction of
    % $\inter$ is non-deterministic because for $\Sigma$-participating
    % paths, we choose an individual in $\adom{\abox}$.
    We claim that $\inter$ is a model of $\tbox$ and~$\abox$ that
    respects closed predicates $\Sigma_{\Csf}$. By definition of~$\inter_0$
    and of \Tmc-decorations, it is straightforward to see that
    $\inter\models\abox$. Moreover, the RIs in \Tmc are
    clearly satisfied. To show that the CIs are
    satisfied as well, it is straightforward to first establish the
    following strengthenings of Conditions~(a) and~(b) above (details
    omitted):
    \begin{enumerate}

    \item[(a$'$)] for all $a \in \mn{Ind}(\Amc)$, we have $t^a_{\Imc} =
      f(a)$;

    \item[(b$'$)] for all $p \in \Delta^{\Imc}$, if $p=t,r_1 \dots, r_n$, then
      $t^p_{\Imc_i} =\{ B \in \mn{con}(\Tmc) \mid \Tmc \models
      \exists r_n^- \sqsubseteq B \}$.

    \end{enumerate}
    %
    % Note that in order to establish the first item, we use the fact
    % that
    % $\existsr{r}{},\existsr{r^-}{},\existsr{s}{},\existsr{s^-}{}\in\subc{\tbox}{}$,
    % for a $r\sqsubseteq s\in\tbox$.
    % We now have enough information to
    % show $\inter\models\tbox$.
    %
    Let $a\in\adom{\abox}$, $a\in\ext{B_1}$, and $B_1\sqsubseteq
    B_2\in\tbox$ (or $B_1\sqsubseteq\neg B_2\in\tbox$). Then by
    Condition~$(a')$ and since $f(a)$ is a \Tmc-type, we have
    $a\in\ext{B_2}$ (resp.\ $a\not\in\ext{B_2}$). Now let
    $d=t,r_1,\ldots,r_n$ be a path. First suppose $d\in\ext{B_1}$ and
    $B_1\sqsubseteq B_2\in\tbox$. By Condition~(b$'$), we conclude that
    $\tbox\models\existsr{r_n^-}{}\sqsubseteq B_1$. Since
    $B_1\sqsubseteq B_2\in\tbox$, it follows that
    $\tbox\models\existsr{r_n^-}{}\sqsubseteq B_2$ and thus again by
    the property above, $d\in\ext{B_2}$. Finally, suppose
    $d\in\ext{B_1}$ and $B_1\sqsubseteq \neg B_2\in\tbox$. By
    Condition~(b$'$) and $B_1\sqsubseteq \neg B_2\in\tbox$, we conclude
    $\tbox\models\existsr{r_n^-}{}\sqsubseteq \neg B_2$. For a proof
    by contradiction assume that $d\in\ext{B_2}$ and thus
    $\tbox\models\existsr{r_n^-}{}\sqsubseteq B_2$ and we already have
    $\tbox\models\existsr{r_n^-}{}\sqsubseteq \neg B_2$. Hence
    $\tbox\models\existsr{r^-_n}{}\sqsubseteq \bot$. But then
    $\tbox\models \existsr{r_n}{}\sqsubseteq \bot$. It follows that
    $\tbox\models\existsr{r_1}{}\sqsubseteq\bot$. This implies in
    particular $\tbox\models\existsr{r_1}{}\sqsubseteq\existsr{r_1}{}$
    and
    $\tbox\models\existsr{r_1}{}\sqsubseteq\neg\existsr{r_1}{}$. By
    definition we have $\existsr{r_1}{}\in f(a)$ and by
    $\tbox\models\existsr{r_1}{}\sqsubseteq\neg\existsr{r_1}{}$ and
    the fact that $f(a)$ is a $\Tmc$-type, we obtain
    $\existsr{r_1}{}\not\in f(a)$, i.e., a contradiction. Hence
    $d\not\in\ext{B_2}$ which finishes the proof that
    $\inter\models\tbox$.
    
    What remains to be shown are the following properties:
    \begin{itemize}
    \item for all $A\in\Sigma_{\Csf}$, $A^\inter=\{a\mid A(a)\in\abox\}$;
    \item for all $r\in\Sigma_{\Csf}$, $r^\inter=\{(a,b)\mid
      r(a,b)\in\abox\}$.
    \end{itemize}
    We show for each $i\geq 0$ that $\inter_i$ satisfies the
    properties above.

    Suppose $i=0$. First let $A(a)\in\abox$ with $A\in\Sigma_{\Csf}$. Then
    $a\in A^{\inter_0}$ by definition of $\Imc_0$. For the other
    direction, let $a\in A^{\inter_0}$ for an $A\in\Sigma_{\Csf}$. Then $A\in
    f(a)$. The definition of \Tmc-decorations yields $A\in t^a_\abox$,
    and thus $A(a)\in\abox$. Now let $r(a,b)\in\abox$ with
    $r\in\Sigma_{\Csf}$. Then $(a,b)\in r^{\inter_0}$ by definition of
    $\Imc_0$. For the other direction, let $(a,b)\in r^{\inter_0}$ for
    some $r\in\Sigma_{\Csf}$. Then there is some role $s$ such that
    $s(a,b)\in\abox$ and $\tbox\models s\sqsubseteq r$. By the adopted
    restriction on the allowed RIs, it follows that
    $\sig{s}{}\subseteq\Sigma_{\Csf}$. This yields $r(a,b)\in\abox$ since
    \Amc satisfies Point~(2) of Lemma~\ref{lem:deco}.

    For $i > 0$, we show that the extension of $\Sigma_{\Csf}$-predicates is
    not modified when constructing $\inter_{i+1}$ from
    $\inter_i$. Indeed, assume that $\Imc_{i+1}$ was obtained from
    $\Imc_i$ by choosing $d \in \Delta^{\Imc_i}$ and $\exists s \in
    \mn{con}(\Tmc)$ and let $q=f(a),s$ if $d=a \in \mn{Ind}(\Amc)$ and
    $q=d,s$ otherwise. Then $\mn{sig}(s) \cap \Sigma_{\Csf} = \emptyset$ and
    by the restriction on RIs, $\mn{sig}(r) \cap \Sigma_{\Csf} =
    \emptyset$ for any role $r$ with $\Tmc \models s \sqsubseteq
    r$. Consequently, none of the role names modified in the
    construction of $\Imc_{i+1}$ is from $\Sigma_{\Csf}$ (no matter whether
    $q$ is $\Sigma_{\Csf}$-participating or not).  In the case where $q$ is
    $\Sigma_{\Csf}$-participating, there is nothing else to show. If $q$ is
    not $\Sigma_{\Csf}$-participating, then each concept name $A$ with $\Tmc
    \models \exists s^- \sqsubseteq A$ is not from $\Sigma_{\Csf}$. Thus also
    none of the concept names modified in the construction of
    $\Imc_{i+1}$ is from $\Sigma_{\Csf}$.
  \end{proof}

\medskip
\noindent
{\bf Lemma~\ref{lem:deco2}}
{\em
	Let $\Amc$ be a $\Sigma_{\Csf}$-ABox consistent 
	w.r.t.~$(\Tmc,\Sigma_{\Csf})$. Then $\Amc \not\models 
	Q(\pi(x_{1}),\ldots,\pi(x_{n}))$ iff $\Amc$ realizes some 
	$(\Tmc,q)$-typing $T$ using $\pi$ that avoids $q$ and such that 
	$\mn{tp}(T)$ is $\Sigma_{\Csf}$-realizable.
}
\begin{proof} The proof is a modification of the proof of
    Lemma~\ref{lem:deco}. We only sketch the differences.

    \smallskip

	($\Rightarrow$) Let $\Amc \not\models 
	Q(\pi(x_{1}),\ldots,\pi(x_{n}))$.  We start with a model \Imc of 
	$\Tmc$ and $\Amc$ that respects closed predicates $\Sigma_{\Csf}$ such 
	that $\Imc \not\models q(\pi(x_{1}),\ldots,\pi(x_{n}))$. Read off a 
	$(\Tmc,q)$-typing
$$
T_\Imc=(\sim,f_0,\Gamma,\Delta)
$$
from \Imc by setting
\begin{itemize}
\item $x_{i}\sim x_{j}$ iff $\pi(x_{i})=\pi(x_{j})$;
\item $f_0(x_{i})=t^{\pi(x_{i})}_\Imc$ for all $1\leq i \leq n$;
\item $\Gamma = \{ t^a_\Imc \mid a\in
  \mn{Ind}(\Amc)\}\setminus\{\pi(x_{1}),\ldots,\pi(x_{n})\}$;
\item $\Delta = \{ r(x_{i},x_{j})\mid r\in \Sigma_{\Csf},
  r(\pi(x_{i}),\pi(x_{j}))\not\in \Amc\}$.
\end{itemize}
We show that $T_{\Imc}$ avoids $q=\bigvee_{i\in I}q_{i}$. Since $\Imc
\not\models q(\pi(x_{1}),\ldots,\pi(x_{n}))$ we find for every $i\in
I$ an atom $\alpha_{i}$ in $q_{i}$ such that $\Imc\not\models
\alpha_{i}(\pi(x_{1}),\ldots,\pi(x_{n}))$.  We show that $T_{\Imc}$
avoids $X=\{ \alpha_{i}\mid i\in I\}$. We distinguish the following
cases:
\begin{itemize}
\item Let $A(x)\in X$. Then $A\not\in t^{\pi(x)}_\Imc$ and so
  $A\not\in f_{0}(x)$, as required.
\item Let $\exists s\in f_{0}(x)$. Then $\exists s \in
  t^{\pi(x)}_\Imc$. Thus, there exists $d\in \Delta^{\Imc}$ such that
  $(\pi(x),d)\in s^{\Imc}$.  If $d\in \Delta^{\Imc}\setminus
  \mn{Ind}(\Amc)$, then $\mn{sig}(B) \cap \Sigma_{\Csf}=\emptyset$ for all
  $B\in t^{d}_{\Imc}$. Thus (i) holds.  If $d\in
  \mn{Ind}(\Amc)\setminus \{\pi(x_{1}),\ldots,\pi(x_{n})\}$, then (ii)
  holds.  Now assume that $d = \pi(y)$ for some $y\in
  \{\pi(x_{1}),\ldots,\pi(x_{n})\}$.  Then $y$ satisfies the
  conditions for (iii).
\item Let $r(x,y)\in X$. Then $(\pi(x),\pi(y))\not\in r^{\Imc}$. Hence
  $(\pi(x),\pi(y))\not\in s^{\Imc}$ for any $s\in \Sigma_{\Csf}$ with
  $\Tmc\models s \sqsubseteq r$. Thus $s(x,y)\in \Delta$ for any such
  $s$. Moreover, $(\pi(y),\pi(x))\not\in s^{\Imc}$ for any $s\in
  \Sigma_{\Csf}$ with $\Tmc\models s^{-} \sqsubseteq r$. Thus $s(y,x)\in
  \Delta$ for any such $s$.
\end{itemize}

\smallskip ($\Leftarrow$) Assume that a $\Sigma_{\Csf}$-Abox $\Amc$ that is
consistent w.r.t.~$(\Tmc,\Sigma_{\Csf})$ realizes some
$(\Tmc,q)$-typing $T=(\sim,f_{0},\Gamma,\Delta)$ using $\pi$ that avoids
$q$. Assume $f$ is a $\Tmc,q$-decoration of $\Amc$ that realizes $T$
using $\pi$. Let $X=\{\alpha_{i}\mid i\in I\}$ with $\alpha_{i}$ in $q_{i}$ 
such that $T$ avoids $X$ using $\pi$. We construct a model
$\Imc$ of $\Amc$ and $\Tmc$ that respects closed predicates $\Sigma_{\Csf}$
such that $\Imc\not\models \alpha_{i}[\pi(x_{1}),\ldots,\pi(x_{n})]$
for $i\in I$.  We build $\Imc$ as in the proof of Lemma~\ref{lem:deco}
based on $\mn{tp}(T)$. Some care is required in the construction of
$\Imc_{i+1}$. Assume $\Imc_{i}$ has been constructed. Choose $d \in
\Delta^{\Imc_i}$ and $\exists s \in \mn{con}(\Tmc)$ such that
$\mn{sig}(s) \cap \Sigma_{\Csf} = \emptyset$, $\Tmc \models \bigsqcap
t^d_{\Imc_i} \sqsubseteq \exists s$ and there is no $(d,e) \in
s^{\Imc_i}$. If $d\not\in \{\pi(x_{1}),\ldots,\pi(x_{n})\}$ or
$\{B\in\mn{con}(\tbox)\mid \tbox\models\existsr{s^-}{}\sqsubseteq B\}$
does not contain a $B$ with $\mn{sig}(B)\subseteq \Sigma_{\Csf}$ proceed as
in the proof of Lemma~\ref{lem:deco}.  Now assume that $d=\pi(x)$.  In
the proof of Lemma~\ref{lem:deco} we chose an \emph{arbitrary} $b\in
\mn{Ind}(\Amc)$ with $\{B \in \mn{con}(\Tmc) \mid \exists s^{-}
\sqsubseteq B\} \subseteq t$ and $t=f(b)$ and added $(a,b)$ to
$r^{\Imc_{i+1}}$ whenever $\Tmc\models s \sqsubseteq r$. Since we want
to refute all atoms $\alpha_{i}(\pi(x_{1}),\ldots,\pi(x_{n}))$ with
$i\in I$, we now have to choose $b$ more carefully.  If there exists
$b\in \mn{Ind}(\Amc)\setminus\{\pi(x_{1}),\ldots,\pi(x_{n})\}$ with
$\{B \in \mn{con}(\Tmc) \mid \exists s^{-} \sqsubseteq B\} \subseteq
t$ and $t=f(b)$, then we choose such a $b$ and proceed as in
Lemma~\ref{lem:deco}. Otherwise, since $f$ is a $\Tmc,q$-decoration of
$\Amc$ that realizes $T$ using $\pi$ and avoids $X$, there is $y$ such
that $\{B\in \mn{con}(\tbox)\mid \tbox\models\existsr{s^-}{}\sqsubseteq
B\}\subseteq f_{0}(y)$ such that there is no $\alpha_{i} \in X$ of the
form $t(x',y')$ or $t(y',x')$ with $x'\sim x$ and $y' \sim y$ such
that $\Tmc\models s\sqsubseteq t$ or $\Tmc\models s\sqsubseteq t^{-}$,
respectively.  We set $b=\pi(y)$ and proceed as in the proof of
Lemma~\ref{lem:deco}.

The resulting interpretation $\Imc$ is a model of $\Tmc$ and $\Amc$
that respects closed predicates $\Sigma_{\Csf}$. Moreover $\Imc\not\models
\alpha_{i}(\pi(x_{1}),\ldots,\pi(x_{n}))$ for all $i\in I$. Thus,
$\Imc\not\models q(\pi(x_{1}),\ldots,\pi(x_{n}))$, as required.
\end{proof}

\end{document}